\newcommand{\R}{\mathbb{R}}                     
\newcommand{\poly}{\text{poly}}
\newcommand{\pr}[1]{\text{\normalfont Pr}\normalfont\lbrack #1 \rbrack} 
\newcommand{\ex}[1]{\mathbb{E}\normalfont\lbrack #1 \rbrack}
\newcommand{\bpr}[1]{\text{\normalfont Pr}\normalfont \Big[#1 \Big]} 
\newcommand{\bex}[1]{\mathbb{E}\normalfont \Big[#1 \Big]}
\newcommand{\eps}{\epsilon}
\newcommand{\ttx}[1]{\texttt{#1}}
\newcommand{\err}[2]{\text{Err}_{#2}^k(#1)}
\newcommand{\ab}{\allowbreak}
\newtheorem{theorem}{Theorem}
\newtheorem{lemma}{Lemma}
\theoremstyle{corollary}
\newtheorem{corollary}{Corollary}
\theoremstyle{fact}
\newtheorem{fact}{Fact}
\theoremstyle{definition}
\newtheorem{definition}{Definition}
\newtheorem{proposition}{Proposition}
\newtheorem{remark}{Remark}
\title{Data Streams with Bounded Deletions}
\author{
	Rajesh Jayaram\\
	Carnegie Mellon University\\
	\texttt{rkjayara@cs.cmu.edu}
	\and
	David P. Woodruff\\
	Carnegie Mellon University \\
	\texttt{dwoodruf@cs.cmu.edu}
}
\date{}
\begin{document}
	\maketitle

\begin{abstract}
Two prevalent models in the data stream literature are the insertion-only and turnstile models. Unfortunately, many important streaming problems require a $\Theta(\log(n))$ multiplicative factor more space for turnstile streams than for insertion-only streams. This complexity gap often arises because the underlying frequency vector $f$ is very close to $0$, after accounting for all insertions and deletions to items. Signal detection in such streams is difficult, given the large number of deletions. 

In this work, we propose an intermediate model which, given a parameter $\alpha \geq 1$, lower bounds the norm $\|f\|_p$ by a $1/\alpha$-fraction of the $L_p$ mass of the stream had all updates been positive. Here, for a vector $f$, $\|f\|_p = \left (\sum_{i=1}^n |f_i|^p \right )^{1/p}$, and the value of $p$ we choose depends on the application. This gives a fluid medium between insertion only streams (with $\alpha = 1$), and turnstile streams (with $\alpha = \poly(n)$), and allows for analysis in terms of $\alpha$. 

We show that for streams with this $\alpha$-property, for many fundamental streaming problems we can replace a $O(\log(n))$ factor in the space usage for algorithms in the turnstile model with a $O(\log(\alpha))$ factor. This is true for identifying heavy hitters, inner product estimation, $L_0$ estimation, $L_1$ estimation, $L_1$ sampling, and support sampling. For each problem, we give matching or nearly matching lower bounds for $\alpha$-property streams. We note that in practice, many important turnstile data streams are in fact $\alpha$-property streams for small values of $\alpha$. For such applications, our results represent significant improvements in efficiency for all the aforementioned problems. 	
\end{abstract}\newpage

\tableofcontents
\newpage
\section{Introduction}

Data streams have become increasingly important in modern applications, where the sheer size of a dataset imposes stringent restrictions on the resources available to an algorithm. Examples of such datasets include internet traffic logs, sensor networks, financial transaction data, database logs, and scientific data streams (such as huge experiments in particle physics, genomics, and astronomy). 
Given their prevalence, there is a substantial body of literature devoted to designing extremely efficient one-pass algorithms for important data stream problems. We refer the reader to \cite{babcock2002models, muthukrishnan2005data} for surveys of these algorithms and their applications.  

Formally, a data stream is given by an underlying vector $f \in \R^n$, called the \emph{frequency vector}, which is initialized to $0^n$. The frequency vector then receives a stream of $m$ updates of the form $(i_t,\Delta_t) \in [n] \times \{-M,\dots,M\}$ for some $M > 0$ and $t \in [m]$. The update $(i,\Delta)$ causes the change $f_{i_t} \leftarrow f_{i_t} + \Delta_t$. For simplicity, we make the common assumption that $\log(mM) = O(\log(n))$, though our results generalize naturally to arbitrary $n,m$ \cite{braverman2016beating}. 

 Two well-studied models in the data stream literature are the \emph{insertion-only} model and the \textit{turnstile} model. In the former model, it is required that $\Delta_t > 0$ for all $t \in [m]$, whereas in the latter $\Delta_t$ can be any integer in $\{-M,\dots,M\}$.   
 It is known that there are significant differences between these models. For instance, identifying an index $i \in [n]$ for which $|x_i| > \frac{1}{10} \sum_{j=1}^n |x_j|$ can be accomplished with only $O(\log(n))$ bits of space in the insertion-only model \cite{bhattacharyya2016optimal}, but requires $\Omega(\log^2(n))$ bits in the turnstile model \cite{Jowhari:2011}. This $\log(n)$ gap between the complexity in the two models occurs in many other important streaming problems.
 
    Due to this ``complexity gap'', it is perhaps surprising that no intermediate streaming model had been systematically studied in the literature before. For motivation on the usefulness of such a model, we note that nearly all of the lower bounds for turnstile streams involve inserting a large number of items before deleting nearly all of them \cite{kapralov2017optimal, Jowhari:2011, kane2010exact, jayram2013optimal}. This behavior seems unlikely in practice, as the resulting norm $\|f\|_p$ becomes arbitrarily small regardless of the size of the stream.  In this paper, we introduce a new model which avoids the lower bounds for turnstile streams by lower bounding the norm $\|f\|_p$. We remark that while similar notions of bounded deletion streams have been considered for their practical applicability \cite{garofalakis2016data}  (see also \cite{cormode2016sparse}, where a bound on the maximum number of edges that could be deleted in a graph stream was useful), to the best of our knowledge there is no comprehensive theoretical study of data stream algorithms in this setting.

Formally, we define the \emph{insertion vector} $I \in \R^n$ to be the frequency vector of the substream of positive updates ($\Delta_t \geq 0$), and the \emph{deletion vector} $D \in \R^n$ to be the entry-wise absolute value of the frequency vector of the substream of negative updates. Then $f = I - D$ by definition. Our model is as follows.

\begin{definition}\label{def:alphaprop}
	For $\alpha \geq 1$ and $p \geq 0$, a data stream $f$ satisfies the \textit{$L_p$ $\alpha$-property} if $\|I + D\|_p \leq \alpha \|f\|_p$.
\end{definition}

 For $p=1$, the definition simply asserts that the final $L_1$ norm of $f$ must be no less than a $1/\alpha$ fraction of the total weight of updates in the stream $\sum_{t=1}^m |\Delta_t|$. For strict turnstile streams, this is equivalent to the number of deletions being less than a $(1 - 1/\alpha)$ fraction of the number of insertions, hence a bounded deletion stream.

For $p=0$, the $\alpha$-property simply states that $\|f\|_0$, the number of non-zero coordinates at the end of the stream, is no smaller than a $1/\alpha$ fraction of the number of distinct elements seen in the stream (known as the $F_0$ of the stream). Importantly, note that for both cases this constraint need only hold at the time of query, and not necessarily at every point in the stream.

Observe for $\alpha =1$, we recover the insertion-only model, whereas for $\alpha = mM$ in the $L_1$ case or $\alpha = n$ in the $L_0$ case we recover the turnstile model (with the minor exception of streams with $\|f\|_p = 0$). 
 So $\alpha$-property streams are a natural parameterized intermediate model between the insertion-only and turnstile models. For clarity, we use the term \emph{unbounded deletion stream} to refer to a (general or strict) turnstile stream which does not satisfy the $\alpha$ property for $\alpha = o(n)$.
  
For many applications of turnstile data streaming algorithms, the streams in question are in fact $\alpha$-property streams for small values of $\alpha$.  For instance, in network traffic monitoring it is useful to estimate differences between network traffic patterns across distinct time intervals or routers \cite{muthukrishnan2005data}. If $f_i^1,f_i^2$ represent the number of packets sent between the $i$-th [source, destination] IP address pair in the first and second intervals (or routers), then the stream in question is $f^1 - f^2$. In realistic systems, the traffic behavior will not be identical across days or routers, and even differences as small as $0.1\%$ in overall traffic behavior (i.e. $\|f^1 - f^2\|_1 > .001\|f^1 + f^2\|_1$) will result in $\alpha < 1000$ (which is significantly smaller than the theoretical universe size of $n \approx 2^{256}$ potential IP addresses pairs in IPv6).

 A similar case for small $\alpha$ can be made for differences between streams whenever these differences are not arbitrarily small. This includes applications in streams of financial transactions, sensor network data, and telecommunication call records \cite{indyk2004algorithms, cormode2004holistic}, as well as for identifying newly trending search terms, detecting DDoS attacks, and estimating the spread of malicious worms \cite{pike2005interpreting, estan2003bitmap, moore:codered, lakhina2005mining, wagner2005entropy}.

\begin{figure}[]
	\begin{center}
		\begin{tabular}{| c | c| c| c | c |} 
			\hline
			Problem & Turnstile L.B.  &  $\alpha$-Property U.B.  & Citation& Notes\\
			\hline
			$\eps$-Heavy Hitters   &   $\Omega(\eps^{-1} \log^2(n))$ &	$O(\eps^{-1} \log(n)\log(\alpha ))$ &\cite{Jowhari:2011} &
			\begin{tabular}{c}
				Strict-turnstile \\
				succeeds w.h.p.
			\end{tabular} \\
			\hline
			$\eps$-Heavy Hitters   &   $\Omega(\eps^{-1} \log^2(n))$ & 
			$O(\eps^{-1} \log(n)\log(\alpha ))$
			&\cite{Jowhari:2011} & 
			\begin{tabular}{c}
				General-turnstile \\
			$\delta = O(1)$ 
			\end{tabular} \\	\hline
			Inner Product  & $\Omega(\eps^{-1}\log(n))$ & $O(\eps^{-1}\log(\alpha))$ & Theorem \ref{thm:InnerProdHard} & General-turnstile \\ \hline
			$L_1$ Estimation & $\Omega(\log(n))$   & $O(\log(
			\alpha))$ &	Theorem \ref{thm:L1esthardstrict}	 & Strict-turnstile \\ 
			\hline
			$L_1$ Estimation & $\Omega(\eps^{-2}\log(n))$   & 
			\begin{tabular}{c}$O(\eps^{-2}\log(
				\alpha))$\\ $+ \log(n))$
			\end{tabular} &	\cite{kane2010exact}	 & General-turnstile \\ 
			\hline
			
			$L_0$ Estimation & $\Omega(\eps^{-2}\log(n))$ & \begin{tabular}{c}$O(\eps^{-2}\log(
				\alpha))$\\ $+ \log(n))$
			\end{tabular} & \cite{kane2010exact} &General-turnstile \\ \hline

			$L_1$ Sampling & $\Omega(\log^2(n))$& $O(\log(n)\log(\alpha))$& \cite{Jowhari:2011} &
		
				General-turnstile $(*)$ 
			
			\\ \hline
			
			Support Sampling & $\Omega(\log^2(n))$& $O(\log(n)\log(\alpha))$& \cite{kapralov2017optimal} &
			Strict-turnstile
			\\ \hline
		\end{tabular} 
	\end{center} \caption{The best known lower bounds (L.B.) for classic data stream problems in the turnstile model, along with the upper bounds (U.B.) for $\alpha$-property streams from this paper. The notes specify whether an U.B./L.B. pair applies to the strict or general turnstile model. For simplicity, we have suppressed $\log\log(n)$ and $\log(1/\eps)$ terms, and all results are for $\delta = O(1)$ failure probability, unless otherwise stated. $(*)$ $L_1$ sampling note: strong $\alpha$-property, with $\eps = \Theta(1)$ for both U.B. \& L.B.} \label{fig:results}
\end{figure} 

A setting in which $\alpha$ is likely even smaller is database analytics. For instance, an important tool for database synchronization is Remote Differential Compression \ab(RDC)\ab \cite{teodosiu2006optimizing, ajtai2002system}, which allows similar files to be compared between a client and server by transmitting only the differences between them. For files given by large data streams, one can feed these differences back into sketches of the file to complete the synchronization. Even if as much as a half of the file must be resynchronized between client and sever (an inordinately large fraction for typical RDC applications), streaming algorithms with $\alpha = 2$ would suffice to recover the data. 

For $L_0$, there are important applications of streams with bounded ratio $\alpha =F_0/L_0$. For example, $L_0$ estimation is applicable to networks of cheap moving sensors, such as those monitoring wildlife movement or water-flow patterns \cite{indyk2004algorithms}. In such networks, some degree of clustering is expected (in regions of abundant food, points of water-flow accumulation), and these clusters will be consistently occupied by sensors, resulting in a bounded ratio of inactive to active regions. Furthermore, in networking one often needs to estimate the number of distinct IP addresses with active network connections at a given time \cite{GarofalakisData2016, muthukrishnan2005data}. Here we also observe some degree of clustering on high-activity IP's, with persistently active IP's likely resulting in an $L_0$ to $F_0$ ratio much larger than $1/n$ (where $n$ is the universe size of IP addresses).
 
 In many of the above applications, $\alpha$ can even be regarded as a constant when compared with $n$. For such applications, the space improvements detailed in Figure \ref{fig:results} are considerable, and reduce the space complexity of the problems nearly or exactly to known upper bounds for insertion-only streams \cite{bhattacharyya2016optimal, morris1978counting, kane2010optimal, Jowhari:2011}. 

Finally, we remark that in some cases it is not unreasonable to assume that the magnitude of \textit{every} coordinate would be bounded by some fraction of the updates to it. For instance, in the case of RDC it is seems likely that none of the files would be totally removed. We summarize this stronger guarantee as the \textit{strong $\alpha$-property}.  
 
 \begin{definition}\label{def:strongalphaprop}
 	For $\alpha \geq 1$, a data stream $f$ satisfies the \textit{strong $\alpha$-property} if $I_i + D_i \leq \alpha |f_i|$ for all $i \in [n]$.
 \end{definition}
Note that this property is strictly stronger that the $L_p$ $\alpha$-property for any $p\geq0$. In particular, it forces $f_i \neq 0$ if $i$ is updated in the stream. In this paper, however, we focus primarily on the more general $\alpha$-property of Definition \ref{def:alphaprop}, and use $\alpha$-property to refer to Definition \ref{def:alphaprop} unless otherwise explicitly stated. Nevertheless, we show that our lower bounds for $L_p$ heavy hitters, $L_1$ estimation, $L_1$ sampling, and inner product estimation, all hold even for the more restricted strong $\alpha$-property streams.

\subsection{Our Contributions}
We show that for many well-studied streaming problems, we can replace a $\log(n)$ factor in algorithms for general turnstile streams with a $\log(\alpha)$ factor for $\alpha$-property streams. This is a significant improvement for small values of $\alpha$. Our upper bound results, along with the lower bounds for the unbounded deletion case, are given in Figure \ref{fig:results}. Several of our results come from the introduction of a new data structure, \texttt{CSSampSim} (Section \ref{sec:fakecs}), which produces point queries for the frequencies $f_i$ with small additive error. Our improvements from \texttt{CSSampSim} and other $L_1$ problems are the result of new sampling techniques for $\alpha$-property streams 

While sampling of streams has been studied in many papers, most have been in the context of insertion only streams (see, e.g., \cite{cohen2015stream,cohen2009stream, DBLP:journals/jcss/CohenDKLT14,DBLP:journals/pvldb/CohenCD11, DBLP:journals/tocs/EstanV03,DBLP:conf/sigmod/GibbonsM98,DBLP:books/lib/Knuth98,DBLP:journals/pvldb/MankuM12,DBLP:journals/toms/Vitter85}). Notable examples of the use of sampling in the presence of deletions in a stream are \cite{cohen2012don, DBLP:journals/vldb/GemullaLH08, DBLP:books/sp/16/Haas16, DBLP:conf/vldb/GemullaLH06, DBLP:conf/pods/GemullaLH07}. We note that these works are concerned with unbiased estimators and do not provide the $(1\pm \epsilon)$-approximate relative error guarantees with small space that we obtain. They are also concerned with unbounded deletion streams, whereas our algorithms exploit the $\alpha$-property of the underlying stream to obtain considerable savings.   

 In addition to upper bounds, we give matching or nearly matching lower bounds in the $\alpha$-property setting for all the problems we consider. In particular, for the $L_1$-related problems (heavy hitters, $L_1$ estimation, $L_1$ sampling, and inner product estimation), we show that these lower bounds hold even for the stricter case of strong $\alpha$-property streams.
 
We also demonstrate that for general turnstile streams, obtaining a constant approximation of the $L_1$ still requires $\Omega(\log(n))$-bits for $\alpha$-property streams. For streams with unbounded deletions, there is an $\Omega(\epsilon^{-2} \log(n))$ lower bound for $(1\pm\epsilon)$-approximation \cite{kane2010exact}. Although we cannot remove this $\log n$ factor for $\alpha$-property streams, we are able to show an upper bound of $\tilde{O}(\epsilon^{-2} \log \alpha + \log n)$ bits of space for strong $\alpha$-property streams, where the $\tilde{O}$ notation hides $\log(1/\epsilon) + \log \log n$ factors. We thus separate the dependence of $\epsilon^{-2}$ and $\log n$ in the space complexity, illustrating an additional benefit of $\alpha$-property streams, and show a matching lower bound for strong $\alpha$-property streams. 
  
\subsection{Our Techniques}
Our results for $L_1$ streaming problems in Sections \ref{sec:fakecs} to \ref{sec:L1Est} are built off of the observation that for $\alpha$ property streams the number of insertions and deletions made to a given $i \in [n]$ are both upper bounded by $\alpha \|f\|_1$. We can then think of sampling updates to $i$ as a biased coin flip with bias at least $1/2 + (\alpha\|f\|_1)^{-1}$. By sampling $\poly(\alpha/\eps)$ stream updates and scaling up, we can recover the difference between the number of insertions and deletions up to an additive $\eps \|f\|_1$ error.  

To exploit this fact, in Section \ref{sec:fakecs} we introduce a data structure \texttt{CSSampSim} inspired by the classic Countsketch of \cite{charikar2002finding}, which simulates running each row of Countsketch on a small uniform sample of stream updates. Our data structure does not correspond to a valid instantiation
of Countsketch on any stream since we sample different stream updates for different rows of Countsketch.
Nevertheless, we show via a Bernstein inequality that our data structure obtains the Countsketch guarantee plus an $\eps\|f\|_1$ additive error, with only a logarithmic dependence on $\eps$ in the space. This results in more efficient algorithms for the $L_1$ heavy hitters problem (Section \ref{sec:heavyhitters}), and is also used in our $L_1$ sampling algorithm (Section \ref{sec:L1Samp}). We are able to argue
that the counters used in our algorithms can be represented with much fewer than $\log n$ bits because
we sample a very small number of stream updates. 

 Additionally, we demonstrate that sampling $\poly(\alpha/\eps)$ updates preserves the inner product between $\alpha$-property streams $f,g$ up to an additive $\eps \|f\|_1\|g\|_1$ error. Then by hashing the sampled universe down modulo a sufficiently large prime, we show that the inner product remains preserved, allowing us to estimate it in $O(\eps^{-1}\log(\alpha))$ space (Section \ref{sec:innerprod}).

Our algorithm for $L_1$ estimation (Section \ref{sec:L1Est}) utilizes our biased coin observation to show that sampling will recover the $L_1$ of a strict turnstile $\alpha$-property stream. To carry out the sampling in $o(\log(n))$ space, give a alternate analysis of the well known Morris counting algorithm, giving better space but worse error bounds. This allows us to obtain a rough estimate of the position in the stream so that elements can be sampled with the correct probability. 
For $L_1$ estimation in general turnstile streams, we 
analyze a virtual stream which corresponds to scaling our
input stream by Cauchy random variables, argue it still has the $\alpha$-property, and apply our sampling
analysis for $L_1$ estimation on it. 

Our results for the $L_0$ streaming problems in Sections \ref{sec:L0Est} and \ref{sec:suppsamp} mainly exploit the $\alpha$-property in sub-sampling algorithms. Namely, many data structure for $L_0$ streaming problems subsample the universe $[n]$ at $\log(n)$ levels, corresponding to $\log(n)$ possible thresholds which could be $O(1)$-approximations of the $L_0$. If, however, an $O(1)$ approximation were known in advance, we could immediately subsample to this level and remove the $\log(n)$ factor from the space bound.  
For $\alpha$ property streams, we note that the non-decreasing values $F_0^t$ of $F_0$ after $t$ updates must be bounded in the interval $[L_0^t, O(\alpha)L_0]$. Thus, by employing an $O(1)$ estimator $R^t$ of $F_0^t$, we show that it suffices to subsample to only the $O(\log(\alpha/\eps))$ levels which are closest to $\log(R^t)$ at time $t$, from which our space improvements follows. 

\subsection{Preliminaries}
If $g$ is any function of the updates of the stream, for $t \in [m]$ we write $g^t$ to denote the value of $g$ after the updates $(i_1,\Delta_1),\dots,(i_t,\Delta_t)$. 
  For Sections \ref{sec:fakecs} to \ref{sec:L1Est}, it will suffice to assume $\Delta_t \in \{-1,1\}$ for all $t \in [m]$. For general updates, we can implicitly consider them to be several consecutive updates in $\{-1,1\}$, and our analysis will hold in this expanded stream. This implicit expanding of updates is only necessary for our algorithms which sample updates with some probability $p$. If an update $|\Delta_t| > 1$ arrives, we update our data structures with the value $\text{sign}(\Delta_t)\cdot\text{Bin}(|\Delta_t|,p)$, where Bin$(n,p)$ is the binomial random variable on $n$ trials with success probability $p$, which has the same effect. In this unit-update setting, the $L_1$ $\alpha$ property reduces to $m \leq \alpha \|f^m\|_1$, so this is the definition which we will use for the $L_1$ $\alpha$ property. We use the term \textit{unbounded deletion stream} to refer to streams without the $\alpha$-property (or equivalently streams that have the $\alpha = \poly(n)$-property and can also have all $0$ frequencies).  For Sections \ref{sec:fakecs} to \ref{sec:L1Est}, we will consider only the $L_1$ $\alpha$-property, and thus drop the $L_1$ in these sections for simplicity, and for Sections \ref{sec:L0Est} and \ref{sec:suppsamp} we will consider only the $L_0$ $\alpha$-property, and similarly drop the $L_0$ there.
 
We call a vector $y \in \R^n$ $k$\textit{-sparse} if it has at most $k$ non-zero entries. For a given vector $f \in \R^n$, let $\err{f}{p} = \min_{y\; k-\text{sparse}} \|f - y\|_p$. In other words, $\err{f}{p}$ is the $p$-norm of $f$ with the $k$ heaviest entries removed. We call the argument minimizer of $ \|f - y\|_p$ the best $k$-sparse approximation to $f$. 
Finally, we use the term \textit{with high probability} (w.h.p.) to describe events that occur with probability $1-n^{-c}$, where $c$ is a constant. Events occur with low probability (w.l.p.) if they are a complement to a w.h.p. event. We will often ignore the precise constant $c$ when it only factors into our memory usage as a constant.


\section{Frequency Estimation via Sampling}
\label{sec:fakecs}
In this section, we will develop many of the tools needed for answering approximate queries about $\alpha$ property streams. Primarily, we develop a data structure \texttt{CSSampSim}, inspired by the classic Countsketch of \cite{charikar2002finding}, that computes frequency estimates of items in the stream by sampling. This data structure will immediately result in an improved heavy hitters algorithm in Section \ref{sec:heavyhitters}, and is at the heart of our $L_1$ sampler in Section \ref{sec:L1Samp}. In this section, we will write $\alpha$-property to refer to the $L_1$ $\alpha$-property throughout.

Firstly, for $\alpha$-property streams, the following observation is crucial to many of our results. Given a fixed item $i \in [n]$, by sampling at least $\poly(\alpha/\eps)$ stream updates we can preserve $f_i$ (after scaling) up to an additive $\eps \|f\|_1$ error.

\begin{lemma}[Sampling Lemma]
	\label{lem:l1-preservation}
	Let $f$ be the frequency vector of a general turnstile stream with the $\alpha$-property, and let $f^*$ be the frequency vector of a uniformly sampled substream scaled up by $\frac{1}{p}$, where each update is sampled uniformly with probability $p > \alpha^2 \eps^{-3}\log(\delta^{-1})/m$. Then with probability at least $1-\delta$ for $ i \in [n]$,  we have \[| f_i^*   -  f_i| < \eps \|f\|_1\]
Moreover, we have $\sum_{i=1}^n f_i^* =\sum_{i=1}^n f_i  \pm \eps \|f\|_1$.
\end{lemma}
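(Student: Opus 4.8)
The plan is to fix a single coordinate $i\in[n]$ and apply Bernstein's inequality to the signed count of sampled updates touching $i$; the ``moreover'' clause then follows from the identical computation applied to the whole stream. Working in the unit-update model permitted by the preliminaries (general updates are split into $\pm1$ updates, or equivalently sampled via a binomial, so $m=\|I+D\|_1$), write $I_i$ for the number of $+1$ updates to $i$ and $D_i$ for the number of $-1$ updates, so that $f_i=I_i-D_i$. Let $Y_t\in\{-1,0,1\}$ be the contribution of the $t$-th update among those touching $i$: it equals $\Delta_t$ if that update is sampled and $0$ otherwise. The $Y_t$ are independent, $|Y_t-\ex{Y_t}|\le 1$, $\ex{Y_t}=p\Delta_t$, and $\mathrm{Var}(Y_t)=p-p^2\le p$. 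Hence $X_i:=\sum_{t:\,i_t=i}Y_t=p\,f_i^*$ satisfies $\ex{X_i}=pf_i$ (which already gives $\ex{f_i^*}=f_i$) and $\mathrm{Var}(X_i)\le p(I_i+D_i)$.

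The $\alpha$-property enters first to control this variance. Since $\sum_j(I_j+D_j)=m$ we have $I_i+D_i\le m$, and $m\le\alpha\|f\|_1$ by the $\alpha$-property, so $\mathrm{Var}(X_i)\le p\alpha\|f\|_1$. Since $|f_i^*-f_i|\ge\eps\|f\|_1$ is the same event as $|X_i-\ex{X_i}|\ge\lambda$ with $\lambda:=p\eps\|f\|_1$, Bernstein's inequality yields
\begin{align*}
\pr{\,|f_i^*-f_i|\ge\eps\|f\|_1\,}
&= \pr{\,|X_i-\ex{X_i}|\ge\lambda\,}\\
&\le 2\exp\!\left(-\frac{\lambda^2/2}{\mathrm{Var}(X_i)+\lambda/3}\right)
\;\le\; 2\exp\!\left(-\frac{p\eps^2\|f\|_1}{8\alpha}\right),
\end{align*}
where the last step uses $\mathrm{Var}(X_i)\le p\alpha\|f\|_1$ and $\eps\le1\le\alpha$ to absorb the linear term $\lambda/3$. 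Now I would invoke the $\alpha$-property a second time, in the form $\|f\|_1\ge m/\alpha$, together with the hypothesis $pm>\alpha^2\eps^{-3}\log(\delta^{-1})$: the exponent becomes at least $\tfrac18\,p\eps^2 m/\alpha^2>\tfrac18\eps^{-1}\log(\delta^{-1})$, which exceeds $\log(2/\delta)$ once $\eps$ is below an absolute constant, so the failure probability is at most $\delta$. (The $\eps^{-3}$ in the hypothesis is generous — $\eps^{-2}$ already suffices — so the constants are not delicate.)

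For the ``moreover'' statement I would run the same argument on the aggregate variable $S:=\sum_{t=1}^m\Delta_t\cdot\mathbf{1}[\text{update }t\text{ sampled}]=p\sum_i f_i^*$. Then $\ex{S}=p\sum_t\Delta_t=p\sum_i f_i$ and $\mathrm{Var}(S)=\sum_t(p-p^2)\le pm\le p\alpha\|f\|_1$ — exactly the same variance bound as before — so Bernstein with the same $\lambda=p\eps\|f\|_1$ gives $\big|\sum_i f_i^*-\sum_i f_i\big|<\eps\|f\|_1$ with probability at least $1-\delta$.

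The content here is bookkeeping rather than a conceptual obstacle. The point to be careful about is that the variance must be bounded in terms of $\|f\|_1$ (through $I_i+D_i\le m\le\alpha\|f\|_1$) rather than in terms of $m$, so that after dividing by $p$ the additive error scales with $\|f\|_1$ as claimed; and one must check that the two appeals to the $\alpha$-property — $m\le\alpha\|f\|_1$ to bound the variance, and $\|f\|_1\ge m/\alpha$ to turn the hypothesis on $p$ into a usable lower bound on the Bernstein exponent — compose to cancel the $\log(\delta^{-1})$ term with room to spare. Finally, the passage from $\{-1,1\}$ updates to general $\{-M,\dots,M\}$ updates needs no separate argument, being handled by the implicit unit-update expansion described in the preliminaries.
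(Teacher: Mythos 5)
Your proof is correct and rests on the same engine as the paper's: a per-coordinate concentration bound for the sampled count of updates to $i$, invoking the $\alpha$-property exactly twice — once as $I_i+D_i\le m\le\alpha\|f\|_1$ to control the variance (resp.\ the Chernoff denominator), and once as $\|f\|_1\ge m/\alpha$ to convert the hypothesis on $p$ into a lower bound on the exponent — with the ``moreover'' clause handled identically in both (treat the whole stream as updates to a single coordinate). The execution differs mildly: the paper splits $f_i$ into the insertion count $f_i^+$ and deletion count $f_i^-$, runs a Chernoff bound on each with a case analysis according to whether $f_i^+$ exceeds $\eps\|f\|_1$, and then pays a factor $2$ in the additive error which it removes by rescaling $\eps$; you instead apply Bernstein once to the signed sum $\sum_t Y_t$, which treats insertions, deletions, and both tails in one stroke and avoids both the case split and the rescaling. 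What you give up is only constant-level tightness for $\eps$ near $1$: your exponent $\tfrac{1}{8}\eps^{-1}\log(\delta^{-1})$ exceeds $\log(2/\delta)$ only once $\eps$ is below an absolute constant, as you note. This is not a genuine gap relative to the stated hypothesis $p>\alpha^2\eps^{-3}\log(\delta^{-1})/m$, since the paper's own derivation carries the same slack — it absorbs the leading factor $2$ and the $1/3$ in its Chernoff exponent by degrading $\eps^2$ to $\eps^3$, which likewise requires $\eps$ bounded below roughly $1/3$ — so your proposal matches the paper's level of rigor while being somewhat more streamlined.
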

\begin{proof}
Assume we sample each update to the stream independently with probability $p$. Let $f_i^+,f_i^-$ be the number of insertions and deletions of element $i$ respectively, so $f_i = f_i^+ - f_i^-$.  Let $X_j^+$ indicate that the $j$-th insertion to item $i$ is sampled. First, if $\eps \|f\|_1  <  f^+_i$ then by Chernoff bounds:
 \[\pr{ | \frac{1}{p}\sum_{j=1}^{f_i^+}\ab X_j^+ - f_i^+ | \geq  \eps \|f\|_1} \leq 2\exp\big(\frac{
 -p f_i^+(\eps \|f\|_1 )^2}{3(f_i^+)^2}\big)\]
  \[\ab\leq \exp\big(-p \eps^3 m/\alpha^2		\big)\] 
  where the last inequality holds because $f^+_i \leq m \leq \alpha \|f\|_1$. Taking $p \geq \alpha^2 \log(1/\delta)/(\eps^3 m)$ gives the desired probability $\delta$. 
 Now if $\eps \|f\|_1  \geq f^+_i$, then $\pr{ \frac{1}{p}\sum_{j=1}^{f_i^+}\ab X_j^+ \geq f_i^+ + \eps \|f\|_1} \leq \ab\exp \ab\big(-p f_i^+ \eps \|f\|_1\ab/\ab f_i^+ \big)$ $\leq \exp \big(-p \eps m/\alpha	\big) \leq \delta$ for the same value of $p$. Applying the same argument to $f_i^-$, we obtain $f_i^* = f_i^+ - f_i^- \pm 2\eps\|f\|_1 = f_i \pm 2 \eps \|f\|_1$ as needed after rescaling $\eps$.  
For the final statement, we can consider all updates to the stream as being made to a single element $i$, and then simply apply the same argument given above. 
\end{proof}
\subsection{Count-Sketch Sampling Simulator}

\begin{figure*}
	\fbox{\parbox{\textwidth}{ \texttt{CSSampSim (CSSS)}
			\\
			\textbf{Input:} sensitivity parameters $k \geq  1, \; \eps \in (0,1)$.
			\begin{enumerate}[topsep=0pt,itemsep=-1ex,partopsep=1ex,parsep=1ex] 
				\item Set $S =  \Theta( \frac{\alpha^2}{\eps^2 }T^2 \log(n))$, where $T \geq 4/\eps^2 + \log(n)$.	
				\item Instantiate $d \times 6k$ count-sketch table $A$, for $d = O(\log(n))$. For each table entry $a_{ij} \in A$, store two values $a_{ij}^+$ and $a_{ij}^-$, both initialized to $0$. 
				\item Select $4$-wise independent hash functions $h_i:[n] \to [6k],\: g_i:[n] \to \{1,-1\}$, for $i \in [d]$. 
				\item Set $p \leftarrow 0$, and start $\log(n)$ bit counter to store the position in the stream.
				\item \textbf{On Update $(i_t,\Delta_t)$:}
				\begin{enumerate}[topsep=0pt,itemsep=-1ex,partopsep=1ex,parsep=1ex]
					\item  \textbf{if} $t= 2^r\log(S)+1$ for any $r \geq 1$, then for every entry $a_{ij} \in A$ set $a_{ij}^+ \leftarrow$Bin$(a_{ij}^+,1/2)$,  $a_{ij}^- \leftarrow$Bin$(a_{ij}^-,1/2)$, and $p \leftarrow p + 1$

					\item \textbf{On Update $(i_t,\Delta_t)$:} Sample $(i_t,\Delta_t)$ with probability $2^{-p}$. If sampled, then for $i \in [d]$
					\begin{enumerate}[topsep=0pt,itemsep=-1ex,partopsep=1ex,parsep=1ex]
						\item \textbf{if} $\Delta_tg_i(i_t)>0$, set $a_{i,h_i(i_t)}^+ \leftarrow a_{i h_i(i_t)}^+ + \Delta_tg_i(i_t)$.
						\item \textbf{else} set $a_{i,h_i(i_t)}^- \leftarrow a_{i h_i(i_t)}^- + | \Delta_tg_i(i_t)|$
					\end{enumerate}
					
				\end{enumerate}
				\item \textbf{On Query for $f_j$:} return $y^*_j = \text{median}\{ 2^p g_i(j)\cdot (a_{i,h_i(j)}^+ - a_{i,h_i(j)}^-) \; | \; i \in [d] \} $.
			\end{enumerate}
	}}\caption{Our data structure to simulate running Countsketch on a uniform sample of the stream.} \label{fig:csss}
\end{figure*}
We now describe the Countsketch algorithm of \cite{charikar2002finding}, which is a simple yet classic algorithm in the data stream literature. For a parameter $k$ and $d = O(\log(n))$, it creates a $d \times 6k$ matrix $A$ initialized to $0$, and for every row $i \in [d]$ it selects hash functions $h_i : [n] \to [6k]$, and $g_i: [n] \to \{1,-1\}$ from $4$-wise independent uniform hash families. On every update $(i_t, \Delta_t)$, it hashes this update into every row $i \in [d]$ by updating $a_{i,h_i(i_t)} \leftarrow a_{i,h_i(i_t)} + g_i(i_t)\Delta_t$. It then outputs $y^*$ where $y^*_j = \text{median}\{ g_i(j)a_{i,h_i(j)} \; | \; i \in [d]	\}$. The guarantee of one row of the Countsketch is as follows.

\begin{lemma}
	\label{lem:cs1row}
	Let $f \in \R^n$ be the frequency vector of any general turnstile stream hashed into a $d \times 6k$ Countsketch table. Then with probability at least $2/3$, for a given $j \in [n]$ and row $i \in [d]$ we have $\big|g_i(j)a_{i,h_i(j)} - f_j\big| < k^{-1/2}\err{f}{2}$. It follows if $d = O(\log(n))$, then with high probability, for all $j \in [n]$ we have $\big|y_j^* - f_j\big| < k^{-1/2}\err{f}{2}$, where $y^*$ is the estimate of Countsketch. The space required is $O(k \log^2(n))$ bits.
\end{lemma}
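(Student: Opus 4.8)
The plan is to run the textbook second-moment analysis of Count-Sketch \cite{charikar2002finding}, with enough bookkeeping that the error comes out in the additive $\err{f}{2}$-form stated. Fix a coordinate $j \in [n]$ and a row $i \in [d]$, and let $S \subseteq [n]$ be the index set of the $k$ largest-magnitude coordinates of $f$, so that $(\err{f}{2})^2 = \sum_{l \notin S} f_l^2$. Everything rests on the identity $g_i(j)\, a_{i,h_i(j)} = f_j + Z$, where $Z := \sum_{l \ne j,\, h_i(l) = h_i(j)} g_i(j) g_i(l) f_l$ is the noise contributed by the coordinates colliding with $j$ in row $i$; it then suffices to show $\pr{|Z| \ge k^{-1/2}\err{f}{2}} \le 1/3$.

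First I would split off the head. Let $E$ be the event that no element of $S \setminus \{j\}$ is hashed by $h_i$ into the bucket $h_i(j)$. By the pairwise independence of $h_i$ (implied by $4$-wise independence) and a union bound, $\pr{E^c} \le |S \setminus \{j\}|/(6k) \le 1/6$. On $E$ we have $Z = Z'$ with $Z' := \sum_{l \notin S \cup \{j\}} g_i(j) g_i(l) f_l \cdot \mathbf{1}[h_i(l) = h_i(j)]$, so from the inclusion $\{|Z| \ge t\} \subseteq E^c \cup \{|Z'| \ge t\}$ we get $\pr{|Z| \ge t} \le \pr{E^c} + \pr{|Z'| \ge t}$. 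Next I would bound $Z'$ by Chebyshev: since the sign family $g_i$ is pairwise independent, zero-mean, and independent of $h_i$, the off-diagonal terms of the second moment cancel and $\ex{(Z')^2} = \sum_{l \notin S \cup \{j\}} f_l^2\, \pr{h_i(l) = h_i(j)} = \frac{1}{6k}\sum_{l \notin S\cup\{j\}} f_l^2 \le \frac{1}{6k}(\err{f}{2})^2$, whence $\pr{|Z'| \ge k^{-1/2}\err{f}{2}} \le 1/6$. Adding the two contributions gives per-row failure probability at most $1/3$, which is the single-row claim.

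For the $d$-row statement, fix $j$; the $d$ per-row estimates are mutually independent, each within $k^{-1/2}\err{f}{2}$ of $f_j$ with probability $\ge 2/3$, so a Chernoff bound shows their median is outside this window with probability $\exp(-\Omega(d))$, which is at most $n^{-c}$ once $d = \Theta(\log n)$ for a suitable constant. A union bound over $j \in [n]$ makes the estimate good at every coordinate simultaneously with high probability. For the space: $A$ has $6kd = O(k\log n)$ counters, each of absolute value at most $mM = \poly(n)$ and hence representable in $O(\log n)$ bits, giving $O(k\log^2 n)$ bits; the $d$ four-wise independent hash-function pairs add only $O(\log^2 n)$ bits, a lower-order term.

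There is no serious obstacle here: the only points needing care are (i) the set-inclusion that lets one handle the head collisions and the tail noise separately, and (ii) choosing the table width $6k$ large enough that both $\pr{E^c}$ and the Chebyshev bound come in at $1/6$, so that the per-row success probability is the claimed $2/3$; the remaining second-moment computation is mechanical and the available $4$-wise independence is more than sufficient for it.
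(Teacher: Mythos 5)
Your proof is correct: the head/tail split (union bound over the top-$k$ coordinates colliding, Chebyshev with pairwise-independent signs for the tail, giving $1/6+1/6=1/3$ per-row failure), followed by the Chernoff-plus-union-bound argument for the median over $d=\Theta(\log n)$ rows and the counter/hash-function space accounting, is exactly the standard Countsketch analysis. The paper does not prove this lemma itself — it invokes it as the known guarantee of \cite{charikar2002finding} — and your argument is precisely the proof from that reference, so there is nothing substantive to compare.
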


We now introduce a data structure which simulates running Countsketch on a uniform sample of the stream of size $\poly(\alpha \log(n)/\eps)$. The full data structure is given in Figure \ref{fig:csss}. 
Note that for a fixed row, each update is chosen with probability at least $2^{-p} \geq S/(2m) = \Omega (\alpha^2 T^2 \log(n) / (\eps^2\ab m))$. We will use this fact to apply Lemma \ref{lem:l1-preservation} with $\eps' = (\eps/T)$ and $\delta = 1/\poly(n)$. The parameter $T$ will be $\poly(\log(n)/\eps)$, and we introduce it as a new symbol purely for clarity. 

 Now the updates to each row in \texttt{CSSS} are sampled independently from the other rows, thus \texttt{CSSS} may not represent running Countsketch on a single valid sample. However, each row \textit{independently} contains the result of running a row of Countsketch on a valid sample of the stream. Since the Countsketch guarantee holds with probability $2/3$ for each row, and we simply take the median of $O(\log(n))$ rows to obtain high probability, it follows that the output of \texttt{CSSS} will still satisfy an additive error bound w.h.p. if each row also satisfies that error bound with probability $2/3$.

By Lemma \ref{lem:l1-preservation} with sensitivity parameter $(\eps/T)$, we know that we preserve the weight of all items $f_i$ with $|f_i| \geq 1/T \|f\|_1$ up to a $(1 \pm \eps)$ factor w.h.p. after sampling and rescaling. For all smaller elements, however, we obtain error additive in $\eps \|f\|_1/T$.
 This gives rise to the natural division of the coordinates of $f$. Let $big \subset [n]$ be the set of $i$ with  $|f_i| \geq 1/T \|f\|_1$, and let $small \subset [n]$ be the complement. Let $\mathcal{E} \subset [n]$ be the top $k$ heaviest coordinates in $f$, and let $s \in \R^n$ be a fixed sample vector of $f$ after rescaling by $p^{-1}$. Since $\err{s}{2} = \min_{\hat{y} \text{ k-sparse}} \|s - \hat{y}\|_2$, it follows that $\err{s}{2} \leq \|s_{big \setminus \mathcal{E}} \|_2 + \|s_{small}\|_2 $. Furthermore, by Lemma \ref{lem:l1-preservation} $\|s_{big \setminus \mathcal{E}}\|_2 \leq(1+ \eps) \|f_{big \setminus \mathcal{E}}\|_2 \leq (1+  \eps) \err{f}{2}$. So it remains to upper bound $\|s_{small}\|_2^2$, which we do in the following technical lemma. 

The intuition for the Lemma is that $\|f_{small}\|_2$ is maximized when all the $L_1$ weight is concentrated in $T$ elements, thus $\|f_{small}\|_2 \leq (T(\|f\|_1/T)^2)^{1/2} =\|f\|_1/T^{1/2}$. By the $\alpha$ property, we know that the number of insertions made to the elements of $small$ is $\ab$bounded by $\alpha \|f\|_1$. Thus, computing the variance of $\ab\|s_{small}\|_2^2$ and applying Bernstein's inequality, we obtain a similar upper bound for $\|s_{small}\|_2$. 

\begin{lemma}
	\label{lem:smallbound}
			If $s$ is the rescaled frequency vector resulting from uniformly sampling with probability $p \geq S/(2m)$, where $S = \Omega(\alpha^2T^2\; \ab\log(n))$ for $T= \Omega(\log(n))$, of a general turnstile stream $f$ with the $\alpha$ property, then we have $\|s_{small}\|_2 \leq 2 T^{-1/2}\|f\|_1$ with high probability.
\end{lemma}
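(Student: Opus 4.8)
The plan is to exploit that $\|s_{small}\|_2^2 = \sum_{i \in small} s_i^2$ is a sum of \emph{independent} random variables: since each stream update touches exactly one coordinate, the variables $s_i$, $i \in small$, depend on disjoint sets of updates. Writing $N_i = f_i^+ + f_i^-$ for the number of updates to coordinate $i$, a one-line variance computation gives $\ex{s_i} = f_i$ and $\text{Var}(s_i) = N_i(1-p)/p \le N_i/p$, hence $\ex{s_i^2} \le f_i^2 + N_i/p$. Summing over $small$, using that every $i \in small$ has $|f_i| < \|f\|_1/T$ (so $\|f_{small}\|_2^2 \le \|f_{small}\|_\infty\|f_{small}\|_1 \le \|f\|_1^2/T$) together with $\sum_{i\in small} N_i \le m \le \alpha\|f\|_1$ (the $\alpha$-property) and $1/p \le 2m/S$, I obtain $\ex{\|s_{small}\|_2^2} \le \|f\|_1^2/T + 2\alpha^2\|f\|_1^2/S$, whose second term is $o(\|f\|_1^2/T)$ by the assumption $S = \Omega(\alpha^2 T^2\log n)$. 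So in expectation $\|s_{small}\|_2^2 \le 2\|f\|_1^2/T$, already within a constant factor of the target; the real work is the concentration.

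The difficulty is that Bernstein's inequality applied naively to $\sum_{i\in small}s_i^2$ is vacuous: a single coordinate with many nearly-cancelling updates can, with tiny probability, have $s_i^2$ as large as $(N_i/p)^2 \approx (m/p)^2$, dwarfing $\|f\|_1^2/T$, so the ``largest summand'' parameter Bernstein requires is far too big. I would remove this by a truncation argument preceded by an $\ell_\infty$ bound. Fix $i$ and consider the signed sampled count $p\,s_i$, a sum of $N_i$ independent mean-zero terms each bounded by $1$ in absolute value, with total variance at most $N_i p$: Bernstein in the regime $pN_i \gtrsim \log n$, and a Chernoff bound on the number of sampled updates to $i$ when $pN_i \lesssim \log n$, together give $|s_i - f_i| = O(\sqrt{m\log(n)/p}) + O(\log(n)/p)$ with probability $1 - n^{-c}$; using $p \ge S/(2m)$, $m \le \alpha\|f\|_1$, and $S = \Omega(\alpha^2 T^2\log n)$, both error terms are $O(\|f\|_1/T)$. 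A union bound over the at most $n$ coordinates then shows that, with high probability, $|s_i| \le c_1\|f\|_1/T$ for every $i \in small$ simultaneously, where $c_1$ is a constant that can be pushed close to $1$ by enlarging the constant hidden in $S$. Call this event $\mathcal{E}_0$.

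Finally I would replace $s_i$ by the truncated variable $\tilde s_i = s_i\cdot\mathbf{1}[|s_i| \le c_1\|f\|_1/T]$. The $\tilde s_i$ are still independent, satisfy $\tilde s_i^2 \le c_1^2\|f\|_1^2/T^2$ and $\ex{\tilde s_i^2} \le \ex{s_i^2}$, and equal $s_i$ on $\mathcal{E}_0$; moreover $\text{Var}(\tilde s_i^2) \le \ex{\tilde s_i^4} \le (c_1^2\|f\|_1^2/T^2)\ex{\tilde s_i^2}$, so $\sum_{i\in small}\text{Var}(\tilde s_i^2) \le (c_1^2\|f\|_1^2/T^2)\ex{\|s_{small}\|_2^2} = O(\|f\|_1^4/T^3)$. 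Applying Bernstein's inequality to $\tilde Z = \sum_{i\in small}\tilde s_i^2$ with deviation $t = 2\|f\|_1^2/T$: the maximum-summand parameter is $c_1^2\|f\|_1^2/T^2$ and the variance parameter is $O(\|f\|_1^4/T^3)$, so the exponent is $\Omega(t^2/(V + Mt)) = \Omega(T) = \Omega(\log n)$ by the hypothesis $T = \Omega(\log n)$; hence $\tilde Z \le \ex{\tilde Z} + t \le 4\|f\|_1^2/T$ with high probability. On the intersection with $\mathcal{E}_0$, still a high-probability event, $\|s_{small}\|_2^2 = \sum_{i\in small}s_i^2 = \tilde Z \le 4\|f\|_1^2/T$, i.e.\ $\|s_{small}\|_2 \le 2T^{-1/2}\|f\|_1$. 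The main obstacle, and the only place where anything beyond routine second-moment bookkeeping is needed, is the $\ell_\infty$/truncation step: it is what lets the $\alpha$-property and $T = \Omega(\log n)$ absorb the $\log n$ loss from the union bound over coordinates, so that a single wild coordinate cannot spoil the bound.
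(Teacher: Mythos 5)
Your proposal is correct and follows essentially the same route as the paper's proof: control every sampled small coordinate by $|s_i| = O(\|f\|_1/T)$ with high probability, bound $\ex{\|s_{small}\|_2^2}$ by $O(\|f\|_1^2/T)$ using the $\alpha$-property, and apply Bernstein's inequality with deviation $\Theta(\|f\|_1^2/T)$ so that $T=\Omega(\log n)$ yields the high-probability conclusion. The only differences are in execution: you derive the $\ell_\infty$ bound by a direct per-coordinate Bernstein/Chernoff argument instead of invoking Lemma \ref{lem:l1-preservation}, and you formalize the paper's conditioning as a truncation, bounding the variance via $\ex{\tilde s_i^4} \le (c_1\|f\|_1/T)^2\,\ex{\tilde s_i^2}$, which cleanly replaces the paper's explicit fourth-moment expansion.
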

\begin{proof}
	
	Fix  $f \in \R^n$, and assume that $\|f\|_1 > S$ (otherwise we could just sample all the updates and our counters in Countsketch would never exceed $\log(\alpha S)$). For any $i \in [n]$, let $f_i = f_i^+ - f_i^-$, so that $f = f^+ - f^-$, and let $s_i$ be our rescaled sample of $f_i$. By definition, for each $i \in small$ we have $f_i < \frac{1}{T}\|f\|_1$. Thus the quantity $\|f_{small}\|_2^2$ is maximized by having $T$ coordinates equal to $\frac{1}{T}\|f\|_1$. Thus $\|f_{small}\|_2^2 \leq T (\frac{\|f\|_1}{T})^2 \leq \frac{\|f\|_1^2}{T}$.  Now note that if we condition on the fact that $s_i \leq 2/T \|f\|$ for all $i \in small$, which occurs with probability greater than $1 - n^{-5}$ by Lemma \ref{lem:l1-preservation}, then since  $\ex{\sum_i |s_i|^4} = O(n^4)$, all of the following expectations change by a factor of at most $(1 \pm 1/n)$ by conditioning on this fact. Thus we can safely ignore this conditioning and proceed by analyzing $\ex{\|s_{small}\|_2^2} $ and $\ex{\|s_{small}\|_4^4}$ without this condition, but use the conditioning when applying Bernstein's inequality later.
	
	We have that $|s_i| = |\frac{1}{p}\sum_{j=1}^{f_i^+ + f_i^-} X_{ij}|$, where  $X_{ij}$ is an indicator random variable that is $\pm1$ if the $j$-th update to $f_i$ is sampled, and $0$ otherwise, where the sign depends on whether or not the update was an insertion or deletion and $p \geq S/(2m)$ is the probability of sampling an update. Then $\ex{|s_i|} = \ex{ | \frac{1}{p}\sum_{j=1}^{f_i^+ + f_i^-} X_{ij}|} = |f_i|$. Furthermore, we have $\ex{s_i^2} = \frac{1}{p^2}\ex{(\sum_{j=1}^{f_i^+ + f_i^-} X_{ij})^2} $ \[
	= \frac{1}{p^2}(\sum_{j=1}^{f_i^+ + f_i^-}\ex{ X_{ij}^2} + \sum_{j_1 \neq j_2}\ex{X_{ij_1}X_{ij_2}}) \]
	\[ = \frac{1}{p}(f_i^+ + f_i^-) + ((f_i^+)(f_i^+ - 1) + (f_i^-)(f_i^- - 1)  - 2f_i^+ f_i^- )		\]
	Substituting $f_i^- = f_i^+ - f_i$ in part of the above equation  gives  $\ex{s_i^2} = \frac{1}{p}(f_i^+ + f_i^-) + f_i^2 + f_i - 2f_i^+ \leq\frac{1}{p}(f_i^+ + f_i^-) + 2f_i^2$.
	So $\ex{\sum_{i \in small} s_i^2} \leq  \frac{1}{p}(\|f^+_{small}\|_1 + \|f^-_{small}\|_1) + 2\|f_{small}\|_2^2$, which is at most $  \frac{\alpha}{p}\|f\|_1 + 2\frac{\|f\|_1^2}{T}$ by the $\alpha$-property of $f$ and the upper bound on $\|f_{small}\|_2^2$. Now $\ex{s_i^4} = \frac{1}{p^4}\ex{(\sum_{j=1}^{f_i^+ + f_i^-} \; \ab X_{ij})^4 }$, which we can write as
	\begin{equation} \label{eqn:1}
	\bex{  \sum_j X_{ij}^4 + 4\sum_{j_1 \neq j_2} X_{ij_1}^3 X_{ij_2}	 +12 \sum_{\substack{j_1,j_2,j_3 \\ \text{distinct}}} X_{ij_1}^2 X_{ij_2}X_{ij_3} + \sum_{\substack{j_1,j_2,j_3,j_4 \\ \text{distinct}}} X_{ij_1} X_{ij_2}X_{ij_3}X_{ij_4}	}\frac{1}{p^4}
	\end{equation}
	We analyze Equation \ref{eqn:1} term by term. First note that $\ex{ \sum_j \; \ab X_{ij}^4} = 	p (f_i^+ + f_i^-)$, and $\ex{ \sum_{j_1 \neq j_2} X_{ij_1}^3 X_{ij_2}	} =  p^2 \big((f_i^+)(f_i^+ - 1) + (f_i^-)(f_i^- - 1)- 2(f_i^+ f_i^-)\big)$. Substituting $f_i^- = f_i^+ - f_i$, we obtain $\ex{ \sum_{j_1 \neq j_2} X_{ij_1}^3 X_{ij_2}	} \leq 2p^2f_i^2 $. Now for the third term we have $\ex{ \sum_{j_1 \neq j_2 \neq j_3} X_{ij_1}^2\ab X_{ij_2}X_{ij_3}  } 
	= p^3 \big((f_i^+)(f_i^+ -1)(f_i^+ -2) + (f_i^-)(f_i^- - 1)(f_i^- - 2) + f_i^+ (f_i^-)(f_i^- - 1) + f_i^-(f_i^+)(f_i^+ - 1)  -2(f_i^+f_i^-)(f_i^+ - 1) - 2(f_i^+f_i^-)(f_i^- - 1) \big)$, which after the same substitution is upper bounded by $10p^3 \max\{f_i^+,|f_i|\} f_i^2 \leq 10 p^3 \alpha \ab \|f\|_1 f_i^2$, where the last inequality follows from the $\alpha$-property of $f$. Finally, the last term is $\ex{\sum_{j_1 \neq j_2 \neq j_3 \neq j_4}\allowbreak X_{ij_1} X_{ij_2}X_{ij_3}X_{ij_4} } =  p^4\big(f_i^+(f_i^+ - 1)(f_i^+ - 2)(f_i^+ - 3) + f_i^-(f_i^- - 1)(f_i^- - 2)(f_i^- - 3)  + 6(f_i^+(f_i^+ - 1))(f_i^-(f_i^- - 1))- 4 \big(f_i^+ (f_i^+ - 1) (f_i^+ - 2)f_i^- +f_i^-(f_i^- - 1)(f_i^- - 2)f_i^+	\big)	\big) $. Making the same substitution allows us to bound this above by $p^4 (24 f_i^4 - 12 f_i f_i^+ + 12(f_i^+)^2) \leq p^4(36 (f_i)^4 + 24(f_i^+)^2)$.
	
	Now $f$ satisfies the $\alpha$ property. Thus $\|f_{small}^+\|_1+|f_{small}^-\|_1\allowbreak \leq \alpha \ab\|f\|_1$, so summing the bounds from the last paragraph over all $i \in small$ we obtain 
	\[\ex{ \frac{1}{p^4} \|s_{small}\|_4^4} \leq  \frac{1}{p^3} \alpha \|f\|_1 + \frac{8}{p^2} \|f_{small}\|_2^2  \] \[+\frac{120\alpha}{p}\|f_{small}\|_2^2 \
	|f\|_1 + 36 \|f_{small}\|_4^4 + 24 \alpha^2 \|f\|_1^2 \]
	Now  $1/p \leq \frac{2m}{S} \leq \frac{2\alpha}{S}\|f\|_1$ by the $\alpha$-property. Applying this with the fact that $\|f_{small}\|_2^2$ is maximized at $ \frac{\|f\|_1^2}{T}$, and similarly $\|f_{small}\|_2^2$ is maximized at $T(\|f\|_1/T)^4 = \|f\|_1^4/T^3$ we have that
	$	\ex{ \frac{1}{p^4} \|s_{small}\|_4^4} $ is at most\[   \frac{8\alpha^4}{S^3} \|f\|_1^4 + \frac{36\alpha^2}{S^2T} \|f\|_1^4 + \frac{240\alpha^2}{ST} \|f\|_1^4 +  \frac{36}{T^3} \|f\|_1^4 + 24 \alpha^2 \|f\|_1^2 \]
	Since we have $\|f\|_1 > S > \alpha^2 T^2$, the above expectation is upper bounded by $ \frac{300}{T^3}\|f\|_1^4$.
	We now apply Bernstein's inequality. Using the fact that we conditioned on earlier, we have the upper bound $\frac{1}{p}s_i \leq 2 \|f\|_1/T$, so plugging this into Bernstein's inequality yields:
	$\pr{ \big| \|s_{small}\|_2^2 - \ex{\|s_{small}\|_2^2 } \big| >  \frac{\|f\|_1^2}{T} } \leq \exp\big(-	\frac{\|f\|_1^4/(2T^2)  }{300\|f\|_1^4/T^3  +  2\|f\|_1^3/(3T^2)}	\big) \leq \exp\big(-T/604		\big)	$
	Finally,  $T= \Omega(\log(n))$, so the above probability is $\text{poly}(\frac{1}{n})$ for $T = c\log(n)$ and a sufficiently large constant $c$.		
	Since the expectation $\ex{\|s_{small}\|_2^2 }$ is at most $\frac{\alpha}{p}\|f\|_1 + 2\|f\|_1^2/T \leq 3\|f\|_1^2/T$, it follows that $\|s_{small}\|_2 \leq \frac{2 \|f\|_1}{\sqrt{T}}$ with high probability, which is the desired result.
\end{proof}

Applying the result of Lemma \ref{lem:smallbound}, along with the bound on $\err{s}{2}$ from the previous paragraphs, we obtain the following corollary.

\begin{corollary}
	\label{cor:Errbound}
	With high probability, if $s$ is as in Lemma \ref{lem:smallbound}, then $\err{s}{2} \leq (1+\eps )\err{f}{2}  + 2 T^{-1/2} \|f\|_1$
\end{corollary}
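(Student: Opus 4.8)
The plan is to simply assemble the pieces that have already been put in place. Recall $\mathcal{E}\subseteq[n]$ is the set of the $k$ heaviest coordinates of $f$, which is $k$-sparse, so by definition of $\err{s}{2}$ as a minimum over $k$-sparse vectors, $\err{s}{2}\le \|s-s_{\mathcal{E}}\|_2=\|s_{[n]\setminus\mathcal{E}}\|_2$. Splitting $[n]\setminus\mathcal{E}$ into its intersections with $big$ and $small$ and using $\sqrt{a^2+b^2}\le a+b$ for $a,b\ge 0$ gives
\[
\err{s}{2}\;\le\;\|s_{big\setminus\mathcal{E}}\|_2+\|s_{small\setminus\mathcal{E}}\|_2\;\le\;\|s_{big\setminus\mathcal{E}}\|_2+\|s_{small}\|_2,
\]
which is exactly the decomposition stated in the paragraph before Lemma \ref{lem:smallbound}. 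It remains to bound the two terms on the right.

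For the first term I would invoke Lemma \ref{lem:l1-preservation} with sensitivity parameter $\eps'=\eps/T$ and $\delta=1/\poly(n)$: since the sampling rate here satisfies $p\ge S/(2m)=\Omega(\alpha^2 T^2\log(n)/(\eps^2 m))$, the hypothesis of that lemma is met, and a union bound over the $n$ coordinates shows that with high probability $|s_i-f_i|\le(\eps/T)\|f\|_1$ for every $i\in[n]$. In particular, for $i\in big$ we have $|f_i|\ge\|f\|_1/T$, hence $|s_i|\le(1+\eps)|f_i|$; squaring and summing over $i\in big\setminus\mathcal{E}$ (a subset of $[n]\setminus\mathcal{E}$, i.e. coordinates supported off the $k$ heaviest of $f$) yields $\|s_{big\setminus\mathcal{E}}\|_2\le(1+\eps)\|f_{big\setminus\mathcal{E}}\|_2\le(1+\eps)\err{f}{2}$. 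For the second term, Lemma \ref{lem:smallbound} applies verbatim under the present hypotheses on $S$ and $T$ and gives $\|s_{small}\|_2\le 2T^{-1/2}\|f\|_1$ with high probability.

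Combining the three inequalities and taking a union bound over the polynomially many low-probability failure events of Lemmas \ref{lem:l1-preservation} and \ref{lem:smallbound}, we obtain, with high probability,
\[
\err{s}{2}\;\le\;\|s_{big\setminus\mathcal{E}}\|_2+\|s_{small}\|_2\;\le\;(1+\eps)\err{f}{2}+2T^{-1/2}\|f\|_1,
\]
as claimed. There is essentially no remaining obstacle: the only genuinely technical ingredient — controlling $\|s_{small}\|_2$ via the fourth-moment computation and Bernstein's inequality — was already handled in Lemma \ref{lem:smallbound}, and the rest is the triangle inequality together with bookkeeping of the error parameters. The one minor point to watch is that the $(1+\eps)$ factor coming out of the per-coordinate preservation (and any constant-factor rescaling of $\eps$ incurred along the way) should be reconciled with the $(1+\eps)$ in the statement; rescaling $\eps$ by a suitable constant at the end absorbs this.
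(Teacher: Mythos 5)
Your proposal is correct and follows essentially the same route as the paper: the decomposition $\err{s}{2}\le\|s_{big\setminus\mathcal{E}}\|_2+\|s_{small}\|_2$, Lemma \ref{lem:l1-preservation} with sensitivity $\eps/T$ for the $big$ coordinates, and Lemma \ref{lem:smallbound} for $\|s_{small}\|_2$ are exactly the paper's ingredients. No gaps.
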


Now we analyze the error from \texttt{CSSS}. Observe that each row in \texttt{CSSS} contains the result of hashing a uniform sample into $6k$ buckets.  Let $s^i \in \R^n$ be the frequency vector, \emph{after scaling} by $1/p$, of the sample hashed into the $i$-th row of \texttt{CSSS}, and let $y^i \in \R^n$ be the estimate of $s^i$ taken from the $i$-th row of \texttt{CSSS}.
Let $\sigma(i):n \to [O(\log(n))] $ be the row from which Countsketch returns its estimate for $f_i$, meaning $y^*_i = y^{\sigma(i)}_i$.

\begin{theorem}
	\label{thm:cssserror}
For $\eps>0, k>1$, with high probability, when run on a general turnstile stream $f \in \R^n$ with the $\alpha$ property, \texttt{CSSS} with $6k$ columns, $O(\log(n))$ rows, returns $y^*$ such that, for every $i \in [n]$ we have \[|y^*_i - f_i| \leq   2( \frac{1}{k^{1/2}}\err{f}{2}  + \eps \|f\|_1 ) 	\]
	It follows that if $\hat{y}$ is the best $k$-sparse approximation to $y^*$, then  $\err{f}{2}\leq \|f - \hat{y} \|_2 \leq 5(k^{1/2}\eps\|f\|_1+ \err{f}{2})$ with high probability. The space required is $O(k\log(n)\ab\log(\allowbreak\alpha \log(n)/\eps))$.
\end{theorem}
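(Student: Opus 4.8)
The plan is to establish the uniform point‑query bound first, then derive the $k$‑sparse recovery guarantee from it as a black box, and finally account for the space.

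\textbf{The point‑query bound.} Fix a row $i$. As noted just before the theorem, the $i$‑th row of \texttt{CSSS} is exactly the Countsketch state of a rescaled uniform sample $s^i$ of the stream taken at rate $p \ge S/(2m)$, and these samples are independent across the $d = O(\log n)$ rows. I would bound the row‑$i$ estimate $y^i_j$ by the triangle inequality $|y^i_j - f_j| \le |y^i_j - s^i_j| + |s^i_j - f_j|$. The first term is controlled by Lemma~\ref{lem:cs1row}, which gives $|y^i_j - s^i_j| \le k^{-1/2}\err{s^i}{2}$ with probability $2/3$ over $h_i,g_i$, together with Corollary~\ref{cor:Errbound}, which bounds $\err{s^i}{2} \le (1+\eps)\err{f}{2} + 2T^{-1/2}\|f\|_1$ w.h.p.\ over the sampling. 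The second term is $(\eps/T)\|f\|_1$ w.h.p.\ by Lemma~\ref{lem:l1-preservation} applied with sensitivity parameter $\eps/T$ and $\delta = n^{-\Theta(1)}$ — the point of the choice $S = \Theta(\alpha^2\eps^{-2}T^2\log n)$ in Figure~\ref{fig:csss} is precisely that $p \ge S/(2m)$ then satisfies that lemma's hypothesis. Since $T \ge 4/\eps^2$ we have $T^{-1/2} \le \eps/2$, so the sampling‑induced error is absorbed into $2\eps\|f\|_1$, and (using $1+\eps\le 2$) each row gives $|y^i_j - f_j| \le 2(k^{-1/2}\err{f}{2} + \eps\|f\|_1)$ with probability at least $2/3 - n^{-\Theta(1)} \ge 3/5$. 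Because the hash functions and the samples are independent across rows, these $d$ events are independent, so a Chernoff bound shows the median $y^*_j$ satisfies the same bound except with probability $\exp(-\Omega(d)) = n^{-\Theta(1)}$; a union bound over $j \in [n]$ finishes this part.

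\textbf{The $k$‑sparse recovery bound.} Writing $\gamma := 2(k^{-1/2}\err{f}{2} + \eps\|f\|_1)$ for the point‑query error, I would run the standard Countsketch‑to‑sparse‑recovery argument. Let $S^\ast = \mathrm{supp}(\hat y)$ be the top‑$k$ coordinates of $y^*$ and $S$ the top‑$k$ of $f$. Then $\|f - \hat y\|_2 \le \|f_{S^\ast} - y^*_{S^\ast}\|_2 + \|f_{\overline{S^\ast}}\|_2 \le \sqrt{k}\,\gamma + \|f_{\overline{S^\ast}}\|_2$, and $\|f_{\overline{S^\ast}}\|_2^2 \le \err{f}{2}^2 + \|f_{S\setminus S^\ast}\|_2^2$. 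Pairing each $i \in S\setminus S^\ast$ with a distinct $j \in S^\ast\setminus S$ with $|y^*_j|\ge|y^*_i|$ gives $|f_i| \le |f_j| + 2\gamma$, hence $\|f_{S\setminus S^\ast}\|_2 \le \|f_{S^\ast\setminus S}\|_2 + 2\sqrt{k}\,\gamma \le \err{f}{2} + 2\sqrt{k}\,\gamma$. Using $\sqrt{a^2+b^2}\le a+b$ and $\sqrt{k}\,\gamma = 2\err{f}{2} + 2\sqrt{k}\,\eps\|f\|_1$ yields $\|f-\hat y\|_2 = O(\err{f}{2} + \sqrt{k}\,\eps\|f\|_1)$; with careful constant bookkeeping (and, if needed, rescaling $\eps$ by an absolute constant at the outset) this is the claimed $5(\sqrt{k}\,\eps\|f\|_1 + \err{f}{2})$. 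The lower bound $\err{f}{2} \le \|f-\hat y\|_2$ is immediate since $\hat y$ is $k$‑sparse.

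\textbf{Space.} The table has $d\cdot 6k = O(k\log n)$ cells, each holding two counters. If $\|f\|_1 \le S$ then $m \le \alpha\|f\|_1 \le \alpha S$ and no counter exceeds $\alpha S$; otherwise the sampling rate keeps the total sampled mass per row at $O(S)$ and, with the periodic $\mathrm{Bin}(\cdot,1/2)$ halving, no counter exceeds $O(\alpha S)$ w.h.p. Hence each counter uses $O(\log(\alpha S)) = O(\log(\alpha\log n/\eps))$ bits, since $S = \poly(\alpha,\log n,1/\eps)$ and $T = \poly(\log n,1/\eps)$, giving $O(k\log n\,\log(\alpha\log n/\eps))$ bits total; the $O(\log^2 n)$ bits for the $4$‑wise independent hash functions and the $O(\log n)$‑bit stream‑position counter are lower order.

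\textbf{Main obstacle.} The delicate step is the point‑query bound: one must choose $T$ large enough that $T^{-1/2}\|f\|_1$ vanishes into $\eps\|f\|_1$ and that the sampling failure probability beats a union bound over all $n$ coordinates and all $d$ rows, while keeping $S$ polynomial in $\alpha,\log n,1/\eps$ so the counters fit in $o(\log n)$ bits — and one must carefully justify that the rows behave as independent Countsketch instances on independent samples, so that taking the median genuinely boosts the per‑row $2/3$ guarantee to high probability. The $k$‑sparse step is routine but requires attention to constants.
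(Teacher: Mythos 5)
Your proposal is correct and takes essentially the same route as the paper's proof: per-row sampling error via Lemma~\ref{lem:l1-preservation} with sensitivity $\eps/T$, the per-row tail bound via Corollary~\ref{cor:Errbound}, the per-row $2/3$ Countsketch guarantee of Lemma~\ref{lem:cs1row} boosted to high probability by the median over the $O(\log n)$ independently sampled rows, a union bound over $i\in[n]$, and the same counter-size accounting ($\poly(\alpha\log(n)/\eps)$ samples per counter) for the space bound. The only noticeable difference is in the $k$-sparse step, where you use the standard pairing argument rather than the paper's looser two-case triangle-inequality sketch; your bookkeeping lands at a constant somewhat larger than $5$ in front of $\err{f}{2}$ (which rescaling $\eps$ alone does not repair), but this is immaterial since the paper's own derivation of the constant $5$ is equally informal and the claim is only used up to constants.
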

\begin{proof}
	Set $S =  \Theta( \frac{\alpha^2}{\eps^2 }T^2 \log(n))$ as in Figure \ref{fig:csss}, and set $T = 4/\eps^2 + O(\log(n))$. Fix any $i \in [n]$. Now \texttt{CSSS} samples updates uniformly with probability $p > S/(2m)$, so applying Lemma \ref{lem:l1-preservation} to our sample $s^j_i$ of $f_i$ for each row $j$ and union bounding, with high probability we have $s^j_i = f_i \pm \frac{\eps}{T}\|f\|_1 =  s^q_i$ for all rows $j,q$. Then by the Countsketch guarantee of Lemma \ref{lem:cs1row}, for each row $q$ that $y^q_i =  f_i \pm( \frac{\eps}{T}\|f\|_1 + k^{-1/2}\max_j \err{s^j}{2})$ with probability $2/3$. Thus $y^*_i = y_i^{\sigma(i)} =  f_i \pm ( \frac{\eps}{T}\|f\|_1 + k^{-1/2}\max_j \err{s^j}{2})$ with high probability. Now noting that $\sqrt{T} \geq 2/\eps$, we apply Corollary \ref{cor:Errbound} to $\err{s^j}{2}$ and union bound over all $j \in [O(\log(n))]$ to obtain $\max_j \err{s^j}{2} \leq (1 + \eps) \err{f}{2} + \eps\|f\|_1$ w.h.p., and union bounding over all $i \in [n]$ gives
	$|y_i^* - f_i| \leq 2(  \frac{1}{k^{1/2}}\err{f}{2}+ \eps\|f\|_1)$ for all $i \in [n]$ with high probability.
	
	For the second claim, note that $\err{f}{2} \leq \|f - f'\|_2$ for any $k$-sparse vector $f'$, from which the first inequality follows. Now if the top $k$ coordinates are the same in $y^*$ as in $f$, then $ \|f - \hat{y}\|_2$ is at most $\err{f}{2}$ plus the $L_2$ error from estimating the top $k$ elements, which is at most $(4k(\eps\|f\|_1 + k^{-1/2}\err{f}{2})^2)^{1/2} \leq 2(k^{1/2}\eps \|f\|_1+ \err{f}{2})$. In the worst case the top $k$ coordinates of $y^*$ are disjoint from the top $k$ in $f$. Applying the triangle inequality, the error is at most the error on the top $k$ coordinates of $y^*$ plus the error on the top $k$ in $f$. Thus $ \|f - \hat{y}\|_2 \leq  5(k^{1/2}\eps \|f\|_1+ \err{f}{2})$
	as required. 
	
	For the space bound, note that the Countsketch table has $O(k\log(n))$ entries, each of which stores two counters which holds $O(S)$ samples in expectation. So the counters never exceed $\poly(S) = \poly(\frac{\alpha}{\eps }\log(n))$ w.h.p. by Chernoff bounds, and so can be stored using $O(\log(\alpha \log(n)/\eps))$ bits each (we can simply terminate if a counter gets too large).	
\end{proof}

We now address how the error term of Theorem \ref{thm:cssserror} can be estimated so as to bound the potential error. This will be necessary for our $L_1$ sampling algorithm. We first state the following well known fact about norm estimation \cite{thorup2004tabulation}, and give a proof for completeness.

\begin{lemma}
	\label{Lem:csL2}
	Let $R\in \R^k$ be a row of the Countsketch matrix with $k$ columns run on a stream with frequency vector $f$. Then with probability $99/100$, we have $\ab\sum_{i=1}^{k} R_i^2\ab = (1 \pm O(k^{-1/2}))\|f\|_2^2$
\end{lemma}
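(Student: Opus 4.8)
The plan is to use the standard second-moment argument for a single hashed row of Count-Sketch, treating $\sum_{i=1}^k R_i^2$ as an unbiased (up to lower-order terms) estimator of $\|f\|_2^2$ whose variance is controlled by the $4$-wise independence of the hash functions. First I would write $R_\ell = \sum_{j : h(j) = \ell} g(j) f_j$ for each bucket $\ell \in [k]$, so that
\[
\sum_{\ell=1}^{k} R_\ell^2 \;=\; \sum_{\ell=1}^{k} \sum_{j_1, j_2 : h(j_1)=h(j_2)=\ell} g(j_1) g(j_2) f_{j_1} f_{j_2} \;=\; \sum_{j} f_j^2 \;+\; \sum_{j_1 \neq j_2} g(j_1) g(j_2) f_{j_1} f_{j_2} \, \mathbf{1}[h(j_1) = h(j_2)].
\]
Taking expectations and using that $g$ is drawn from a $4$-wise (in particular $2$-wise) independent family with $\ex{g(j_1)g(j_2)} = 0$ for $j_1 \neq j_2$, the cross terms vanish and $\ex{\sum_\ell R_\ell^2} = \|f\|_2^2$. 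So the estimator is exactly unbiased.

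Next I would bound the variance. Let $Z = \sum_{j_1 \neq j_2} g(j_1) g(j_2) f_{j_1} f_{j_2} \mathbf{1}[h(j_1) = h(j_2)]$ be the error term, so $\sum_\ell R_\ell^2 = \|f\|_2^2 + Z$ and $\ex{Z} = 0$. Expanding $\ex{Z^2}$, each term is $\ex{g(j_1)g(j_2)g(j_3)g(j_4)} \cdot f_{j_1}f_{j_2}f_{j_3}f_{j_4} \cdot \ex{\mathbf{1}[h(j_1)=h(j_2)]\mathbf{1}[h(j_3)=h(j_4)]}$ over $4$-tuples with $j_1 \neq j_2$ and $j_3 \neq j_4$; by $4$-wise independence of $g$ the $g$-expectation is nonzero only when the indices pair up, which (given the constraints) forces $\{j_1,j_2\} = \{j_3,j_4\}$, and then $\ex{\mathbf{1}[h(j_1)=h(j_2)]} = 1/k$ by $2$-wise independence of $h$. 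This gives $\ex{Z^2} = \sum_{j_1 \neq j_2} f_{j_1}^2 f_{j_2}^2 \cdot \frac{2}{k} \le \frac{2}{k}\|f\|_2^4$ (the factor $2$ from the two orderings of the pair). Hence $\mathrm{Var}\!\left(\sum_\ell R_\ell^2\right) \le \frac{2}{k}\|f\|_2^4$.

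Finally I would apply Chebyshev's inequality: $\pr{\left|\sum_\ell R_\ell^2 - \|f\|_2^2\right| \ge t\|f\|_2^2} \le \frac{2}{k t^2}$, and choosing $t = c\,k^{-1/2}$ for a suitable constant $c$ (e.g. $c = \sqrt{200}$ to get failure probability $1/100$) yields $\sum_\ell R_\ell^2 = (1 \pm O(k^{-1/2}))\|f\|_2^2$ with probability $99/100$, as claimed. I do not anticipate a serious obstacle here — this is a textbook AMS-style / Count-Sketch norm estimator argument — but the one point requiring a little care is the combinatorial case analysis in the variance computation: verifying that the only surviving tuples in $\ex{Z^2}$ are those where the unordered pairs coincide, which is exactly where the $4$-wise independence of $g$ (rather than mere pairwise independence) is used. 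One should also note this row is the same Count-Sketch row already maintained by the data structure, so no extra space is incurred; the $\pm 1$ bounded updates and turnstile deletions do not affect the argument since it depends only on the final frequency vector $f$.
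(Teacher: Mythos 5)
Your proposal is correct and follows essentially the same route as the paper's proof: expand $\sum_\ell R_\ell^2$, use $2$-wise independence of $g$ to get unbiasedness, use $4$-wise independence to bound the variance by $2\|f\|_2^4/k$, and finish with Chebyshev. The paper's variance computation gives the slightly sharper $2(\|f\|_2^4-\|f\|_4^4)/k$, but this makes no difference to the conclusion.
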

\begin{proof} Let $\mathbf{1}(E)$ be the indicator function that is $1$ if the event $E$ is true, and $0$ otherwise. Let $h:[n] \to [k]$ and $g:[n] \to \{0,1\}$ be $4$-wise independent hash functions which specify the row of Countsketch. Then $\ex{ \sum_{i=1}^{k} R_i^2 } = \ex{ \sum_{j=1}^k (\sum_{i=1}^n\mathbf{1}(h(i) = j) g(i)f_i)^2} = \ex{ \sum_{i=1}^nf_i^2 } + \ab \ex{ \sum_{j=1}^k\ab (\sum_{i1 \neq i2}\mathbf{1}(h(i1) = j) \mathbf{1}(h(i2) = j) g(i1) g(i2) f_{i1} f_{i2})^2}$. By the $2$-wise independence of $g$, the second quantity is $0$, and we are left with $\|f\|_2^2$. By a similar computation using the full $4$-wise independence, we can show that $Var(\sum_{i=1}^{k} R_i^2 ) = 2(\|f\|_2^4 - \|f\|_4^4)/k \leq 2\|f\|_2^4/k$.  Then by Chebyshev's inequality, we obtain $\pr{ |\sum_{i=1}^{k} R_i^2  - \|f\|_2^2 | > 10\sqrt{2} \|f\|_2^2/\sqrt{k}} < 1/100$ as needed.
\end{proof}

\begin{lemma}
	\label{lem:ErrEst}
	For $k>1, \eps \in (0,1)$, given a $\alpha$-property stream $f$, there is an algorithm that can produce an estimate $v$ such that $\err{f}{2} \leq v \leq 45k^{1/2}\eps \|f\|_1+ 20\err{f}{2}$ with high probability. The space required is the space needed to run two instances of \texttt{CSSS} with paramters $k,\eps$.
\end{lemma}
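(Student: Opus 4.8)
The plan is to run two independent instances of \texttt{CSSS}, both with sensitivity parameters $\Theta(k)$ and $\eps$, together with a crude constant-factor one-sided estimate $\hat L_1$ with $\|f\|_1 \le \hat L_1 \le 2\|f\|_1$ (obtainable in $O(\log\alpha + \log n)$ bits, which is dominated by the cost of one \texttt{CSSS}, or supplied by the caller). Call the two instances $A$ and $B$. From $A$ I take the Count-Sketch estimate $y^*$ and its best $k$-sparse approximation $\hat y$, supported on a set $\hat S$ with $|\hat S| = k$; I condition throughout on the high-probability success event of Theorem~\ref{thm:cssserror}, so that $\err{f}{2} \le \|f - \hat y\|_2 \le 5(k^{1/2}\eps\|f\|_1 + \err{f}{2})$ and $|y^*_i - f_i| \le 2(k^{-1/2}\err{f}{2} + \eps\|f\|_1)$ for every $i$. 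For each row $j$ of $B$, with rescaled sample vector $s^{B,j}$ and hash functions $h^B_j, g^B_j$, I ``clean'' the row by subtracting the known contributions of $\hat y$: set $\hat R^j_\ell := 2^{p}(a^+_{j\ell} - a^-_{j\ell}) - \sum_{i \in \hat S :\, h^B_j(i) = \ell} g^B_j(i)\, y^*_i$. By linearity of the Count-Sketch encoding, $\hat R^j$ is exactly the $j$-th Count-Sketch row of $B$ evaluated on the vector $w^j := s^{B,j} - \hat y$. The output is $v := 3\sqrt{\operatorname{median}_j \sum_\ell (\hat R^j_\ell)^2} + 6 k^{1/2}\eps\hat L_1$.

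First I would justify that $\sum_\ell(\hat R^j_\ell)^2$ estimates $\|w^j\|_2^2$. The hashes $(h^B_j, g^B_j)$ are independent of instance $A$ (hence of $\hat S$ and $y^*$) and of the sampling coins that produce $s^{B,j}$, so $w^j$ is independent of $(h^B_j, g^B_j)$ and Lemma~\ref{Lem:csL2} gives $\sum_\ell(\hat R^j_\ell)^2 = (1 \pm O(k^{-1/2}))\|w^j\|_2^2$ with probability $99/100$; running this over the $d = O(\log n)$ rows and taking the median makes it hold with high probability (for small $k$ one runs $B$ with $\Theta(k)$ columns with a large enough leading constant so that the relative error is below $1/2$ -- this leaves the asymptotic space unchanged). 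Next I would pin $\|w^j\|_2$ down uniformly in $j$: using Lemma~\ref{lem:l1-preservation} on the at-most-$T$ ``big'' coordinates (each preserved to within $\eps|f_i|$) and Lemma~\ref{lem:smallbound} on the ``small'' ones ($\|s^{B,j}_{small}\|_2 \le 2T^{-1/2}\|f\|_1$, and $\|f_{small}\|_2 \le T^{-1/2}\|f\|_1$), together with $T \ge 4/\eps^2$, I get $\|s^{B,j} - f\|_2 \le \tfrac52\eps\|f\|_1$ for all $j$ w.h.p. Since $\err{s}{2} = \min_{|A| = k}\|s_{-A}\|_2 \ge \err{f}{2} - \|s - f\|_2$, the triangle inequality and Theorem~\ref{thm:cssserror} give, for every $j$,
\[
\err{f}{2} - \tfrac52\eps\|f\|_1 \;\le\; \err{s^{B,j}}{2} \;\le\; \|w^j\|_2 \;\le\; \|s^{B,j} - f\|_2 + \|f - \hat y\|_2 \;\le\; 5\,\err{f}{2} + \tfrac{15}{2}k^{1/2}\eps\|f\|_1 .
\]

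Given this sandwich, the upper bound $v \le 45 k^{1/2}\eps\|f\|_1 + 20\,\err{f}{2}$ is immediate from the definition of $v$, the displayed range for the median-row $\|w^j\|_2$, and $\hat L_1 \le 2\|f\|_1$. For the lower bound $v \ge \err{f}{2}$ I would split on $\err{f}{2}$: if $\err{f}{2} \le 6 k^{1/2}\eps\|f\|_1$, the buffer term alone suffices, $v \ge 6 k^{1/2}\eps\hat L_1 \ge 6 k^{1/2}\eps\|f\|_1 \ge \err{f}{2}$; if instead $\err{f}{2} > 6 k^{1/2}\eps\|f\|_1 \ge 6\eps\|f\|_1$, then the sampling slack $\tfrac52\eps\|f\|_1$ is at most $\tfrac{5}{12}\err{f}{2}$, so $\|w^j\|_2 \ge \err{s^{B,j}}{2} \ge \tfrac{7}{12}\err{f}{2}$, and the median term alone gives $v \ge 3 \cdot \tfrac{1}{\sqrt 2}\cdot\tfrac{7}{12}\err{f}{2} > \err{f}{2}$. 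Finally, the space bound is that of two \texttt{CSSS} instances (the $\hat L_1$ sketch and the post-processing are free by comparison).

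The step I expect to be the main obstacle is precisely this one-sided lower bound in the regime $\err{f}{2} \lesssim \eps\|f\|_1$: there a plain constant multiple of the residual $L_2$ estimate can be much smaller than $\err{f}{2}$ (sampling can shrink the residual, and the removed-coordinate discrepancies contribute nothing one can rely on), so one genuinely needs the explicit $\Theta(k^{1/2}\eps\|f\|_1)$ additive buffer and hence a crude estimate of $\|f\|_1$. A secondary subtlety is that \texttt{CSSS} is \emph{not} a linear sketch -- different rows subsample different updates -- so Lemma~\ref{Lem:csL2} must be applied row by row to $w^j$, whose law depends on $j$; this is why the uniform-in-$j$ control of $\|w^j\|_2$ in the display above has to be established before taking the median.
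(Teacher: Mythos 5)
Your proposal is correct and takes essentially the same route as the paper's proof: two \texttt{CSSS} instances, with the best $k$-sparse approximation $\hat{y}$ obtained from the first and subtracted from the second (the paper feeds $-\hat{y}$ into \texttt{CSSS}$_2$, you remove its hashed contributions at query time, which is the same operation), then the row-wise $L_2$ estimate of Lemma \ref{Lem:csL2} combined with Lemmas \ref{lem:l1-preservation} and \ref{lem:smallbound} to sandwich $\|s^j - \hat{y}\|_2$ around $\|f - \hat{y}\|_2$ uniformly in $j$, a median over the $O(\log n)$ rows, a constant blow-up, and an additive $\Theta(\eps\|f\|_1)$-type buffer, finishing with Theorem \ref{thm:cssserror}. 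The remaining differences are cosmetic: your explicit crude estimate $\hat{L}_1$ and the case split for the lower bound replace the paper's direct use of $\|f\|_1$ (available exactly via a counter in the strict-turnstile application) with its additive $5\eps\|f\|_1$ term, and your constants are tracked differently but land within the stated bound $45k^{1/2}\eps\|f\|_1 + 20\,\err{f}{2}$.
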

\begin{proof}
	By Lemma \ref{Lem:csL2}, the $L_2$ of row $i$ of $\ttx{CSSS}$ with a constant number of columns will be a $(1 \pm 1/2)$ approximation of $\|s^i\|_2$ with probability $99/100$, where $s^i$ is the scaled up sampled vector corresponding to row $i$. 
	Our algorithm is to run two copies of \texttt{CSSS} on the entire stream, say $\ttx{CSSS}_1$ and $\ttx{CSSS}_2$, with $k$ columns and sensitivity $\eps$.  At the end of the stream we compute $y^*$ and $\hat{y}$ from $\ttx{CSSS}_1$ where $\hat{y}$ is the best $k$-sparse approximation to $y^*$. We then feed $-\hat{y}$ into $\ttx{CSSS}_2$. The resulting $L_2$ of the $i$-th row of $\ttx{CSSS}_2$ is $(1 \pm 1/2)\|s^i_2 - \hat{y} \|_2$ (after rescaling of $s^i$) with probability $99/100$, where $s^i_2$ is the sample corresponding to the $i$-th row of $\ttx{CSSS}_2$.
	
	Now let $T = 4/\eps^2$ as in Theorem \ref{thm:cssserror}, and let $s^i$ be the sample corresponding to the $i$-th row of \texttt{CSSS}$_1$. Then for any $i$, $\|s^i_{small}\|_2 \leq 2T^{-1/2} \|f\|_1$ w.h.p. by Lemma \ref{lem:smallbound}, and the fact that $\|f_{small}\|_2 \leq T^{-1/2}\|f\|_1$ follows from the definition of $small$.  Furthermore, by Lemma \ref{lem:l1-preservation}, we know w.h.p. that $|s^i_j - f_j| < \eps |f_j|$ for all $j \in big$, and thus $\|s^i_{big} - f_{big} \|_2 \leq \eps \|f\|_1$. Then by the reverse triangle inequality we have $\big|\|s^i - \hat{y}\|_2 - \|f - \hat{y}\|_2\big| \leq \|s^i - f \|_2 \leq\|s^i_{small} - f_{small} \|_2 + \|s^i_{big} - f_{big} \|_2 \leq  5\eps \|f\|_1$. Thus $\|s^i - \hat{y}\|_2 = \|f - \hat{y}\|_2 \pm 5\eps \|f\|_1$ for all rows $i$ w.h.p, so the $L_2$ of the $i$-th row of $\ttx{CSSS}_2$ at the end of the algorithm is the value $v_i$ such that $ \frac{1}{2}(\|f - \hat{y}\|_2 - 5\eps\|f\|_1)\leq  v_i \leq 2 ( \|f - \hat{y}\|_2 + 5\eps\|f\|_1)$ with probability $9/10$ (note that the bounds do not depend on $i$). Taking $v = 2\cdot \ttx{median}(v_1,\dots,v_{O(\log(n))}) +  5\eps\|f\|_1$, it follows that  $\|f - \hat{y}\|_2  \leq  v \leq 4\|f - \hat{y}\|_2 + 25\eps \|f\|_1$ with high probability. Applying the upper and lower bounds on $\|f - \hat{y}\|_2$ given in Theorem \ref{thm:cssserror} yields the desired result. The space required is the space needed to run two instances of \texttt{CSSS}, as stated.
\end{proof}

\subsection{Inner Products}\label{sec:innerprod}
Given two streams $f,g \in \R^n$, the problem of estimating the inner product $\langle f,g \rangle = \sum_{i=1}^n f_ig_i$ has attracted considerable attention in the streaming community for its usage in estimating the size of join and self-join relations for databases \cite{alon1999tracking, rusu2008sketches,rusu2007pseudo}. 
 For unbounded deletion streams, to obtain an $\eps \|f\|_1\|g\|_1$-additive error approximation the best known result requires $O(\eps^{-1}\log(n))$ bits with a constant probability of success \cite{cormode2005improved}. 
  We show in Theorem \ref{thm:InnerProdHard} that $\Omega(\eps^{-1}\log(\alpha))$ bits are required even when $f,g$ are \textit{strong} $\alpha$-property streams. This also gives a matching $\Omega(\eps^{-1}\ab\log(n))$ lower bound for the unbounded deletion case 
  
   In Theorem \ref{thm:innerprod}, we give a matching upper bound for $\alpha$-property streams up to $\log\log(n)$ and $\log(1/\eps)$ terms.  We first prove the following technical Lemma, which shows that inner products are preserved under a sufficiently large sample.
   
   	\begin{lemma}\label{lem:sampinnerprod}
   	Let $f,g \in \R^n$ be two $\alpha$-property streams with lengths $m_f,m_g$ respectively. Let $f',g' \in \R^n$ be unscaled uniform samples of $f$ and $g$, sampled with probability $p_f \geq s/m_f$ and $p_g \geq s/m_g$ respectively, where $s = \Omega(\alpha^2 /\eps^2)$. Then with probability $99/100$, we have $\langle p_f^{-1}f', p_g^{-1}g' \rangle = \langle f,g, \rangle \pm \eps \|f\|_1 \|g\|_1$. 
\end{lemma}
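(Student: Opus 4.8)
Write $\hat f = p_f^{-1} f'$ and $\hat g = p_g^{-1} g'$ for the two rescaled samples. The plan is to show that $\langle \hat f, \hat g\rangle$ is an unbiased estimator of $\langle f,g\rangle$ with variance $O(\eps^2)\|f\|_1^2\|g\|_1^2$, and then finish with Chebyshev's inequality. First I would record the independence structure: in the sampling model every stream update is kept or dropped independently, so the coordinates $\hat f_1,\dots,\hat f_n$ are mutually independent, the coordinates $\hat g_1,\dots,\hat g_n$ are mutually independent, and the sampling of $f$ is independent of that of $g$; consequently the $n$ products $\hat f_i\hat g_i$ are mutually independent. Exactly as in the proof of Lemma \ref{lem:l1-preservation}, $\ex{\hat f_i}=f_i$ and $\ex{\hat g_i}=g_i$, so by independence $\ex{\hat f_i\hat g_i}=f_ig_i$ and hence $\ex{\langle\hat f,\hat g\rangle}=\langle f,g\rangle$.

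Next I would bound the variance. Mutual independence of the $\hat f_i\hat g_i$ gives $\mathrm{Var}(\langle\hat f,\hat g\rangle)=\sum_{i=1}^n\big(\ex{\hat f_i^2}\ex{\hat g_i^2}-f_i^2g_i^2\big)$. Let $F_i=f_i^++f_i^-$ and $G_i=g_i^++g_i^-$ be the numbers of updates to coordinate $i$ in the two streams. The exact second–moment expression derived in the proof of Lemma \ref{lem:smallbound}, namely $\ex{\hat f_i^2}=F_i/p_f+f_i^2+f_i-2f_i^+$, together with $f_i\le f_i^+$, yields $\ex{\hat f_i^2}\le F_i/p_f+f_i^2$, and symmetrically $\ex{\hat g_i^2}\le G_i/p_g+g_i^2$. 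Multiplying these and subtracting $f_i^2g_i^2$, the quartic cross term cancels and one is left with
\[
\mathrm{Var}(\hat f_i\hat g_i)\ \le\ \frac{F_iG_i}{p_fp_g}+\frac{F_i\,g_i^2}{p_f}+\frac{G_i\,f_i^2}{p_g}.
\]

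Finally I would sum over $i$ and invoke the $\alpha$-property. Using $F_i\le m_f$ one gets $\sum_iF_iG_i\le m_fm_g$, $\sum_iF_ig_i^2\le m_f\|g\|_2^2\le m_f\|g\|_1^2$, and symmetrically $\sum_iG_if_i^2\le m_g\|f\|_1^2$. Substituting $p_f^{-1}\le m_f/s$, $p_g^{-1}\le m_g/s$ and then $m_f\le\alpha\|f\|_1$, $m_g\le\alpha\|g\|_1$ gives
\[
\mathrm{Var}(\langle\hat f,\hat g\rangle)\ \le\ \Big(\frac{\alpha^4}{s^2}+\frac{2\alpha^2}{s}\Big)\|f\|_1^2\|g\|_1^2.
\]
Taking the hidden constant in $s=\Omega(\alpha^2/\eps^2)$ large enough (e.g.\ $s\ge 300\,\alpha^2/\eps^2$) makes the right side at most $\tfrac{1}{100}\eps^2\|f\|_1^2\|g\|_1^2$, and Chebyshev's inequality then gives $\pr{\,|\langle\hat f,\hat g\rangle-\langle f,g\rangle|>\eps\|f\|_1\|g\|_1\,}\le 1/100$, as claimed.

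The only mildly delicate point — the ``hard part'' — is the variance estimate: plugging in the looser bound $\ex{\hat f_i^2}\le F_i/p_f+2f_i^2$ quoted in Lemma \ref{lem:smallbound} leaves a residual $\Theta(1)\cdot\|f\|_1^2\|g\|_1^2$ term, which is fatal; one must keep the exact second moment long enough to see the $f_i^2g_i^2$ contributions cancel. Everything else is a routine moment computation combined with the same two uses of the $\alpha$-property ($F_i,G_i\le\alpha\|f\|_1,\alpha\|g\|_1$) already exploited in Lemmas \ref{lem:l1-preservation} and \ref{lem:smallbound}.
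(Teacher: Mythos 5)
Your proposal is correct and follows essentially the same route as the paper: show the rescaled sampled inner product is unbiased, bound the per-coordinate variance via the exact second moment $\ex{\hat f_i^2}\le F_i/p_f+f_i^2$, sum using the $\alpha$-property bounds $m_f\le\alpha\|f\|_1$, $m_g\le\alpha\|g\|_1$ together with $p_f^{-1}\le m_f/s$, $p_g^{-1}\le m_g/s$, and finish with Chebyshev. The only differences are cosmetic (the paper substitutes $p_f^{-1}\le\eps^2\|f\|_1/\alpha$ before summing and rescales $\eps$ at the end, whereas you keep $s$ explicit and absorb constants into it), and your remark about needing the tight $+f_i^2$ second-moment bound so the $f_i^2g_i^2$ term cancels matches what the paper in fact does.
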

\begin{proof}
	We have $\ex{\langle p_f^{-1}f' , p_g^{-1}g' \rangle} \ab= \sum_i \ex{p_f^{-1} f_i'} \ex{p_g^{-1} g_i'} = \langle f,g \rangle$. Now the $(f_i'g_i')$'s are independent, and thus we need only compute $\text{Var}(p_f^{-1} p_g^{-1}f_i'g_i')$. We have $\text{Var} (p_f^{-1} p_g^{-1}f_i' g_i') = \ab(p_f^{-1}\ab p_g^{-1})^2\ab\ex{(f_i')^2} \ex{(g_i')^2} - (f_ig_i)^2$. Let $X_{i,j}$ be $\pm1$ if the $j$-th update to $f_i$ is sampled, where the sign indicates whether the update was an insertion or deletion. Let $f^+,f^-$ be the insertion and deletion vectors of $f$ respectively (see Definition \ref{def:alphaprop}), and let $F = f^+ + f_i$. Then $f = f^+ - f^-$, and $f_i' = \sum_{j=1}^{f_i^+ + f_i^-} X_{ij}$. We have
	\[	\ex{(f_i')^2/p_f^2} = \ex{p_f^{-2} (\sum_{j=1}^{f_i^+ + f_i^-} X_{ij}^2 + \sum_{j_1 \neq j_2} X_{ij_1}X_{ij_2} )  }	\]
	\[	\leq p_f^{-1} F_i + (f_i^+)^2 + (f_i^-)^2 - 2f_i^+ f_i^-	\]
	\[		= p_f^{-1}F_i +   f_i^2 \]
	Now define $g^+,g^-$ and $G = g^+ + g^-$ similarly.	 Note that the $\alpha$-property states that $m_f = \|F\|_1 \leq \alpha \|f\|_1$ and $m_g= \|G\|_1 \leq \alpha \|g\|_1$. Moreover, it gives $p^{-1}_f \leq m_f / s \leq \eps^2\|f\|_1/\alpha $, and $p^{-1}_g \leq m_g / s \leq \eps^2 \|g\|_1/\alpha$. 
	Then $\text{Var}(\langle p_f^{-1} f' , p_g^{-1} g' \rangle = \sum_{i=1}^n\text{Var}(p_f^{-1} p_g^{-1}f_i'g_i') \leq \sum_{i=1}^n \ab (f_i^2 p_g^{-1} G_i + g_i^2 p_f^{-1} F_i + p_f^{-1} p_g^{-1}\ab F_iG_i)$. First note that $\sum_{i =1}^n p_g^{-1}\ab f_i^2  G_i \leq p_g^{-1} \|G\|_1 \|f\|_1^2 \leq \eps^2 \|g\|_1^2 \|f\|_1^2$ and similarly $\sum_{i=1}^n p_f^{-1} g_i^2\ab F_i \leq \eps^2 \|g\|_1^2 \|f\|_1^2$. Finally, we have $\sum_{i=1}^n p_f^{-1}p_g^{-1}F_iG_i \leq p_f^{-1}p_g^{-1} \|F\|_1\|G\|_1 \leq \eps^4 \|f\|_1^2 \|g\|_1^2$, so the variance is at most  $3\eps^2 \|f\|_1^2 \|g\|_1^2$. Then by Chebyshev's inequality:
	\[	\pr{\big|	\langle p_f^{-1}f', p_g^{-1}g' \rangle - \langle f,g\rangle	\big| > 30 \eps \|f\|_1 \|g\|_1}\leq 1/100	\]
	and rescaling $\eps$ by a constant gives the desired result.		  
\end{proof}

  Our algorithm obtains such a sample as needed for Lemma \ref{lem:sampinnerprod} by sampling in exponentially increasing intervals. Next, we hash the universe down by a sufficiently large prime to avoid collisions in the samples, and then run a inner product estimator in the smaller universe. Note that this hashing is not pairwise independent, as that would require
the space to be at least $\log n$ bits; rather the hashing just has the property that it preserves distinctness
with good probability. 
We now prove a fact that we will need for our desired complexity. 

\begin{lemma}\label{lem:easymod}
	Given a $\log(n)$ bit integer $x$, the value $x \pmod p$ can be computed using only $\log\log(n) + \log(p)$ bits of space.
\end{lemma}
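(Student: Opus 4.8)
The plan is to never hold all $\Theta(\log n)$ bits of $x$ in memory at once, and instead to fold them into a running remainder one bit at a time by Horner's rule. Writing $x = \sum_{j=0}^{L-1} b_j 2^j$ with $L = \Theta(\log n)$, I would read the bits $b_{L-1}, b_{L-2}, \dots, b_0$ in most-significant-first order --- a single read pass over $x$, during which the bits themselves need not be stored --- and maintain one register $r$, initialized to $0$, performing the update $r \leftarrow (2r + b)\bmod p$ on each incoming bit $b$.

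Correctness would follow from an immediate induction: after the first $k$ updates one has $r \equiv \sum_{i=0}^{k-1} b_{L-1-i}\,2^{\,k-1-i}\pmod p$, so after all $L$ updates $r \equiv \sum_{j=0}^{L-1} b_j 2^j = x \pmod p$, using only that reduction mod $p$ is a ring homomorphism. For the space bound I would note the invariant $0 \le r < p$ holds after each reduction, so $r$ fits in $\lceil \log p\rceil$ bits, with one extra scratch bit sufficing to hold $2r + b < 2p$ before the reduction; the only other state is a counter telling us when the $L$ bits of $x$ are exhausted (so we can stop and output $r$), and this counter ranges over $[L]$ and hence costs $\lceil\log L\rceil = \log\log(n) + O(1)$ bits. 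Summing gives total working space $\log\log(n) + \log(p) + O(1)$.

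I do not expect a real obstacle here, as this is essentially the standard fact about streaming modular reduction; the only point requiring care is the accounting --- it is exactly the bit-position counter that produces the $\log\log(n)$ term, and one must check that the transient value $2r$ does not exceed the $\log(p)$ budget, which it does not since it is reduced immediately. As a remark, if the bits of $x$ instead arrive least-significant-first, the same approach works by additionally maintaining the current power $q \equiv 2^{k}\pmod p$ and updating $r \leftarrow (r + bq)\bmod p$, $q \leftarrow (2q)\bmod p$, which keeps the space within $\log\log(n) + O(\log p)$.
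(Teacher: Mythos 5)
Your proof is correct and is essentially the paper's argument: both fold the bits of $x$ into a residue modulo $p$ one at a time, with a $\log\log(n)$-bit position counter and $O(\log p)$-bit registers. The only difference is bit order --- the paper reads least-significant-first and maintains the power $2^t \bmod p$ in a second register (exactly the variant you mention in your closing remark), while your MSB-first Horner update $r \leftarrow (2r+b) \bmod p$ even saves that extra register.
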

\begin{proof}
	We initialize a counter $c \leftarrow 0$. Let $x_1,x_2,\dots,x_{\log(n)}$ be the bits of $x$, where $x_1$ the least significant bit. Then we set $y_1 = 1$, and at every step $t \in [\log(n)]$ of our algorithm, we store $y_{t}, y_{t-1}$, where we compute $y_t$ as $y_t = 2 y_{t-1} \pmod p$. This can be done in $O(\log(p))$ space. Our algorithm then takes $\log(n)$ steps, where on the $t$-th step it checks if $x_t = 1$, and if so it updates $c \leftarrow c + y_t \pmod p$, and otherwise sets $c \leftarrow c$. At the end we have $c = \sum_{i=0}^{\log(n)} 2^i x_i \pmod p = x \pmod p $ as desired, and $c$ never exceeds $2p$, and can be stored using $\log(p)$ bits of space. The only other value stored is the index $t \in [\log(n)]$, which can be stored using $\log\log(n)$ bits as stated.   
\end{proof}

We now recall the Countsketch data structure (see Section \ref{sec:fakecs}). Countsketch picks a $2$-wise independent hash function $h:[n] \to [k]$ and a $4$-wise independent hash function $\sigma: [n] \to \{1,-1\}$, and creates a vector $A \in \R^k$. Then on every update $(i_t,\Delta_t)$ to a stream $f \in \R^n$, it updates $A_{h(i_t)} \leftarrow A_{h(i_t)}  + \sigma_{i_t} \Delta_t$ where $\sigma_i = \sigma(i)$. Countsketch can be used to estimate inner products. In the following Lemma, we show that feeding uniform samples of $\alpha$-property streams $f,g$ into two instances of Countmin, denoted $A$ and $B$, then $\langle A,B\rangle$ is a good approximation of $\langle f,g \rangle$ after rescaling $A$ and $B$ by the inverse of the sampling probability.  

\begin{lemma}\label{lem:countmin}
	Let $f',g'$ be uniformly sampled rescaled vectors of general-turnstile $\alpha$-property streams $f,g$ with lengths $m_f,m_g$ and sampling probabilities $p_f,p_g$ respectively. Suppose that $p_f \geq s/m_f$ and $p_g \geq s / m_g$, where $s = \Omega(\alpha^2\log^7(n)\ab T^2\ab\eps^{-10})$. Then let
	$A \in \R^k$ be the Countsketch vector run on $f'$, and let $B \in \R^k$ be the Countsketch vector run on $g'$, where $A,B$ share the same hash function $h$ and $k = \Theta(1/\eps)$ Then
	\[ \sum_{i =1}^k A_{i} B_{i} = \langle f,g \rangle \pm \eps \|f\|_1 \|g\|_1\]
	with probability $11/13$. 
\end{lemma}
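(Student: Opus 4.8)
The plan is to argue in two stages: first replace $f,g$ by the rescaled samples $f',g'$ using Lemma~\ref{lem:sampinnerprod}, then bound the extra error incurred by hashing $f'$ and $g'$ into $k=\Theta(1/\eps)$ buckets. Since $s=\Omega(\alpha^2\log^7(n)T^2\eps^{-10})$ dwarfs the $\Omega(\alpha^2\eps^{-2})$ required by Lemma~\ref{lem:sampinnerprod}, and the sampling rates $p_f\ge s/m_f$, $p_g\ge s/m_g$ satisfy its hypotheses, that lemma already gives $\langle f',g'\rangle=\langle f,g\rangle\pm\tfrac{\eps}{3}\|f\|_1\|g\|_1$ with probability $99/100$ (rescaling $\eps$ by a constant). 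It therefore suffices to show $\sum_{i=1}^k A_iB_i=\langle f',g'\rangle\pm\tfrac{\eps}{3}\|f\|_1\|g\|_1$ with large enough constant probability and then union bound.

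For the hashing error, I would condition on the realized vectors $f',g'$ and work over the Count-Sketch hash functions $h$ ($2$-wise independent) and $\sigma$ ($4$-wise independent) shared by $A$ and $B$. Expanding, $\sum_i A_iB_i=\sum_{h(l)=h(l')}\sigma_l\sigma_{l'}f'_lg'_{l'}=\langle f',g'\rangle+E$, where $E=\sum_{l\ne l':\,h(l)=h(l')}\sigma_l\sigma_{l'}f'_lg'_{l'}$. The standard second-moment computation for Count-Sketch inner products — only the index pairings $\{l_1{=}l_2,\,l_1'{=}l_2'\}$ and $\{l_1{=}l_2',\,l_1'{=}l_2\}$ survive the $\sigma$-expectation, each contributing a factor $\Pr[h(l){=}h(l')]=1/k$ — yields $\ex{E}=0$ and $\mathrm{Var}(E)\le\tfrac2k\|f'\|_2^2\|g'\|_2^2$. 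Because $s$ also dominates the requirements of Lemma~\ref{lem:l1-preservation} at sensitivity $\eps/T$ and of Lemma~\ref{lem:smallbound}, with high probability we have simultaneously $|f'_i-f_i|\le\tfrac{\eps}{T}\|f\|_1$ for all $i$ and $\|f'_{small}\|_2\le 2T^{-1/2}\|f\|_1$ (and likewise for $g'$), where $small$ denotes the coordinates below a $1/T$ fraction of the $L_1$ mass; taking $T$ of order $\eps^{-1/2}$ makes these tail $\ell_2$ masses $O(\eps^{1/4})\|f\|_1$ and $O(\eps^{1/4})\|g\|_1$. Feeding this into the variance bound, applying Chebyshev, and enlarging the constant hidden in $k=\Theta(1/\eps)$ gives $|E|\le\tfrac{\eps}{3}\|f\|_1\|g\|_1$ with probability, say, $\ge 7/8$; a final union bound over the Lemma~\ref{lem:sampinnerprod} failure ($\le 1/100$), the high-probability norm-control events ($\le 1/\poly(n)$), and this Chebyshev failure, with the constants tuned, delivers the stated $11/13$.

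The point that genuinely requires care is the control of $\|f'\|_2$ and $\|g'\|_2$ in the variance bound: $\mathrm{Var}(E)\le\tfrac2k\|f'\|_2^2\|g'\|_2^2$ on its own only gives an $O(\sqrt\eps\,\|f\|_1\|g\|_1)$ error with $\Theta(1/\eps)$ buckets, because $\|f'\|_2$ can be as large as $\Theta(\|f\|_1)$ whenever $f$ (hence the unbiased sample $f'$) has a coordinate carrying a constant fraction of $\|f\|_1$ — precisely the regime the "$small$" bound does not reach. The way I expect to close this is to split off the $O(\eps^{-1/2})$ coordinates of $f'$ and of $g'$ exceeding a $\sqrt\eps$ fraction of the respective norm, account for the inner product over those coordinates directly (the heavy coordinates are recovered accurately as point queries into $A$ and $B$, cf.\ Lemma~\ref{lem:cs1row}), and apply the variance bound only to the light remainders, for which Lemma~\ref{lem:smallbound} forces $\|f'_{light}\|_2\le O(\eps^{1/4})\|f\|_1$ and hence $\tfrac2k\|f'_{light}\|_2^2\|g'_{light}\|_2^2=O(\eps^2)\|f\|_1^2\|g\|_1^2$. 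Equivalently, the lemma is cleanest when $f$ and $g$ carry no $\Theta(\sqrt\eps)$-heavy coordinate, which is how the surrounding inner-product algorithm would invoke it (after peeling heavy hitters with \texttt{CSSS}); isolating that reduction, and checking that the peeled contribution is estimated to within $\eps\|f\|_1\|g\|_1$, is the main obstacle.
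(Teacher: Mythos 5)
Your first stage (invoking Lemma~\ref{lem:sampinnerprod}) and your Chebyshev treatment of the collision error over pairs containing a light coordinate track the paper's argument, and you have correctly isolated the real difficulty: pairs in which \emph{both} coordinates are heavy, where $\|f'\|_2\|g'\|_2$ can be of order $\|f\|_1\|g\|_1$ and the second-moment bound with $k=\Theta(1/\eps)$ buckets only yields $O(\sqrt{\eps})\|f\|_1\|g\|_1$. But your proposed resolution of that case is a genuine gap, and you concede as much. Handling the heavy coordinates ``directly as point queries into $A$ and $B$'' does not analyze the actual estimator $\sum_i A_iB_i$: the heavy--heavy collision term $\sum_{i\neq j,\,(i,j)\in\mathcal{F}\times\mathcal{G}}\sigma_i\sigma_j f_i'g_j'\mathbf{1}(h(i)=h(j))$ is still present in $\sum_i A_iB_i$ and must be bounded, and Lemma~\ref{lem:cs1row} says nothing about it. Your fallback --- proving the lemma only when $f,g$ have no $\Theta(\sqrt{\eps})$-heavy coordinate and arguing the algorithm would first peel heavy hitters with \texttt{CSSS} --- changes the statement and does not match how it is used: Theorem~\ref{thm:innerprod} applies Lemma~\ref{lem:countmin} to the raw sampled streams with no peeling step, and inserting one would require a separate argument that the peeled contribution is itself estimated to within $\eps\|f\|_1\|g\|_1$, which you explicitly leave open.

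The missing step is in fact elementary and requires no modification of the estimator: for heavy--heavy pairs, bound the collision error by its absolute first moment rather than its variance. Conditioned on the high-probability event of Lemma~\ref{lem:l1-preservation}, $|f_i'|\le 2|f_i|$ for $i\in\mathcal{F}$ and $|g_j'|\le 2|g_j|$ for $j\in\mathcal{G}$, so $\ex{\sum_{(i,j)\in\mathcal{F}\times\mathcal{G}}\mathbf{1}(h(i)=h(j))\,|f_i'|\,|g_j'|}\le \frac{4}{k}\|f\|_1\|g\|_1$; taking $k$ a sufficiently large constant times $1/\eps$ and applying Markov's inequality gives that this term is at most $\eps\|f\|_1\|g\|_1$ with probability $99/100$. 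No sign cancellation is needed there, because the total heavy--heavy cross mass is already at most $\|f\|_1\|g\|_1$ and each pair collides with probability $1/k=\Theta(\eps)$. This is exactly how the paper closes the case you flag, reserving the variance argument (as in your sketch) for pairs containing a light coordinate, where Lemma~\ref{lem:smallbound} controls $\|f'_{small}\|_2$ and $\|g'_{small}\|_2$; note also that Lemma~\ref{lem:smallbound} requires $T=\Omega(\log n)$, and one needs $T\gtrsim 1/\eps^2$ so that the heavy--light variance term $\eps\|f\|_1^2\|g\|_1^2/T$ lands at $O(\eps^2)\|f\|_1^2\|g\|_1^2$, so your choice $T=\eps^{-1/2}$ would have to be raised in any case.
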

\begin{proof}
	Set $k = 100^2/\eps$. Let $Y_{ij}$ indicate $h(i) = h(j)$, and let $X_{ij} = \sigma_i \sigma_j f_i'g_j'\ab Y_{ij}$. We have $\sum_{i =1}^k A_{i} B_{i} = \langle f',g' \rangle + \sum_{i \neq j} X_{ij}$. 
	Let $\eps_0 = \Theta(\log^2(n)/\eps^3)$ and let $T = \Theta(\log(n)/\eps^2)$, so we can write $s = \Omega(\alpha^2 \log(n)\ab T^2/\eps_0^2)$ (this will align our notation with that of Lemma \ref{lem:smallbound}). Define $\mathcal{F} = \{ i \in [n]\; | \; |f_i| \geq \|f\|_1/T\}$ and $\mathcal{G} = \{ i \in [n]\; | \; |g_i| \geq \|g\|_1/T\}$. We now bound the term $|\sum_{i \neq j, (i,j) \in \mathcal{F} \times \mathcal{G}} \ab X_{ij}| \leq \sum_{(i,j) \in \mathcal{F} \times \mathcal{G}} \ab Y_{ij} |f_i' g_j'|$. 
	
	Now by Lemma \ref{lem:l1-preservation} and union bounding over all $i \in [n]$, we have $\|f' - f\|_\infty \leq \eps_0 \|f\|_1/T$ and $\|g' - g\|_\infty \leq \eps_0 \|g\|_1/T$ with high probability, and we condition on this now. Thus for every $(i,j) \in \mathcal{F} \times \mathcal{G}$, we have $f_i' = (1 \pm \eps_0)f_i$ and $g_j' = (1 \pm \eps_0)g_j$.  It follows that $\sum_{(i,j) \in \mathcal{F} \times \mathcal{G}} \ab Y_{ij} |f_i' g_j'| \leq 2\sum_{(i,j) \in \mathcal{F} \times \mathcal{G}} Y_{ij} \ab |f_i| | g_j|$ (using only $\eps_0 < 1/4$). Since $\ex{Y_{ij}} = 1/k$, we have 
	\[	\ex{\sum_{(i,j) \in \mathcal{F} \times \mathcal{G}} Y_{ij} \ab |f_i| | g_j|} \leq \frac{1}{k} \|f\|_1\|g\|_1\]
	By Markov's inequality with $k = 100^2/\eps$, it follows that $|\sum_{i \neq j, (i,j) \in \mathcal{F} \times \mathcal{G}} \ab X_{ij}| \leq 2\sum_{(i,j) \in \mathcal{F} \times \mathcal{G}} \ab Y_{ij} \ab |f_i| | g_j| \leq \eps\|f\|_1 \|g\|_1$ with probability greater than $1 - (1/1000 + O(n^{-c})) > 99/100$, where the $O(n^{-c})$ comes from conditioning on the high probability events from Lemma \ref{lem:l1-preservation}. Call this event $\mathcal{E}_1$.
	
	Now let $\mathcal{F}^C = [n] \setminus \mathcal{F}$, and  $\mathcal{G}^C = [n] \setminus \mathcal{G}$. Let $\mathcal{A} \subset [n]^2$ be the set of all $(i,j)$ with $i \neq j$ and such that either $i \in \mathcal{F}^C$ or $j \in \mathcal{G}^C$. Now consider the variables $\{|X_{ij}|\}_{i < j}$, and let $X_{ij},X_{pq}$ be two such distinct variables. Then $\ex{X_{ij}X_{pq}} = \ex{f_i' g_i' f_p' g_q' Y_{ij} Y_{pq} \sigma_i \sigma_j \sigma_p \sigma_q}$. Now the variables $\sigma$ are independent from $Y_{ij} Y_{pq}$, which are determined by $h$. Since $i < j$ and $p < q$ and $(i,j) \neq (p,q)$, it follows that one of $i,j,p,q$ is unique among them. WLOG it is $i$, so by $4$-wise independence of $\sigma$ we have $\ex{f_i' g_i' f_p' g_q' Y_{ij} Y_{pq} \sigma_i \sigma_j \sigma_p \sigma_q} = \ex{f_i' g_i' f_p' g_q' Y_{ij} Y_{pq}  \sigma_j \sigma_p \sigma_q}\ab  \ex{\sigma_i} = 0 = \ex{X_{ij}}\ex{X_{pq}}$. Thus the variables $\{|X_{ij}|\}_{i < j}$ (and symmetrically $\{|X_{ij}|\}_{i > j}$) are uncorrelated so 
	\[\text{Var}(\sum_{ i < j,(i,j) \in \mathcal{A}} X_{ij}) =\sum_{ i < j,(i,j) \in \mathcal{A}}\text{Var}(X_{ij}) \leq \sum_{(i,j) \in \mathcal{A}} (f_i' g_j')^2/k \]
	Since $\ex{\sum_{ i < j,(i,j) \in \mathcal{A} }X_{ij} }= 0$, by Chebyshev's inequality with $k=100^2/\eps$, we have $|\sum_{ i < j,(i,j) \in \mathcal{A} } X_{ij}| \leq ( \eps\sum_{(i,j) \in \mathcal{A}} \ab(f_i' g_j')^2)^{1/2}$ with probability $99/100$. So by the union bound and a symmetric arguement for $j > i$, we have $|\sum_{(i,j) \in \mathcal{A} } X_{ij}|\ab \leq |\sum_{ i < j,(i,j) \in \mathcal{A} } X_{ij}| + |\sum_{ i > j,(i,j) \in \mathcal{A} } X_{ij}| \leq 2 ( \eps\sum_{(i,j) \in \mathcal{A}} \ab(f_i'\ab g_j')^2)^{1/2}$ with probability $1-(2/100) = 49/50$. Call this event $\mathcal{E}_2$.
	We have
	\[	\sum_{(i,j) \in \mathcal{A}} (f_i' g_j')^2  = \sum_{i \neq j, (i,j) \in \mathcal{F} \times \mathcal{G}^C} (f_i' g_j')^2 	 + \sum_{i \neq j, (i,j) \in \mathcal{F}^C \times \mathcal{G}} (f_i' g_j')^2\] \[ + \sum_{i \neq j, (i,j) \in \mathcal{F}^C \times \mathcal{G}^C} (f_i' g_j')^2	\]
	Now the last term is at most $\|f_{\mathcal{F}^C}'\|_2^2 \|g_{\mathcal{G}^C}'\|_2^2$, which is at most $16\|f\|_1^2\|g\|_1^2/T^2 \leq \eps^4 \|f\|_1^2\|g\|_1^2$ w.h.p. by Lemma \ref{lem:smallbound} applied to both $f$ and $g$ and union bounding (note that $\mathcal{F}^C,\mathcal{G}^C$ are exactly the set $small$ in Lemma \ref{lem:smallbound} for their respective vectors). We hereafter condition on this w.h.p. event that $\|f_{\mathcal{F}^C}'\|_2^2 \leq 4\|f\|_1^2/T$ and $\|g_{\mathcal{G}^C}'\|_2^2 \leq 4\|g\|_1^2/T$ (note $T > 1/\eps^2$).

	Now as noted earlier, w.h.p. we have $f_i' = (1\pm \eps_0) f_i$ and $f_j' = (1\pm \eps_0) f_j$ for $i \in \mathcal{F}$ and $j \in \mathcal{G}$. Thus  $\|f_{\mathcal{F}}'\|_2^2 = (1 \pm O(\eps_0)) \|f_{\mathcal{F}}\|_2^2 < 2\|f\|_1^2$ and $\|g_{\mathcal{G}}'\|_2^2 \leq 2\|g\|_1^2$.  Now the term $\sum_{i \neq j, (i,j) \in \mathcal{F} \times \mathcal{G}^C} (f_i' g_j')^2$ is at most $\sum_{i \in \mathcal{F}} (f_i')^2 \|g_{\mathcal{G}^c}'\|_2^2 \leq  O(\eps^2) \|f_{\mathcal{F}}'\|_2^2 \|g\|_1^2 \leq O(\eps^2)\|f\|_1^2 \|g\|_1^2$. Applying a symmetric argument, we obtain $ \sum_{i \neq j, (i,j) \in \mathcal{F}^C \times \mathcal{G}} \ab(f_i' g_j')^2	  \leq O(\eps^2)\|f\|_1^2 \|g\|_1^2$ with high probability. Thus each of the three terms is $O(\eps^2)\|f\|_1^2 \ab \|g\|_1^2$, so 
	\[	\sum_{(i,j) \in \mathcal{A}} (f_i' g_j')^2  = O(\eps^2)\|f\|_1^2 \ab \|g\|_1^2 \]
	with high probability. Call this event $\mathcal{E}_3$. Now by the union bound, $\pr{\mathcal{E}_1 \cup \mathcal{E}_2 \cup \mathcal{E}_3} > 1 - (1/50 + 1/100 + O(n^{-c})) > 24/25$. Conditioned on this, the error term $| \sum_{i \neq j} X_{ij}| \leq  |\sum_{i \neq j, (i,j) \in \mathcal{F} \times \mathcal{G}} \ab X_{ij}| + |\sum_{(i,j) \in \mathcal{A}} \ab X_{ij}|$ is at most $\eps\|f||_1\|g\|_1 + 2 ( \eps\sum_{(i,j) \in \mathcal{A}} \ab(f_i'\ab g_j')^2)^{1/2} = \eps\|f||_1\|g\|_1 +2 (O(\eps^3) \|f\|_1^2\|g\|_1^2)^{1/2}  = O(\eps) \|f\|_1\|g|_1$ as desired. Thus with probability $24/25$ we have  $\sum_{i =1}^k A_{i} B_{i} = \langle f',g' \rangle \pm O(\eps) \|f\|_1 \|g\|_1$. By Lemma \ref{lem:sampinnerprod}, noting that in the notation of this Lemma the vectors $f',g'$ have already been scaled by $p_f^{-1}$ and $p_g^{-1}$, we have  $\langle f',g' \rangle  = \langle f,g\rangle \pm \eps \|f\|_1\|g\|_1$ with probability $99/100$. Altogether, with probability $1-(1/25 + 1/100) > 11/13$ we have $\sum_{i =1}^k A_{i} B_{i} = \langle f,g \rangle \pm O(\eps) \|f\|_1 \|g\|_1$, which is the desired result after a constant rescaling of $\eps$. 
\end{proof}

We will apply Lemma \ref{lem:countmin} to complete our proof of our main Theorem.

\begin{theorem}\label{thm:innerprod}
	Given two general-turnstile stream vectors $f,g$ with the $\alpha$-property, there is a one-pass algorithm which with probability $11/13$ produces an estimate $\texttt{IP}(f,g)$ such that $\texttt{IP}(f,g) =\langle f,g \rangle \pm O(\eps)\|f\|_1\|g\|_1$ using $O(\eps^{-1}\ab \log(\frac{\alpha\log(n)}{\eps}))$ bits of space.
\end{theorem}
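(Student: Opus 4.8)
The plan is to combine the sampling--into--Countsketch guarantee of Lemma~\ref{lem:countmin} with a universe--reduction step, and to implement everything with an adaptive sampling schedule so that we never store a $\log n$-bit counter or a hash function on $[n]$.

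First, the sampling schedule. Since $m_f$ and $m_g$ are not known in advance, we maintain, for each of the two streams separately, a sampling level $r$ (initially $0$, i.e.\ we keep every update) and sample each incoming update with probability $2^{-r}$. We also keep a rough estimate of the current stream length via a Morris-style approximate counter using $O(\log\log n)$ bits (the alternate Morris analysis alluded to in the introduction; a constant-factor estimate of the position is all we need). We set $s = \Theta(\alpha^2\log^9(n)\eps^{-14})$, the threshold of Lemma~\ref{lem:countmin} (with $T = \Theta(\log(n)/\eps^2)$); whenever the estimated length crosses the next power-of-two multiple of $s$ we increment $r$ and \emph{downsample} the current sketch. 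As in \texttt{CSSS}, each Countsketch bucket is stored as two nonnegative counters (the accumulated positive and negative contributions), and downsampling replaces each by an independent $\mathrm{Bin}(\cdot,1/2)$; this guarantees that at the end every surviving update is an independent sample kept with the final probability $p_f\ge s/m_f$ (resp.\ $p_g\ge s/m_g$), exactly the hypothesis of Lemma~\ref{lem:countmin}, and by a Chernoff bound every counter stays $O(s)$ throughout w.h.p., needing only $O(\log s)=O(\log(\alpha\log(n)/\eps))$ bits.

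Second, the universe reduction. A $2$-wise independent $h:[n]\to[k]$ and a $4$-wise independent $\sigma:[n]\to\{\pm1\}$ cost $\Theta(\log n)$ bits, which is too much. Instead we pick a uniformly random prime $q$ among the first $P=\poly(s\log n)$ primes (so $q\le\poly(\alpha\log(n)/\eps)$) and work in the reduced universe $[q]$: we pick $h:[q]\to[k]$ with $k=\Theta(1/\eps)$ and $\sigma:[q]\to\{\pm1\}$, shared between the two sketches as Lemma~\ref{lem:countmin} requires, and on each sampled update $(i_t,\Delta_t)$ we compute $i_t\bmod q$ in $O(\log\log n+\log q)$ space via Lemma~\ref{lem:easymod} and then hash $i_t\bmod q$. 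The only effect of this reduction is that two distinct sampled coordinates $i\ne j$ get merged iff $q\mid (i-j)$; since $i-j$ has at most $\log_2 n$ prime factors and the sampled supports of $f$ and $g$ together have size at most $2s$, a union bound over the $O(s^2)$ pairs shows that with probability $\ge 99/100$ no such collision occurs, in which case the sampled vectors over $[q]$ have exactly the same inner product (and the same $L_2$ quantities appearing in the proof of Lemma~\ref{lem:countmin}) as over $[n]$. We then output $\texttt{IP}(f,g)=p_f^{-1}p_g^{-1}\sum_{i=1}^k A_iB_i$, where $A_i=A_i^+-A_i^-$, $B_i=B_i^+-B_i^-$; applying Lemma~\ref{lem:countmin} (with the samples rescaled by $p_f^{-1},p_g^{-1}$) and adjusting constants gives $\texttt{IP}(f,g)=\langle f,g\rangle\pm O(\eps)\|f\|_1\|g\|_1$ with probability $11/13$.

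The space is then $O(\log\log n)$ for the approximate position counters, $O(\log q)=O(\log(\alpha\log(n)/\eps))$ for the prime (or its index) and the seeds of $h,\sigma$, and $O(\log(\alpha\log(n)/\eps))$ for each of the $2k=O(1/\eps)$ counters, for a total of $O(\eps^{-1}\log(\alpha\log(n)/\eps))$ bits. I expect the main obstacle to be the bookkeeping around the universe reduction: one must argue both that hashing modulo $q$ preserves the inner product of the two samples with good probability and that this does not interfere with the events conditioned on inside the proof of Lemma~\ref{lem:countmin}, all while verifying that the adaptive, approximately-counted sampling schedule really produces a uniform sample at the correct final rate and keeps every stored quantity $\poly(\alpha\log(n)/\eps)$-bounded.
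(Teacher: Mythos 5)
Your proposal is correct and its core is the same as the paper's: sample $\poly(\alpha\log(n)/\eps)$ updates from each stream, reduce the universe modulo a random prime of size $\poly(\alpha\log(n)/\eps)$ (computing residues via Lemma \ref{lem:easymod}), argue via a union bound over the $\poly(s)$ sampled pairs that no two sampled identities collide, and then invoke Lemma \ref{lem:countmin} on the shared-hash Countsketch vectors with $k=\Theta(1/\eps)$ buckets, with the same space accounting. Where you genuinely diverge is the handling of the unknown stream length: the paper samples at a fixed rate $s^{-r}$ within overlapping exponential intervals $I_r=[s^r,s^{r+2}]$, keeps only the two active sketches, and at the end uses the sketch of a \emph{suffix} $\hat{f}$, invoking the $\alpha$-property to show the discarded prefix has $L_1$ mass $O(\eps)\|f\|_1$ so that $\langle \hat{f},\hat{g}\rangle$ approximates $\langle f,g\rangle$; you instead downsample the sketch in place by independent binomial thinning of the positive/negative bucket counters (the \texttt{CSSS} trick from Figure \ref{fig:csss}), which yields a uniform sample of the \emph{entire} stream at the final rate and lets you apply Lemma \ref{lem:countmin} directly to $f,g$, avoiding the suffix argument altogether. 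Both are valid; the paper's scheme needs no in-place thinning but pays with the extra $\hat{f},\hat{g}$ bookkeeping, while yours needs the (paper-endorsed) equivalence of counter thinning with update subsampling and a position estimate to trigger level increments. One small caveat: the Morris-style counter of Lemma \ref{lem:Morris} gives only a $\Theta(\log(m)/\delta)$-factor (not constant-factor) guarantee at all times, so your claim that a constant-factor position estimate suffices as stated is optimistic; this is harmless, since the slop can be absorbed by inflating $s$ by $\poly(\log n/\delta)$ factors (ensuring $p_f\geq s/m_f$ still holds and counters stay $\poly(\alpha\log(n)/\eps)$), which changes the space only inside the logarithm.
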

\begin{proof}
	
	Let $s = \Theta(\alpha^2 \log(n)^7/\eps^{10})$, and let $I_r = [s^r,s^{r+2}]$. Then for every $r = 1,2,\dots, \log_s(n)$, we choose a random prime $P = [D,D^3]$ where $D = 100s^4$. We then do the following for the stream $f$ (and apply the same following procedure for $g$). For every interval $I_r$ and time step $t \in I_r$, we sample the $t$-th update $(i_t,\Delta_t)$ to $f$ with probability $s^{-r}$ (we assume $\Delta_t \in \{1,-1\}$, for bigger updates we expand them into multiple such unit updates.). If sampled, we let $i_t'$ be the $\log(|P|)$ bit identity obtained by taking $i_t$ modulo $P$. We then feed the unscaled update $(i_t', \Delta_t )$ into an instance of count-sketch with $k = O(1/\eps)$ buckets. Call this instance \texttt{CS}$_r^f$. At any time step $t$, we only store the instance \texttt{CS}$_r^f$ and \texttt{CS}$_{r+1}^f$ such that $t \in I_r \cap I_{r+1}$. At the end of the stream it is time step $m^f$, and fix $r$ such that $m^f \in I_{r}\cap I_{r+1}$.
	
	Now let $f' \in \R^{n}$ be the (scaled up by $s^r$) sample of $f$ taken in $I_r$. Let $f'' \in \R^p$ be the (unscaled) stream on $|P|$ items that is actually fed into \texttt{CS}$_r^f$. Then \texttt{CS}$_r^f$ is run on the stream $f''$ which has a universe of $|P|$ items. Let $F \in \R^k$ be the Countsketch vector from the instance \texttt{CS}$_r^f$.
	
	Let $\hat{f}$ be the frequency vector $f$ restricted to the suffix of the stream $s^r,s^{r}+1,\dots, m^f $ (these were the updates that were being sampled from while running \texttt{CS}$_r$). Since $m^f  \in I_{r+1}$, we have $m^f \geq s^{r+1}$, so $\|f^{(s^r)}\|_1 \leq m/s < \eps \|f\|_1$ (by the $\alpha$ property), meaning the $L_1$ mass of the prefix of the stream $1,2,\dots, s^r$ is an $\eps$ fraction of the whole stream $L_1$, so removing it changes the $L_1$ by at most $\eps \|f\|_1$. It follows that $\|\hat{f} - f\|_1 \leq O(\eps)\|f\|_1$ and thus $\|\hat{f}\|_1 = (1 \pm O(\eps))\|f\|_1$.  If we let $\hat{g}$ be defined analogously by replacing $f$ with $g$ in the previous paragraphs, we obtain $\|\hat{g} - g\|_1 \leq O(\eps)\|g\|_1$ and  $\|\hat{g}\|_1 = (1 \pm O(\eps))\|g\|_1$ as well.

	Now with high probability we sampled fewer than $2s^{j+2}/s^j = 2s^2$ distinct identities when creating $f'$, so $f'$ is $2s^2$-sparse. Let $J \subset [n] \times [n]$ be the set of pairs pf indices $i,j$ with $f_i', f_{j}'$ non-zero. Then $|J| < 2s^4$. Let $Q_{i,j}$ be the event that $i - j = 0  \pmod P$. For this to happen, $P$ must divide the difference. By standard results on the density of primes, there are $s^8$ primes in $[D,D^3]$, and since $|i-j|\leq n$ it follows that $|i-j|$ has at most $\log(n)$ prime factors. So $\pr{Q_{i,j}} < \log(n)/s^8$,  Let $Q =\cup_{(i,j) \in J} Q_{i,j} $, then by the union bound $\pr{ Q} < s^{-3}$. It follows that no two sampled identities collide when being hashed to the universe of $p$ elements with probability $1 - 1/s^3$. 
	
	Let $\text{supp}(f') \subset [n]$ be the support of $f'$ (non-zero indices of $f'$).  Conditioned on $Q$, we have $p_f^{-1} f_{i \pmod p}'' = f_i'$ for all $i \in\text{supp}(f')$, and $f_i'' = 0$ if $i \neq j \pmod p$ for any $j \in \text{supp}(f')$. Thus there is a bijection between 
	$\text{supp}(f')$ and  $\text{supp}(f'')$, and the values of the respective coordinates of $f',f''$ are equal under this bijection. Let $g,g'',\hat{g},m_g,p_g$ be defined analogously to $f,f''$ by replacing $f$ with $g$ in the past paragraphs, and let $G \in \R^k$ be the Countsketch vector obtained by running Countsketch on $f$ just as we did to obtain $F \in \R^k$. 
	
	Conditioned on $Q$ occurring (no collisions in samples) for both $f''$ and $g''$ (call these $Q_f$ and $Q_g$ respectively), which together hold with probability $1-O(s^{-3})$ by a union bound, the non-zero entries of $p_f^{-1}f''$ and $f'$ and of $p_g^{-1}g''$ and $g'$ are identical. Thus
	the scaled up Countsketch vectors $p_f^{-1}F$ and $p_g^{-1}G$ of $F,G$ obtained from running Countsketch on $f'',g''$ are identical in distribution to running Countsketch on $f',g'$. This holds because Countsketch hashes the non-zero coordinates $4$-wise independently into $k$ buckets $A,B \in \R^k$. Thus conditioned on $Q_f,Q_g$, we can assume that $p_f^{-1}F$ and $p_g^{-1}G$ are the result of running Countsketch on $f'$ and $g'$. We claim that $p_f^{-1} p_g^{-1} \langle F,G\rangle$ is the desired estimator.
	
	Now recall that $f'$ is a uniform sample of $\hat{f}$, as is $g'$ of $\hat{g}$. Then applying the Countsketch error of Lemma \ref{lem:countmin}, we have $p_f^{-1} p_g^{-1}\langle F,G \rangle = \langle \hat{f}, \hat{g} \rangle + \eps \|\hat{f}\|_1  \|\hat{g}\|_1 = \langle \hat{f}, \hat{g} \rangle + O(\eps) \|f\|_1  \|g\|_1$ with probability $11/13$. Now since $\|\hat{f} - f\|_1 \leq O(\eps)\|f\|_1$  and $\|\hat{g} - g\|_1 \leq O(\eps)\|g\|_1$, we have $\langle \hat{f}, \hat{g} \rangle = \langle f,g\rangle \pm  \sum_{i=1}^n( \eps g_i \|f\|_1 + \eps f_i \|g\|_1 + \eps^2 \|f\|_1 \|g\|_1 = \langle f,g \rangle \pm O(\eps)\|f\|_1\|g\|_1$. Thus 
	\[p_f^{-1} p_g^{-1} \langle F,G \rangle = \langle f ,g\rangle \pm O(\eps) \|f\|_1 \|g\|_1 \] 
	as required. This last fact holds deterministically using only the $\alpha$-property, so the probability of success is $11/13$ as stated.
	
	For the space, at most $2s^2$ samples were sent to each of $f'',g''$ with high probability. Thus the length of each stream was at most $\poly(\alpha\log(n)\ab/\eps)$, and each stream had $P = \poly(\alpha\log(n)\ab/\eps)$ items. Thus each counter in $A,B \in \R^k$ can be stored with $O(\log(\alpha \log(n)/\eps))$ bits. So storing $A,B$ requires $O(\eps^{-1} \ab \log(\ab\alpha \log(n)/\eps))$ bits. Note we can safely terminate if too many samples are sent and the counters become too large, as this happens with probability $O(\poly(1/n))$ . The $4$-wise independent hash function $h:[P] \to [k]$ used to create $A,B$ requires $O(\log(\alpha \log(n)/\eps))$ bits.  
	
	Next, by Lemma \ref{lem:easymod}, the space required to hash the $\log(n)$-bit identities down to $[P]$ is $\log(P) + \log\log(n)$, which is dominated by the space for Countsketch. Finally, we can assume that $s$ is a power of two so $p_f^{-1},p_g^{-1} = \poly(s) = 2^q$ can be stored by just storing the exponent $q$, which takes $\log\log(n)$ bits. To sample we then flip $\log(n)$ coins sequentially, and keep a counter to check if the number of heads reaches $q$ before a tail is seen. 
\end{proof}


\section{$L_1$ Heavy Hitters}
\label{sec:heavyhitters}
As an application of the Countsketch sampling algorithm presented in the last section, we give an improved upper bound for the classic $L_1$ $\eps$-heavy hitters problem in the $\alpha$-property setting. Formally, given $\eps \in (0,1)$, the $L_1$ $\eps$-heavy hitters problem asks to return a subset of $[n]$ that contains all items $i$ such that $|f_i| \geq \eps \|f\|_1$, and no items $j$ such that $|f_j| < (\eps /2)\|f\|_1$. 

The heavy hitters problem is one of the most well-studied problems in the data stream literature. For general turnstile unbounded deletion streams, there is a known lower bound of $\Omega(\eps^{-1}\log(n)\log(\eps n))$ (see \cite{ba2010lower}, in the language of compressed sensing, and \cite{Jowhari:2011}), and the Countsketch of \cite{charikar2002finding} gives a matching upper bound (assuming $\eps^{-1} = o(n)$). In the insertion only case, however, the problem can be solved using $O(\eps^{-1}\log(n))$ bits \cite{bhattacharyya2016optimal}, and for the strictly harder $L_2$ heavy hitters problem (where $\|f\|_1$ is replaced with $\|f\|_2$ in the problem definition), there is an $O(\eps^{-2}\log(1/\eps)\log(n))$-bit algorithm \cite{braverman2016bptree}.
In this section, we beat the lower bounds for unbounded deletion streams in the $\alpha$-property case. We first run a subroutine to obtain a value  $R = (1 \pm 1/8)\|f\|_1$ with probability $1-\delta$. To do this, we use the following algorithm from \cite{kane2010exact}.
\begin{fact}[\cite{kane2010exact}]
	\label{fact:L1est}
	There is an algorithm which gives a $(1 \pm \eps)$ multiplicative approximation with probability $1-\delta$ of the value $\|f\|_1$ using space $O(\eps^{-2}\log(n)\log(1/\delta))$.
\end{fact}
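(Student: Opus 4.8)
The plan is to prove this classical bound via Indyk's $1$-stable (``Cauchy'') sketch, derandomized as in \cite{kane2010exact}. Recall that the standard Cauchy distribution is $1$-stable: if $c_1,\dots,c_n$ are i.i.d.\ standard Cauchy random variables and $f \in \R^n$, then $\sum_i c_i f_i$ is distributed as $\|f\|_1 \cdot Z$ with $Z$ again standard Cauchy. So the first step is to maintain, in one pass, the vector $y = Cf \in \R^r$ where $C \in \R^{r\times n}$ has i.i.d.\ standard Cauchy entries and $r = \Theta(\eps^{-2}\log(1/\delta))$; since $y$ is linear in $f$, each update $(i_t,\Delta_t)$ simply adds $\Delta_t \cdot C_{\cdot,i_t}$ to $y$.

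The estimator is $v = \mathrm{median}_{j\in[r]} |y_j|$. The distributional facts needed are that $\pr{|Z|\le 1} = \tfrac12$ (so the median of $|Z|$ equals $1$) and that the density of $|Z|$, namely $\tfrac{2}{\pi(1+x^2)}$, is continuous and bounded below by a positive constant on a neighborhood of $1$; hence $\pr{|y_j| \le (1-\eps)\|f\|_1} = \tfrac12 - \Theta(\eps)$ and likewise $\pr{|y_j| \ge (1+\eps)\|f\|_1} = \tfrac12 - \Theta(\eps)$. A Chernoff bound over the $r$ independent coordinates of $y$ — equivalently, taking a median over $\Theta(\log(1/\delta))$ independent blocks of $\Theta(\eps^{-2})$ coordinates each — then shows $v \in [(1-\eps)\|f\|_1, (1+\eps)\|f\|_1]$ with probability at least $1-\delta$.

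For the counter space, each $y_j$ is maintained to additive precision $\poly(1/(nmM))$ with its magnitude truncated at $\poly(nmM)$; since $\log(mM) = O(\log n)$ this costs $O(\log n)$ bits per counter, and the truncation fails (some sampled $|C_{ji}|$ exceeds the cap) only with probability $\poly(1/n)$, which can be folded into the error probability and a negligible relative perturbation of $v$. This accounts for $O(r\log n) = O(\eps^{-2}\log(1/\delta)\log n)$ bits in total.

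The one genuinely delicate point — and the technical contribution of \cite{kane2010exact} — is that writing down $C$ explicitly would cost $\Omega(rn\log n)$ bits, far over budget, so the Cauchy entries must be produced from a short seed. The resolution is to show that correctness of the empirical median is preserved when the entries of each row are drawn from a suitable limited-independence family (splitting the coordinates of $f$ into the few ``heavy'' ones, which are treated with more care, and the ``light'' ones, for which low-order independence suffices), possibly combined with a Nisan-type pseudorandom generator fooling the small-space tester computing $\mathrm{median}_j|(Cf)_j|$, keeping the total randomness storage within the $O(\eps^{-2}\log(1/\delta)\log n)$ bound. I expect this derandomization to be the main obstacle: the stable-distribution identity, the median concentration, and the counter rounding are all routine, whereas arguing that bounded independence (or a PRG) preserves the sharp concentration of the \emph{empirical median} of heavy-tailed variables is precisely the careful moment/tester analysis carried out in \cite{kane2010exact}.
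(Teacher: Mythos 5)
The paper does not prove this statement at all: it is imported verbatim as a fact from \cite{kane2010exact}, so there is no internal proof to compare against. Your outline (Indyk's $1$-stable median sketch, concentration of the empirical median of $|y_j|$ via the median of $|Z|$ being $1$ and the density being bounded below near it, a Chernoff bound over $\Theta(\eps^{-2}\log(1/\delta))$ repetitions, $O(\log n)$-bit rounding of each counter, and the limited-independence/PRG derandomization) is exactly the standard argument behind the cited result and is consistent with how the paper uses it; the only caveat is that your derandomization step is a stated plan rather than a proof, which is acceptable here since the paper itself treats that machinery of \cite{kane2010exact} as a black box.
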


Next, we run an instance of \texttt{CSSS} with parameters $k = 32/\eps$ and $\epsilon/32$ to obtain our estimate $y^*$ of $f$. This requires space $O(\eps^{-1} \log(n) \log(\frac{\alpha \log(n)}{\eps}))$, and by Theorem \ref{thm:cssserror} gives an estimate $y^* \in \R^n$ such that $|y_i^* - f_i| < 2(\sqrt{\eps/32} \err{f}{2} + \eps \|f\|_1/32)$ for all $i \in [n]$ with high probability. We then return all items $i$ with $|y_i^*| \geq 3\eps R/4$. Since the top $1/\eps$ elements do not contribute to $\err{f}{2}$, the quantity is maximized by having $k$ elements with weight $\|f\|_1/k$, so $\err{f}{2} \leq k^{-1/2}\|f\|_1$. Thus $\|y_i^* - f\|_\infty < (\eps/8)\|f\|_1$.

Given this, it follows that for any $i \in [n]$ if  $|f_i| \geq \eps \|f\|_1$, then $|y_i^*|>  (7\eps/8) \|f\|_1 > (3\eps/4)R$.  Similarly if $|f_i| < (\eps /2)\|f\|_1$, then $|y^*_i| < (5\eps/8)\|f\|_1 < (3\eps/4)R$. So our algorithm correctly distinguishes $\eps$ heavy hitters from items with weight less than $\eps/2$. The probability of failure is $O(n^{-c})$ from \texttt{CSSS} and $\delta$ for estimating $R$, and the space required is  $O(\eps^{-1}\log(n)$ $\log(\frac{\alpha\log(n)}{\eps}))$ for running \texttt{CSSS} and $O(\log(n)\log(1/\delta))$ to obtain the estimate $R$. This gives the following theorem.

\begin{theorem}\label{thm:HHgen}
Given $\eps \in (0,1)$, there is an algorithm that solves the $\eps$-heavy hitters problem for general turnstile $\alpha$-property streams with probability $1-\delta$ using space $O(\eps^{-1}\log(n)$ $\log(\frac{\alpha\log(n)}{\eps}) + \log(n)\allowbreak\log(1/\delta))$. 
\end{theorem}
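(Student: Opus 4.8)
The plan is to reduce the heavy hitters problem to (i) a constant-factor estimate of $\|f\|_1$ and (ii) a coordinate-wise estimate of $f$ with small additive error, and then to threshold. First I would run the $L_1$-estimation algorithm of Fact~\ref{fact:L1est} with a constant accuracy parameter (say $1/8$) and failure probability $\delta$, obtaining a value $R$ with $R = (1 \pm 1/8)\|f\|_1$. This costs only $O(\log(n)\log(1/\delta))$ bits and carries no $\eps$ dependence.

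Next I would run a single instance of \texttt{CSSS} with $k = \Theta(1/\eps)$ columns and sensitivity parameter $\Theta(\eps)$. By Theorem~\ref{thm:cssserror}, with high probability the returned vector $y^*$ satisfies $|y_i^* - f_i| \le 2\bigl(k^{-1/2}\err{f}{2} + \eps\|f\|_1\bigr)$ for every $i \in [n]$. The observation that collapses this into a clean $\|\cdot\|_\infty$ bound is that $\err{f}{2}$ is the $L_2$ norm of $f$ after removing its $k$ heaviest entries, which for fixed $L_1$ mass is maximized when that mass is spread evenly over exactly $k$ coordinates, giving $\err{f}{2} \le k^{-1/2}\|f\|_1 = \Theta(\eps^{1/2})\|f\|_1$. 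Hence $k^{-1/2}\err{f}{2} = O(\eps)\|f\|_1$, and choosing the hidden constants appropriately yields $\|y^* - f\|_\infty < (\eps/8)\|f\|_1$ with high probability.

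The algorithm then reports every $i$ with $|y_i^*| \ge (3\eps/4)R$, and correctness is a short case analysis using $R = (1 \pm 1/8)\|f\|_1$: if $|f_i| \ge \eps\|f\|_1$ then $|y_i^*| > (7\eps/8)\|f\|_1 > (3\eps/4)R$, so $i$ is reported; and if $|f_i| < (\eps/2)\|f\|_1$ then $|y_i^*| < (5\eps/8)\|f\|_1 < (3\eps/4)R$, so $i$ is not reported. Summing the two subroutines, the total space is $O\bigl(\eps^{-1}\log(n)\log(\tfrac{\alpha\log(n)}{\eps})\bigr)$ for \texttt{CSSS} plus $O(\log(n)\log(1/\delta))$ for $R$, and the failure probability is at most $\delta$ (from $R$) plus $O(n^{-c})$ (from \texttt{CSSS}), which is at most $\delta$ after adjusting constants; this is exactly the claimed bound.

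I expect essentially all of the genuine difficulty to be absorbed into Theorem~\ref{thm:cssserror}, whose proof handles the delicate part: arguing that sampling $\poly(\alpha\log(n)/\eps)$ updates and running Countsketch rowwise on \emph{independent} samples still delivers the additive $\eps\|f\|_1$ guarantee while keeping every counter below $\log(n)$ bits. Given that theorem, the only real (and routine) step here is verifying that the two error contributions in the \texttt{CSSS} guarantee --- the Countsketch tail term $k^{-1/2}\err{f}{2}$ and the sampling term $\eps\|f\|_1$ --- can be driven below $(\eps/8)\|f\|_1$ simultaneously by the single parameter choice $k = \Theta(1/\eps)$, sensitivity $\Theta(\eps)$, together with the inequality $\err{f}{2} \le k^{-1/2}\|f\|_1$; once that is in hand the comparison against $R$ is immediate.
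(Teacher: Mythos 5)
Your proposal is correct and follows essentially the same route as the paper's own proof: a constant-factor $L_1$ estimate $R$ via Fact~\ref{fact:L1est}, one \texttt{CSSS} instance with $k=\Theta(1/\eps)$ and sensitivity $\Theta(\eps)$, the bound $\err{f}{2}\le k^{-1/2}\|f\|_1$ to collapse the error to $\|y^*-f\|_\infty<(\eps/8)\|f\|_1$, and thresholding at $(3\eps/4)R$ with the same case analysis and space accounting. No gaps to report.
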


Now note for strict turnstile streams, we can compute $R = \|f\|_1$ exactly with probability $1$ using an $O(\log(n))$-bit counter. Since the error bounds from \texttt{CSSS} holds with high probability, we obtain the following result.
\begin{theorem}\label{thm:HHstrict}
Given $\eps \in (0,1)$, there is an algorithm that solves the $\eps$-heavy hitters problem for strict turnstile $\alpha$-property streams with high probability using space $O(\eps^{-1}\log(n)$ $\log(\alpha\; \ab\log(n)/\ab\eps))$. 
\end{theorem}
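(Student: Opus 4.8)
The plan is to follow the same template as the proof of Theorem~\ref{thm:HHgen}, exploiting the single simplification afforded by the strict turnstile model. Since every coordinate satisfies $f_i \geq 0$ throughout the stream, we have $\|f\|_1 = \sum_{i=1}^n f_i = \sum_{t=1}^m \Delta_t$, so $\|f\|_1$ can be maintained \emph{exactly} by a single signed running counter using $O(\log(mM)) = O(\log n)$ bits. Thus we set $R = \|f\|_1$ deterministically, with no randomness and no $\delta$-dependent space, which is precisely what removes the $O(\log(n)\log(1/\delta))$ overhead present in the general turnstile case and also means the algorithm need only fail when \texttt{CSSS} fails.

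Concretely, I would instantiate \texttt{CSSS} with $k = \Theta(1/\eps)$ columns and sensitivity parameter $\eps' = \Theta(\eps)$ (same constants as in the discussion preceding Theorem~\ref{thm:HHgen}), run it alongside the exact $L_1$ counter, and at query time compute $y^*$. By Theorem~\ref{thm:cssserror}, with high probability $|y_i^* - f_i| \leq 2(k^{-1/2}\err{f}{2} + \eps'\|f\|_1)$ for all $i \in [n]$; using that the top $1/\eps$ coordinates do not contribute to $\err{f}{2}$, hence $\err{f}{2} \leq k^{-1/2}\|f\|_1$, and choosing the constants appropriately, this yields $\|y^* - f\|_\infty < (\eps/8)\|f\|_1$ with high probability. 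The algorithm then reports every $i$ with $|y_i^*| \geq (3\eps/4)R = (3\eps/4)\|f\|_1$. Correctness is identical to the general case: any $i$ with $|f_i| \geq \eps\|f\|_1$ has $|y_i^*| > (7\eps/8)\|f\|_1 \geq (3\eps/4)R$ and is reported, while any $i$ with $|f_i| < (\eps/2)\|f\|_1$ has $|y_i^*| < (5\eps/8)\|f\|_1 < (3\eps/4)R$ and is suppressed.

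For the space bound, \texttt{CSSS} with these parameters uses $O(\eps^{-1}\log(n)\log(\alpha\log(n)/\eps))$ bits by Theorem~\ref{thm:cssserror}, and the exact $L_1$ counter contributes only $O(\log n)$ additional bits, so the \texttt{CSSS} term dominates and gives the claimed bound. I do not anticipate a substantive obstacle here; the only point requiring a moment of care is confirming that no extra failure-probability budget is consumed — the entire randomized component is \texttt{CSSS}, whose $L_\infty$ guarantee already holds with probability $1 - n^{-c}$, and there is no approximate $L_1$-estimation step to union-bound against — so the overall algorithm succeeds with high probability. In short, the statement is a direct corollary of Theorem~\ref{thm:cssserror} once the exact $L_1$ computation replaces the approximate one.
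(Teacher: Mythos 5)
Your proposal is correct and follows essentially the same route as the paper: replace the randomized $L_1$ estimate of the general-turnstile case with an exact $O(\log n)$-bit counter for $R=\|f\|_1$ (valid since all frequencies are nonnegative in the strict turnstile model), then apply Theorem \ref{thm:cssserror} with $k=\Theta(1/\eps)$ and threshold $(3\eps/4)R$, so the only failure event is the high-probability \texttt{CSSS} guarantee. The space accounting matches the paper's as well.
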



	\section{$L_1$ Sampling}
	\label{sec:L1Samp}
	Another problem of interest is the problem of designing $L_p$ samplers. First introduced by Monemizadeh and Woodruff in \cite{monemizadeh20101}, it has since been observed that $L_p$ samplers lead to alternative algorithms for many important streaming problems, such as heavy hitters, $L_p$ estimation, and finding duplicates \cite{andoni2010streaming, monemizadeh20101, Jowhari:2011}.
	
	Formally, given a data stream frequency vector $f$, the problem of returning an $\epsilon$-approximate relative error uniform $L_p$ sampler is to design an algorithm that returns an index $i\in[n]$ such that
	\[	\pr{i = j} = (1 \pm \epsilon)\frac{|f_j|^p}{\|f\|_p^p}	\]
for every $j \in [n]$. An approximate $L_p$ sampler is allowed to fail with some probability $\delta$, however in this case it must not output any index.  For the case of $p=1$, the best known upper bound is  $O(\eps^{-1}\log(\eps^{-1}) \log^2(n)\log(\delta^{-1}))$ bits of space, and there is also an $\Omega(\log^2(n))$ lower bound for $L_p$ samplers with $\eps = O(1)$ for any $p$ \cite{Jowhari:2011}. 
	In this section, using the data structure \texttt{CSSS} of Section \ref{sec:fakecs}, we will design an $L_1$ sampler for strict-turnstile strong $L_1$ $\alpha$-property streams using $O(\eps^{-1}\log(\eps^{-1})\allowbreak \log(n)\log(\frac{\alpha \log(n)}{\eps}) \log(\delta^{-1}))$ bits of space. Throughout the section we use \textit{$\alpha$-property} to refer to the $L_1$ $\alpha$-property.
	
\subsection{The $L_1$ Sampler}

Our algorithm employs the technique of \textit{precision sampling} in a similar fashion as in the $L_1$ sampler of \cite{Jowhari:2011}. 		
The idea is to scale every item $f_i$ by $1/t_i$ where $t_i \in [0,1]$ is a uniform random variable, and return any index $i$ such that $z_i = |f_i|/t_i >\frac{1}{\eps}\|f\|_1$, since this occurs with probability exactly $\eps \frac{|f_i|}{\|f\|_1}$. One can then run a traditional Countsketch on the scaled stream $z$ to determine when an element passes this threshold.  

In this section, we will adapt this idea to \textit{strong} $\alpha$-property streams (Definition \ref{def:strongalphaprop}). The necessity of the strong $\alpha$-property arises from the fact that if $f$ has the strong $\alpha$-property, then any coordinate-wise scaling $z$ of $f$ still has the $\alpha$-property with probability $1$. Thus the stream $z$ given by $z_i = f_i/t_i$ has the $\alpha$-property (in fact, it again has the strong $\alpha$-property, but we will only need the fact that $z$ has the $\alpha$-property). Our full $L_1$ sampler is given in Figure \ref{fig:L1Samp}. 

By running \texttt{CSSS} to find the heavy hitters of $z$, we introduce error additive in $O(\eps' \|z\|_1) = O(\eps^3 /\log^2(n)\|z\|_1)$, but as we will see the heaviest item in $z$ is an $\Omega(\eps^2/\log^2(n))$ heavy hitter with probability $1-O(\eps)$ conditioned on an arbitrary value of $t_i$, so this error will only be an $O(\eps)$ fraction of the weight of the maximum weight element. Note that we use the term $c$-heavy hitter for $c \in (0,1)$ to denote an item with weight at least $c\|z\|_1$. 
Our algorithm then attempts to return an item $z_i$ which crosses the threshold $\|f\|_1/\eps$, and we will be correct in doing so if the tail error $\err{z}{2}$ from $\texttt{CSSS}$ is not too great.

To determine if this is the case, since we are in the strict turnstile case we can compute $r = \|f\|_1$ and $q = \|z\|_1$ exactly by keeping a $\log(n)$-bit counter (note however that we will only need constant factor approximations for these). Next, using the result of Lemma \ref{lem:ErrEst} we can accurately estimate $\err{z}{2}$, and abort if it is too large in Recovery Step $4$ of Figure \ref{fig:L1Samp}. If the conditions of this step hold, we will be guaranteed that if $i$ is the maximal element, then $y_i^* = (1 \pm O(\eps))z_i$. This allows us to sample $\eps$-approximately, as well as guarantee that our estimate of $z_i$ has relative error $\eps.$
We now begin our analysis our $L_1$ sampler. First, the proof of the following fact can be found in \cite{Jowhari:2011}.
\begin{lemma}
	\label{lem:errbound2}
	Given that the values of $t_i$ are $k = \log(1/\eps)$-wise independent, then conditioned on an arbitrary fixed value $t = t_l \in [0,1]$ for a single $l \in [n]$, we have $\pr{20\err{z}{2} > k^{1/2}\|f\|_1} = O(\eps + n^{-c})$.
\end{lemma}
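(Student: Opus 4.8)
The plan is to bound the second moment of $\err{z}{2}$ conditioned on one fixed value $t_l$, and then apply Markov's inequality. Recall $z_i = f_i / t_i$, so $\err{z}{2}$ is the $\ell_2$-norm of $z$ with its $k$ heaviest coordinates removed. The natural move is to upper bound $\err{z}{2}^2$ by $\sum_{i \neq l} z_i^2$ minus the contribution of the top $k-1$ among those, but more simply: since removing the top $k$ coordinates can only help, I would bound $\err{z}{2}^2 \leq \sum_{i: i \neq l, i \notin \mathcal{H}} z_i^2$ where $\mathcal{H}$ is any particular set of $k-1$ coordinates. The cleanest route is to note that for each fixed $i \neq l$, $\ex{z_i^2} = f_i^2 \ex{1/t_i^2}$, and $\ex{1/t_i^2} = \int_0^1 t^{-2}\,dt$ diverges — so one cannot take raw expectations. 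This is exactly why \cite{Jowhari:2011} truncates: one conditions on the event that no $t_i$ is too small, or equivalently works with the heavy-hitter-removed tail where the largest few $1/t_i$ values are excised.

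Concretely, I would follow the \cite{Jowhari:2011} argument: let $\mathcal{L}$ be the set of indices $i \neq l$ with $t_i \leq k/(\text{something})$ — more precisely, order the coordinates $i \neq l$ by the value of $|f_i|/t_i$ and let the top $k-1$ of them be excluded when forming $\err{z}{2}$ (since $z$ has $n$ coordinates and we remove the $k$ heaviest, at least $k-1$ of the heaviest among $i \neq l$ are removed regardless of where $l$ ranks). For the remaining tail, I would use the $k$-wise independence of the $t_i$'s: the probability that $k-1$ distinct indices simultaneously have $t_i \leq \tau$ is at most $\binom{n}{k-1}\tau^{k-1}$ by $k$-wise independence, and choosing $\tau$ appropriately (roughly $\tau \sim \eps \|f\|_1 / (k^{1/2} |f_i|)$ scale) makes this $O(\eps + n^{-c})$. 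On the complement of this bad event, every surviving coordinate has $t_i > \tau$, hence $z_i^2 = f_i^2/t_i^2 < f_i^2/\tau^2$, and summing gives $\err{z}{2}^2 \leq \tau^{-2}\sum_i f_i^2 \leq \tau^{-2}\|f\|_1^2$, which after plugging in $\tau$ yields $20\err{z}{2} \leq k^{1/2}\|f\|_1$. The conditioning on $t_l$ is harmless because $l$ is excluded from all these sums and the $k$-wise independence of the \emph{other} coordinates is unaffected by fixing one of them (this uses $k$-wise, not just $(k-1)$-wise, independence).

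The main obstacle is getting the truncation threshold $\tau$ and the counting argument to interact correctly: one needs the bad event "at least $k-1$ of the surviving indices have small $t_i$" to have probability $O(\eps)$ while simultaneously the good-event bound $\tau^{-2}\|f\|_1^2$ is small enough to give $20\err{z}{2} \leq k^{1/2}\|f\|_1$. Since $k = \log(1/\eps)$, one has $\binom{n}{k-1}\tau^{k-1} \leq (en\tau/(k-1))^{k-1}$, and for this to be $O(\eps)$ with $k = \log(1/\eps)$ requires $\tau = O(k/(en) \cdot \eps^{1/(k-1)}) = O(k/(en))$ roughly (since $\eps^{1/\log(1/\eps)} = \Theta(1)$); then $\tau^{-2}\|f\|_1^2 = O((en/k)^2\|f\|_1^2)$, which is far too large. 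So the naive threshold fails, and the actual argument in \cite{Jowhari:2011} must instead partition the tail into dyadic scales of $t_i$ and bound the contribution of each scale via its own concentration/counting bound — I would reproduce that dyadic decomposition, showing that with probability $1 - O(\eps + n^{-c})$ the number of $i \neq l$ with $t_i \in [2^{-j-1}, 2^{-j}]$ is $O(2^j \cdot j)$ for every $j$, so that $\sum_{i \neq l, i \notin \text{top } k} f_i^2/t_i^2$ telescopes to $O(k)\|f\|_1^2$ after the top coordinates (which are precisely the ones at the smallest $t_i$ scales) are removed. Since the lemma is quoted from \cite{Jowhari:2011}, I would state the dyadic bound, cite their Lemma, and note that conditioning on a single $t_l$ changes nothing since that index is excluded throughout and the remaining variables retain their $(k)$-wise independence.
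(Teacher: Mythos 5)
The paper gives no proof of this lemma at all: it is quoted directly from \cite{Jowhari:2011} (``the proof of the following fact can be found in \cite{Jowhari:2011}''). So your closing fallback --- state the result, cite their lemma, and observe that conditioning on a single fixed $t_l$ is harmless because index $l$ can be excluded from the tail and the remaining $t_i$ retain their ($k-1$)-wise independence --- coincides with what the paper actually does, and your handling of the conditioning mirrors how Proposition \ref{prop:bighitter} deals with the same issue by zeroing out coordinate $l$.

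However, the self-contained argument you sketch does not work as written, so judged as a proof the proposal has a genuine gap. The concrete flaw is in the dyadic decomposition: you bucket indices by the scale of $t_i$ alone and claim that with probability $1-O(\eps+n^{-c})$ the number of $i \neq l$ with $t_i \in [2^{-j-1},2^{-j}]$ is $O(2^j \cdot j)$. For uniform $t_i$ the expected number of indices in that interval is $n2^{-j-1}$, which is huge for small $j$ and has nothing to do with $2^j$; no amount of limited independence rescues a count bound of that form. The decomposition that actually works (used in \cite{Jowhari:2011} and echoed in the paper's Proposition \ref{prop:bighitter}) is over levels of $z_i$ relative to $\|f\|_1$: since $\pr{z_i \geq \|f\|_1/2^{j}} = \min\{1, 2^{j}|f_i|/\|f\|_1\}$, the level set $\{i: z_i \geq \|f\|_1/2^{j}\}$ has expected size at most $2^{j}$, so the thresholds must involve $|f_i|$ and $t_i$ jointly, not $t_i$ alone. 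One then argues that the top-$k$ removal deletes the few coordinates at the highest levels, while each surviving level $j$ contributes roughly $|I_j|\cdot(\|f\|_1/2^{j})^2$ to $\err{z}{2}^2$; making this quantitative with only $k=\log(1/\eps)$-wise independence and total failure probability $O(\eps)$ (rather than the $1/\eps$ loss a naive per-level Markov bound incurs, as in Proposition \ref{prop:bighitter}) is exactly the technical content of the cited lemma, and your sketch asserts this telescoping rather than establishing it. Your first truncation attempt you correctly diagnose as failing yourself, so nothing in the proposal beyond the citation actually proves the bound.
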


\begin{figure*}
	\fbox{\parbox{\textwidth}{ \texttt{$\alpha$L1Sampler}: $L_1$ sampling algorithm for strict-turnstile strong $\alpha$-property streams:
			
			\textbf{Initialization}
			\begin{enumerate}[topsep=0pt,itemsep=-1ex,partopsep=1ex,parsep=1ex]
				\item Instantiate \texttt{CSSS} with $k=O(\log(\eps^{-1}))$ columns and parameter $\eps' = \eps^3/\log^2(n)$.
				\item	Select $k=O(\log(\frac{1}{\eps}))$-wise independent uniform scaling factors $t_i \in  [0,1]$ for $i \in [n]$.
				\item  Run  \ttx{CSSS} on scaled input $z$ where $z_i = f_i/t_i$.
				\item Keep $\log(n)$-bit counters $r,q$ to store $r = \|f\|_1$ and $q = \|z\|_1$.

			\end{enumerate}
			\textbf{Recovery}
			\begin{enumerate}[topsep=0pt,itemsep=-1ex,partopsep=1ex,parsep=1ex]
				\item Compute estimate $y^*$ via \ttx{CSSS}.				
				\item Via algorithm of Lemma \ref{lem:ErrEst}, compute $v$ such that $\err{z}{2} \leq v \leq 45\frac{k^{1/2}\eps^3}{\log^2(n)} \|z\|_1+ 20\err{z}{2}$.
				\item Find $i$ with $|y^*_i|$ maximal. 
				\item If $v> k^{1/2}r+45\frac{k^{1/2}\eps^3}{\log^2(n)}q  $, or  $|y^*_i| <\max\{ \frac{1}{\eps}r, \frac{(c/2)\eps^2}{\log^2(n)} q \}$ where the constant $c>0$ is as in Proposition \ref{prop:bighitter}, output FAIL, otherwise output $i$ and $t_i y^*_i$ as the estimate for $f_i$.
			\end{enumerate}				
	}}	\caption{Our $L_1$ sampling algorithm with sucsess probability $\Theta(\eps)$}	\label{fig:L1Samp}
\end{figure*}

The following proposition shows that the $\eps/\ab\log^2(n)$ term in the additive error of our \texttt{CSSS} will be an $\eps$ fraction of the maximal element with high probability. 

\begin{proposition}
	
	\label{prop:bighitter}
 There exists some constant $c > 0$ such that conditioned on an arbitrary fixed value $t = t_l \in [0,1]$ for a single $l \in [n]$, if $j$ is such that $|z_j|$ is maximal, then with probability $1-O(\eps)$ we have $|z_j| \geq c\eps^2/\log^2(n)\|z\|_1$.

\end{proposition}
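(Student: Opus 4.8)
The plan is to show that, with probability $1-O(\eps)$ over the scaling factors $\{t_i\}$, the scaled vector $z$ (with $z_i=f_i/t_i$) simultaneously satisfies $\|z\|_1=O(\log(n)/\eps)\,\|f\|_1$ and has some coordinate of magnitude $\Omega(\eps/\log n)\,\|f\|_1$; dividing these two bounds yields the claimed $\Omega(\eps^2/\log^2 n)$ relative heaviness of the maximal coordinate. Normalize $\|f\|_1=1$. Conditioning on the single value $t_l$ leaves $\{t_i\}_{i\ne l}$ $(k-1)$-wise independent and uniform on $[0,1]$, and since $k=\Theta(\log(1/\eps))$ this is enough independence for the arguments below; the conditioned coordinate $z_l=|f_l|/t_l$, which could be arbitrarily large if $t_l$ is tiny, will be dealt with by a short case split at the end, using only the trivial bound $|z_j|\ge z_l$.

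\emph{Upper bound on $\|z\|_1$.} Write $z'$ for $z$ with coordinate $l$ deleted. Fix a truncation level $\tau=\Theta(\log(n)/\eps)$. A union bound gives $\pr{\exists\, i\ne l:\ z_i>\tau}=\pr{\exists\, i\ne l:\ t_i<|f_i|/\tau}\le \sum_i |f_i|/\tau = O(\eps/\log n)$, so with probability $1-O(\eps)$ no coordinate of $z'$ exceeds $\tau$. For the truncated contributions one computes $\ex{\min(z_i,\tau)}=|f_i|\bigl(1+\ln(\tau/|f_i|)\bigr)$ (zero frequencies contributing $0$); summing over $i$ and using the entropy bound $\sum_i |f_i|\ln(1/|f_i|)\le \ln n$ gives $\ex{\sum_i \min(z_i,\tau)}=O(\log n)$, whence Markov's inequality yields $\sum_i \min(z_i,\tau)=O(\log(n)/\eps)$ with probability $1-\eps$. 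Intersecting the two events shows $\|z'\|_1=O(\log(n)/\eps)$ with probability $1-O(\eps)$.

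\emph{Lower bound on the maximum of $z'$.} Let $\theta:=\Theta\!\bigl(\eps/\log n\bigr)\cdot(1-|f_l|)$; the factor $(1-|f_l|)$ only matters for bookkeeping when $|f_l|$ is close to $1$. If $\max_{i\ne l}|f_i|\ge\theta$ then $\|z'\|_\infty\ge\theta$ holds deterministically, since $z_i\ge|f_i|$. Otherwise every $|f_i|<\theta$ for $i\ne l$, and we consider $N$, the number of indices $i\ne l$ with $t_i<|f_i|/\theta$: this is a sum of $(k-1)$-wise independent indicators with $\ex{N}=\sum_{i\ne l}|f_i|/\theta=(1-|f_l|)/\theta=\Theta(\log(n)/\eps)$, which in particular exceeds $2k$. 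A standard moment inequality of order $\Theta(\log(1/\eps))$ for sums of boundedly independent indicators (the same ingredient underlying Lemma~\ref{lem:errbound2}) then gives $\pr{N=0}\le \pr{|N-\ex{N}|\ge \ex{N}}\le (O(k)/\ex{N})^{k/2}=2^{-\Omega(k)}\le\eps$ for a suitable constant in $k=\Theta(\log(1/\eps))$. Hence $N\ge 1$, i.e. $\|z'\|_\infty\ge\theta$, with probability $1-O(\eps)$.

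\emph{Combining.} Work on the $1-O(\eps)$ event on which both bounds hold. If $z_l\ge\|z'\|_1$ then $\|z\|_1=z_l+\|z'\|_1\le 2z_l\le 2|z_j|$ and we are done with room to spare. Otherwise $\|z\|_1<2\|z'\|_1=O(\log(n)/\eps)$, while $|z_j|\ge\max\{z_l,\|z'\|_\infty\}\ge\max\{|f_l|,\theta\}$; splitting on whether $|f_l|\ge\tfrac12$ (then $|z_j|\ge z_l\ge|f_l|\ge\tfrac12$) or $|f_l|<\tfrac12$ (then $1-|f_l|>\tfrac12$, so $\theta=\Omega(\eps/\log n)$ and $|z_j|\ge\|z'\|_\infty\ge\theta$), in both cases $|z_j|=\Omega(\eps/\log n)$, so $|z_j|/\|z\|_1=\Omega(\eps^2/\log^2 n)$, which pins down the constant $c$. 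I expect the order-$\Theta(\log(1/\eps))$ moment bound on $N$ — precisely the step that forces $k=\Theta(\log(1/\eps))$-wise independence — to be the main obstacle; the $\|z\|_1$ estimate and the final case analysis are routine, and the entropy inequality $\sum_i|f_i|\ln(1/|f_i|)\le\ln n$ is the only other slightly nonobvious ingredient.
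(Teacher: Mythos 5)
Your proof is correct, but it takes a genuinely different route from the paper's. The paper bounds $\|z'\|_1$ (with coordinate $l$ zeroed out) by $O(\log^2(n)/\eps)\|f'\|_1$ via a dyadic level-set decomposition, applying Markov to the size of each of the $\log(n)$ levels and union bounding; it then counts coordinates exceeding the \emph{high} threshold $\eps\|f'\|_1/2$, where the expected count is $2/\eps$, so pairwise independence plus Chebyshev already gives at least one (and at most $3/\eps$) such coordinates with probability $1-O(\eps)$; the $z_l$ case split at the end is essentially the same as yours. Your truncation-plus-entropy computation of $\ex{\sum_i\min(z_i,\tau)}$ is sharper (it yields $\|z'\|_1=O(\log(n)/\eps)$, saving a $\log n$ factor over the paper), which is why you can afford the much lower threshold $\theta=\Theta(\eps/\log n)(1-|f_l|)$; the $(1-|f_l|)$ normalization plays the same role as the paper's deletion of coordinate $l$. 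Two remarks: first, the step you flagged as the main obstacle is actually unnecessary — with $\ex{N}=\Theta(\log(n)/\eps)$, plain Chebyshev under pairwise independence gives $\pr{N=0}\le 1/\ex{N}=O(\eps/\log n)\le\eps$, so the order-$\Theta(\log(1/\eps))$ moment inequality (and hence the full $k$-wise independence) is not needed for this proposition, just as it is not used in the paper's proof. Second, your bound $\ex{\sum_i\min(z_i,\tau)}=O(\log n)$ hides a $\log\tau=O(\log\log n+\log(1/\eps))$ term, so as written your argument needs $\log(1/\eps)=O(\log n)$ to land exactly on the $c\eps^2/\log^2(n)$ form; this holds throughout the paper's parameter regime, but the paper's level-set bound does not require it, so it is worth stating the assumption (or capping the truncation argument) explicitly.
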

\begin{proof}
	Let $f_i' = f_i, z_i'=z_i$ for all $i \neq l$, and let $f'_l = 0 = z_l'$. Let $I_t = \{i \in [n]\setminus l \; | \; z_i \in [\frac{\|f'\|_1}{2^{t+1}}, \frac{\|f'\|_1}{2^{t}}) \}$. Then $\pr{i \in I_t, i \neq l} = 2^t|f_i|/\|f'\|_1$, so $\ex{|I_t|} = 2^t$ and by Markov's inequality $\pr{|I_t| > \log(n)\eps^{-1}2^t } < \eps/\log(n)$. By the union bound $\sum_{t \in [\log(n)]} \sum_{i \in I_t}|z_i| \leq \log^2(n)\|f'\|_1/\eps$ with probability $1 - \eps$. Call this event $\mathcal{E}$. 
	Now for $i \neq l$ let $X_i$ indicate $z_i \geq \eps \|f'\|/2$. Then Var$(X_i) = (2/\eps) |f_i|/\|f'\|_1- ((2/\eps) |f_i|/\|f'\|_1)^2 < \ex{X_i}$, and so pairwise independence of the $t_i$ is enough to conclude that Var$(\sum_{i\neq l} X_i) \ab < \ex{\sum_{i\neq l} X_i} = 2/\eps$. So by Chebyshev's inequality, \[\bpr{\big|\sum_{i\neq l} X_i - 2\eps^{-1}\big| > \eps^{-1}} < 2\eps\]
	so there is at least one and at most $3/\eps$ items in $z'$ with weight $\eps\|f'\|_1/2$ with probability $1-O(\eps)$. Call these ``heavy items''. By the union bound, both this and $\mathcal{E}$ occur with probability $1-O(\eps)$.
	Now the largest heavy item will be greater than the average of them, and thus it will be an $\eps/3$-heavy hitter among the heavy items. Moreover, we have shown that the non-heavy items in $z'$ have weight at most $\log^2(n)\|f'\|_1/\eps$ with probability $1-\eps$, so it follows that the maximum item in $z'$ will have weight at least $\eps^2/(2\log^2(n))\|z\|_1$ with probability $1-O(\eps)$.
	
	Now if $z_l$ is less than the heaviest item in $z'$, then that item will still be an $\eps^2/(4\log^2(n))$ heavy hitter. If $z_l$ is greater, then $z_l$ will be an $\eps^2/(4\log^2(n))$ heavy hitter, which completes the proof with $c = 1/4$.
\end{proof}

\begin{lemma}
		\label{lem:L1SampMain}
	The probability that \texttt{$\alpha$L1Sampler} outputs the index $i \in [n]$ is $(\eps \pm O(\eps^2))\frac{|f_i|}{\|f\|_1} + O(n^{-c})$. The relative error of the estimate of $f_i$ is $O(\eps)$ with high probability.
\end{lemma}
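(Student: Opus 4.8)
The plan is to analyze the three ways the algorithm can behave on a fixed index $i$: it can FAIL, it can output $i$, or it can output some $j \neq i$. I want to show that, conditioned on the scaling factor $t_i$ taking an arbitrary value, the probability of outputting $i$ is (approximately) $\eps \frac{|f_i|}{\|f\|_1}$, and that the FAIL conditions are triggered only with probability $O(\eps)$, so that they contribute only lower-order corrections. First I would set $z_i = |f_i|/t_i$ and observe that, by the precision-sampling setup, the event ``$|z_i| \geq \frac{1}{\eps}\|f\|_1$'' (equivalently $t_i \leq \eps|f_i|/\|f\|_1$) has probability exactly $\eps \frac{|f_i|}{\|f\|_1}$ over the choice of $t_i$, assuming this quantity is at most $1$ (the case $|f_i| > \|f\|_1/\eps$ cannot occur since $|f_i| \le \|f\|_1$, and the edge case can be absorbed into the $O(\eps^2)$ term). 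So the ``ideal'' sampler would output $i$ exactly when $z_i$ crosses the threshold; the task is to show the real algorithm agrees with this ideal behavior up to $O(\eps^2)\frac{|f_i|}{\|f\|_1} + O(n^{-c})$ in probability.

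The key steps, in order: (1) Condition on the value $t = t_i$ and apply Proposition~\ref{prop:bighitter} to conclude that the maximal coordinate $z_j$ of $z$ is an $\Omega(\eps^2/\log^2(n))$-heavy hitter with probability $1 - O(\eps)$; since $z$ has the (strong) $\alpha$-property with probability $1$ by the coordinate-wise scaling argument, Theorem~\ref{thm:cssserror} applies and \texttt{CSSS} returns $y^*$ with $\|y^* - z\|_\infty \le 2(k^{-1/2}\err{z}{2} + \eps'\|z\|_1)$. (2) Apply Lemma~\ref{lem:errbound2} to control $\err{z}{2}$: with probability $1 - O(\eps + n^{-c})$ we have $20\err{z}{2} \le k^{1/2}\|f\|_1 = k^{1/2} r$, so the FAIL test in Recovery Step 4 (which uses the estimate $v$ from Lemma~\ref{lem:ErrEst} with $\err{z}{2} \le v \le 20\err{z}{2} + 45 k^{1/2}\eps' \|z\|_1$) does not trigger on this event — one checks the constants line up so that $v \le k^{1/2} r + 45 k^{1/2}\eps' q$ whenever $20\err{z}{2} \le k^{1/2} r$. (3) On the good event (error bound holds, $\err{z}{2}$ small, max element heavy), plug in $\eps' = \eps^3/\log^2(n)$ and use that $z_j$ is an $\Omega(\eps^2/\log^2(n))$-heavy hitter to conclude the additive error $2(k^{-1/2}\err{z}{2} + \eps'\|z\|_1) \le O(\eps) \cdot |z_j|$, hence $y_j^* = (1 \pm O(\eps)) z_j$ for the true max $j$, and every non-max coordinate's estimate is also within this additive error. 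Therefore the index $\arg\max_i |y_i^*|$ equals $\arg\max_i |z_i|$, and $|y_j^*| \ge \frac{1}{\eps} r$ iff $|z_j| \ge (1 \pm O(\eps))\frac{1}{\eps}\|f\|_1$, i.e. the threshold test agrees with the ideal one up to a $(1 \pm O(\eps))$ distortion of the threshold.

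The main obstacle is the bookkeeping in step (3): I need to argue that the real algorithm outputs $i$ \emph{exactly} when $z_i$ is the max \emph{and} $|y_i^*| \ge \frac{1}{\eps}r$ (and the second FAIL clause $|y_i^*| \ge \frac{(c/2)\eps^2}{\log^2 n} q$ is automatically satisfied because the max is a heavy hitter), and then convert the $(1\pm O(\eps))$ distortion of the acceptance threshold into a multiplicative $(1 \pm O(\eps))$ change in $\Pr[t_i \le \text{threshold}]$. Concretely: $\Pr[\text{output } i \mid t_i] $ integrated over $t_i$ is $\Pr_{t_i}[z_i \text{ crosses the (distorted) threshold and good event holds}]$; the probability $z_i$ crosses the distorted threshold is $(1\pm O(\eps))\eps\frac{|f_i|}{\|f\|_1}$, and the good event fails with probability $O(\eps + n^{-c})$ \emph{conditioned on any $t_i$}, so by a union bound the total deviation from $\eps\frac{|f_i|}{\|f\|_1}$ is $O(\eps^2)\frac{|f_i|}{\|f\|_1} + O(n^{-c})$, using that the bad events have probability $O(\eps)$ uniformly in $t_i$ while the threshold-crossing event already has probability $O(\eps)$ (so their product contributes at the $\eps^2$ scale). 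Finally, for the relative-error claim: on the good event, when $i$ is output we have $y_i^* = (1 \pm O(\eps))z_i$, so the reported estimate $t_i y_i^* = (1 \pm O(\eps)) t_i z_i = (1 \pm O(\eps)) f_i$; since the good event holds with high probability (it fails only on the $O(\eps)$-probability bad events, but \emph{conditioned on outputting $i$} we have already restricted to the slice where $z_i$ is heavy, and one checks the residual failure is $n^{-c}$), the relative error is $O(\eps)$ with high probability, as claimed.
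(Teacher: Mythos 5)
Your overall skeleton is the same as the paper's: compare against the ideal precision sampler ($z_i = f_i/t_i$ crossing the threshold $\eps^{-1}\|f\|_1$ with probability exactly $\eps|f_i|/\|f\|_1$), condition on the \texttt{CSSS} guarantee of Theorem \ref{thm:cssserror} and the estimate $v$ of Lemma \ref{lem:ErrEst}, and use Lemma \ref{lem:errbound2} and Proposition \ref{prop:bighitter} --- stated conditionally on an arbitrary value of $t_i$ --- so that every bad event of probability $O(\eps)$ multiplies the $O(\eps)$-probability crossing event and lands at the $\eps^2$ scale. That bookkeeping, the argument that the FAIL tests in Recovery Step 4 do not trigger on the good event, the near-threshold analysis giving the $O(\eps^2 |f_i|/\|f\|_1)$ correction, and the relative-error claim all match the paper.

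There is, however, one genuine gap: your claim that on the good event ``$\arg\max_i |y_i^*|$ equals $\arg\max_i |z_i|$.'' The additive error bound $\|y^*-z\|_\infty \le O(\eps)|z_{j^*}|$ does not preserve the ordering of coordinates: if the second-largest $|z_{i'}|$ is within a $(1-O(\eps))$ factor of the largest, the maximizer of $y^*$ can flip, and nothing in your good event (CSSS error bound, small $\err{z}{2}$, max being heavy) rules out such near-ties. Consequently your identification of $\pr{\text{output } i \mid t_i}$ with ``$z_i$ crosses the distorted threshold and the good event holds'' is not justified: the algorithm outputs the argmax of $y^*$, so you must separately bound the event that $z_i$ crosses the threshold yet some $i' \neq i$ satisfies $|y_{i'}^*| > |y_i^*|$ (and, symmetrically, that $i$ is output only because it beat the true maximizer). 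The paper handles exactly this as its last error case: $|y_{i'}^*| > |y_i^*| \gtrsim \eps^{-1}\|f\|_1$ forces $t_{i'} < (1/\eps - O(1))^{-1}|f_{i'}|/\|f\|_1$, which has probability $O(\eps|f_{i'}|/\|f\|_1)$; a union bound over $i'$ gives $O(\eps)$, and pairwise independence of the $t_i$'s lets this bound be applied conditioned on $z_i$ crossing, contributing $O(\eps^2|f_i|/\|f\|_1)$. Replacing your deterministic argmax-preservation claim with this probabilistic argument closes the gap; without it, the $(\eps \pm O(\eps^2))$ bound is not established.
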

\begin{proof}
	Ideally, we would like to output $i \in [n]$ with $|z_i| \geq \eps^{-1}r$, as this happens if $t_i \leq \eps |f_i|/r$, which occurs with probability precisely $ \eps |f_i|/r$. We now examine what could go wrong and cause us to output $i$ when this condition is not met or vice-versa. We condition first on the fact that $v$ satisfies $\err{z}{2} \leq v \leq (45k^{1/2}\eps^3/\log^2(n)) \|z\|_1+ 20\err{z}{2}$ as in Lemma \ref{lem:ErrEst} with parameter $\eps'=\eps^3/\log^2(n)$, and on the fact that $|y^*_j - z_j| \leq   2( k^{-1/2}\err{z}{2}  + (\eps^3/\log^2(n))\|z\|_1 )$ for all $j \in [n]$ as detailed in Theorem \ref{thm:cssserror}, each of which occur with high probability. 
	
	The first type of error is if $z_i \geq \eps^{-1} \|f\|_1$ but the algorithm fails and we do not output $i$. First, condition on the fact that $20 \err{z}{2} < k^{1/2}\|f\|_1$ and that $|z_{j^*}| \geq c\eps^2/\log^2(n) \|z\|_1$ where $j^* \in [n]$ is such that $|z_{j^*}|$ is maximal, which by the union bound using Lemma \ref{lem:errbound2} and Proposition \ref{prop:bighitter} together hold with probability $1 - O(\eps)$ conditioned on any value of $t_i$. Call this conditional event $\mathcal{E}$. Since $v \leq 20\err{z}{2} + (45k^{1/2}\eps^3/\log^2(n)) \|z\|_1$ w.h.p., it follows that conditioned on $\mathcal{E}$ we have $v \leq k^{1/2} \|f\|_1+ (45k^{1/2}\ab\eps^3/\log^2(n)) \|z\|_1 $. So the algorithm does not fail due to the first condition in Recovery Step 4. of Figure \ref{fig:L1Samp}. Since $v \geq \err{z}{2}$ w.h.p, we now have  $\frac{1}{k^{1/2}}\err{z}{2} \leq  \|f\|_1+45\frac{\eps^3}{\log^2(n)}\|z\|_1$, and so $|y^*_j - z_j| \leq   2( \frac{1}{k^{1/2}}\err{z}{2}  + \frac{\eps^3}{\log^2(n)}\|z\|_1 ) \leq 2\|f\|_1+92\frac{\eps^3}{\log^2(n)}\|z\|_1 $ for all $j \in [n]$.

	The second way we output FAIL when we should not have is if $|y_i^*| <( (c/2)\eps^2 / \log^2(n)) \|z\|_1$ but $|z_i| \geq \eps^{-1} \|f\|_1$. Now $\mathcal{E}$ gives us that $|z_{j^*}| \geq c\eps^2/\log^2(n) \|z\|_1$ where $|z_{j^*}|$ is maximal in $z$, and since $|y_i^*|$ was maximal in $y^*$, it follows that $|y_i^* |\geq |y_{j^*}^*| >  |z_{j^*}| - (2\|f\|_1+92\frac{\eps^3}{\log^2(n)}\|z\|_1)$. But $|z_{j^*}|$ is maximal, so $|z_{j^*}| \geq |z_i| \geq \eps^{-1} \|f\|_1$. The two lower bounds on $|z_{j^*}|$ give us $(2\|f\|_1+92\frac{\eps^3}{\log^2(n)}\|z\|_1) = O(\eps)|z_{j^*}| < |z_{j^*}|/2$, so  $|y_i^* | \geq |z_{j^*}|/2 \geq ( (c/2)\eps^2 / \log^2(n)) \|z\|_1$. So conditioned on $\mathcal{E}$, this type of failure can never occur. Thus the probability that we output FAIL for either of the last two reasons when $|z_i| \geq \eps^{-1} \|f\|_1$ is  $O(\eps^2 \frac{|f_i|}{\|f\|_1})$ as needed. 
	So we can now assume that $y_i^* >( (c/2)\eps^2 / \log^2(n)) \|z\|_1$.

	Given this, if an index $i$ was returned we must have $y_i^* > \frac{1}{\eps}r = \frac{1}{\eps}\|f\|$ and $y_i^* > ((c/2)\eps^2/\log^2(n))\|z\|_1$. These two facts together imply that our additive error from \texttt{CSSS} is at most $O(\eps)|y_i^*|$, and thus at most $O(\eps)|z_i|$, so $|y_i^* - z_i| \leq O(\eps)|z_i|$.
	
	With this in mind, another source of error is if we output $i$ with $y_i^* \geq \eps^{-1}r$ but $z_i <  \eps^{-1}r$, so we should not have output $i$. This can only happen if $z_i$ is close to the threshold $r\eps^{-1}$. Since the additive error from our Countsketch is $O(\eps)|z_i|$, it must be that $t_i$ lies in the interval $\frac{|f_i|}{r}(1/\eps + O(1))^{-1} \leq t_i \leq \frac{|f_i|}{r}(1/\eps - O(1))^{-1}$, which occurs with probability $\frac{O(1)}{1/\eps^2 - O(1)} \frac{|f_i|}{r} = O(\eps^2 \frac{|f_i|}{\|f\|_1})$ as needed. 
	
	Finally, an error can occur if we should have output $i$ because $z_i \geq \eps^{-1}\|f\|_1$, but we output another index $i' \neq i$ because $y_{i'}^* > y_i^*$. This can only occur if $t_{i'} < (1/\eps -O(1))^{-1} \frac{|f_{i'}|}{r}$, which occurs with probability $O(\eps \frac{|f_{i'}|}{\|f\|_1})$. By the union bound, the probability that such an $i'$ exists is $O(\eps)$, and by pairwise independence this bound holds conditioned on the fact that $z_i > \eps^{-1}r$. So the probability of this type of error is $O(\eps^2 \frac{|f_i|}{\|f\|_1})$ as needed.

	Altogether, this gives the stated $\eps\frac{|f_i|}{\|f\|_1}(1 \pm O(\eps)) + O(n^{-c})$ probability of outputting $i$, where the $O(n^{-c})$ comes from conditioning on the high probability events. For the $O(\eps)$ relative error estimate, if we return an index $i$ we have shown that our additive error from \texttt{CSSS} was at most $O(\eps)|z_i|$, thus $t_iy_i^* = (1 \pm O(\eps))t_i z_i = (1 \pm O(\eps))f_i$ as needed.
\end{proof}

\begin{theorem}\label{thm:l1samp}
	For $\epsilon,\delta>0$, there is an $O(\eps)$-relative error one-pass $L_1$ sampler for $\alpha$-property streams which also returns an $O(\eps)$-relative error approximation of the returned item. The algorithm outputs FAIL with probability at most $\delta$, and the space is $O(\frac{1}{\eps}\log(\frac{1}{\eps})\log(n)\log(\alpha \; \ab \log\ab(n)/\eps) \log(\frac{1}{\delta}))$.
\end{theorem}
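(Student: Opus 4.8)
The plan is to boost the single-shot sampler \texttt{$\alpha$L1Sampler} of Figure \ref{fig:L1Samp} by running independent copies in parallel and reporting the output of the first copy that does not declare FAIL. By Lemma \ref{lem:L1SampMain}, one copy outputs a fixed index $i$ with probability $(\eps \pm O(\eps^2))\frac{|f_i|}{\|f\|_1} + O(n^{-c})$; summing over all $i \in [n]$, one copy outputs \emph{some} index with probability $p = \eps(1 \pm O(\eps)) + O(n^{-c+1}) = \Theta(\eps)$. Hence if we run $N = \Theta(\eps^{-1}\log(\delta^{-1}))$ independent copies, the probability that every copy declares FAIL is $(1-p)^N \le e^{-pN} \le \delta$, which is exactly the required FAIL probability (and, per the problem definition, in this event we output nothing).

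For the distribution of the reported index I would use the standard first-success identity: since the $N$ copies are i.i.d., conditioned on at least one copy succeeding, the reported index is distributed exactly as the output of a single copy conditioned on that copy succeeding. Writing $q_i$ for the probability that one copy outputs $i$ and $p = \sum_j q_j$, this conditional probability equals $q_i/p$. Plugging in Lemma \ref{lem:L1SampMain}, $q_i/p = \frac{(\eps \pm O(\eps^2))|f_i|/\|f\|_1 + O(n^{-c})}{\eps(1\pm O(\eps)) + O(n^{-c+1})}$. Because the frequencies are integers and $\|f\|_1 \le mM = \poly(n)$, every nonzero coordinate has $|f_i|/\|f\|_1 \ge \poly(1/n)$, so by taking the constant $c$ large enough the additive $O(n^{-c})$ terms in numerator and denominator are absorbed into the $O(\eps^2)|f_i|/\|f\|_1$ and $O(\eps^2)$ relative terms; this gives $q_i/p = (1 \pm O(\eps))\frac{|f_i|}{\|f\|_1}$, the desired $O(\eps)$-relative-error sampling guarantee. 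For the estimate of the reported coordinate, Lemma \ref{lem:L1SampMain} gives $t_i y_i^* = (1 \pm O(\eps))f_i$ with high probability for each copy; since $N = \poly(n)$, a union bound over all copies makes this hold simultaneously, hence in particular for whichever copy we report.

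For the space I would account for one copy and multiply by $N$. One copy runs \texttt{CSSS} with $k = O(\log(\eps^{-1}))$ columns and sensitivity $\eps' = \eps^3/\log^2(n)$, which by Theorem \ref{thm:cssserror} costs $O(\log(\eps^{-1})\log(n)\log(\alpha\log(n)/\eps))$ bits; the error estimate of Recovery Step 2 invokes Lemma \ref{lem:ErrEst}, i.e.\ two further \texttt{CSSS} instances of the same parameters, which is the same order; the counters $r$ and $q$ take $O(\log n)$ bits; and the $O(\log(\eps^{-1}))$-wise independent scaling factors $t_i$, each discretized to $O(\log n)$ bits of precision, need an $O(k\log n) = O(\log(\eps^{-1})\log n)$-bit seed. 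Thus one copy uses $O(\log(\eps^{-1})\log(n)\log(\alpha\log(n)/\eps))$ bits, and $N = \Theta(\eps^{-1}\log(\delta^{-1}))$ copies use $O(\eps^{-1}\log(\eps^{-1})\log(n)\log(\alpha\log(n)/\eps)\log(\delta^{-1}))$ bits, as claimed.

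I expect the only genuinely delicate point to be the conditioning step: verifying that the $O(n^{-c})$ additive slack in Lemma \ref{lem:L1SampMain} can be folded into the multiplicative $O(\eps)$ error, which rests on the polynomial lower bound $|f_i|/\|f\|_1 \ge \poly(1/n)$ for nonzero coordinates and hence on $c$ being chosen large enough relative to the (implicitly polynomial) relation between $\eps^{-1}$, $\delta^{-1}$ and $n$. Everything else is routine parallel repetition together with the space tally above.
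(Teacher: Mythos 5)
Your proposal is correct and follows essentially the same route as the paper's proof: boost the single-shot \texttt{$\alpha$L1Sampler} by running $O(\eps^{-1}\log(\delta^{-1}))$ independent copies and reporting the first non-FAIL output, invoke Lemma \ref{lem:L1SampMain} for both the conditional output distribution and the $O(\eps)$-relative-error estimate, and tally the space as one \texttt{CSSS}-dominated copy times the number of repetitions. Your explicit argument for absorbing the additive $O(n^{-c})$ slack into the multiplicative error (using $|f_i|/\|f\|_1 \geq \poly(1/n)$ for nonzero coordinates) is slightly more careful than the paper's terse statement of the same conditional probability, but it is the same argument in substance.
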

\begin{proof}
	By the last lemma, it follows that the prior algorithm fails with probability at most $1 - \epsilon + O(n^{-c})$. Conditioned on the fact that an index $i$ is output, the probability that $i=j$ is $(1\pm O(\eps))\frac{|f_i|}{\|f\|_1} + O(n^{-c})$. By running $O(1/\eps \log(1/\delta))$ copies of this algorithm in parallel and returning the first index returned by the copies, we obtain an $O(\eps)$ relative error sampler with failure probability at most $\delta$. The $O(\eps)$ relative error estimation of $f_i$ follows from Lemma \ref{lem:L1SampMain}.

For space, note that \texttt{CSSS} requires $O(k \log(n) \log\ab(\ab\alpha\ab \log(n)\ab/\eps)\allowbreak =O(\log(n)\log(1/\eps)\log(\frac{\alpha \log(n)}{\eps}))$ bits of space, which dominates the cost of storing $r,q$ and the cost of computing $v$ via Lemma \ref{lem:ErrEst}, as well as the cost of storing the randomness to compute $k$-wise independent scaling factors $t_i$.
Running $O(1/\eps \log(1/\delta))$ copies  in parallel gives the stated space bound.
\end{proof}

\begin{remark}
	Note that the only action taken by our algorithm which requires more space in the general turnstile case is the $L_1$ estimation step, obtaining $r,q$ in step $2$ of the Recovery in Figure \ref{fig:L1Samp}. Note that $r,q$, need only be constant factor approximations in our proof, and such constant factor approximations can be obtained with high probability using $O(\log^2(n))$ bits (see Fact \ref{fact:L1est}). This gives an $O(\frac{1}{\eps}\log(\frac{1}{\eps})\log(n)\ab \log(\frac{\alpha \log(n)}{\eps})\ab + \log^2(n))$-bit algorithm for the general turnstile case.
\end{remark}


		\section{$L_1$ estimation}
		\label{sec:L1Est}

		
		\begin{figure*}
			\fbox{\parbox{\textwidth}{ \texttt{$\alpha L_1$Estimator}: Input $(\epsilon,\delta)$ to estimate $L_1$ of an $\alpha$-property strict turnstile stream.
					\begin{enumerate}[topsep=0pt,itemsep=-1ex,partopsep=1ex,parsep=1ex] 
						\item \textbf{Initialization:} Set $s \leftarrow O(\alpha^2 \delta^{-1}\log^3(n)/\eps^2)$, and initialize \ttx{Morris-Counter} $v_t$ with parameter $\delta'$.  Define $I_j = [s^j, s^{j+2}]$.
						\item \textbf{Processing:} on update $u_t$, for each $j$ such that $v_t \in I_j$, sample $u_t$ with probability $s^{-j}$. 
						\item For each update $u_t$ sampled while $v_t \in I_j$, keep counters $c_j^+,c_j^-$ initialized to $0$. Store all positive updates sampled in $c_j^+$, and (the absolute value of) all negative updates sampled in $c_j^-$.
						\item if $v_t \notin I_j$ for any $j$, delete the counters $c_j^+,c_j^-$.
						
						\item \textbf{Return:} $ s^{-j^*}(c^+_{j^*} - c^-_{j^*})$ for which $j^*$ is such that $c_{j^*}^+,c_{j^*}^-$ have existed the longest (the stored counters which have been receiving updates for the most time steps).
					\end{enumerate}
			}}	\caption{$L_1$ Estimator for strict turnstile $\alpha$-property streams.} \label{figure:L1Est}
		\end{figure*}
		\noindent
		We now consider the well-studied $L_1$ estimation problem in the $\alpha$-property setting (in this section we write $\alpha$-property to refer to the $L_1$ $\alpha$-property). We remark that in the general turnstile unbounded deletion setting, an $O(1)$ estimation of $\|f\|_1$ can be accomplished in $O(\log(n))$ space \cite{kane2010exact}. We show in Section \ref{sec:lowerbounds}, however, that even for $\alpha$ as small as $3/2$, estimating $\|f\|_1$ in general turnstile $\alpha$-property streams still requires $\Omega(\log(n))$-bits of space. 
Nevertheless, in \ref{sec:genturnstilel1} we show that for $\alpha$-property general turnstile streams there is a $\tilde{O}(\eps^{-2}\log(\alpha) + \log(n))$ bits of space algorithm, where $\tilde{O}$ hides $\log(1/\eps)$ and $\log\log(n)$ terms, thereby separating the $\eps^{-2}$ and $\log n$ factors. Furthermore, we show a nearly matching lower bound of $\Omega(\frac{1}{\eps^2}\log(\eps^2\alpha))$ for the problem (Theorem \ref{thm:l1estHardTwo}).

\subsection{Strict Turnstile $L_1$ Estimation}
		 Now for strict-turnstile $\alpha$-property streams, we show that the problem can be solved with $\tilde{O}(\log(\alpha))$-bits. Ideally, to do so we would sample $\poly(\alpha\log(n)/\eps)$ updates uniformly from the stream and apply Lemma \ref{lem:l1-preservation}. To do this without knowing the length of the stream in advance, we sample in exponentially increasing intervals, throwing away a prefix of the stream. At any given time, we will sample at two different rates in two overlapping intervals, and we will return the estimate given by the sample corresponding to the interval from which we have sampled from the longest upon termination.  We first give a looser analysis of the well known Morris counting algorithm.
		 	
		 	\begin{lemma}\label{lem:Morris}
		 	There is an algorithm, \ttx{Morris-Counter}, that given $ \delta \in(0, 1)$ and a sequence of $m$ events, produces non-decreasing estimates $v_t$ of $t$ such that \[ \delta/(12\log(m)) t \leq v_t \leq 1/\delta t\] for a fixed $t \in [m]$ with probability $1-\delta$. The algorithm uses $O(\log\log(m) )$ bits of space.  	 
		\end{lemma}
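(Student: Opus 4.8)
The plan is to analyze the standard Morris approximate counter with a deliberately loose accounting, trading the usual $(1\pm\eps)$ guarantee for the much weaker two-sided bound in the statement while keeping the space at $O(\log\log m)$ bits. Maintain a single integer $X$, initialized to $0$; on each of the $m$ events, increment $X$ with probability $2^{-X}$ and leave it unchanged otherwise. Let $X_t$ denote the value after $t$ events, and output $v_t$ equal to a fixed absolute-constant multiple of $2^{X_t}$. Since $X$ only ever increases, $v_t$ is automatically non-decreasing in $t$, which handles that part of the claim. For the space bound, first recall (and prove by a one-line induction on $t$, using $\ex{2^{X_{t+1}}\mid X_t} = 2^{X_t}+1$) the standard identity $\ex{2^{X_t}} = t+1$; by Markov this forces $X_t = O(\log(m/\delta))$ except with probability $O(\delta/m)$, so $X$ fits in $O(\log\log m)$ bits (we may truncate $X$ at this threshold, which only decreases $v_t$ on an event of probability $O(\delta/m)$ and so costs essentially nothing in the failure budget).

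For the upper bound $v_t \le t/\delta$, apply Markov directly to $\ex{2^{X_t}} = t+1 \le 2t$: this gives $\pr{2^{X_t} \ge 4t/\delta} \le \delta/2$, and after the appropriate constant rescaling of $v_t$ this reads $v_t \le t/\delta$ with probability at least $1-\delta/2$.

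The lower bound is the more delicate direction and is the \emph{main obstacle}, since a single Morris counter has only weak concentration. Here the plan is to reason about waiting times rather than about $2^{X_t}$ directly. Let $T_k$ be the number of events during which the counter sits at value $k$; by memorylessness of the increment rule, $T_k$ is geometric with mean $2^k$, so the first time the counter reaches level $K+1$ is $S_{K+1}=\sum_{k=0}^{K}T_k$ with $\ex{S_{K+1}} = 2^{K+1}-1$. For $v_t$ to fall below $\delta t/(12\log m)$ we would need $X_t \le K$ for the largest $K$ with $2^{K}$ at most (the rescaling constant times) $\delta t/(12\log m)$, i.e. we would need $S_{K+1} > t$; since $\ex{S_{K+1}} \le 2^{K+1} = O(\delta t/\log m)$, Markov gives $\pr{S_{K+1} > t} = O(\delta/\log m) \le \delta/2$ for $m$ larger than an absolute constant. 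The extra $\log m$ slack in the target threshold is exactly what makes this crude Markov step go through with room to spare, and also absorbs the $O(\delta/m)$ truncation event.

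Finally, a union bound over the two one-sided failure events yields both inequalities simultaneously with probability $1-\delta$, completing the proof. The only real subtlety to get right is the bookkeeping between the events $\{X_t \le K\}$ and $\{S_{K+1}>t\}$ and the choice of the constants ($12$ inside the $\log$, and the multiplier in the definition of $v_t$) so that, after the constant rescaling used for the upper bound, the lower-bound threshold still lands safely inside the Markov estimate; none of this requires more than elementary manipulation.
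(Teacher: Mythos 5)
Your proposal is correct, and for the upper bound it coincides with the paper: both prove $\ex{2^{X_t}} = t+1$ by induction and apply Markov to get $v_t \le t/\delta$. For the lower bound, however, you take a genuinely different route. The paper argues over dyadic intervals $E_i = [2^i, 2^{i+1}]$: if the counter's estimate is too small at the start of $E_i$, then during $E_i$ there are enough events, each incrementing with a not-too-small probability, that by a Chernoff bound the counter gains at least two increments before $E_{i+1}$; a union bound over the $O(\log m)$ intervals then shows the estimate never falls below $\delta t/(12\log m)$ \emph{for all} $t$ simultaneously (with high probability), which is where the $\log m$ loss enters. You instead fix a single $t$, decompose the hitting time of level $K+1$ as $S_{K+1} = \sum_{k \le K} T_k$ with $\ex{T_k} = 2^k$, and observe that $\{X_t \le K\} = \{S_{K+1} > t\}$, so a single Markov bound with $\ex{S_{K+1}} \le 2^{K+1} = O(\delta t/\log m)$ kills the lower-tail event; the $\log m$ slack in the target is exactly what makes this crude step work. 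Your argument is more elementary and self-contained (no Chernoff, no union over intervals), proves exactly the fixed-$t$ statement of the lemma, and is fully sufficient for the paper's application (Theorem \ref{thm:l1est} only union-bounds over $O(\log m/\log s)$ designated times); the paper's argument buys the stronger ``for all $t$ simultaneously'' guarantee, which the lemma as stated does not claim. Your handling of the space bound via truncation at $O(\log(m/\delta))$ is at least as careful as the paper's (which only asserts the counter is small with the same probability); just note that strictly speaking this gives $O(\log\log m + \log\log(1/\delta))$ bits, the same benign dependence the paper suppresses, and that the constant-size regime of very small $m$ must be dispatched separately, as you acknowledge.
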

	\begin{proof}
		The well known Morris Counter algorithm is as follows. We initialize a counter $v_0 = 0$, and on each update $t$ we set $v_t = v_{t-1} + 1$ with probability $1/2^{v_t}$, otherwise $v_t = v_{t-1}$. The estimate of $t$ at time $t$ is $2^{v_t}-1$. It is easily shown that $\ex{2^{v_t} } = t + 1$,
		and thus by Markov's bound, $\pr{2^{v_t} - 1 > t/\delta} < \delta$. 
		
		For the lower bound consider any interval $E_i = [2^i,\ab2^{i+1}]$.
		Now suppose our estimate of $t = 2^i$ is less than $ 6(2^{i} \delta/\ab\log(n))$. Then we expect to sample at least $3\ab\log(n)/\delta$ updates in $E_i$ at $1/2$ the current rate of sampling (note that the sampling rate would decrease below this if we did sample more than $2$ updates). Then by Chernoff bounds, with high probability w.r.t. $n$ and $\delta$ we sample at least two updates. Thus our relative error decreases by a factor of at least $2$ by the time we reach the interval $E_{i+1}$. Union bounding over all $\log(m) = O(\log(n))$ intervals, the estimate never drops below $\delta/12\log(n)t$ for all $t \in [m]$ w.h.p. in $n$ and $\delta$. The counter is $O(\log\log(n))$ bits with the same probability, which gives the stated space bound.
	\end{proof}
		
Our full $L_1$ estimation algorithm is given in Figure \ref{figure:L1Est}.
Note that the value $s^{-j^*}$ can be returned symbolically by storing $s$ and $j^*$, without explicitly computing the entire value. Also observe that we can assume that $s$ is a power of $2$ by rescaling, and sample with probability $s^{-i}$ by flipping $\log(s)i$ fair coins sequentially and sampling only if all are heads, which requires $O(\log\log(n))$ bits of space. 
	\begin{theorem}
	\label{thm:l1est}
		The algorithm \texttt{$\alpha L_1$Estimator} gives a $(1 \pm \eps)$ approximation of the value $\|f\|_1$ of a strict turnstile stream with the $\alpha$-property with probability $1-\delta$ using $O(\log(\alpha/\eps) +\log(1/\delta) +  \log(\log(n))))$ bits of space.
	\end{theorem}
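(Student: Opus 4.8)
The plan is to put together the two ingredients that the section has already assembled: the Morris counter of Lemma~\ref{lem:Morris}, which keeps a multiplicative estimate $v_t$ of the stream position $t$ using only $O(\log\log n)$ bits, and the Sampling Lemma~\ref{lem:l1-preservation}, which says that a uniform sample of $\poly(\alpha/\eps)$ stream updates, rescaled, recovers every $f_i$ (and hence $\|f\|_1$ in the strict-turnstile case, where $\|f\|_1=\sum_i f_i$) up to additive $\eps\|f\|_1$. The subtlety the algorithm in Figure~\ref{figure:L1Est} handles is that we do not know $m$ in advance, so we cannot fix a sampling rate; instead we sample at rate $s^{-j}$ during the overlapping intervals $I_j=[s^j,s^{j+2}]$ of the Morris estimate $v_t$, and at the end report the estimate from the interval whose counters have been collecting the longest.

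First I would argue correctness of the sampling rate. Condition on the good event of Lemma~\ref{lem:Morris} (probability $1-\delta'$), so that $v_t$ is within a $\Theta(\delta'/\log n)$-to-$1/\delta'$ factor of $t$ for the relevant $t$; choosing $\delta'=\Theta(\delta)$ and absorbing the $\log n$ slack into the constant in $s=O(\alpha^2\delta^{-1}\log^3(n)/\eps^2)$ guarantees that for the index $j^*$ finally selected, the true number of stream updates that fell in the sampling window $I_{j^*}$ is $\Theta(m)$ — in particular at least $cm$ for a constant $c$ — and that the sampling probability $s^{-j^*}$ satisfies $s^{-j^*}\ge S/(2m)$ for $S=\Omega(\alpha^2\eps^{-2}\log n)$, as needed to invoke Lemma~\ref{lem:l1-preservation}. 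Here I should note that because a $\Theta(\eps)$-fraction prefix of the stream (everything before $s^{j^*}$) is discarded, and by the $\alpha$-property $m\le\alpha\|f\|_1$ so that prefix has $L_1$ mass at most $m/s<\eps\|f\|_1$, removing it perturbs $\|f\|_1$ by only an additive $\eps\|f\|_1$ — exactly the trick used in the proof of Theorem~\ref{thm:innerprod}. Then Lemma~\ref{lem:l1-preservation}, applied with sensitivity parameter $\Theta(\eps)$ and failure probability $1/\poly(n)$, gives $s^{-j^*}(c^+_{j^*}-c^-_{j^*})=\|f\|_1\pm O(\eps)\|f\|_1$, which is the desired $(1\pm\eps)$ multiplicative approximation after rescaling $\eps$ by a constant.

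Next I would bound the space. The Morris counter is $O(\log\log n)$ bits by Lemma~\ref{lem:Morris}. At any time only two pairs of counters $c_j^\pm$ are alive (for the two intervals containing $v_t$); each receives in expectation $s^{-j}\cdot|I_j|\le s^2=\poly(\alpha\log(n)/\eps)$ sampled updates, so by a Chernoff bound the counters never exceed $\poly(\alpha\log(n)/\eps)$ w.h.p.\ (and we abort otherwise), hence each fits in $O(\log(\alpha/\eps)+\log\log n)$ bits. Storing $s$ and the exponent $j^*$ symbolically, and implementing sampling at rate $s^{-j}$ by flipping $j\log s$ fair coins and keeping a counter, both cost $O(\log\log n)$ bits as remarked after Figure~\ref{figure:L1Est}. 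Summing gives the claimed $O(\log(\alpha/\eps)+\log(1/\delta)+\log\log n)$ bits.

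The main obstacle I expect is not any single calculation but the careful bookkeeping of how the two approximation errors compose and where the $\log n$ factors go: the Morris estimate is only good to a $\log n$ factor, so one must verify that inflating $s$ by a $\polylog(n)$ factor still leaves the sampling rate high enough (rate $\ge S/(2m)$) while keeping counter magnitudes — and hence the $\log(\cdot)$ of them — at $O(\log(\alpha/\eps))$, and one must make sure the "discarded prefix" argument and the Sampling Lemma are applied on the correct suffix of the stream with the correct effective length $\Theta(m)$. I would also double-check the failure-probability union bound: $\delta'=\Theta(\delta)$ for the Morris event, $1/\poly(n)$ for the Sampling Lemma over all coordinates (or just for the single quantity $\sum_i f_i$), and $1/\poly(n)$ for the counter-overflow Chernoff bound, together giving total failure probability at most $\delta$.
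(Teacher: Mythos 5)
Your proposal is correct and follows essentially the same route as the paper's proof: condition on the Morris-counter guarantee, observe that the oldest surviving counters sampled a $(1-O(\eps))$-fraction suffix of the stream at rate at least $\Omega(\text{poly}(\alpha/\eps)\cdot\log n/m)$, bound the discarded prefix's mass by $m/s<\eps\|f\|_1$ via the $\alpha$-property, invoke the Sampling Lemma on the suffix (using strict-turnstile positivity so $\sum_i f_i=\|f\|_1$), and bound the counters by $\poly(\alpha\log n/\eps)$ to get the stated space. The only differences are cosmetic (you absorb the Morris slack into the constant in $s$ where the paper tracks it explicitly as $\psi=12\log^2(m)/\delta$).
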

\begin{proof}
	Let $\psi = 12\log^2(m)/\delta$.  By the union bound on Lemma \ref{lem:Morris}, with probability $1-\delta$ the Morris counter $v_t$
	will produce estimates $v_t$ such that $t/\psi \leq v_t \leq \psi t$ for all points $t = s^i/\psi$ and $t= \psi s^i$ for $i=1,\dots,\log(m)/\log(s)$. 
	Conditioned on this, $I_j$ will be initialized by time $\psi s^j$ and not deleted before $s^{j+2}/\psi$ for every 
	$j=1,2,\dots, \ab\log(n)/\log(s)$. $\ab$Thus, upon termination, the oldest set of counters $c_{j^*}^+,c_{j^*}^-$ must have been receiving samples from an interval of size at least $m - 2\psi m/s$, with sampling probability $s^{-j^*}\geq  s\ab/(2\psi m)$. Since $s \geq 2\psi \alpha^2/\eps^2$, it follows by Lemma \ref{lem:l1-preservation} that $ s^{-j^*}(c^+_{j^*} - c^-_{j^*}) = \sum_{i=1}^n \hat{f}_i \pm \eps \|\hat{f}\|_1$ w.h.p., where $\hat{f}$ is the frequency vector of all updates after time $t^*$ and $t^*$ is the time step where $c_{j^*}^+,c_{j^*}^-$ started receiving updates. By correctness of our Morris counter, we know that $t^* < 2\psi m /s < \eps \|f\|_1$, where the last inequality follows from the size of $s$ and the the $\alpha$-property, so the number of updates we missed before initializing $c_{j^*}^+,c_{j^*}^-$ is at most $\eps \|f\|_1$. Since $f$ is a strict turnstile stream, $\sum_{i=1}^n \hat{f}_i = \|f\|_1 \pm t^* = (1 \pm O(\eps)) \|f\|_1$ and $\|\hat{f}\|_1 =  (1\pm O(\eps)) \|f\|_1$. After rescaling of $\eps$ we obtain $ s^{-j^*}(c^+_{j^*} - c^-_{j^*}) = (1 \pm \eps)\|f\|_1$ as needed.
	
	For space, conditioned on the success of the Morris counter, which requires  $O(\log\log(n))$-bits of space, we never sample from an interval $I_j$ for more than $\psi s^{j+2}$ steps, and thus the maximum expected number of samples is $\psi s^2$, and is at most $s^3$ with high probability by Chernoff bounds. Union bounding over all intervals, we never have more than $s^2$ samples in any interval with high probability. At any given time we store counters for at most $2$ intervals, so the space required is $O(\log(s) + \log\log(m)) = O(\log(\alpha/\eps)+ \log(1/\delta) + \log\ab\log(n))$ as stated.  
\end{proof}

	\begin{remark}\label{rem:workingspace}
	Note that if an update $\Delta_t$ to some coordinate $i_t$ arrives with $|\Delta_t| > 1$, 
	our algorithm must implicitly expand $\Delta_t$ to updates in $\{-1,1\}$ by updating the counters by Sign($\Delta_t$)$\cdot$Bin$(|\Delta_t|,s^{-j})$ for some $j$. Note that computing this requires $O(\log(|\Delta_t|))$ bits of working memory, which is potentially larger than $O(\log(\alpha\log(n)/\eps))$. However, if the updates are streamed to the algorithm using $O(\log(|\Delta_t|))$ bits then it is reasonable to allow the algorithm to have at least this much working memory. Once computed, this working memory is no longer needed and does not factor into the space complexity of maintaining the sketch of \texttt{$\alpha L_1$Estimator}.
	\end{remark}

	\subsection{General Turnstile $L_1$ Estimator}\label{sec:genturnstilel1}

In \cite{kane2010exact}, an $O(\epsilon^{-2}\log(n))$-bit algorithm is given for general turnstile $L_1$ estimation. We show how modifications to this algorithm can result in improved algorithms for $\alpha$-property streams. We state their algorithm in Figure \ref{fig:l1genest}, along with the results given in \cite{kane2010exact}. Here $\mathcal{D}_1$ is the distribution of a $1$-stable random variable. In \cite{indyk2006stable, kane2010exact}, the variables $X = \tan(\theta)$ are used, where $\theta$ is drawn uniformly from $[-\frac{\pi}{2},\frac{\pi}{2}]$. We refer the reader to \cite{indyk2006stable} for a further discussion of $p$-stable distributions.

\begin{lemma}[ A.6 \cite{kane2010exact}] \label{lem:genmatrix}
	The entries of $A,A'$ can be generated to precision $\delta = \Theta(\epsilon/m)$ using $O(k\log(n/\eps))$ bits. 
\end{lemma}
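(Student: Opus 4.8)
The plan is to reproduce the proof of Lemma~A.6 of \cite{kane2010exact}, which rests on two ingredients: a discretization bound showing that truncating and coarsely rounding the ideal $1$-stable entries does not spoil the estimator, and a bounded-independence construction showing that the entire (discretized) matrix is specified by a seed of length $O(k\log(n/\eps))$ rather than stored entry by entry — the latter being essential, since writing down $\Theta(kn)$ entries is already too expensive.

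For the discretization, I would start from the fact that each ideal entry is $X=\tan(\theta)$ with $\theta$ uniform on $[-\tfrac{\pi}{2},\tfrac{\pi}{2}]$, i.e.\ a standard Cauchy variable, and that there are $\Theta(kn)=\poly(n)$ of them (taking $k=\Theta(\eps^{-2})$ rows and using $\log(mM)=O(\log n)$). Since $\pr{|X|>D}=\Theta(1/D)$, choosing $D=\poly(n/\eps)$ lets us truncate every entry to $[-D,D]$, failing only with probability $\poly(1/n)$; I would then round each truncated value down to the nearest multiple of $\delta=\Theta(\eps/m)$, so that each entry lies in a set of size $2D/\delta+1=\poly(n/\eps)$ and is named by $O(\log(n/\eps))$ bits. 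The key estimate is that for every row $i$ this perturbs the sketch coordinate by at most $\delta\|f\|_1\le\delta m\le\eps\le\eps\|f\|_1$ (choosing the hidden constant in $\delta$ so that $\delta m\le\eps$, and using $\|f\|_1\ge1$ for any nonzero integer stream, the zero case being trivial). Hence the rounded and the ideal sketch coordinates differ by at most $\eps\|f\|_1$, and by monotonicity of the median the rounded median estimator equals the ideal one up to an additive $\eps\|f\|_1$, so it is $(1\pm O(\eps))\|f\|_1$ whenever the ideal estimator of \cite{indyk2006stable,kane2010exact} is $(1\pm\eps)\|f\|_1$. The matrix $A'$ is treated identically.

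For the seed length, the point is to avoid materializing the $\Theta(kn)$ entries. I would fix a field $\mathbb F$ whose size is a power of two just above the number of entries (so $\log|\mathbb F|=O(\log(n/\eps))$) together with a single degree-$(t-1)$ polynomial over $\mathbb F$ assigning to each entry-index a block of $O(\log(n/\eps))$ uniform bits; the corresponding entry is obtained by applying $u\mapsto\tan(\pi(u-\tfrac12))$ to the value that block encodes and then truncating and rounding as above — $O(\log(n/\eps))$ bits of $u$ suffice, since we only need to locate $\theta=\pi(u-\tfrac12)$ to within $\Theta(1/D)$ of the truncation threshold. Since the hash is $t$-wise independent, so is the generated family of entries, and \cite{kane2010exact} show that their median estimator — both the analytic (FT-mollification) step controlling the distribution of each $|\langle A_i,f\rangle|$ and the concentration of the median over the $k$ rows — still goes through under $t$-wise independence for the value of $t$ they specify. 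Storing the polynomial costs $t$ field elements, that is $O(t\log(n/\eps))=O(k\log(n/\eps))$ bits; $A'$ is generated by the same hash on a disjoint index range (or, if its entries are $\pm1$, by an even cheaper bounded-independence map), again within $O(k\log(n/\eps))$ bits.

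The hard part will be the interaction of the two ingredients: arguing that a matrix which is simultaneously discretized \emph{and} only $t$-wise independent still satisfies the anti-concentration and mollification estimates that Indyk's original analysis established for fully independent real Cauchy entries. This is precisely the content of \cite{kane2010exact}, so here I would simply invoke their Lemma~A.6 and the surrounding analysis; the only hypothesis that needs checking for the present application is that the estimator of Figure~\ref{fig:l1genest} is run on a stream with $\|f\|_1\le m$, which holds under the unit-update convention adopted in this paper.
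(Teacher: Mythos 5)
First, a point of reference: the paper itself gives no proof of Lemma~\ref{lem:genmatrix} — it is imported verbatim from \cite{kane2010exact} (their Lemma~A.6), so there is no in-paper argument to compare against; the comparison is with the construction in \cite{kane2010exact}, and your sketch follows that route in outline (truncate and round each Cauchy entry $\tan(\pi(u-\tfrac12))$ to precision $\delta=\Theta(\eps/m)$ inside a $\poly(n/\eps)$ range, so each entry is named by $O(\log(n/\eps))$ bits, and generate the entries pseudorandomly rather than storing them). Note also a notational slip that matters for the statement: in Figure~\ref{fig:l1genest}, $k=\Theta(\log(1/\eps)/\log\log(1/\eps))$ is the \emph{independence} of the entries within a row, while the number of rows is $r=\Theta(1/\eps^2)$; the bound $O(k\log(n/\eps))$ refers to the former, not to ``$k$ rows'' as in your discretization paragraph (harmless there, since you only use it to bound the number of entries by $\poly(n)$, but it should not propagate into the seed-length accounting).

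The genuine gap is in your seed construction. A single degree-$(t-1)$ polynomial giving a flat $t$-wise independent family over \emph{all} entries, with $t=k=\Theta(\log(1/\eps)/\log\log(1/\eps))$, does not reproduce the independence structure the algorithm specifies and the analysis of Theorem~\ref{thm:kaneexact} uses: $k$-wise independence of the entries \emph{within} each row, together with \emph{pairwise independence of the seeds across rows}. The cross-row condition is what makes the $r=\Theta(1/\eps^2)$ row statistics $\cos(y_i/y'_{med})$ pairwise independent, which is exactly what the Chebyshev/concentration step over rows needs; a flat family with small $t$ cannot deliver it, since each $y_i$ is a function of $n$ entries and independence of two full rows would require independence of $2n$ entries jointly. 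The standard fix (and what \cite{kane2010exact} do) is a two-level scheme: an outer pairwise independent generator produces $r$ per-row seeds, each of length $O(k\log(n/\eps))$, and each seed is expanded by an inner $k$-wise independent generator into that row's discretized entries; the total seed length is still $O(k\log(n/\eps))$. Separately, your closing step ``invoke their Lemma~A.6'' is circular as phrased, since that is the statement being proved; what can legitimately be black-boxed is the estimator-correctness claim for discretized, boundedly independent entries, which in this paper is the separately cited Theorem~\ref{thm:kaneexact}, and your $\delta\|f\|_1\le\eps\|f\|_1$ perturbation estimate properly belongs to that statement rather than to the generation lemma.
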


\begin{theorem}[Theorem 2.2 \cite{kane2010exact}]\label{thm:kaneexact}
	The algorithm above can be implemented using precision $\delta$ in the variables $A_{i,j},A_{i,j}'$, and thus precision $\delta$ in the entries $y_{i},y_i'$,  such that the output $\tilde{L}$ satisfies $\tilde{L} = (1 \pm \epsilon)\|f\|_1$ with probability $3/4$, where $\delta = \Theta(\epsilon/m)$. In this setting, we have $y_{med}' = \Theta(1)\|f||_1$, and 
	\[\Big|\Big(\frac{1}{r} \sum_{i=1}^r \cos(\frac{y_i}{y_{med}'})\Big) - e^{-(\frac{\|f\|_1}{y_{med}'})}\Big| \leq O(\eps)		\].
\end{theorem}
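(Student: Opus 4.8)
The statement bundles three facts: that the scale estimate satisfies $y'_{med} = \Theta(1)\|f\|_1$, that the averaged cosine is within $O(\eps)$ of $e^{-\|f\|_1/y'_{med}}$, and that generating the Cauchy matrices $A,A'$ only to precision $\delta = \Theta(\eps/m)$ perturbs all of these by $O(\eps)$; the plan is to prove each in turn and combine them with a union bound over $O(1)$ events. For the scale I would use $1$-stability: since each entry of $A'$ is drawn from $\mathcal{D}_1$, the idealized value $y'_i = \langle A'_i, f\rangle$ is distributed as $\|f\|_1 \cdot C$ for a standard Cauchy $C$, whose absolute value has median $1$ and density bounded below near $1$. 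A standard Chernoff bound on the $\Theta(r)$ independent rows then gives $|y'_{med}| \in [\tfrac12 \|f\|_1, 2\|f\|_1]$, hence $\|f\|_1/y'_{med} \in [\tfrac12, 2]$, except with probability $2^{-\Omega(r)}$; fix this event.

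Next, the cosine estimator. Conditioning on the realized value of $y'_{med}$ --- legitimate because $A$ and $A'$ are independent --- each row $i$ contributes $\cos(y_i/y'_{med})$ with $y_i/y'_{med}$ distributed as $(\|f\|_1/y'_{med})\, C$. Using that the real part of the Cauchy characteristic function is $\ex{\cos(tC)} = e^{-|t|}$, we get $\ex{\cos(y_i/y'_{med})} = e^{-\|f\|_1/y'_{med}}$ for every $i$. Since $\cos(\cdot) \in [-1,1]$ and the rows of $A$ are mutually independent, Hoeffding's inequality yields $\big|\tfrac1r\sum_{i=1}^r \cos(y_i/y'_{med}) - e^{-\|f\|_1/y'_{med}}\big| \le O(\eps)$ with probability $\ge 7/8$ once $r = \Theta(\eps^{-2})$ as in the algorithm. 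Because $\|f\|_1/y'_{med} \in [\tfrac12,2]$, the target $e^{-\|f\|_1/y'_{med}}$ lies in a fixed compact subinterval of $(0,1)$ on which $x \mapsto -y'_{med}\ln x$ is $O(\|f\|_1)$-Lipschitz; inverting the estimator (setting $\tilde{L} = -y'_{med}\ln$ of the averaged cosine) therefore converts the $O(\eps)$ additive error into $\tilde{L} = (1 \pm O(\eps))\|f\|_1$, and rescaling $\eps$ by a constant gives the claimed bound.

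Finally, the precision analysis. By Lemma \ref{lem:genmatrix} the entries of $A,A'$ can be produced to additive precision $\delta = \Theta(\eps/m)$ using $O(k\log(n/\eps))$ bits. Replacing each $A_{i,j}$ by its rounded value changes $y_i = \sum_j A_{i,j} f_j$ by at most $\delta \sum_j |f_j| = \delta \|f\|_1 \le \delta m = O(\eps)$, and similarly for each $y'_i$; hence the rounded median still obeys $y'_{med} = \Theta(\|f\|_1)$, and (using $\|f\|_1 \ge 1$, which holds for any nonzero integer stream) each ratio $y_i/y'_{med}$ moves by $O(\eps)$. Since $\cos$ is $1$-Lipschitz, each summand, hence the average, hence $\tilde{L}$, moves by $O(\eps)$ as well. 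A union bound over the scale event, the Hoeffding event, and the deterministic precision bound leaves success probability $\ge 3/4$.

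The one genuinely delicate point I expect is carrying the precision argument through a \emph{heavy-tailed} sketching matrix: individual entries $\tan\theta$ can be astronomically large, so one must observe that the algorithm only ever feeds them through bounded maps ($\cos$ and a median), that truncating $\theta$ away from $\pm\pi/2$ at granularity $\delta$ makes every realized entry effectively bounded, and that the rare event that some entry is enormous can be absorbed into the $1/4$ failure budget. The other structural point requiring care is that the scale estimate $y'_{med}$ (built from $A'$) must be independent of the cosine sum (built from $A$); this independence is exactly what validates the conditioning in the second paragraph and is the reason the algorithm maintains two matrices rather than one.
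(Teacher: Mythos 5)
You should first note that this paper does not prove Theorem \ref{thm:kaneexact} at all: it is imported verbatim from \cite{kane2010exact} (their Theorem 2.2), so the relevant comparison is with the proof there, and with what the statement actually asserts about the algorithm of Figure \ref{fig:l1genest}.

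There is a genuine gap: your argument proves the \emph{fully independent} Cauchy version of the estimator, while the whole content of the cited theorem is that the \emph{derandomized} version works. The algorithm specifies that within a row the entries of $A$ are only $k$-wise independent with $k = \Theta(\log(1/\eps)/\log\log(1/\eps))$ (and only $\Theta(1)$-wise for $A'$), and that distinct rows share merely pairwise independent seeds. Your key identity $\ex{\cos(y_i/y'_{med})} = e^{-\|f\|_1/y'_{med}}$ uses exact $1$-stability, i.e.\ that $y_i$ is distributed as $\|f\|_1$ times a standard Cauchy, which requires full independence of the $n$ entries in row $i$; under $k$-wise independence this is simply not true, and showing that the expectation is nevertheless within $O(\eps)$ of $e^{-\|f\|_1/y'_{med}}$ is the technical heart of Kane--Nelson--Woodruff's proof (their FT-mollification argument). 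Likewise, Hoeffding across the $r$ rows needs mutual independence of rows; with pairwise independent seeds one must instead bound the variance and apply Chebyshev (which does give the $O(\eps)$ deviation with constant probability at $r = \Theta(\eps^{-2})$, so this piece is repairable, but it is not what you wrote). Finally, your claim that $y'_{med} = \Theta(\|f\|_1)$ fails only with probability $2^{-\Omega(r)}$ misreads the algorithm: $A'$ has $r' = \Theta(1)$ rows with $\Theta(1)$-wise independent entries, so this event holds only with constant probability and itself requires a limited-independence argument, not a Chernoff bound over $\Theta(r)$ independent Cauchy rows. Your precision/rounding paragraph (additive $\delta\|f\|_1$ error in each $y_i$, pushed through the $1$-Lipschitz cosine and the logarithm on a compact interval) is in the right spirit and matches the accounting behind Lemma \ref{lem:genmatrix}, but on its own it does not rescue the argument, since the distributional steps it feeds into assume independence the algorithm does not have.
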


We demonstrate that this algorithm can be implemented with reduced space complexity for $\alpha$-property streams by sampling to estimate the values $y_i,y_i'$. We first prove an alternative version of our earlier sampling Lemma.

\begin{lemma} \label{lem:samp2}
	Suppose a sequence of $I$ insertions and $D$ deletions are made to a single item, and let $m = I+D$ be the total number of updates. Then if $X$ is the result of sampling updates with probability $p= \Omega(\gamma^{-3} \log(n)/m)$, then with high probability \[X = (I-D) \pm \gamma m\]
\end{lemma}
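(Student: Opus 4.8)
The plan is to follow exactly the template used for Lemma~\ref{lem:l1-preservation}, treating the insertions and the deletions to the single item separately and then scaling back up by $1/p$. Write $X = \frac{1}{p}(Y^{+} - Y^{-})$, where $Y^{+}$ is the number of sampled insertions and $Y^{-}$ the number of sampled deletions, so that $\ex{Y^{+}} = pI$ and $\ex{Y^{-}} = pD$. It suffices to show that with high probability $\big|\frac{1}{p}Y^{+} - I\big| \le \frac{\gamma m}{2}$ and $\big|\frac{1}{p}Y^{-} - D\big| \le \frac{\gamma m}{2}$; a union bound and the triangle inequality then give $X = (I-D) \pm \gamma m$. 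By symmetry I only need to argue the bound for $Y^{+}$, with the argument for $Y^{-}$ being identical (using $D \le m$ in place of $I \le m$).

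First I would split on the size of $I$ relative to the error budget $\gamma m$. When $I \ge \gamma m/2$, the target deviation $\gamma m/2$ is at most an $\epsilon' := \frac{\gamma m}{2I} \le 1$ relative fraction of $\ex{Y^{+}} = pI$, so the multiplicative Chernoff bound $\pr{|Y^{+} - pI| \ge \epsilon' pI} \le 2\exp(-\Omega(\epsilon'^{2} pI))$ applies; using $I \le m$ one checks $\epsilon'^{2}pI = \Omega(p\gamma^{2} m)$, which is $\Omega(\gamma^{-1}\log n) = \Omega(\log n)$ for $p = \Omega(\gamma^{-3}\log(n)/m)$. When instead $I < \gamma m/2$, the lower tail is automatic ($\frac{1}{p}Y^{+} \ge 0 \ge I - \gamma m/2$), so only the upper tail matters: here $\ex{Y^{+}} = pI \le p\gamma m/2$, and an additive Chernoff/Bernstein bound gives $\pr{Y^{+} \ge pI + p\gamma m/2} \le \exp(-\Omega(p\gamma m)) = \exp(-\Omega(\gamma^{-2}\log n))$, again $n^{-\Omega(1)}$. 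Choosing the hidden constant in $p$ large enough makes every failure probability smaller than $n^{-c}$ for the desired constant $c$, and a final union bound over the two estimates finishes the proof.

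The only genuine subtlety — and it is exactly the two-case structure already present in the proof of Lemma~\ref{lem:l1-preservation} — is that the additive error $\gamma m$ can exceed the quantity $I$ (or $D$) actually being estimated, in which case the naive multiplicative Chernoff bound with relative error possibly larger than $1$ is not directly usable and must be replaced by the crude upper-tail/Bernstein estimate above. Everything else is routine concentration. I would also remark that the stated rate $p = \Omega(\gamma^{-3}\log(n)/m)$ is comfortably more than what is needed (the calculation above only requires $\gamma^{-2}$), so there is ample slack, e.g.\ to absorb a constant-factor rescaling of $\gamma$ when this lemma is invoked in the general-turnstile $L_1$ estimator.
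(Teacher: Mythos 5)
Your proposal is correct and follows essentially the same route as the paper's proof: decompose the sample into sampled insertions and sampled deletions, case-split on whether $I$ (resp.\ $D$) is large or small relative to $\gamma m$, apply a multiplicative Chernoff bound in the first case and a crude upper-tail bound in the second, and union bound. Your side remark that $p = \Omega(\gamma^{-2}\log(n)/m)$ already suffices is also accurate; the paper's $\gamma^{-3}$ rate simply absorbs the looser estimate $\gamma^{2}pI \geq \gamma^{3}pm$ used there.
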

\begin{proof}
	Let $X^+$ be the positive samples and $X^-$ be the negative samples. First suppose $I > \eps m$, then $\pr{|p^{-1}X^- -I | > \gamma m  } < 2\exp\big(-\frac{\gamma^2 pI}{3}\big) < \exp\big(-\frac{\gamma^3 pm}{3}\big) = 1/\poly(n)$. Next, if $I < \gamma m$, we have $\pr{p^{-1} X^+ > 2\gamma m} < \exp\big(-(\gamma m p/3\big) < 1/\poly(n)$ as needed. A similar bound shows that $X^- = D \pm O(\gamma) m$, thus $X = X^+ - X^- = I - D \pm O(\gamma m)$ as desired after rescaling $\gamma$. 
\end{proof}

\begin{figure*}
	\fbox{\parbox{\textwidth}{ 
			\begin{enumerate}[topsep=0pt,itemsep=-1ex,partopsep=1ex,parsep=1ex] 
				\item \textbf{Initialization:} Generate random matrices  $A \in \R^{r \times n}$ and $A \in \R^{r' \times n}$ of  variables drawn from $\mathcal{D}_1$, where $r = \Theta(1/\epsilon^2)$ and $r' = \Theta(1)$. The variables $A_{ij}$ are $k$-wise independent, for $k = \Theta(\log(1/\epsilon)/\log\log(1/\epsilon))$ , and the variables $A_{ij}'$ are $k'$-wise independent for $k' = \Theta(1)$. For $i \neq i'$, the seeds used to generate the variables $\{A_{i,j}\}_{j=1}^n$ and  $\{A_{i',j}\}_{j=1}^n$ are pairwise independent 
				\item \textbf{Processing:} Maintain vectors $y = Af$ and $y' = A'f$. 
				\item \textbf{Return:} Let $y_{med}' = \text{median} \{|y_i'|\}_{i=1}^{r'}$. Output $\tilde{L} = y_{med}'\big(-\ln\big(\frac{1}{r}\sum_{i=1}^r\cos(\frac{y_i}{y_{med}'})\big)	\big)$
			\end{enumerate}
	}}	\caption{$L_1$ estimator of \cite{kane2010exact} for general turnstile unbounded deletion streams.}\label{fig:l1genest}
\end{figure*}

	\begin{theorem}\label{thm:genl1est}
	There is an algorithm that, given a general turnstile $\alpha$-property stream $f$, produces an estimate $\tilde{L} = (1 \pm O(\eps)) \|f\|_1$ with probability $2/3$ using $O(\eps^{-2} \log(\alpha\ab \log(n)/\eps) +\frac{ \log(\frac{1}{\eps})\log(n)}{\log\log(\frac{1}{\eps})})$ bits of space.
	\end{theorem}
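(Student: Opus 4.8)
The plan is to adapt the algorithm of Kane, Nelson, and Woodruff (Figure \ref{fig:l1genest}) so that, instead of maintaining the vectors $y = Af$ and $y' = A'f$ exactly, we maintain good additive approximations to each coordinate $y_i, y_i'$ by sampling. The key observation is that each coordinate $y_i = \sum_{j=1}^n A_{ij} f_j$ can itself be viewed as the final value of a ``virtual stream'': when an update $(j_t, \Delta_t)$ arrives in the original stream, it contributes $A_{i,j_t}\Delta_t$ to $y_i$. Rounding the entries $A_{ij}$ to precision $\delta = \Theta(\eps/m)$ as in Lemma \ref{lem:genmatrix}, we may regard this as $|A_{i,j_t}|/\delta$ unit updates of sign $\mathrm{sign}(A_{i,j_t}\Delta_t)$. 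First I would argue that this virtual stream inherits an $\alpha'$-property for $\alpha' = \mathrm{poly}(\alpha)$ (with an extra $\mathrm{poly}(\log n / \eps)$ factor): the total number of virtual unit updates is at most $\sum_t |A_{i,j_t}|/\delta$, and by $1$-stability the final value $|y_i| = \Theta(\|f\|_1)$ times a $1$-stable variable, which is $\Theta(\|f\|_1)$ with constant probability; since $\|f\|_1 \geq m/\alpha$, and the $1$-stable weights $|A_{ij}|$ are bounded by $\mathrm{poly}(m)$ w.h.p. (or we truncate them), the ratio of total virtual weight to $|y_i|$ is $\mathrm{poly}(\alpha \log n/\eps)$.

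Given this, I would apply the sampling scheme of \texttt{$\alpha L_1$Estimator} (Figure \ref{figure:L1Est}), or more directly Lemma \ref{lem:samp2}, to each virtual stream: sample updates in exponentially growing intervals governed by a shared \texttt{Morris-Counter}, and maintain for each row $i$ a pair of small counters accumulating the sampled (rounded, signed) contributions. Setting the sampling rate so that $\mathrm{poly}(\alpha \log n/\eps)$ updates survive, Lemma \ref{lem:samp2} with $\gamma = \Theta(\eps^2/\log n)$ (or whatever precision the analysis of Theorem \ref{thm:kaneexact} tolerates) gives $\tilde y_i = y_i \pm \gamma \cdot(\text{virtual length})$, and since the virtual length is $O(\alpha \|f\|_1 \cdot \mathrm{poly}(\log n/\eps))$ and $\|f\|_1 \leq \|A f\|$-related quantities, this additive error is a small fraction of the relevant scale $y_{med}'$. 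I would then check that the error analysis of Theorem \ref{thm:kaneexact} — which only needs $\frac{1}{r}\sum_i \cos(y_i/y_{med}')$ to be computed to additive $O(\eps)$ and $y_{med}'$ to be a $(1 \pm O(\eps))$ or even $\Theta(1)$ approximation — is robust to replacing $y_i, y_i'$ by $\tilde y_i, \tilde y_i'$, because $\cos$ is $1$-Lipschitz so a $\pm O(\eps)\|f\|_1$ additive error in each $y_i/y_{med}'$ argument translates to $O(\eps)$ error in the sum.

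For the space bound: the $k$-wise and $k'$-wise independent generation of the matrices $A, A'$ costs $O\!\big(\frac{\log(1/\eps)\log n}{\log\log(1/\eps)}\big)$ bits by Lemma \ref{lem:genmatrix} with $k = \Theta(\log(1/\eps)/\log\log(1/\eps))$; this is the only place a full $\log n$ factor appears. Each of the $r = \Theta(\eps^{-2})$ rows maintains two counters holding $\mathrm{poly}(\alpha \log n/\eps)$ in magnitude w.h.p., hence $O(\log(\alpha\log n/\eps))$ bits each, for $O(\eps^{-2}\log(\alpha \log n/\eps))$ total; the $r' = \Theta(1)$ rows for $y'$ are lower order. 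The shared Morris counter and the exponential-interval bookkeeping cost $O(\log\log n)$ bits. Summing gives the claimed $O\!\big(\eps^{-2}\log(\alpha\log n/\eps) + \frac{\log(1/\eps)\log n}{\log\log(1/\eps)}\big)$ bits. I would also invoke Remark \ref{rem:workingspace} to handle large updates $|\Delta_t|$ via binomial sampling without counting the transient $O(\log|\Delta_t|)$ working memory, and argue the constant success probability $2/3$ by combining the $3/4$ from Theorem \ref{thm:kaneexact} with the high-probability success of all sampling and Morris-counter events.

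The main obstacle I anticipate is verifying the $\alpha'$-property of the virtual streams cleanly and controlling the blow-up: the $1$-stable variables $A_{ij}$ are heavy-tailed, so $\sum_t |A_{i,j_t}|$ can be as large as $\mathrm{poly}(m)$ for a single bad draw, which would force $\alpha'$ up to $\mathrm{poly}(m) = \mathrm{poly}(n)$ and destroy the savings. The fix is to truncate the Cauchy/$1$-stable weights at a threshold like $\mathrm{poly}(\log n/\eps)$ (or condition on the median-of-medians structure that the KNW analysis already uses), show this truncation changes $y_i$ by at most $O(\eps)\|f\|_1$ w.h.p. via the known tail bounds on sums of $1$-stable variables used in \cite{kane2010exact, indyk2006stable}, and only then declare the $\alpha'$-property. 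Getting the interaction between this truncation, the rounding to precision $\delta$, and the sampling error in Lemma \ref{lem:samp2} to compose without losing more than constant factors in $\eps$ is the delicate bookkeeping that the proof will have to handle carefully.
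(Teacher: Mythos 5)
Your overall route is the same as the paper's: replace exact maintenance of $y=Af$ and $y'=A'f$ by per-row sampled counters, invoke Lemma~\ref{lem:samp2} for the additive error, argue that the cosine/median estimator of Theorem~\ref{thm:kaneexact} is robust to $\pm O(\eps)\|f\|_1$ perturbations of each $y_i$ and of $y_{med}'$, handle the unknown stream length with exponentially growing intervals as in Figure~\ref{figure:L1Est}, and account space exactly as you do. However, the step you yourself flag as the crux---establishing an $\alpha'$-property for the virtual streams and truncating the $1$-stable weights at $\poly(\log n/\eps)$---is both unnecessary and, as stated, would fail. The virtual stream of row $i$ does not have a $\poly(\alpha\log n/\eps)$-property uniformly over the $r=\Theta(\eps^{-2})$ rows: $|y_i|$ is roughly $\|f\|_1$ times a Cauchy draw, so a constant fraction of rows have $|y_i|\ll\|f\|_1$ and the ratio of total virtual weight to $|y_i|$ is unbounded there. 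Likewise, truncating the Cauchy weights at $\tau=\poly(\log n/\eps)$ does not perturb $y_i$ by only $O(\eps)\|f\|_1$ w.h.p.: each nonzero coordinate $f_j$ has $|A_{ij}|>\tau$ with probability $\Theta(1/\tau)$, so once $\|f\|_0\gg\tau$ many rows receive large, heavy-tailed corrections, and the modification also biases the very estimator whose guarantee (Theorem~\ref{thm:kaneexact}) you want to use as a black box.

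The paper avoids both issues: Lemma~\ref{lem:samp2} needs no property of the virtual stream, only a bound on its length, and the $\alpha$-property of the \emph{original} stream enters only at the very end. Concretely, one conditions on $\max_j|A_{ij}|=O(n^2)$ (probability $1-O(1/n)$), under which $\ex{|A_{iq}|}=\Theta(\log n)$, and then Markov plus a union bound over the $r+r'$ rows gives $\sum_q |A_{iq}|F_q = O(\eps^{-2}\log(n)\|F\|_1)$ for all rows simultaneously with probability $99/100$, where $F_q=I_q+D_q$; this is a conditioning for counting updates, not a modification of $y_i$, so Theorem~\ref{thm:kaneexact} is untouched. Then the sampling granularity must be $\gamma=\eps^3/(\alpha\log n)$---your $\gamma=\Theta(\eps^2/\log n)$ is missing the $1/\alpha$ factor, though you hedged---so that the additive error $\gamma\cdot O(\eps^{-2}\log(n)\|F\|_1)$ is at most $\eps\|F\|_1/\alpha\le\eps\|f\|_1$, which is the one place the $\alpha$-property is used. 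Since the counters depend on $\gamma$ only logarithmically, this does not affect your (correct) space accounting, and the rest of your outline matches the paper's proof.
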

\begin{proof}
	Using Lemma \ref{lem:genmatrix}, we can generate the matrices $A,A'$ using $O(k\log(n)/\epsilon) = O(\log(\frac{1}{\eps})\log(n/\eps)\ab/\log\log(\frac{1}{\eps}))$ bits of space, with precision $\delta = \Theta(\epsilon/m)$.
	Every time an update $(i_t,\Delta_t)$ arrives, we compute the update $\eta_i = \Delta_t A_{i,i_t}$ to $y_i$ for $i\in [r]$, and the update $\eta_{i'}'  =\Delta_t A_{i',i_t}'$ to $y_{i'}'$ for $i' \in [r']$.
	
	Let $X \sim \mathcal{D}_1$ for a $1$-stable distribution $\mathcal{D}_1$. We think of $y_i$ and $y_{i'}'$ as streams on one variable, and we will sample from them and apply Lemma \ref{lem:samp2}. We condition on the success of the algorithm of Theorem \ref{thm:kaneexact}, which occurs with probability $3/4$. Conditioned on this, this estimator $\tilde{L}$ of Figure \ref{fig:l1genest} is a $(1 \pm \eps)$ approximation, so we need only show we can estimate $\tilde{L}$ in small space.

	Now the number of updates to $y_i$ is $\sum_{q=1}^n |A_{iq}|F_q$, where $F_q = I_q + D_q$ is the number of updates to $f_q$. 
	Conditioned on $\max_{j \in [n]} |A_{ij}| = O(n^2)$, which occurs with probability $1- O(1/n)$ by a union bound, $\ex{|A_{iq}|} = \Theta(\log(n))$ (see, e.g., \cite{indyk2006stable}). Then $\ex{\sum_{q=1}^n |A_{iq}|F_q} = O(\log(n))\ab \|F\|_1$ is the expected number of updates to $y_i$, so by Markov's inequality $\sum_{q=1}^n |A_{iq}|F_q = O(\log(n)/\epsilon^2\|F\|_1)$ with probability $1-1/(100(r+r'))$, and so with probability $99/100$, by a union bound this holds for all $i \in [r], i' \in [r']$. We condition on this now.
	
	Our algorithm then is as follows. We then scale $\Delta_i$ up by $\delta^{-1}$ to make each update $\eta_i,\eta_i'$ an integer, and sample updates to each $y_i$ with probability $p= \Omega(\eps_0^{-3} \log(n)/m)$ and store the result in a counter $c_i$. Note that scaling by $\delta^{-1}$ only blows up the size of the stream by a factor of $m/\eps$. Furthermore, if we can $(1 \pm \eps)$ approximation the $L_1$ of this stream, scaling our estimate down by a factor of $\delta$ gives a $(1 \pm \eps)$ approximation of the actual $L_1$, so we can assume from now on that all updates are integers.
	 
Let $\tilde{y}_i = p^{-1}c_i$.	We know by Lemma \ref{lem:samp2} that $\tilde{y}_i = y_i \pm \eps_0(\sum_{q=1}^n |A_{iq}|F_q) = y_i \pm O(\eps_0 \log(n)\|F\|_1 /\eps^2)$ with high probability. Setting $\eps_0 = \eps^3/(\alpha \log(n))$, we have $|\tilde{y}_i - y_i|< \eps/\alpha \|F\|_1 \leq \eps \|f\|_1$ by the $\alpha$-property for all $i \in [r]$.   Note that we can deal with the issue of not knowing the length of the stream by sampling in exponentially increasing intervals $[s^1,s^3], [s^2,s^4],\dots$ of size $s = \poly(\alpha/\eps_0)$ as in Figure \ref{figure:L1Est}, throwing out a $\eps_0$ fraction of the stream. Since our error is an additive $\eps_0$ fraction of the length of the stream already, our error does not change. We run the same routine to obtain estimates $\tilde{y'}_i$ of $y_i'$ with the same error guarantee, and output

	\[	L' = 	\tilde{y}_{med}' \Big(-\ln\big(\frac{1}{r}\sum_{i=1}^r \cos(\frac{\tilde{y_{i}}}{\tilde{y}_{med}'})	\big)	\Big) \]
	where $\tilde{y}_{med}' = \text{median}(\tilde{y_{i}}')$.  By Theorem \ref{thm:kaneexact}, we have that $y_{med}' = \Theta(\|f\|_1)$, thus $\tilde{y}_{med}' = y_{med}' \pm \eps \|f\|_1$ since $|\tilde{y}_i' - y_i'| < \eps \|f\|_1$ for all $i$. Using the fact that $y_{med}' = \Theta(\|f\|_1)$ by Theorem \ref{thm:kaneexact}, we have:
	\[	L' = (y_{med}' \pm \eps \|f\|_1) \Big(-\ln\big(\frac{1}{r}\sum_{i=1}^r \cos(\frac{y_{i}}{y_{med}'(1 \pm O(\eps))} \pm \frac{\eps \|f\|_1}{y_{med}'(1 \pm O(\eps))})	\big)	\Big)	\]
		\[	= (y_{med}' \pm \eps \|f\|_1) \Big(-\ln\big(\frac{1}{r}\sum_{i=1}^r \cos(\frac{y_{i}}{y_{med}'} )\pm O(\eps)	\big)	\Big)	\]
		Where the last equation follows from the angle formula $\cos(\nu+ \beta ) = \cos(\nu)\cos(\beta) - \sin(\nu)\sin(\beta)$ and the Taylor series expansion of $\sin$ and $\cos$. Next, since  $|\big(\frac{1}{r}\sum_{i=1}^r \cos(\frac{y_{i}}{y_{med}'} )\big) - e^{-(\frac{\|f\|_1}{y_{med}'})} | <  O(\eps)$ and $y_{med}' = \Theta(\|f\|_1)$ by Theorem \ref{thm:kaneexact}, it follows that $\big(\frac{1}{r}\sum_{i=1}^r \cos(\frac{y_{i}}{y_{med}'} )\big) = \Theta(1)$, so using the fact that  $\ln(1 \pm O(\eps)) = \Theta(\eps)$, this is 		
			$(y_{med}' \pm \eps \|f\|_1) \Big(-\ln\big(\frac{1}{r}\sum_{i=1}^r \cos(\frac{y_{i}}{y_{med}'} )	\big) \pm O(\eps)	\Big)$, which is $\tilde{L}  \pm O(\eps)\|f\|_1$, where $\tilde{L}$ is the output of Figure \ref{fig:l1genest}, which satisfies  $\tilde{L} = (1 \pm \eps)\|f\|_1$ by Theorem \ref{thm:kaneexact}. It follows that our estimate $L'$ satisfies $L = (1 \pm O(\eps))\|f\|_1$, which is the desired result. Note that we only conditioned on the success of Figure \ref{fig:l1genest}, which occurs with probability $3/4$, and the bound on the number of updates to every $y_i,y_i'$, which occurs with probability $99/100$, and high probability events, by the union bound our result holds with probability $2/3$ as needed.

	For space, generating the entries of $A,A'$ requires $O(\log(\frac{1}{\eps})\ab\log(n/\eps)\ab/\log\log(\frac{1}{\eps}))$ bits as noted, which dominates the cost of storing $\delta$. Moreover, every counter $\tilde{y}_i,\tilde{y}_i'$ is at most $\poly(p\sum_{q=1}^n |A_{iq}|F_q) = \poly(\alpha\log(n)/\eps)$ with high probability, and can thus be stored with $O(\log(\alpha \log(n)/\eps))$ bits each by storing a counter and separately storing $p$ (which is the same for every counter). As there are $O(1/\eps^2)$ counters, the total space is as stated.
\end{proof}



\section{$L_0$ Estimation}
\label{sec:L0Est}
The problem of estimating the support size of a stream is known as $L_0$ estimation. In other words, this is $L_0  = |\{	i \in [n] \; | \; f_i \neq 0	\}|$. $L_0$ estimation is a fundamental problem for network traffic monitoring, query optimization, and database analytics \cite{shukla1996storage, acharya1999aqua, finkelstein1988physical}. The problem also has applications in detecting DDoS attacks \cite{akella2003detecting} and port scans \cite{estan2003bitmap}. 

 For general turnstile streams, Kane, Nelson, and $\ab$Woodruff gave an $O(\epsilon^{-2}\ab\log(n)(\log(\epsilon^{-1}) + \log\log(n)))$-bit algorithm with constant probability of success \cite{kane2010optimal}, which nearly matches the known lower bound of $\Omega(\ab\eps^{-2}\log(\eps^2 n))$ \cite{kane2010exact}. For insertion only streams, they also demonstrated an $O(\eps^{-2} + \log(n))$ upper bound.  In this section we show that the ideas of \cite{kane2010optimal} can be adapted to yield more efficient algorithms for general turnstile $L_0$ $\alpha$-property streams. For the rest of the section, we will simply write \textit{$\alpha$-property} to refer to the $L_0$ $\alpha$-property.

The idea of the algorithm stems from the observation that if $A = \Theta(K)$, then the number of non-empty bins after hashing $A$ balls into $K$ bins is well concentrated around its expectation. Treating this expectation as a function of $A$ and inverting it, one can then recover $A$ with good probability. By treating the (non-zero) elements of the stream as balls, we can hash the universe down into $K = 1/\eps^2$ bins and recover $L_0$ if $L_0 = \Theta(K)$. The primary challenge will be to ensure this last condition. In order to do so, we subsample the elements of the stream at $\log(n)$ levels, and simultaneously run an $O(1)$ estimator $R$ of the $L_0$. To recover a $(1 \pm \eps)$ approximation, we use $R$ to index into the level of subsampling corresponding to a substream with $\Theta(K)$ non-zero elements. We then invert the number of non-empty bins and scale up by a factor to account for the degree of subsampling.
\subsection{Review of Unbounded Deletion Case}
%
\begin{figure*}
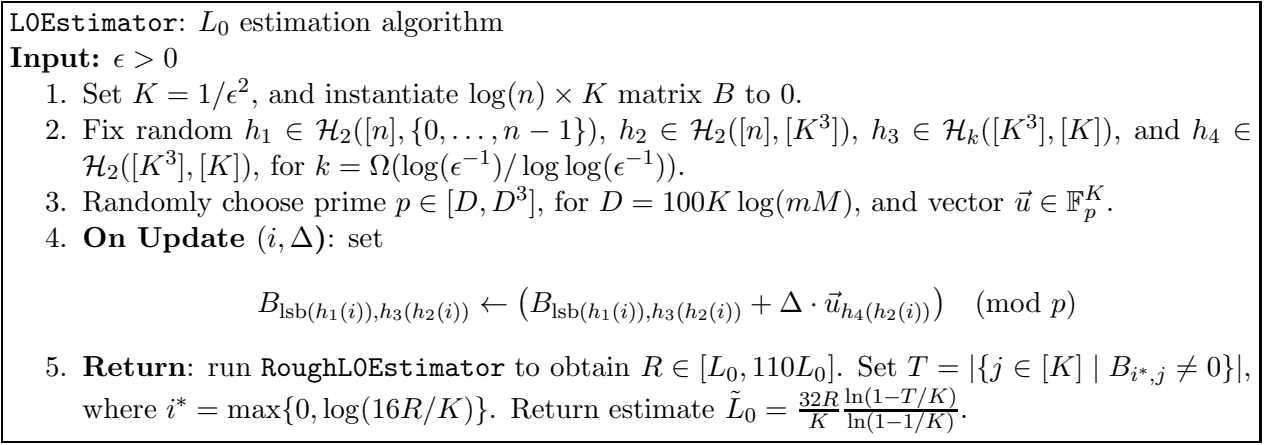
	
	\fbox{\parbox{\textwidth}{\texttt{L0Estimator}: $L_0$ estimation algorithm
			
			\textbf{Input:} $\eps>0$
			\begin{enumerate}[topsep=0pt,itemsep=-1ex,partopsep=1ex,parsep=1ex] 
				\item Set $K = 1/\epsilon^2,$ and instantiate $\log(n) \times K$ matrix $B$ to $0$. 
				\item Fix random $h_1 \in \mathcal{H}_2([n],\{0,\dots,n-1\})$, $h_2 \in \mathcal{H}_2([n],[K^3])$, $h_3 \in \mathcal{H}_k([K^3],[K])$, and $h_4 \in \mathcal{H}_2([K^3],[K])$, for $k = \Omega (\log(\epsilon^{-1})/\log\log(\epsilon^{-1}))$.
				\item Randomly choose prime $p \in [D,D^3]$, for $D = 100K\log(mM)$, and vector $\vec{u} \in \mathbb{F}_p^K$.	
				\item \textbf{On Update $(i,\Delta$)}: set 
				\[B_{\text{lsb}(h_1(i)),h_3(h_2(i))} \leftarrow \big(B_{\text{lsb}(h_1(i)),h_3(h_2(i))} + \Delta \cdot \vec{u}_{h_4(h_2(i))} \big) \pmod p\]
				\item \textbf{Return}: run \ttx{RoughL0Estimator} to obtain $R \in [L_0,110L_0]$. Set $T = |\{j\in [K] \; | \; B_{i^*,j} \neq 0 \}|$, where $i^* = \max\{0, \log(16R/K) \}$. Return estimate $\tilde{L}_0 = \frac{32R}{K} \frac{\ln(1 - T/K)}{\ln(1 - 1/K)}$. 			
	\end{enumerate}}} \caption{$L_0$ Estimation Algorithm of \cite{kane2010optimal}} \label{fig:L0theirs}
\end{figure*}


\label{sec:reviewunbounded}
For sets $U,V$ and integer $k$, let $\mathcal{H}_k(U,V)$ denote some $k$-wise independent hash family of functions mapping $U$ into $V$.  Assuming that $|U|,|V|$ are powers of $2$, such hash functions can be represented using $O(k\log(|U| + |V|))$ bits \cite{carter1979universal} (without loss of generality we assume $n,\eps^{-1}$ are powers of $2$ for the remainder of the section). For $x \in \mathbb{Z}_{\geq 0}$, we write lsb$(x)$ to denote the ($0$-based index of) the least significant bit of $x$ written in binary. For instance, lsb$(6) = 1$ and lsb$(5) = 0$. We set lsb$(0) = \log(n)$. 	
In order to fulfill the algorithmic template outlined above, we need to obtain a constant factor approximation $R$ to $L_0$. This is done using the following result which can be found in \cite{kane2010optimal}.

\begin{lemma}
	\label{lem:RoughL0EST}
	Given a fixed constant $\delta>0$, there is an algorithm, \ttx{RoughL0Estimator}, that with probability $1-\delta$ outputs a value $R = \tilde{L}_0$ satisfying $L_0 \leq R \leq 110 L_0$, using space $O(\log(n)\log\log(n))$. 
\end{lemma}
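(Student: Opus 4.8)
The plan is to follow the standard subsampling template for $L_0$, implemented so that each of the $O(\log n)$ subsampling levels costs only $O(\log\log n)$ bits; this is the construction of \cite{kane2010optimal}. First I would pick a pairwise independent hash $h_1:[n]\to[2^{\lceil\log n\rceil}]$ and, for each level $j\in\{0,\dots,\log n\}$, consider the substream restricted to coordinates $i$ with $\mathrm{lsb}(h_1(i))\ge j$; by pairwise independence each nonzero coordinate survives to level $j$ with probability $2^{-j}$, and the survival events at a fixed level are pairwise independent. For every level I would maintain a constant-size \emph{nonzero detector}: compose $h_1$ with limited-independence hash functions mapping $[n]$ into an intermediate universe of size $\mathrm{poly}(\log n)$ and then into $O(1)$ buckets, assign each coordinate a random multiplier in $\mathbb{F}_p$ for a random prime $p=\mathrm{poly}(\log n)$ (obtained exactly as in Figure~\ref{fig:L0theirs}), and store the bucket sums of $f_i$ times these multipliers modulo $p$. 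Call level $j$ \emph{alive} if some bucket at that level is nonzero, and output $R=c\cdot 2^{j^*}$ where $j^*$ is the largest alive level and $c$ is a constant to be fixed. Note that deletions are handled for free: the detector tracks net frequencies, so level $j$ is (whp) alive iff some final-support coordinate survives to it.

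Next I would pin down $j^*$. For $j\ge \log L_0+C_1$ the expected number of surviving nonzero coordinates is at most $2^{-C_1}$, so summing over such $j$ and applying Markov's inequality, all these levels are identically zero (hence all their buckets are exactly $0$) with probability $\ge 1-2^{-C_1+O(1)}$, giving $j^*\le \log L_0+C_1$. For $j\le \log L_0-C_2$ the expected number of survivors is at least $2^{C_2}$, and a Chebyshev bound using pairwise independence (variance at most the mean for a sum of pairwise-independent indicators) shows at least one coordinate survives with probability $\ge 1-2^{-C_2}$; conditioned on that, the detector at level $j$ reports ``alive'' unless $p$ divides every surviving $f_i$ or the random $\mathbb{F}_p$-multipliers make the bucket sums cancel. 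Since $|f_i|\le mM=\mathrm{poly}(n)$ has only $O(\log n)$ distinct prime factors, a fixed coordinate satisfies $\Pr_p[\,p\mid f_i\,]\le O(\log n)/\mathrm{poly}(\log n)$, cancellation happens with probability $O(1/p)$, and the intermediate universe of size $\mathrm{poly}(\log n)$ keeps the $O(1)$ relevant survivors' identities distinct (birthday bound); so a union bound over the few coordinates that can appear at the top $O(1)$ alive levels shows the detector is correct there with probability $1-o(1)$, giving $j^*\ge \log L_0-C_2$. Combining the two bounds, $\log L_0-C_2\le j^*\le \log L_0+C_1$ with constant probability, so an appropriate choice of $c,C_1,C_2$ yields $L_0\le R\le 110L_0$; since $\delta$ is a fixed constant, boosting the success probability to $1-\delta$ costs only an $O(\log(1/\delta))=O(1)$ factor from running independent copies and taking a median, and a constant-size exact sketch (a handful of buckets) handles the degenerate case $L_0=O(1)$ where $R\ge L_0$ must be enforced directly.

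For space: $h_1$ costs $O(\log n)$ bits once; each of the $O(\log n)$ levels stores $O(1)$ counters modulo $p=\mathrm{poly}(\log n)$, i.e.\ $O(\log\log n)$ bits, together with $O(1)$-wise independent hash functions whose domains and ranges (after composing with $h_1$) have size $\mathrm{poly}(\log n)$ and hence also fit in $O(\log\log n)$ bits \cite{carter1979universal}; the shared prime and multiplier seeds cost $O(\log\log n)$. The total is $O(\log n)+O(\log n\cdot\log\log n)=O(\log n\log\log n)$.

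\textbf{Main obstacle.} The delicate point is justifying the $O(\log\log n)$-bit-per-level detector: a naive emptiness test on a turnstile vector with $\mathrm{poly}(n)$-sized entries appears to need $\Omega(\log n)$ bits just to prevent a random small prime from accidentally dividing some entry, and the resolution is the observation that the detector only needs to be reliable at the $O(1)$ levels near $\log L_0$ (levels above are genuinely zero, levels below are irrelevant to $j^*$), so the relevant union bound is over $O(1)$ coordinates rather than $L_0$ of them; making all the limited-independence, collision, and divisibility estimates close simultaneously while keeping every modulus and universe size $\mathrm{poly}(\log n)$ is the part that requires care.
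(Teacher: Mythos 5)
The paper does not actually prove Lemma~\ref{lem:RoughL0EST} itself: it is imported from \cite{kane2010optimal}, and the construction is spelled out only in Section~\ref{app:roughL0est}. Measured against that construction, your proposal is essentially the same route --- lsb-based subsampling into $\log n$ levels, an $O(\log\log n)$-bit per-level sketch over a $\mathrm{poly}(\log n)$-sized modulus, picking the deepest level that passes a threshold, scaling by a constant, and median-boosting --- with one primitive swapped. The cited algorithm runs at each level the structure of Lemma~\ref{lem8KaneOptimal}: $\Theta(c^2)$ buckets with $c=132$, so the few survivors are perfectly hashed and each nonzero counter is a single $f_i$ modulo the random prime, and it thresholds the \emph{exact} count at $8$ at a level whose expectation is $50$--$100$, giving Chebyshev a comfortable margin. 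Your bare ``aliveness'' detector (threshold $\geq 1$) is workable but is exactly where the care you flag is needed: with $O(1)$ buckets and shared multipliers, false negatives can come from integer cancellation ($f_i+f_j=0$) or divisibility of combined bucket contents, so you need either the perfect-hashing trick or enough independence among the multipliers restricted to the survivors, and the survivor count at the reference level is $O(1)$ only with constant probability, so the detector guarantee is a constant rather than the claimed $1-o(1)$ (harmless after median boosting, but it must be budgeted against the ratio: with threshold $1$ the spread is $2^{C_1+C_2+O(1)}\leq 110$ while each one-sided failure must stay below $1/2$; e.g.\ $C_1=2$, $C_2=3$ works). One genuine slip is the space accounting: composing with $h_1$ does not shrink the domain of the per-level hashes, since $h_1$ maps into $[n]$; you need a single pairwise independent map from $[n]$ into the $\mathrm{poly}(\log n)$ intermediate universe \emph{shared across all levels} (as the cited construction shares its $O(\log(1/\eta))$ hash functions), since storing one per level would cost $O(\log^2 n)$ bits.
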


The main algorithm then subsamples the stream at $\log(n)$ levels. This is accomplished by choosing a hash function $h_1:[n] \to \{0,\dots,n-1\}$, and subsampling an item $i$ at level lsb$(h_1(i))$. Then at each level of subsampling, the updates to the subsampled items are hashed into $K = \frac{1}{\epsilon^2}$ bins $k = \Omega (\log(\epsilon^{-1}/\log(\log(\epsilon^{-1}))))$-wise independently. The entire data structure is then represented by a $\log(n) \times K$ matrix $B$. The matrix $B$ is stored modulo a sufficiently large prime $p$, and the updates to the rows are scaled via a random linear function to reduce the probability that deletions to one item cancel with insertions to another, resulting in false negatives in the number of buckets hit by items from the support. At the termination of the algorithm, we count the number $T$ of non-empty bins in the $i^*$-th row of $B$, where $i^* =\max\{0, \log(\frac{16R}{K})\}$. We then return the value $\tilde{L}_0 = \frac{32R}{K} \frac{\ln(1 - T/K)}{\ln(1 - 1/K)}$. The full algorithm is given in Figure \ref{fig:L0theirs}. First, the following Lemma can be found in \cite{kane2010optimal}.

\begin{lemma}
	\label{lem:cheby}
	There exists a constant $\eps_0$ such that the following holds. Let $A$ balls be mapped into $K = 1/\eps^2$ bins using a random $h \in \mathcal{H}_k([A],[K])$ for $k=c \log(1/\eps)/\allowbreak\log\log(1/\eps)$ for a sufficiently large constant $c$.  Let $X$ be a random variable which counts the number of non-empty bins after all balls are mapped, and assume $100 \leq A \leq K/20$ and $\eps \leq \eps_0$. Then $\ex{X} = K(1 - (1-K^{-1})^A)$
	and
	\[\pr{\big|X - \ex{X}		\big| \leq 8\eps\ex{X} } \geq 4/5		\]
\end{lemma}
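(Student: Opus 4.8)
Write $X=\sum_{j=1}^{K}Z_j$ where $Z_j$ is the indicator that bin $j$ receives at least one of the $A$ balls, and set $q=(1-1/K)^A$ and $r=(1-2/K)^A$. The argument is a second moment method, with the extra care that $h$ is only $k$-wise (not fully) independent. For the expectation, linearity gives $\ex{X}=\sum_j\pr{Z_j=1}$, and $\pr{Z_j=1}=\pr{\bigcup_{i=1}^A\{h(i)=j\}}$ is evaluated by inclusion--exclusion: the $\ell$-th symmetric sum equals $\binom{A}{\ell}K^{-\ell}$, which the $k$-wise independence reproduces \emph{exactly} for every $\ell\le k$. By the Bonferroni inequalities the truncated sums at levels $k-1$ and $k$ sandwich $\pr{Z_j=1}$, and they equally sandwich the fully-independent value $1-q$ (same combinatorial identity), so $|\pr{Z_j=1}-(1-q)|\le\binom{A}{k}K^{-k}\le (eA/(kK))^k\le (e/(20k))^k$, using $A\le K/20$. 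For $k=c\log(1/\eps)/\log\log(1/\eps)$ with $c$ a sufficiently large constant this is $o(\eps^3)$ (and it is $0$ once $k\ge A$), so $\ex{X}=K(1-q)\pm o(\eps^3 K)$, which we record as the stated $\ex{X}=K(1-q)$; the error is far below the resolution $8\eps\ex{X}$ we aim for. Since $A\le K/20$ we have $1-A/K\le q\le e^{-A/K}$, hence $\ex{X}=K(1-q)\in[0.97A,\,A]$, in particular $\ex{X}=\Theta(A)\ge 97$.

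For the variance, compute $\mathrm{Var}(X)=\sum_j\mathrm{Var}(Z_j)+\sum_{j\ne j'}\mathrm{Cov}(Z_j,Z_{j'})$ with $\mathrm{Var}(Z_j)=(1-q)q$ and $\mathrm{Cov}(Z_j,Z_{j'})=\pr{Z_j=Z_{j'}=1}-(1-q)^2=r-q^2\le 0$ (since $1-2/K\le(1-1/K)^2$); both $\pr{Z_j=Z_{j'}=1}$ and $r$ are again read off by the same Bonferroni truncation up to an error $\le (e/(10k))^k$ per bin/pair, so we may work with the idealized values. The key point is that the crude bound $\mathrm{Var}(X)\le\sum_j\mathrm{Var}(Z_j)=Kq(1-q)=\Theta(A)$ is \emph{too weak} when $A\ll 1/\eps^2$; one must use the negative covariance mass $\sum_{j\ne j'}\mathrm{Cov}=K(K-1)(r-q^2)=-K(K-1)q^2(1-\rho^A)$ with $\rho=(1-2/K)/(1-1/K)^2=1-1/(K-1)^2$. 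Expanding $1-\rho^A=A/(K-1)^2+O(A^2/K^4)$ and $1-q$ to second order in $A/K$, the $\Theta(A)$ pieces of the diagonal and off-diagonal sums cancel, leaving $\mathrm{Var}(X)=\Theta(A^2/K)=\Theta(\eps^2 A^2)$; tracking constants one gets $\mathrm{Var}(X)\le (1+o(1))\tfrac{A^2}{2K}$, while the total $k$-wise truncation slack is $O\!\big(K^2(e/(20k))^k\big)=o(A^2/K)$ for $c$ large and $\eps\le\eps_0$. Combined with $\ex{X}\ge 0.97A$, this yields $\mathrm{Var}(X)\le C_0\,\eps^2(\ex{X})^2$ for an absolute constant $C_0<1$.

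Finally, Chebyshev's inequality gives $\pr{|X-\ex{X}|>8\eps\ex{X}}\le \mathrm{Var}(X)/(64\,\eps^2(\ex{X})^2)\le C_0/64<1/5$, i.e. $\pr{|X-\ex{X}|\le 8\eps\ex{X}}\ge 4/5$, where $\eps_0$ is chosen small enough that the Taylor expansions above are valid and every $o(\cdot)$ term is genuinely negligible.

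\textbf{Main obstacle.} The delicate step is the variance estimate: the naive $\sum_j\mathrm{Var}(Z_j)=\Theta(A)$ bound fails whenever $A\ll 1/\eps^2$, so the cancellation between the diagonal and off-diagonal contributions must be carried out carefully enough to extract the true $\Theta(A^2/K)$ order, and one must simultaneously check that replacing full independence of $h$ by $k$-wise independence perturbs each quantity entering the computation by an amount that the choice $k=\Theta(\log(1/\eps)/\log\log(1/\eps))$ makes negligible --- this is precisely where that value of $k$ (and the "sufficiently large constant $c$") is needed.
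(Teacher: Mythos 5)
Your proof is essentially correct, but note that the paper itself gives no proof of this lemma: it is quoted verbatim from \cite{kane2010optimal}, so there is no in-paper argument to compare against. Your reconstruction follows the same standard route as the original source: a second-moment bound around the idealized (fully independent) occupancy count, with the limited independence handled by truncated inclusion--exclusion (Bonferroni), so that all single-bin and pair-of-bins probabilities entering $\ex{X}$ and $\mathrm{Var}(X)$ agree with their fully independent values up to errors like $\binom{A}{k}(2/K)^k \leq (e/(10k))^k$, which the choice $k = c\log(1/\eps)/\log\log(1/\eps)$ makes polynomially small in $\eps$. You also correctly identify the one genuinely delicate point, namely that the diagonal bound $\sum_j \mathrm{Var}(Z_j) = \Theta(A)$ is insufficient when $A \ll K$ and that the negative covariances must be used to get $\mathrm{Var}(X) = O(A^2/K) = O(\eps^2 \ex{X}^2)$; your expansion giving roughly $A^2/(2K)$ checks out, and in any case any absolute constant up to about $12$ would suffice for the Chebyshev step, so the precise constant tracking is not load-bearing. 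Two small cosmetic points: under $k$-wise independence the equality $\ex{X} = K(1-(1-1/K)^A)$ is only approximate (as you acknowledge; the lemma statement is implicitly about the idealized value, and the $o(\eps)$ additive discrepancy is absorbed since $8\eps\ex{X} \geq 8\eps \cdot 97$), and your total truncation slack for the covariance terms should use $(e/(10k))^k$ rather than $(e/(20k))^k$, which changes nothing since $c$ is a sufficiently large constant.
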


Let $A$ be the $\log(n) \times K$ bit matrix such that $A_{i,j} = 1$ iff there is at least one $v \in [n]$ with $f_v \neq 0$ such that lsb$(h_1(v)) =i $ and $h_3(h_2(v)) = j$. In other words, $A_{i,j}$ is an indicator bit which is $1$ if an element from the support of $f$ is hashed to the entry $B_{i,j}$ in the above algorithm. Clearly if $B_{i,j} \neq 0$, then $A_{i,j} \neq 0$. However, the other direction may not always hold. The proofs of the following facts and lemmas can be found in \cite{kane2010optimal}. However we give them here for completeness.

\begin{fact}
	\label{fact2}
	Let $t,r > 0$ be integers. Pick $h \in \mathcal{H}_2([r],[t])$. For any $S \subset [r]$, $\ex{\sum_{i=1}^s \binom{|h^{-1}(i) \cap S|}{2} } \leq |S|^2/(2t)$.
\end{fact}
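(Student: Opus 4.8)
The statement is a standard second‑moment calculation about pairwise‑independent hashing, so the plan is to compute the expectation directly by linearity. First I would rewrite the random variable $\sum_{i=1}^{t} \binom{|h^{-1}(i)\cap S|}{2}$ as a sum over unordered pairs: for each unordered pair $\{x,y\}\subset S$ with $x\neq y$, introduce the indicator $Z_{xy}$ that $h(x)=h(y)$. Then $\sum_{i=1}^{t}\binom{|h^{-1}(i)\cap S|}{2} = \sum_{\{x,y\}\subseteq S} Z_{xy}$, since a pair contributes $1$ to exactly the bin it lands in (when both elements collide there) and $0$ otherwise. This identity is the only combinatorial observation needed, and it is immediate once one notes $\binom{k}{2}$ counts collided pairs within a bin of size $k$.

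Next I would take expectations. By linearity, $\ex{\sum_{\{x,y\}\subseteq S} Z_{xy}} = \sum_{\{x,y\}\subseteq S} \pr{h(x)=h(y)}$. Here is the only place the hypothesis on $h$ is used: since $h\in\mathcal{H}_2([r],[t])$ is drawn from a $2$‑wise independent family mapping into $[t]$, for distinct $x,y$ the values $h(x),h(y)$ are uniform and independent, so $\pr{h(x)=h(y)} = 1/t$. Summing over the $\binom{|S|}{2}$ unordered pairs gives $\binom{|S|}{2}\cdot \frac{1}{t} = \frac{|S|(|S|-1)}{2t} \le \frac{|S|^2}{2t}$, which is exactly the claimed bound.

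There is essentially no obstacle here — the main (and only) subtlety is getting the pair‑counting identity right, in particular being careful that the sum over bins of $\binom{|h^{-1}(i)\cap S|}{2}$ equals the number of monochromatic pairs in $S$ under $h$, with no double counting (each pair sits in exactly one bin). Everything else is linearity of expectation plus the defining property of a $2$‑wise independent hash family, and the final inequality is just $|S|-1\le |S|$.
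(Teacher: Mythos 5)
Your proof is correct and follows essentially the same route as the paper: the paper also reduces the expectation to a sum over pairs via linearity and pairwise independence (using per-bin indicators $X_{i,j}$ and symmetry over the $t$ bins, giving $t\binom{|S|}{2}/t^2$), which is the same calculation as your collision-indicator formulation.
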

\begin{proof}
	Let $X_{i,j}$ be an indicator variable which is $1$ if $h(i) = j$. Utilizing linearity of expectation, the desired expectation is then $t \sum_{i < i'} \ex{X_{i,1}} \ex{X_{i',1}} = t \binom{|S|}{2}\frac{1}{t^2}\leq \frac{|S|^2}{2t}$.
\end{proof}

\begin{fact}
	\label{fact3}
	Let $\mathbb{F}_q$ be a finite field and $v \in \mathbb{F}_q^d$ be a non-zero vector. Then, if  $w \in \mathbb{F}_q^d$ is selected randomly, we have $\pr{v \cdot w = 0} = 1/q$ where $v \cdot w$ is the inner product over $\mathbb{F}_q$.
\end{fact}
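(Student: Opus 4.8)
\textbf{Proof proposal for Fact \ref{fact3}.}
The plan is to exploit the fact that a uniformly random $w \in \mathbb{F}_q^d$ has independent, uniformly distributed coordinates, and to condition on all but one coordinate. First I would fix an index $i$ with $v_i \neq 0$, which exists since $v$ is a nonzero vector; since $\mathbb{F}_q$ is a field, $v_i$ is invertible. Then I would condition on an arbitrary choice of the coordinates $\{w_j\}_{j \neq i}$, so that the quantity $c = \sum_{j \neq i} v_j w_j$ is a fixed element of $\mathbb{F}_q$.

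Conditioned on this, the event $v \cdot w = 0$ is exactly the event $v_i w_i = -c$, i.e.\ $w_i = -v_i^{-1} c$, which is a single fixed element of $\mathbb{F}_q$. Since $w_i$ is independent of the other coordinates and uniform over the $q$ elements of $\mathbb{F}_q$, this event has conditional probability exactly $1/q$. As this holds for every choice of $\{w_j\}_{j \neq i}$, averaging (the law of total probability) gives $\pr{v \cdot w = 0} = 1/q$ unconditionally.

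I do not anticipate a genuine obstacle here; the only point requiring a line of care is justifying that conditioning on $\{w_j\}_{j\neq i}$ leaves $w_i$ uniform and independent, which follows immediately from $w$ being uniform on the product space $\mathbb{F}_q^d$. An equivalent packaging, if preferred, is to note that for fixed nonzero $v$ the map $w \mapsto v \cdot w$ is a surjective $\mathbb{F}_q$-linear functional $\mathbb{F}_q^d \to \mathbb{F}_q$, hence its kernel is a subspace of dimension $d-1$ and therefore contains exactly $q^{d-1}$ of the $q^d$ vectors; dividing gives the probability $1/q$.
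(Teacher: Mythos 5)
Your proof is correct. Your primary argument (fix a coordinate $i$ with $v_i \neq 0$, condition on the remaining coordinates, and observe that the event $v\cdot w = 0$ pins $w_i$ to the single value $-v_i^{-1}c$, which a uniform independent $w_i$ hits with probability exactly $1/q$) is a slightly different packaging than the paper's, which simply notes that the orthogonal complement of $v$ is a subspace of dimension $d-1$ and hence contains $q^{d-1}$ of the $q^d$ vectors. The two arguments are essentially interchangeable — indeed your closing remark about the kernel of the surjective linear functional $w \mapsto v\cdot w$ is precisely the paper's proof — and neither has any advantage over the other beyond taste: the conditioning version avoids invoking the rank–nullity count explicitly, while the subspace-counting version is a one-line appeal to linear algebra. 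No gaps in either route.
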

\begin{proof}
	The set of vectors orthogonal to $v$ is a linear subspace  $V \subset \mathbb{F}_q^d$ of dimension $d-1$, and therefore contains $q^{d-1}$ points. Thus $\pr{w\in V} = 1/q$ as needed.
\end{proof}

\begin{lemma}[Lemma 6 of \cite{kane2010optimal}]
\label{lem:recoverA}
	Assuming that $L_0 \geq \ab K/32$, with probability $3/4$, for all $j \in [K]$ we have $A_{i^*,j} = 0$ if and only if $B_{i^*,j} = 0$. Moreover, the space required to store each $B_{i,j}$ is $O(\log\log(n) + \log(1/\eps))$.
	\end{lemma}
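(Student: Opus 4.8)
The plan is to separate the easy (deterministic) implication from the hard (probabilistic) one and then control a short list of cancellation events. The implication $A_{i^*,j}=0\Rightarrow B_{i^*,j}=0$ is deterministic: every $v$ the algorithm routes into cell $(i^*,j)$ contributes $f_v\cdot\vec u_{h_4(h_2(v))}$ to $B_{i^*,j}$, and if $A_{i^*,j}=0$ every such $v$ has $f_v=0$, so $B_{i^*,j}=0\pmod p$ regardless of the hash functions and of $\vec u$. It remains to show that with probability at least $3/4$, for every bucket $j$ with $A_{i^*,j}\neq 0$ we also have $B_{i^*,j}\neq 0$. Writing $S=\{v:f_v\neq 0,\ \mathrm{lsb}(h_1(v))=i^*\}$ and $S_j=\{v\in S:h_3(h_2(v))=j\}$, the quantity of interest is $B_{i^*,j}=\sum_{v\in S_j}f_v\,\vec u_{h_4(h_2(v))}\pmod p$.

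First I would pin down $|S|$. Conditioning on \texttt{RoughL0Estimator} succeeding (Lemma \ref{lem:RoughL0EST}), which costs a constant failure probability, we get $R\geq L_0$ and hence $i^*\geq\log(16L_0/K)$ whenever that is nonnegative; in all cases $\ex{|S|}=L_0\cdot 2^{-i^*-1}\leq K/32$. Since $h_1$ is pairwise independent, $\mathrm{Var}(|S|)\leq\ex{|S|}$, so Chebyshev (with the hypothesis $L_0\geq K/32$ keeping $\ex{|S|}=\Theta(K)$) gives $|S|\leq K/16$ with probability $1-O(1/K)=1-O(\eps^2)$; I condition on this. This upper bound is the one genuinely load-bearing consequence of $R\geq L_0$: the counting below breaks if $S$ oversamples the support.

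Next I would rule out two cancellation mechanisms in turn. (a) \emph{Collisions under $h_2$:} by Fact \ref{fact2} applied to $h_2:[n]\to[K^3]$ on the set $S$, the expected number of pairs of $S$ sharing an $h_2$-value is at most $|S|^2/(2K^3)\leq 1/(2K)$, so by Markov the elements of $S$ receive pairwise distinct $h_2$-values with probability $1-O(\eps^2)$; condition on this. (b) \emph{Collisions under $h_4$ inside a bucket:} given (a), for a fixed nonempty $j$ write $B_{i^*,j}=\langle c^{(j)},\vec u\rangle\pmod p$ with $c^{(j)}_\ell=\sum_{v\in S_j:\,h_4(h_2(v))=\ell}f_v$; if $h_4$ is injective on $\{h_2(v):v\in S_j\}$ then $c^{(j)}$ has exactly $|S_j|\geq 1$ nonzero integer coordinates, each some $f_v$. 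The event that $h_4:[K^3]\to[K]$ fails to be injective on some $h_2(S_j)$ has probability at most $\sum_j\binom{|S_j|}{2}/K$ by pairwise independence of $h_4$; applying Fact \ref{fact2} to $h_3$ on the set $h_2(S)$ (of size $|S|$) bounds $\ex{\sum_j\binom{|S_j|}{2}}$ by $|S|^2/(2K)\leq K/512$, so Markov makes this event have probability at most $O(\eps^2)+1/64$, say; condition on its complement. Here the slack $|S|\leq K/16\ll K$ is exactly what keeps $\sum_j\binom{|S_j|}{2}/K$ small, which is why $K=1/\eps^2$ together with the range $[K^3]$ of $h_2$ is chosen as it is. I would also fold in the negligible event, over the random prime $p\in[D,D^3]$, that $p$ divides some $f_v$ with $v\in S$: each $|f_v|\leq mM$ has only $O(\log(mM)/\log D)$ prime factors above $D$, there are $\Theta(D^3/\log D)$ primes in the range, and $|S|\leq K$.

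Conditioned on (a), (b) and the no-$p$-divisibility event, each $c^{(j)}$ is a nonzero vector of $\mathbb{F}_p^K$, so Fact \ref{fact3} (with $q=p$, $w=\vec u$) gives $\pr{\langle c^{(j)},\vec u\rangle=0\pmod p}=1/p$ for every nonempty $j$; a union bound over the at most $K$ nonempty buckets bounds the remaining failure probability by $K/p\leq K/D=1/(100\log(mM))$. Finally I would union-bound all listed events --- \texttt{RoughL0Estimator} failing, $|S|>K/16$, an $h_2$-collision in $S$, an $h_4$-collision inside some bucket, $p$ dividing some $f_v$, and some $\langle c^{(j)},\vec u\rangle=0$ --- and, using $\eps\leq\eps_0$ (so $K$ is large and the $O(\eps^2)$ terms negligible) and the freedom in the hidden constants, arrange the total below $1/4$. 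The main obstacle is precisely this bookkeeping: making the two pair-collision estimates of Fact \ref{fact2} combine with constants sharp enough that every failure probability, including \texttt{RoughL0Estimator}'s, fits under the $1/4$ budget. The space bound is then immediate: each $B_{i,j}$ lies in $\{0,\dots,p-1\}$ with $p\leq D^3=(100K\log(mM))^3$, and since $K=1/\eps^2$ and $\log(mM)=O(\log n)$ this is $O(\log(1/\eps)+\log\log n)$ bits per entry.
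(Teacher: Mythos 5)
Your proof is correct and follows essentially the same route as the paper's: condition on the rough estimator so that $\ex{|I_{i^*}|}\leq K/32$, apply Chebyshev via pairwise independence of $h_1$, enforce perfect hashing under $h_2$, rule out $h_4$-collisions within buckets via Fact \ref{fact2}, exclude primes dividing the $f_v$'s, and finish with Fact \ref{fact3} plus a union bound over the $K$ buckets. The only differences are cosmetic (a threshold of $K/16$ instead of $K/20$, and a single expectation/Markov step for the $h_4$-collision event where the paper conditions on $X<20\ex{X}$ first), and they do not affect the constants fitting under the $1/4$ failure budget.
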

\begin{proof}
	The space follows by the choice of $p \in O(D^3)$, and thus it suffices to bound the probability that $B_{i^*,j} = 0$ when  $A_{i^*,j} \neq 0$. Define $I_{i^*} = \{j \in [n] \; | \; \text{lsb}(h_1(j))=i^*, f_j \neq 0	\}$. This is the set of non-zero coordinates of $f$ which are subsampled to row $i^*$ of $B$. Now conditioned on $R \in [L_0,110L_0]$, which occurs with arbitrarily large constant probability $\delta = \Theta(1)$, we have $\ex{|I_{i^*}|} \leq K/32$, and using the pairwise independence of $h_1$ we have that Var$(|I_{i^*}|)< \ex{|I_{i^*}|}$. So by Chebyshev's inequality $\pr{|I_{i^*}| \leq K/20} = 1-O(1/K)$, which we now condition on. Given this, since the range of $h_2$ has size $K^3$, the indices of $I_{i^*}$ are perfectly hashed by $h_2$ with probability $1- O(1/K) = 1-o(1)$, an event we call $Q$ and condition on occurring.
	
	Since we choose a prime $p \in [D,D^3]$, with $D = 100K\; \ab\log(mM)$, for $mM$ larger than some constant, by standard results on the density of primes there are at least $(K^2\;\ab \log^2(mM))$ primes in the interval $[D,D^3]$. Since each $f_j$ has magnitude at most $mM$ and thus has at most $\log(mM)$ prime factors, it follows that $f_j \neq 0 \pmod p$ with probability $1 - O(1/K^2) = 1-o(1)$. Union bounding over all $j \in I_{i^*}$, it follows that $p$ does not divide $|f_j|$ for any $j \in I_{i^*}$ with probability $1-o(1/K) = 1-o(1)$. Call this event $Q'$ and condition on it occurring. Also let $Q''$ be the event that $h_4(h_2(j)) \neq h_4(h_2(j'))$ for any distinct $j,j' \in I_{i^*}$ such that $h_3(h_2(j)) =  h_3(h_2(j'))$.
	
	To bound $\pr{\neg Q''}$, let $X_{j,j'}$ indicate $h_3(h_2(j)) = h_3(h_2(j'))$, and let $X = \sum_{j < j'} X_{j,j'}$. By Fact \ref{fact2} with $r = K^3,t=K$ and $|S| = |I_{i^*}|<K/20$, we have that $\ex{X} \leq K/800$. Let $Z = \{(j,j') \; | \; h_3(h_2(j)) =h_3(h_2(j'))\}$. For $(j,j') \in Z$, let $Y_{j,j'}$ indicate $h_4(h_2(j)) = h_4(h_2(j'))$, and let $Y = \sum_{(j,j') \in Z} Y_{j,j'}$. By the pairwise independence of $h_4$ and our conditioning on $Q$, we have $\ex{Y} = \sum_{(j,j') \in Z} \pr{h_4(h_2(j)) = h_4(h_2(j'))} = |Z|/K = X/K$. Now conditioned on $X < 20\ex{X} = K/40$ which occurs with probability $19/20$ by Markov's inequality, we have $\ex{Y} \leq 1/40$, so $\pr{Y \geq 1} \leq 1/40$. So we have shown that $Q''$ holds with probability $(19/20)(39/40) \geq 7/8$.

	Now for each $j \in [K]$ such that $A_{i^*,j} = 1$, we can view $B_{i^*,j}$ as the dot product of the vector $v$, which is $f$ restricted to the coordinates of $I_{i^*}$ that hashed to $j$, with a random vector $w \in \mathbb{F}_p$ which is $u$ restricted to coordinates of $I_{i^*}$ that hashed to $j$. Conditioned on $Q'$ it follows that $v$ is non-zero, and conditioned on $Q''$ it follows that $w$ is indeed random. So by Fact \ref{fact3} with $q=p$, union bounding over all $K$ counters $B_{i^*,j}$, we have that $B_{i^*,j} \neq 0$ whenever $A_{i^*,j} \neq 0$ with probability $1-K/p \geq 99/100$. Altogether, the success probability is then $(7/8)(99/100) - o(1) \geq 3/4$ as desired.
\end{proof}

\begin{theorem}
	\label{thm:theirL0est}
	Assuming that $L_0 > K/32$, the value returned by \ttx{L0Estimator} is a $(1\pm \epsilon)$ approximation of the $L_0$ using space $O(\epsilon^{-2}\log(n)(\log(\frac{1}{\epsilon}) + \log(\log(n))\log(\frac{1}{\delta}))$, with $3/4$ success probability.
\end{theorem}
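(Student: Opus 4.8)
The plan is to condition on the rough estimator \ttx{RoughL0Estimator} succeeding, argue that the selected row $i^*$ is exactly the ``right'' subsampling level (one at which $\Theta(K)$ support elements survive), and then chain together Lemma \ref{lem:recoverA} (the counters in that row faithfully detect occupied cells) with Lemma \ref{lem:cheby} (the balls-into-bins concentration), finishing with a short calculation that inverts the occupied-cell count to recover $L_0$.

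Concretely, first I would invoke Lemma \ref{lem:RoughL0EST} and condition on $R\in[L_0,110L_0]$, costing $\delta$ in failure probability and $O(\log n\log\log n)$ in space. Writing $I_{i^*}=\{j:\mathrm{lsb}(h_1(j))=i^*,\ f_j\neq 0\}$ and $A=|I_{i^*}|$, the choice $i^*=\max\{0,\log(16R/K)\}$ together with $R\in[L_0,110L_0]$ and the hypothesis $L_0>K/32$ forces $\ex{A}=L_0/2^{i^*+1}=\Theta(K)$; more precisely $\ex{A}\le K/32$ (exactly as in the proof of Lemma \ref{lem:recoverA}) and $\ex{A}=\Omega(K)$. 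Since $n,K$ are powers of two and $h_1$ is pairwise independent, the events $\mathrm{lsb}(h_1(j))=i^*$ are pairwise independent, so $\mathrm{Var}(A)\le\ex{A}$, and Chebyshev's inequality gives, with probability $\ge 9/10$, that $A=(1\pm O(K^{-1/2}))\ex{A}=(1\pm O(\eps))\ex{A}$ (using $K=\eps^{-2}$) and in particular $100\le A\le K/20$ once $\eps\le\eps_0$.

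Next, conditioned on these events, Lemma \ref{lem:recoverA} (whose hypothesis $L_0\ge K/32$ holds) gives, with probability $\ge 3/4$, that $A_{i^*,j}=0\iff B_{i^*,j}=0$ for every $j\in[K]$, so the quantity $T$ returned by the algorithm equals the number of occupied cells when the $A$ distinct elements of $I_{i^*}$ are hashed into $K$ cells by $h_3\circ h_2$; and since (as shown inside Lemma \ref{lem:recoverA}) $h_2$ perfectly hashes $I_{i^*}$, the map $h_3$ acts $k$-wise independently on these $A$ elements, so Lemma \ref{lem:cheby} applies: with probability $\ge 4/5$, $T=(1\pm 8\eps)\ex{T}$ with $\ex{T}=K(1-(1-1/K)^A)$. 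It remains to invert: set $\widehat{A}=\ln(1-T/K)/\ln(1-1/K)$, and write $q=(1-1/K)^A$. The bound $100\le A\le K/20$ keeps $q$ bounded away from $0$ and $1$ (indeed $q\ge(1-1/K)^{K/20}$ is a constant $>1/2$), so the multiplicative $(1\pm 8\eps)$ error on $T$ becomes $1-T/K=q(1\pm O(\eps))$, hence $\ln(1-T/K)=A\ln(1-1/K)\pm O(\eps)$, and dividing by $|\ln(1-1/K)|=\Theta(1/K)$ yields $\widehat{A}=A\pm O(\eps K)=A(1\pm O(\eps))$ because $A=\Theta(K)$. Therefore $\tilde L_0=\tfrac{32R}{K}\widehat A=2^{i^*+1}A(1\pm O(\eps))=2^{i^*+1}\ex{A}(1\pm O(\eps))=L_0(1\pm O(\eps))$ (using $2^{i^*+1}=32R/K$ by the definition of $i^*$), and rescaling $\eps$ by a constant finishes the estimate. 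Union-bounding the failure events ($\delta$ from the rough estimator, plus the constants $1/10,1/4,1/5$) and either tuning the constants in Lemmas \ref{lem:cheby} and \ref{lem:recoverA} or running $O(1)$ independent copies and taking the median of the estimates brings the overall success probability to $\ge 3/4$.

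For space, the matrix $B$ has $\log n\times K=\eps^{-2}\log n$ cells, each an element of $\mathbb{F}_p$ with $p\in O(D^3)$, $D=100K\log(mM)$, so $O(\log(1/\eps)+\log\log n)$ bits per cell (using $\log(mM)=O(\log n)$), i.e.\ $O(\eps^{-2}\log n(\log(1/\eps)+\log\log n))$ bits total; the vector $\vec u\in\mathbb{F}_p^K$ and the hash functions $h_1,h_2,h_3,h_4$ fit within the same bound, and \ttx{RoughL0Estimator} needs only $O(\log n\log\log n)$. Boosting the success probability to $1-\delta$ (for both the rough estimator and the median of copies) multiplies the space by $O(\log(1/\delta))$, giving the claimed bound. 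I expect the inversion step to be the delicate part — specifically, making the chain of multiplicative and additive errors line up, and in particular observing that the $O(\eps K)$ additive error in $\widehat A$ is only an $O(\eps)$ \emph{relative} error precisely because $A=\Theta(K)=\Theta(\eps^{-2})$; everything else is bookkeeping and union bounds over the already-established lemmas.
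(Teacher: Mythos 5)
Your proposal is correct and follows essentially the same route as the paper's proof: condition on $R\in[L_0,110L_0]$, use pairwise independence of $h_1$ and Chebyshev to pin $|I_{i^*}|$ to $\Theta(K)$ within relative error $O(\eps)$, invoke Lemma \ref{lem:recoverA} to identify $T$ with the occupied-cell count, apply Lemma \ref{lem:cheby}, and invert the occupied-cell formula with the same error bookkeeping, followed by a constant number of parallel repetitions and a median to reach the stated success probability. The only differences are cosmetic (you invert via $\widehat{A}$ and multiply by $2^{i^*+1}$, while the paper carries the factor $32R/K$ through directly), so no gap to report.
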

\begin{proof}
	By Lemma \ref{lem:recoverA}, we have shown that $T_{A} = |\{j \in [K] \; | \; A_{i^*,j} \neq 0	\}| = T$ with probability $3/4$, where $T$ is as in Figure \ref{fig:L0theirs}. So it suffices to show that $\tilde{L}_0^A = \frac{32R}{K}\frac{\ln(1-T_A/K)}{\ln(1-1/K)}$ is a $(1 \pm \eps)$ approximation.
	
	Condition on the event $\mathcal{E}$ where $R \in [L_0,110L_0]$, which occurs with large constant probability $\delta = \Theta(1)$.  Let $I_{i^*} = \{j \in [n] \; | \; \text{lsb}(h_1(j))=i^*, f_j \neq 0	\}$. Then  $\ex{|I_{i^*}|} = L_0/2^{i^*+1} = L_0K/(32R)$ (assuming $L_0 > K/\ab32$) and Var$(|I_{i^*}|)< \ex{|I_{i^*}|}$ by the pairwise independence of $h_1$. Then
	$K/3520 \leq \ex{|I_{i^*}|} \leq  K/32$ by $\mathcal{E}$, and by Chebyshev's inequality $K/4224 \leq |I_{i^*}| \leq  K/20$ with probability $1 - O(1/K) = 1-o(1)$. Call this event $\mathcal{E}'$, and condition on it. We then condition on the event $\mathcal{E}''$ that the indices of $I_{i^*}$ are perfectly hashed, meaning they do not collide with each other in any bucket, by $h_2$. Given $\mathcal{E}'$, then by the pairwise independence of $h_2$ the event $\mathcal{E}''$ occurs with probability $1 - O(1/K)$ as well. 
	
	Conditioned on $\mathcal{E}' \wedge\mathcal{E}''$, it follows that $T_A$ is a random variable counting the number of bins hit by at least one ball under a $k$-wise independent hash function, where there are $C=|I_{i^*}|$ balls, $K$ bins, and $k = \Omega(\log(K/\eps)/\log\log(K/\eps))$. Then by Lemma \ref{lem:cheby},  we have $T_A = (1 \pm 8\eps)K(1-(1-1/K)^C)$ with probability $4/5$. So \[\ln(1-T_A/K) = \ln((1-1/K)^C \pm 8\eps (1 - (1-1/K)^C))\] Since we condition on the fact that $K/4244 \leq C \leq K/32$, it follows that $(1-1/K)^C = \Theta(1)$, so the above is  $\ln((1 \pm O(\eps)) (1- 1/K)^C) = C\ln(1-1/K) \pm O(\eps)$, and since $\ln(1+x) = O(|x|)$ for $|x|<1/2$, we have $\tilde{L}_0^A = \frac{32RC}{K} +O(\eps R)$. Now the latter term is $O(\eps L_0)$, since $R  =\Theta(L_0)$, so it suffices to show the concentration of $C$. 
	Now since Var$(C) \leq \ex{C}$ by pairwise independence of $h_1$, so Chebyshev's  inequality gives
	$	\pr{\big|C - L_0K/(32R)		\big| \geq c/\sqrt{K}} < \frac{\ex{C}}{(c^2/K) \ex{C}^2} \leq (\frac{16}{c})^2 $
	and this probability can be made arbitrarily small big increasing $c$, so set $c$ such that the probability is $1/100$. Note that $1/\sqrt{K} = \eps$, so it follows that $C = (1 \pm O(\eps)) L_0K/(32R)$. From this we conclude that $\tilde{L}_0^A = (1 \pm O(\eps))L_0$. By the union bound the events $\mathcal{E} \wedge \mathcal{E}' \wedge \mathcal{E}''$ occur with arbitrarily large constant probability, say $99/100$, and conditioned on this we showed that $\tilde{L}^A_0 = (1\pm\eps)L_0$ with probability $4/5 - 1/100= 79/100$. Finally, $\tilde{L}^A_0 = \tilde{L}_0$ with probability $3/4$, and so together we obtain the desired result with probability $1 - 21/100 - 1/100 - 1/4 = 53/100$. 	
	Running this algorithm $O(1)$ times in parallel and outputting the median gives the desired probability. 
\end{proof}

\subsection{Dealing With Small $L_0$}
\label{subsec:smallL0}
In the prior section it was assumed that $L_0 \geq K/32 = \eps^{-2}/32$. We handle the estimation when this is not the case the same way as \cite{kane2010optimal}. 
We consider two cases. First, if $L_0\leq 100$  we can perfectly hash the elements into $O(1)$ buckets and recovery the $L_0$ exactly with large constant probability by counting the number of nonzero buckets, as each non-zero item will be hashed to its own bucket with good probability (see Lemma \ref{lem8KaneOptimal}). 

Now for  $K/32 > L_0 > 100$, a similar algorithm as in the last section is used, except we use only one row of $B$ and no subsampling. In this case, we set $K' = 2K$, and create a vector $B'$ of length $K'$. 
 We then run the algorithm of the last section, but update $B_j$ instead of $B_{i,j}$ every time $B_{i,j}$ is updated. In other words, $B_{j}'$ is the $j$-th column of $B$ collapsed, so the updates to all items in $[n]$ are hashed into a bucket of $B'$. Let $I = \{i \in [n] \; | \; f_i \neq 0 \}$. Note that the only fact about $i^*$ that the proof of Lemma \ref{lem:recoverA} uses was that $\ex{|I_{i^*}|} < K/32$, and since $I = L_0 < K/32$, this is still the case. Thus by the the same argument given in Lemma \ref{lem:recoverA}, with probability $3/4$ we can recover a bitvector $A$ from $B$ satisfying $A_j = 1$ iff there is some $v \in [n]$ with $f_v \neq 0 $ and $h_3(h_2(v)) = j$. Then if $T_A$ is the number of non-zero bits of $A$, it follows by a similar argument as in Theorem \ref{thm:theirL0est} that $\tilde{L}_0' = \ln(1-T_A/K')/\ln(1-1/K') = (1 \pm \eps)L_0$ for $100 < L_0 < K/32$. So if $\tilde{L}_0' > K'/32 = K/16$, we return the output of the algorithm from the last section, otherwise we return $\tilde{L}_0'$.
The space required to store $B$ is $O(\eps^{-2}(\log\log(n) + \log(1/\eps)))$, giving the following Lemma.

\begin{lemma} \label{lem:smallL0}
	Let $\eps>0$ be given and let $\delta > 0$ be a fixed constant. Then there is a subroutine using $O(\eps^{-2}(\log\ab(\eps^{-1}) + \log\log(n)) + \log(n))$ bits of space which with probability $1-\delta$ either returns a $(1 \pm \eps)$ approximation to $L_0$, or returns LARGE, with the guarantee that $L_0 > \eps^{-2}/16$. 
\end{lemma}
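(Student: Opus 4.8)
The subroutine is exactly the two‑regime construction described immediately above the statement, and the plan is to verify its correctness and space bound. Write $K=\eps^{-2}$. For $L_0\le 100$ I would invoke Lemma \ref{lem8KaneOptimal}: perfectly hash the at most $100$ nonzero coordinates of $f$ into $O(1)$ buckets, test each bucket for nonzeroness via a random linear sketch modulo a prime $p=\poly(\log(mM))$ as in Figure \ref{fig:L0theirs}, and read off $L_0$ as the count of nonzero buckets; both ``no bucket collision'' and ``no bucket vanishing mod $p$'' hold with arbitrarily large constant probability, so this recovers $L_0$ exactly using $O(\log n)$ bits.

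For $100<L_0<K/32$, set $K'=2K$ and maintain a single length‑$K'$ vector $B'$ that collapses the subsampling levels of the matrix $B$ of Figure \ref{fig:L0theirs}: on update $(i,\Delta)$ add $\Delta\cdot\vec u_{h_4(h_2(i))}\pmod p$ to $B'_{h_3(h_2(i))}$, with no application of $h_1$. The key point, already noted before the statement, is that the proof of Lemma \ref{lem:recoverA} uses nothing about the row $i^*$ beyond $\ex{|I_{i^*}|}<K/32$; here the analogous set $I=\{i:f_i\neq0\}$ has $|I|=L_0<K/32$ deterministically, so the identical chain of events — $h_2$ perfectly hashing $I$ into $[K^3]$, no $f_j$ vanishing mod $p$, the event $Q''$ controlling $h_4$‑collisions via Fact \ref{fact2}, and Fact \ref{fact3} applied to each bucket — yields with probability $3/4$ a bit‑vector $A$ with $A_j=1$ iff some $v$ with $f_v\neq0$ has $h_3(h_2(v))=j$. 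Letting $T_A$ be the number of $1$'s of $A$, the balls‑into‑bins concentration of Lemma \ref{lem:cheby} (valid since $100<L_0<K'/20$ and $\eps\le\eps_0$) together with the inversion step of the proof of Theorem \ref{thm:theirL0est} gives $\tilde L_0':=\ln(1-T_A/K')/\ln(1-1/K')=(1\pm\eps)L_0$ with constant probability.

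The decision rule is: if the computed $\tilde L_0'$ is at most $\eps^{-2}/16$, output $\tilde L_0'$; otherwise output LARGE. For the first branch, if $L_0\le\eps^{-2}/16$ then, combining the two regimes above, the returned value is a genuine $(1\pm\eps)$‑approximation; for the second branch one checks the contrapositive, that $L_0\le\eps^{-2}/16$ forces $\tilde L_0'\le\eps^{-2}/16$ with the required probability, so LARGE is output only when $L_0>\eps^{-2}/16$. Every event invoked (the perfect‑hashing events, the $Q,Q',Q''$ analogues, and the concentration event of Lemma \ref{lem:cheby}) fails with at most constant or $o(1)$ probability, and since $\delta$ is a fixed constant we run $O(\log(1/\delta))=O(1)$ independent copies and take the median to reach failure probability $\delta$. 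For space, $B'$ and $\vec u$ have $O(\eps^{-2})$ entries in $\mathbb F_p$ with $\log p=O(\log(\eps^{-1})+\log\log n)$ under $\log(mM)=O(\log n)$; $h_2\in\mathcal H_2([n],[K^3])$ costs $O(\log n)$ bits; $h_3\in\mathcal H_k([K^3],[K'])$ with $k=\Theta(\log(\eps^{-1})/\log\log(\eps^{-1}))$ costs $o(\eps^{-2})$ bits; $h_4$ and the $L_0\le100$ gadget cost $O(\log(\eps^{-1}))$ and $O(\log n)$ respectively; summing gives the claimed $O(\eps^{-2}(\log(\eps^{-1})+\log\log n)+\log n)$.

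The step I expect to be the real obstacle is the constant bookkeeping around the threshold: reconciling ``$L_0>\eps^{-2}/16$ whenever LARGE is output'' with ``no false LARGE whenever $L_0\le\eps^{-2}/16$'' requires controlling the nonlinear but monotone map $T_A\mapsto\ln(1-T_A/K')/\ln(1-1/K')$ both above and below throughout the borderline window $L_0\in[K/32,K/16]$, which lies slightly outside the clean interval where Lemma \ref{lem:cheby} was stated; handling it cleanly may force a more careful constant in the transplanted recovery argument of Lemma \ref{lem:recoverA}, or a slightly larger choice of $K'$, so that the accurate regime of $\tilde L_0'$ is pushed past $\eps^{-2}/16$.
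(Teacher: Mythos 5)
Your construction is exactly the paper's: the same two regimes ($L_0\le 100$ via Lemma \ref{lem8KaneOptimal}, and $100<L_0<K/32$ via the collapsed single-row sketch $B'$ with $K'=2K$), the same transplant of Lemma \ref{lem:recoverA} using only $|I|<K/32$, the same inversion via Lemma \ref{lem:cheby} and Theorem \ref{thm:theirL0est}, and the same threshold rule and space accounting, so the proposal matches the paper's proof. The borderline window you flag is treated no more carefully in the paper itself — the factor-two choice $K'=2K$ is precisely the slack that lets LARGE be guaranteed at the $\eps^{-2}/16$ level while the main algorithm only needs $L_0>K/32$ — so your closing concern is fair but does not mark a divergence from the paper's argument.
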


\subsection{The Algorithm for $\alpha$-Property Streams}

\begin{figure*} 
	\fbox{\parbox{\textwidth}{\texttt{$\alpha$L0Estimator}: $L_0$ estimator for $\alpha$-property streams.
			\begin{enumerate}[topsep=0pt,itemsep=-1ex,partopsep=1ex,parsep=1ex] 
				\item Initialize instance of \texttt{L0Estimator}, constructing only the top $2\log(4\alpha/\eps)$ rows of $B$. Let all parameters and hash functions be as in \texttt{L0Estimator}.
				\item Initialize \texttt{$\alpha$StreamRoughL0Est} to obtain a value $\tilde{L_0^t} \in [L_0^t, 8\alpha L_0]$ for all $t \in [m]$, and set $\overline{L_0^t} = \max\{\tilde{L_0^t} , 8\log(n)/\log\log(n)\}$ 
				\item Update the matrix $B$ as in Figure \ref{fig:L0theirs}, but only store the rows with index $i$ such that $i =\log(16\overline{L_0^t}/K) \pm 2\log(4\alpha/\eps)$.
				\item \textbf{Return}: run \texttt{$\alpha$StreamConstL0Est} to obtain $R \in [L_0,100L_0]$, and set $T = |\{j\in [K] \; | \; B_{i^*,j} \neq 0 \}|$, where $i^* = \log(16R/K)$. Return estimate $\tilde{L}_0 = \frac{32R}{K} \frac{\ln(1 - T/K)}{\ln(1 - 1/K)}$.
	\end{enumerate}}}\caption{Our $L_0$ estimation algorithm for $\alpha$-property streams with $L_0 > K/32$} \label{fig:L0Est}
\end{figure*}

We will give a modified version of the algorithm in Figure \ref{fig:L0theirs} for $L_0$ $\alpha$ property streams. Our algorithm is given in Figure \ref{fig:L0Est}. We note first that the return value of the unbounded deletion algorithm only depends on the row $i^* = \log(16R/K)$, and so we need only ensure that this row is stored. Our $L_0$ $\alpha$-property implies that if $L_0^t$ is the $L_0$ value at time $t$, then we must have $L_0^m = L_0 \geq 1/\alpha L^t_0$. So if we can obtain an $O(\alpha)$ approximation $R^t$ to $L_0^t$, then at time $t$ we need only maintain and sample the rows of the matrix with index within $c\log(\alpha/\eps)$ distance of $i^t = \log(16R^t/K)$, for some small constant $c$. 

By doing this, the output of our algorithm will then be the same as the output of \texttt{L0Estimator} when run on the suffix of the stream beginning at the time when we first begin sampling to the row $i^*$. Since we begin sampling to this row when the current $L_0^t$ is less than an $\eps$ fraction of the final $L_0$, it will follow that the $L_0$ of this suffix will be an $\eps$-relative error approximation of the $L_0$ of the entire stream. Thus by the correctness of \ttx{L0Estimator}, the output of our algorithm will be a $(1 \pm \eps)^2$ approximation.	

To obtain an $O(\alpha)$ approximation to $L_0^t$ at all points $t$ in the stream, we employ another algorithm of \cite{kane2010optimal}, which gives an $O(1)$ estimation of the $F_0$ value at all points in the stream, where $F_0 = |\{i \in [n] \; | \; f_i^t \neq 0 \text{ for some } t \in [m]\}|$. So by definition for any time $t \in [m]$ we have $F_0^t \leq F_0 = \|I + D\|_0 \leq \alpha \|f\|_0 = \alpha L_0$ by the $\alpha$-property, and also by definition $F_0^t \geq L_0^t$ at all times $t$. These two facts together imply that $[F_0^t, 8F_0^t] \subseteq [L_0^t, 8\alpha L_0]$.

\begin{lemma}[\cite{kane2010optimal}]
	\label{lem:RoughF0EST}
	There is an algorithm, \texttt{RoughF0Est}, that with probability $1 - \delta$ outputs non decreasing estimates $\tilde{F_0}^t$ such that $\tilde{F_0}^t \in [F_0^t, 8F_0^t]$ for all $t \in [m]$ such that $F_0^t \geq \max\{8 ,\log(n)/ \log\log(n)\}$, where $m$ is the length of the stream. The spacedrequired is $O(\log(n) \log(\frac{1}{\delta}))$-bits.
\end{lemma}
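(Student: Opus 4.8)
The plan is to reduce to a running distinct-elements estimate and then invoke the subsampling estimator of \cite{kane2010optimal}, amplified by independent repetition. First observe that $F_0^t = |\{i_1,\dots,i_t\}|$ is exactly the number of distinct coordinates touched by the first $t$ updates; it does not depend on the signs $\Delta_\tau$, and it is non-decreasing in $t$. So it suffices to exhibit a single data structure using $O(\log n)$ bits that, at every time $t$ with $F_0^t \ge \max\{8,\log n/\log\log n\}$, outputs a non-decreasing value $\tilde F_0^t \in [F_0^t, 8F_0^t]$ with probability at least (say) $7/8$ over the whole stream; running $\Theta(\log(1/\delta))$ independent copies and returning, e.g., the median of their outputs boosts the success probability to $1-\delta$, and since each copy occupies $O(\log n)$ bits the total is $O(\log n \log(1/\delta))$.

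For the single structure I would subsample by least significant bit. Fix a pairwise independent $h \in \mathcal{H}_2([n],[n])$ and say item $i$ \emph{survives to level} $j$ if the $j$ lowest bits of $h(i)$ vanish, i.e.\ $\text{lsb}(h(i)) \ge j$; this happens with probability $2^{-j}$ and is nested in $j$. For each level $j \in \{0,\dots,\log n\}$ keep a secondary (higher-independence) hash of the surviving items into $C = \Theta(1)$ buckets together with a $C$-bit occupancy array, so the per-level occupied-bucket count $X_j^t$ is non-decreasing and, crucially, insensitive to a few stray high-surviving items. On a query at time $t$, let $j^\star$ be the largest level with $X_{j^\star}^t \ge C/2$ and output $\tilde F_0^t = c\cdot 2^{\,j^\star}$ for an absolute constant $c$, chosen via $C(1-(1-1/C)^A)\approx C/2 \Leftrightarrow A \approx C\ln 2$ so that $j^\star$ sits at the level where $F_0^t/2^{j^\star} \approx C\ln 2$.

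The analysis is the usual two-sided second-moment argument. At level $j$ the number of distinct survivors has mean $F_0^t/2^j$ and, by pairwise independence, variance at most its mean; a Markov bound shows that for $2^j$ a large constant multiple of $F_0^t$ the occupancy $X_j^t$ stays well below $C/2$, while a Chebyshev bound shows that for $2^j$ a small enough constant fraction of $F_0^t$ the occupancy exceeds $C/2$. The hypothesis $F_0^t \ge \max\{8,\log n/\log\log n\}$ is exactly what makes these concentration statements strong enough to survive the union bound below; for smaller $F_0^t$ no guarantee is claimed. Together these pin $j^\star$ within an additive $O(1)$ of $\log F_0^t$, so a single $c$ yields $\tilde F_0^t \in [F_0^t, 8F_0^t]$. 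To pass from a per-$t$ statement to ``all $t \in [m]$'' I would use monotonicity of $F_0^t$ and of each $X_j^t$: it suffices to control the $O(\log n)$ events that $j^\star$ first becomes too small or too large as $F_0^t$ crosses consecutive powers of two, and the upper-tail event ``$X_j^t \ge C/2$ for some level that is too high'' can be bounded once and for all by its value at the end of the stream.

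The main obstacle is precisely this combination: making the all-$t$ union bound succeed with constant probability while keeping the space at $O(\log n)$ \emph{and} the multiplicative error at $8$ rather than merely ``some constant.'' This forces one to take the bucket count $C$, the occupancy threshold, and the constants in both tails of the second-moment bound (and the independence of the secondary hash) to be suitably chosen, and it is the source of the $\log n/\log\log n$ cutoff; since this is exactly the \texttt{RoughEstimator}/\texttt{RoughF0Est} construction of \cite{kane2010optimal}, I would carry out the constant-level bookkeeping by direct appeal to their analysis rather than re-deriving it here.
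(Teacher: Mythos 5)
The paper does not actually prove this lemma: it is imported verbatim from \cite{kane2010optimal} (their \texttt{RoughEstimator}), so your decision to ultimately defer the bookkeeping to that paper is consistent with what the paper itself does, and your overall plan (lsb-subsampling, a monotone per-level occupancy statistic, largest level above a threshold, monotonicity to reduce ``all $t$'' to the $O(\log n)$ times at which $F_0^t$ doubles, median of $\Theta(\log(1/\delta))$ independent copies) is the right skeleton.

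However, the specific data structure you describe has a genuine gap, and the KNW analysis you appeal to does not apply to it as stated. With $C=\Theta(1)$ buckets per level and a pairwise independent secondary hash, Chebyshev/Markov give only \emph{constant} failure probability at each doubling scale of $F_0^t$; a union bound over the $\Theta(\log n)$ scales then gives failure probability $\Theta(\log n)$ times a constant, so no choice of the absolute constants $C$, the occupancy threshold, or the tail constants can yield per-copy success $7/8$ simultaneously for all $t$. To make the union bound work one needs roughly $K=\Theta(\log n/\log\log n)$ cells (this is exactly where the cutoff $F_0^t\ge \log n/\log\log n$ in the statement comes from), but then keeping a $K$-bit occupancy array at every one of the $\log n$ levels costs $\Theta(\log^2 n/\log\log n)$ bits, not $O(\log n)$. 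The construction in \cite{kane2010optimal} avoids this by not storing per-level arrays at all: it keeps, for each of the $K$ cells, the \emph{maximum} lsb-level of any item hashed to that cell ($K\cdot O(\log\log n)=O(\log n)$ bits), recovers the level-$j$ occupancy as the number of cells whose stored level is at least $j$, and takes a median over $O(1)$ independent repetitions to push the per-scale failure probability low enough for the union bound over scales. So the missing idea is this per-cell max-level representation (together with $K=\Theta(\log n/\log\log n)$ and the constant-fold median); as written, your structure cannot be completed to the claimed ``for all $t$'' guarantee within $O(\log n\log(1/\delta))$ bits.
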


\begin{corollary}
	\label{cor:alphaEst}
	There is an algorithm, \texttt{$\alpha$StreamRoughL0\ab Est}, that  with probability $1 - \delta$ on an $\alpha$-deletion stream outputs non-decreasing estimates $\tilde{L_0}^t$ such that $\tilde{L_0}^t \in [L_0^t, 8\alpha L_0]$ for all $t \in [m]$ such that $F_0^t \geq \max\{8,\log(n)/ \log\log(n)\}$, where $m$ is the length of the stream. The space required is $O(\log(n) \log(\frac{1}{\delta}))$ bits.
\end{corollary}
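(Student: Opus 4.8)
The plan is to simply invoke \texttt{RoughF0Est} of Lemma \ref{lem:RoughF0EST} verbatim and declare its output to be the estimate: set $\tilde{L_0}^t := \tilde{F_0}^t$ for every $t \in [m]$. No new data structure is needed, so the space bound of $O(\log(n)\log(1/\delta))$ and the success probability $1-\delta$ are inherited immediately, as is the fact that the sequence of estimates is non-decreasing in $t$.

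The only thing to verify is the containment $\tilde{L_0}^t \in [L_0^t, 8\alpha L_0]$, and this is exactly the interval inclusion $[F_0^t, 8F_0^t] \subseteq [L_0^t, 8\alpha L_0]$ already noted in the text. In detail: conditioning on the (probability $1-\delta$) event that Lemma \ref{lem:RoughF0EST} succeeds, we have $\tilde{F_0}^t \in [F_0^t, 8F_0^t]$ for all $t$ with $F_0^t \geq \max\{8, \log(n)/\log\log(n)\}$. For the lower endpoint, $\tilde{L_0}^t = \tilde{F_0}^t \geq F_0^t \geq L_0^t$ since every coordinate that is nonzero at time $t$ was touched by time $t$. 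For the upper endpoint, use that $F_0^t$ is non-decreasing in $t$, so $F_0^t \le F_0^m = F_0 = \|I+D\|_0$; then the $L_0$ $\alpha$-property (Definition \ref{def:alphaprop} with $p=0$) gives $\|I+D\|_0 \le \alpha\|f\|_0 = \alpha L_0$, hence $\tilde{L_0}^t = \tilde{F_0}^t \le 8F_0^t \le 8\alpha L_0$. This completes the argument.

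There is essentially no real obstacle here; the corollary is a one-line consequence of Lemma \ref{lem:RoughF0EST} together with the defining inequality of the $L_0$ $\alpha$-property. The only point requiring a modicum of care is that the upper bound $8\alpha L_0$ is stated in terms of the \emph{final} value $L_0 = L_0^m$ rather than the running value $L_0^t$ — which is precisely why we need the $\alpha$-property (to pass from $F_0$ back down to $L_0$) rather than merely $F_0^t \ge L_0^t$. The condition $F_0^t \ge \max\{8, \log(n)/\log\log(n)\}$ under which the guarantee holds is carried over unchanged from Lemma \ref{lem:RoughF0EST}, which is why it appears identically in the statement of the corollary.
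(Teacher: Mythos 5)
Your proposal is correct and matches the paper's own argument exactly: the paper also just outputs the estimates of \texttt{RoughF0Est} unchanged, using $F_0^t \geq L_0^t$ for the lower endpoint and $F_0^t \leq F_0 = \|I+D\|_0 \leq \alpha\|f\|_0 = \alpha L_0$ (the $L_0$ $\alpha$-property) for the upper endpoint, so that $[F_0^t, 8F_0^t] \subseteq [L_0^t, 8\alpha L_0]$, with the space, success probability, monotonicity, and the condition on $F_0^t$ inherited verbatim from Lemma \ref{lem:RoughF0EST}.
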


Note that the approximation is only promised for $t$ such that $F_0^t \geq \max\{8, \log(n)/ \log\log(n)\}$.
To handle this, we give a subroutine which produces the $L_0$ exactly for $F_0 <  8 \log(n)/ \log\log(n)$ using $O(\log(n))$ bits. Our main algorithm will assume that $F_0 > 8 \log(n)/\log\log(n)$, and initialize its estimate of $L_0^0$ to be $\overline{L_0^0} = 8\log(n)/\log\log\ab(n)$, where $\tilde{L_0^t} \in [L_0^t.8\alpha L_0]$ is the estimate produced by \texttt{$\ab\alpha$StreamRoughL0\ab Est}. 
\begin{lemma}
	\label{lem:smallF0}
	Given $c  \geq 1$, there is an algorithm that with probability $49/50$ returns the $L_0$ exactly if $F_0 \leq c$, and returns LARGE if $F_0 > c$. The space required is $O(c\log(c) +c \log\log(n) + \log(n))$ bits
\end{lemma}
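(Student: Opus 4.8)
The plan is to shrink the universe to polylogarithmic size with a random prime and then keep an \emph{exact} dictionary of the surviving items. First I would pick a random prime $p$ from $[D,D^3]$ with $D=\Theta(c^2\log n)$ and a random prime $q$ from $[p,p^3]$; both satisfy $\log p,\log q=O(\log c+\log\log n)$. The algorithm maintains a list $\mathcal{L}$ of pairs $(j,c_j)$ with $j\in[p]$ and $c_j\in\mathbb{Z}_q$, initially empty. On an update $(i_t,\Delta_t)$ it computes $j=i_t\bmod p$ with the routine of Lemma~\ref{lem:easymod} (which costs only $O(\log p+\log\log n)$ working space), and then: if $j$ already occurs in $\mathcal{L}$ it sets $c_j\leftarrow c_j+\Delta_t\bmod q$; otherwise, if $|\mathcal{L}|<c$ it appends $(j,\Delta_t\bmod q)$, and if $|\mathcal{L}|=c$ it halts and outputs LARGE. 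At the end it outputs $|\{(j,c_j)\in\mathcal{L}:c_j\neq 0\}|$.

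Write $\pi(i)=i\bmod p$, let $S$ be the set of items updated in the stream (so $|S|=F_0$), and set $f^\pi_j=\sum_{i\,:\,\pi(i)=j}f_i$, so that at the end $c_j=f^\pi_j\bmod q$. Correctness follows from three events, each arranged to fail with probability $o(1)$. \textbf{(E1)} If $F_0\le c$ then $\pi$ is injective on $S$: a fixed pair $i\neq i'$ collides only when $p\mid |i-i'|$, which holds for at most $\log n/\log D$ of the $\Omega(D^3/\log D)$ primes in the range, so $\pr{\pi(i)=\pi(i')}=O(\log n\log D/D^3)$ and a union bound over $\binom{c}{2}$ pairs is negligible once $D=\Theta(c^2\log n)$. \textbf{(E2)} If $F_0>c$ then $\{\pi(i):i\in S\}$ has more than $c$ elements: for $F_0$ not too large this is again ``no collisions at all'', by the same bound; for $F_0$ large it is a standard balls-in-bins occupancy concentration (the expected number of occupied residues is $\Omega(\min(F_0,p))=\omega(c)$ and is concentrated, using that a random-prime hash is close to pairwise independent and that occupancy counts are negatively associated). \textbf{(E3)} $q\nmid f^\pi_j$ for every $j$ with $f^\pi_j\neq 0$: each such value has magnitude at most $mM=\poly(n)$, hence $O(\log n)$ prime factors, and there are at most $c$ such $j$ against $\Omega(p^3/\log p)$ candidate primes, so this too is negligible. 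Tuning constants, all three hold together with probability $\ge 49/50$.

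Given E1--E3 the proof is short. If $F_0>c$, by E2 the residue set grows past $c$, so at some update we reach $|\mathcal{L}|=c$ and output LARGE. If $F_0\le c$, by E1 the entries of $\mathcal{L}$ are exactly $\{\pi(i):i\in S\}$ --- $F_0\le c$ distinct values, each with a unique preimage in $S$ --- so we never halt, and at the end $c_j=f_{\pi^{-1}(j)}\bmod q$, which by E3 is nonzero precisely when $f_{\pi^{-1}(j)}\neq 0$. Since every coordinate outside $S$ has frequency $0$, $|\{j:c_j\neq 0\}|=|\{i\in S:f_i\neq 0\}|=L_0$. For space: $\mathcal{L}$ stores at most $c$ pairs of $\log p+\log q=O(\log c+\log\log n)$ bits, storing $p,q$ costs $O(\log c+\log\log n)$, and holding and reducing an update costs $O(\log n)$, for a total of $O(c\log c+c\log\log n+\log n)$ bits.

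I expect the main obstacle to be event E2 --- making ``$F_0>c\Rightarrow$ more than $c$ occupied residues'' hold \emph{uniformly} in $F_0$. A polylog-size modulus cannot prevent a polynomially large $F_0$ from colliding heavily, so one has to split into the regime where $F_0$ barely exceeds $c$ (no collisions, handled exactly like E1 and forcing $p=\Omega(c^2\log n)$) and the regime where $F_0$ is large (occupancy is still $\omega(c)$ and sharply concentrated). The remaining pieces --- exact recovery of a $\le c$-sparse vector by a dictionary once the universe is small, and the non-divisibility argument E3 --- are routine.
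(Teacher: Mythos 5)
Your construction is essentially the paper's algorithm with one substitution: where the paper hashes identities with a pairwise-independent $h\in\mathcal{H}_2([n],[C])$, $C=\Theta(c^2)$ (costing $O(\log n)$ bits to store, which is where the $\log n$ term in the space bound comes from), you hash by reducing modulo a random prime $p$ and invoke Lemma \ref{lem:easymod}, so your hash description costs only $O(\log c+\log\log n)$ bits and the $\log n$ term is charged to working space. Both versions keep at most $c$ (hashed id, counter mod a random prime) pairs, declare LARGE when a $(c+1)$-st distinct hashed identity appears, and count nonzero counters otherwise, using the same "few prime factors vs.\ many candidate primes" argument to rule out a counter vanishing mod the prime. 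So the route is genuinely the same in structure, with a mildly different (and perfectly legitimate) choice of universe-reduction map; your version arguably stores a smaller sketch at the price of a less standard hash.

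The one place where your write-up is weaker than it needs to be is exactly the spot you flag, E2. The claims that mod-a-random-prime is ``close to pairwise independent'' and that its occupancy counts are negatively associated are not justified and would take real work. But the whole large-$F_0$ branch is unnecessary: to force LARGE you only need \emph{more than $c$} occupied residues, not $\omega(c)$, so it suffices that $\pi$ be injective on any fixed set of $c+1$ elements of $S$ --- say the first $c+1$ distinct identities appearing in the stream. That is a union bound over $\binom{c+1}{2}$ pairs with per-pair collision probability $O(\log n/D^3)$, i.e.\ exactly your E1 computation, and it covers all $F_0>c$ uniformly. This is also precisely how the paper handles the $F_0>c$ case (perfect hashing of the first $c+1$ distinct items under $h$). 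With that substitution your proof is complete; the remaining pieces (E1, E3, the space accounting) are correct as written.
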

\begin{proof}
	The algorithm chooses a random hash function $h \in \mathcal{H}_2([n],[C])$ for some $C = \Theta(c^2)$. Every time an update $(i_t,\Delta_t)$ arrives, the algorithm hashes the identity $i_t$ and keeps a counter, initialized to $0$, for each identity $h(i_t)$ seen. The counter for $h(i_t)$ is incremented by all updates $\Delta_\tau$ such that $h(i_\tau) = h(i_t)$. Furthermore, all counters are stored mod $p$, where $p$ is a random prime picked in the interval $[P,P^3]$ for $P = 100^2c\log(mM)$. Finally, if at any time the algorithm has more than $c$ counters stored, it returns LARGE. Otherwise, the algorithm reports the number of non-zero counters at the end of the stream.
	
	To prove correctness, first note that at most $F_0$ items will ever be seen in the stream by definition. Suppose $F_0 \leq c$. By the pairwise independence of $h$, and scaling $C$ by a sufficiently large constant factor, with probability $99/100$ none of the $F_0 \leq c$ identities will be hashed to the same bucket. Condition on this now. Let $I \subset [n]$ be the set of non-zero indices of $f$. Our algorithm will correctly report $|I|$ if $p$ does not divide $f_i$ for any $i \in I$. Now for $mM$ larger then some constant, by standard results on the density of primes there are at least $100c^2\log^2(mM)$ primes in the interval $[P,P^3]$. Since each $f_i$ has magnitude at most $mM$, and thus at most $\log(mM)$ prime factors, it follows that $p$ does not divide $f_i$ with probability $1-1/(100c^2)$. Since $|I| = L_0 < F_0 \leq c$, union bounding over all $i \in I$, it follows that $p \nmid f_i$ for all $i \in I$ with probability $99/100$. Thus our algorithm succeeds with probability $1 - (1/100 + 1/100) >49/50$. 
	
	If $F_0 > c$ then conditioned on no collisions for the first $c+1$ distinct items seen in the stream, which again occurs with probability $99/100$ for sufficiently large $C = \Theta(c^2)$, the algorithm will necessarily see $c+1$ distinct hashed identities once the $(c+1)$-st item arrives, and correctly return LARGE.
	
	For the space, each hashed identity requires $O(\log(c))$ bits to store, and each counter requires $O(\log(P)) = \log(c\log(n))$ bits to store.  There are at most $c$ pairs of identities and counters, and the hash function $h$ can be stored using $O(\log(n))$ bits, giving the stated bound.	
\end{proof}

Finally, to remove the $\log(n)\log\log(n)$ memory overhead of running the \ttx{RoughL0Estimator} procedure to determine the row $i^*$, we show that the exact same $O(1)$ approximation of the final $L_0$ can be obtained using $O(\log(\alpha\log(n))\log(\ab \log(n)) + \log(n))$ bits of space for $\alpha$-property streams. We defer the proof of the following Lemma to Section \ref{app:roughL0est}

\begin{lemma}
	\label{constL0Est}
	Given a fixed constant $\delta$, there is an algorithm, \texttt{$\alpha$StreamConstL0Est} that with probability $1-\delta$ when run on an $\alpha$ property stream outputs a value $R = \hat{L}_0$ satisfying $L_0 \leq R \leq 100 L_0$, using space $O(\log(\alpha)\log\allowbreak\log(n) + \log(n))$.
\end{lemma}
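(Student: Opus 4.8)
The plan is to run a \emph{windowed} version of the \texttt{RoughL0Estimator} of Lemma~\ref{lem:RoughL0EST}: instead of storing all of its $\log n$ subsampling levels, we use a cheap running estimate of $F_0^t$ to keep only the $O(\log\alpha)$ levels that can possibly matter. Concretely, (i) run \texttt{$\alpha$StreamRoughL0Est} of Corollary~\ref{cor:alphaEst} with constant $\delta$, which costs $O(\log n)$ bits and maintains non-decreasing $\tilde L_0^t\in[L_0^t,8\alpha L_0]$; being built on \texttt{RoughF0Est}, it in fact satisfies $\tilde L_0^t\ge F_0^t$. Set $\overline{L_0^t}=\max\{\tilde L_0^t,\,8\log n/\log\log n\}$. (ii) In parallel run the subroutine of Lemma~\ref{lem:smallF0} with $c=\Theta(\log n/\log\log n)$ using $O(\log n)$ bits: if $F_0<c$ it returns $L_0$ exactly, which we output (this is the regime where Corollary~\ref{cor:alphaEst} carries no guarantee). (iii) Otherwise run \texttt{RoughL0Estimator} exactly as in \cite{kane2010optimal}, but allocate and update only the rows $i$ with $\log\overline{L_0^t}-w\le i\le\log\overline{L_0^t}+c_0$, where $w=\log(8\alpha)+c_0$ and $c_0=O(\log\log\alpha)=O(\log\log n)$ is a padding parameter; the number of stored levels is $O(\log\alpha+\log\log n)$. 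At the end, read off \texttt{RoughL0Estimator}'s estimate $R_0$ from the levels it would normally inspect, and output $R$ obtained from $R_0$ by a fixed constant rescaling together with $O(1)$-fold parallel repetition.

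\textbf{The window always contains the relevant level.} By the $L_0$ $\alpha$-property, $L_0\le F_0\le\alpha L_0$, so at the end $\tilde L_0^m\in[L_0,8\alpha L_0]$ and hence $\log L_0\in[\log\tilde L_0^m-\log(8\alpha),\log\tilde L_0^m]$. Inspecting the proof of Lemma~\ref{lem:RoughL0EST} in \cite{kane2010optimal}, the returned estimate depends only on levels $i$ with $|i-\log L_0|=O(1)$; taking $c_0$ larger than this $O(1)$ slack, the level $j^*$ that \texttt{RoughL0Estimator} reads lies in the final window. Since $\tilde L_0^t$ is non-decreasing and bounded above by $8\alpha L_0$, once $j^*$ enters the window at some time $t_0$ it never leaves before the end, and at $t_0$ we have $F_0^{t_0}\le\tilde L_0^{t_0}\le 2^{\,j^*-c_0+O(1)}\le L_0\cdot 2^{O(1)-c_0}$, which is at most $L_0/8$ for $c_0$ a large enough constant. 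The same estimate, with $j^*$ replaced by $i$, shows that every stored level $i\le j^*$ entered the window when $F_0$ was at most $L_0/8$, while a stored level $i=j^*+k$ entered when $F_0$ was at most $L_0\cdot 2^{\,k-c_0+O(1)}$.

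\textbf{Correctness.} The windowed algorithm's output is exactly what \texttt{RoughL0Estimator} produces on an input where, for each stored level $i$, that level's structure reflects only the suffix $\hat f^{(i)}$ of updates arriving after level $i$ was allocated. Discarding a prefix changes the support only on coordinates touched in the prefix, so $|L_0(\hat f^{(i)})-L_0|\le F_0^{(\text{entry time of }i)}$. For $i$ near $j^*$ this gives $L_0(\hat f^{(i)})=(1\pm 1/8)L_0$, so at those levels the test behaves as it would on $f$ itself; for a deeper level $i=j^*+k$ the universe size $L_0(\hat f^{(i)})\le L_0\cdot 2^{\,k-c_0+O(1)}$ is subsampled at rate $2^{-i}$, so the expected number of survivors is $O(2^{-c_0})$, and choosing $c_0=\Theta(\log\log\alpha)$ makes this $o(1/(\#\text{window levels}))$; a union bound over the $O(\log\alpha)$ window levels then shows that with probability $1-o(1)$ no level deeper than $j^*+O(1)$ passes the non-triviality test. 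Hence the windowed structure behaves like \texttt{RoughL0Estimator} on a stream of $L_0$-value $\Theta(L_0)$, so its raw estimate satisfies $R_0\in[c_1L_0,c_2L_0]$ for absolute constants $0<c_1\le 1\le c_2$; rescaling by $c_1^{-1}$ and boosting via a constant number of independent repetitions (and, if $c_2/c_1>100$, composing with a single application of the windowed $(1\pm 1/10)$-accurate \texttt{L0Estimator} of Theorem~\ref{thm:theirL0est} at the single level $\log(16R_0/K)$ with $K=100$, which lies in the window) yields $R\in[L_0,100L_0]$ with probability $1-\delta$.

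\textbf{Space and the main obstacle.} \texttt{$\alpha$StreamRoughL0Est}, the small-$F_0$ subroutine of Lemma~\ref{lem:smallF0}, and \texttt{RoughL0Estimator}'s shared hash functions and prime together cost $O(\log n)$ bits; the windowed structure stores $O(\log\alpha+\log\log n)$ levels, each $O(\log\log n)$ bits as in the proof of Lemma~\ref{lem:RoughL0EST}, for a total of $O(\log\alpha\log\log n+\log n)$ (using $\log^2\log n=O(\log n)$). The step I expect to be the real work is the correctness argument above: one must verify that restricting \texttt{RoughL0Estimator} to a sliding window of levels does not corrupt its output, which rests on the elementary inequality $|L_0(\text{suffix})-L_0|\le F_0(\text{prefix})$ together with a careful choice of how far above $\log\overline{L_0^t}$ the window extends, so that the (constantly many, but slightly more than $\log(8\alpha)$) deeper levels in the window yield no spurious "non-empty" readings even after a union bound over all $O(\log\alpha)$ stored levels.
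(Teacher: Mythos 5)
Your overall strategy is the same as the paper's: run \texttt{$\alpha$StreamRoughL0Est} to get a non-decreasing estimate $\overline{L_0^t}$, keep only an $O(\log\alpha)$-wide window of \texttt{RoughL0Estimator} levels around $\log\overline{L_0^t}$ (handling the regime $F_0 \lesssim \log n/\log\log n$ exactly via Lemma~\ref{lem:smallF0}), and argue that each retained level was initialized at a time when the prefix support was a negligible fraction of the final $L_0$, so the suffix it sees has $L_0$-value $(1\pm o(1))L_0$ and the original analysis of \cite{kane2010optimal} applies. The asymmetric window and your final constant-fixing step are cosmetic differences from the paper's symmetric window $\log(\overline{L_0^t}) \pm 2\log(\alpha/\eps)$.

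However, your probabilistic accounting for levels above $j^*$ has a genuine error. For a stored level $i=j^*+k$ you claim the suffix support size is at most $L_0\cdot 2^{k-c_0+O(1)}$, hence expected survivors $O(2^{-c_0})$, and you then union bound with per-level failure probability $o(1/\log\alpha)$. But the truncation inequality only gives $|L_0(\hat f^{(i)})-L_0|\le F_0^{(\text{entry})}\le 2^{i-c_0+O(1)}$, i.e.\ $L_0(\hat f^{(i)})\le L_0\bigl(1+2^{k-c_0+O(1)}\bigr)$ --- the suffix support is close to $L_0$, not small. Hence the expected number of survivors at level $j^*+k$ is $\Theta(2^{-k})+O(2^{-c_0})$, and for the levels with $1\le k\le c_0$ the only bounds available (the subsampling hash is merely pairwise independent, so one has Markov/Chebyshev, not Chernoff) give spurious-pass probability $\Theta(2^{-k})$, a constant for small $k$, not $o(1/\log\alpha)$. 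So the step "with probability $1-o(1)$ no level deeper than $j^*+O(1)$ passes" is not established and, as stated, is false. The repair is exactly the paper's (and Kane et al.'s) argument: apply Markov at each stored level above $j^*$ and sum the geometric series over the whole window, yielding a constant (e.g.\ at most $(1+\eps)/4$) probability of any spurious pass; combined with the Chebyshev lower bound at $j^{**}$ this gives constant success probability, which is then amplified to $1-\delta$ by taking the \emph{median} of $O(\log(1/\delta))=O(1)$ independent copies. (Relatedly, your "constant number of independent repetitions" should be a median, and the auxiliary composition with a single-level \texttt{L0Estimator} is unnecessary: the constants $[L_0,100L_0]$ can be read off directly from $2^{j^{**}}\ge(1-\eps)L_0/200$ and $2^{j^*}\le(1+\eps)L_0/2$, as the paper does.)
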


\begin{theorem}
	\label{thm:ours}
	There is an algorithm that gives a $(1\pm \epsilon)$ approximation of the $L_0$ value of a general turnstile stream with the $\alpha$-property, using space $O(\frac{1}{\epsilon^2}\log(\frac{\alpha}{\epsilon}) \allowbreak (\log\ab(\frac{1}{\epsilon}) +\log\log(n)) +\log(n))$, with $2/3$ success probability.
\end{theorem}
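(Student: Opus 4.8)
The plan is to reduce to the ``large $L_0$, large $F_0$'' regime and then invoke \texttt{$\alpha$L0Estimator} of Figure~\ref{fig:L0Est}, whose analysis I would reduce to Theorem~\ref{thm:theirL0est}. First I would run the subroutine of Lemma~\ref{lem:smallL0} in parallel: with constant probability it either outputs a $(1\pm\eps)$-approximation of $L_0$ (done, within the claimed space) or certifies $L_0 > \eps^{-2}/16 = K/16$. Simultaneously run the subroutine of Lemma~\ref{lem:smallF0} with threshold $c = \Theta(\log(n)/\log\log(n))$, which either reports $L_0$ exactly or certifies $F_0 > 8\log(n)/\log\log(n)$, so that the guarantees of \texttt{$\alpha$StreamRoughL0Est} (Corollary~\ref{cor:alphaEst}) and \texttt{$\alpha$StreamConstL0Est} (Lemma~\ref{constL0Est}) apply; this costs only $O(\log n)$ bits. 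Hence we may assume $L_0 > K/16 > K/32$ and $F_0 > 8\log(n)/\log\log(n)$, which is exactly the setting of Figure~\ref{fig:L0Est}.

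The heart of the argument is that the output of \texttt{$\alpha$L0Estimator} equals the output of \texttt{L0Estimator}, run with the same hash functions and prime on a suffix $\hat f$ of the stream, and that $\|\hat f\|_0 = (1\pm\eps)L_0$. Since \texttt{L0Estimator}'s reported value depends only on row $i^* = \log(16R/K)$ of $B$ (with $R \in [L_0,100L_0]$ from \texttt{$\alpha$StreamConstL0Est}, and $i^*\ge 0$ because $L_0 > K/16$), it suffices to show row $i^*$ is created and continuously maintained from some time $t_0$ with $L_0^{t_0-1} \le \eps L_0$. Write $i^t = \log(16\overline{L_0^t}/K)$; this is non-decreasing and by Corollary~\ref{cor:alphaEst} satisfies $L_0^t \le \overline{L_0^t}\le 8\alpha L_0$, so $i^t \in [\log(16 L_0^t/K),\log(128\alpha L_0/K)]$. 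At the end, $i^m \le \log(128\alpha L_0/K) \le i^* + \log(8\alpha) \le i^* + 2\log(4\alpha/\eps)$, and since the $i^t$ are non-decreasing this gives $i^t \le i^* + 2\log(4\alpha/\eps)$ for all $t$, i.e. row $i^*$ never falls out the bottom of the sampling window; also $i^m \ge \log(16 L_0/K) \ge i^* - \log(100) \ge i^* - 2\log(4\alpha/\eps)$, so row $i^*$ lies in the window at time $m$ (for a suitable window-width constant). Let $t_0$ be the first time that $i^{t_0} + 2\log(4\alpha/\eps) \ge i^*$, i.e. when row $i^*$ first appears. For $t < t_0$, $\log(16\overline{L_0^t}/K) = i^t < i^* - 2\log(4\alpha/\eps) \le \log(1600 L_0/K) - \log(16\alpha^2/\eps^2)$, which rearranges to $L_0^t \le \overline{L_0^t} = O(\eps^2 L_0/\alpha^2) = O(\eps L_0)$; after rescaling $\eps$, $L_0^{t_0-1} \le \eps L_0$.

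To conclude: since all updates to items hashed into row $i^*$ before $t_0$ are discarded, the final contents of that row are exactly those produced by \texttt{L0Estimator}'s update rule on $\hat f := f^m - f^{t_0-1}$, and $|\|f^m\|_0 - \|\hat f\|_0| \le \|f^{t_0-1}\|_0 = L_0^{t_0-1} \le \eps L_0$, so $\|\hat f\|_0 = (1\pm\eps)L_0 > K/32$. Moreover $R \in [L_0,100L_0] \subseteq [(1-\eps)\|\hat f\|_0, 110\|\hat f\|_0]$, so the proof of Theorem~\ref{thm:theirL0est} (whose only use of the rough estimate is the event $R \in [\|\hat f\|_0,110\|\hat f\|_0]$, which it tolerates up to an extra $O(\eps)$ error, with Lemma~\ref{lem:cheby}'s regime $100\le A\le K/20$ still met after adjusting constants) applies with $\hat f$ in place of $f$, giving $\tfrac{32R}{K}\tfrac{\ln(1-T/K)}{\ln(1-1/K)} = (1\pm\eps)\|\hat f\|_0 = (1\pm O(\eps))L_0$. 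For space: at any time only $O(\log(\alpha/\eps))$ rows of $B$ are stored, each with $K=\eps^{-2}$ counters of $O(\log\log n + \log(1/\eps))$ bits (by the choice of prime in Lemma~\ref{lem:recoverA}, using $\log(mM)=O(\log n)$); the hash functions and $\vec u$ fit in $O(\log n + \eps^{-2}(\log(1/\eps)+\log\log n))$ bits; the three auxiliary estimators and the two reduction subroutines use $O(\log n + \log(\alpha)\log\log n + \eps^{-2}(\log(1/\eps)+\log\log n))$ bits; total $O(\eps^{-2}\log(\alpha/\eps)(\log(1/\eps)+\log\log n) + \log n)$. Finally, each of the $O(1)$ events above fails with an independently tunable small constant probability, so a union bound and the median of $O(1)$ independent repetitions give success probability $2/3$. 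The main obstacle is the second paragraph: pinning down the window arithmetic so that row $i^*$ is provably live throughout a suffix whose discarded prefix has $L_0 \le \eps L_0$, and reconciling the whole-stream rough estimate $R$ with the suffix on which \texttt{L0Estimator} is effectively being run.
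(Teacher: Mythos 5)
Your proposal is correct and follows essentially the same route as the paper's proof: dispose of small $L_0$ and small $F_0$ via Lemmas \ref{lem:smallL0} and \ref{lem:smallF0}, use the non-decreasing estimates of Corollary \ref{cor:alphaEst} and the constant-factor estimate of Lemma \ref{constL0Est} to show row $i^*$ is initialized at a time $t_0$ with $L_0^{t_0-1} = O(\eps)L_0$ and never deleted, and then invoke the analysis of Theorem \ref{thm:theirL0est} on the suffix, with the same space accounting. Your window arithmetic and the explicit reconciliation of the whole-stream estimate $R$ with the suffix's $L_0$ (absorbing the $(1\pm\eps)$ slack into the constants of Lemma \ref{lem:cheby}) are just slightly more careful versions of steps the paper treats more briefly.
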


\begin{proof}
	The case of $L_0 < K/32$ can be handled by Lemma \ref{lem:smallL0} with probability $49/50$, thus we can assume $L_0 \geq K/32$. Now if $F_0 < 8\log(n)/\log\log(n)$, we can use Lemma \ref{lem:smallF0} with $c = 8\log(n)/\log\log(n)$ to compute the $L_0$ exactly. Conditioning on the success of Lemma \ref{lem:smallF0}, which occurs with probability $49/50$, we will know whether or not $F_0 < 8\log(n)/\log\log(n)$, and can correctly output the result corresponding to the correct algorithm. So we can assume that $F_0 > 8 \log(n)/\log\ab\log(n)$. Then by the $\alpha$-property, it follows that $L_0 > 8\log(n)/(\alpha\ab \log\log(n))$.  
	
	Let $t^*$ be the time step when our algorithm initialized and began sampling to row $i^*$. Conditioned on the success of \texttt{$\alpha$StreamRoughL0Est} and \texttt{$\ab\alpha$StreamConstL0Est}, which by the union bound together occur with probability $49/50$ for constant $\delta$, we argue that $t^*$ exists
	
	First, we know that $i^* = \log(16R/K) > \log(16L_0/\ab K) > \log(16 \cdot (8\log(n)/(\log\log(n)))/K) - \log(\alpha)$ by the success of \texttt{$\ab\alpha$StreamConstL0Est}. Moreover, at the start of the algorithm we initialize all rows with indices $i = \log(16\cdot (8\log\log(n)/\ab\log(n))/K)  \pm 2\log(4\alpha/\eps)$, so if $i^* < \log(16\cdot (8\log\log(n)/\log(n))\ab/K) + 2\log\ab(4\alpha\ab/\eps)$ then we initialize $i^*$ at the very beginning (time $t^* = 0$).  	
	Next, if $i^* > \log(16\cdot (8\log\log(n)/\log(n))\ab/L) + 2\log\ab(4\alpha\ab/\eps)$, then we initialize $i^*$ at the first time $t$ when $\overline{L_0^t} \geq R(\eps /(4\alpha))^2$. We know by termination that $\tilde{L_0}^m \in [L_0, 8\alpha L_0]$ since $F_0 > 8 \log(n)/\log\ab\log(n)$ and therefore by the end of the stream \texttt{$\alpha$Stream$\ab$RoughL0\ab Est} will give its promised approximation. So our final estimate satisfies $\overline{L_0}^m \geq \tilde{L_0}^m \geq L_0 \geq R/8 > R(\eps /(4\alpha))^2$. Thus $i^*$ will always be initialized at some time $t^*$.	
	
	Now because the estimates $\tilde{L_0^t}$ are non-decreasing and we have $\tilde{L_0}^m \in [L_0, 8\alpha L_0]$, it follows that $\tilde{L_0}^t < 8\alpha L_0$ for all $t$. Then, since $\overline{L_0^t} < \max\{8 \alpha L_0, 8\log(n)/\log\ab\log(n) \} < L_0 (4\alpha/\eps)^2$, it follows that at the termination of our algorithm the row $i^*$ was not deleted, and will therefore be stored at the end of the stream. 
	
	Now, at time $t^*-1$ right before row $i^*$ was initialized we have $L_0^{t^*-1} \leq \overline{L_0^{t^*-1} } <  R (\eps /(4\alpha))^2$, and since $R < 110 L_0$ we have $L_0^{t^*-1}/ L_0 \leq O(\eps^2)$. It follows that the $L_0$ value of the stream suffix starting at the time step $t^*$ is a value $\hat{L_0}$ such that  $\hat{L_0} = (1 \pm O(\eps^2))L_0^m$. Since our algorithm produces the same output as running  \ttx{L0Estimator}  on this suffix, we obtain a $(1 \pm \eps)$ approximation of $\hat{L_0}$ by the proof of Theorem \ref{thm:theirL0est} with probability $3/4$, which in turn is a $(1 \pm \eps)^2$ approximation of the actual $L_0$, so the desired result follows after rescaling $\eps$. Thus the probability of success is $1 - (3/50+  1/4) > 2/3$.
	
	For space, note that we only ever store $O(\log(\alpha/\eps))$ rows of the matrix $B$, each with entries of value at most the prime $p \in O((K\log(n))^3)$, and thus storing all rows of the matrix requires $O(1/\eps^2 \log(\alpha/\eps)(\log(\allowbreak1/\eps) + \log\log(n)))$ bits. The space required to run \ttx{$\alpha$\allowbreak StreamConstL0Est} is an additional additive  $O(\log(\alpha\allowbreak)\log\log(n) + \log(n))$ bits. The cost of storing the hash functions $h_1,h_2,h_3,h_4$ is $O(\log(n) + \log^2(1/\eps))$ which dominates the cost of running \texttt{$\alpha$\allowbreak StreamRoughL0Est}. Along with the cost of storing the matrix, this dominates the space required to run the small $L_0$ algorithm of Lemma \ref{lem:smallL0} and the small $F_0$ algorithm of Lemma \ref{lem:smallF0} with $c$ on the order of $O(\log(n)/\log\log(n))$. Putting these together yields the stated bound.
\end{proof}

\subsection{Our Constant Factor $L_0$ Estimator for $\alpha$-Property Streams}\label{app:roughL0est}
In this section we prove Lemma \ref{constL0Est}. Our algorithm \ttx{$\alpha$StreamConstL0Est} is a modification of the  \ttx{Rough\allowbreak L0Estimator} of \cite{kane2010optimal}, which gives the same approximation for turnstile streams. Their algorithm subsamples the stream at $\log(n)$ levels, and our improvement comes from the observation that for $\alpha$-property streams we need only consider $O(\log(\alpha))$ levels at a time. Both algorithms utilize the following lemma, which states that if the $L_0$ is at most some small constant $c$, then it can be computed exactly using $O(c^2\log\log(mM))$ space. The lemma follows from picking a random prime $p = \Theta(\log(mM)\log\log(mM))$ and pairwise independently hashing the universe into $[\Theta(c^2)]$ buckets. Each bucket is a counter which contains  the sum of frequencies modulo $p$ of updates to the universe which land in that bucket. The $L_0$ estimate of the algorithm is the total number of non-zero counters. The maximum estimate is returned after $O(\log(1/\eta))$ trials. 

\begin{lemma}[Lemma 8, \cite{kane2010optimal}]
	\label{lem8KaneOptimal}
	There is an algorithm which, given the promise that $L_0 \leq c$, outputs $L_0$ exactly with probability at least $1 - \eta$ using $O(c^2 \log\log(n))$ space, in addition to needing to store $O(\log(1/\eta))$ pairwise independent hash functions mapping $[n]$ onto $[c^2]$.
\end{lemma}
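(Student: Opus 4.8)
The plan is to spell out the algorithm described just above the statement — a single random prime $p$, a single pairwise independent hash, counters held modulo $p$, the estimate being the number of non-zero counters, the whole thing repeated and the maximum taken — and then verify it with two one-line probabilistic estimates followed by the usual amplification; this re-derives the cited result of \cite{kane2010optimal} by a standard fingerprinting argument. Concretely, one trial does the following: draw a uniformly random prime $p$ from $[P,P^3]$ with $P=\Theta(c\log(mM))$, draw a pairwise independent $h:[n]\to[c^2]$, keep counters $B_1,\dots,B_{c^2}$ initialized to $0$, and on update $(i_t,\Delta_t)$ set $B_{h(i_t)}\leftarrow B_{h(i_t)}+\Delta_t \pmod p$; the trial's estimate is $|\{j:B_j\neq 0\}|$. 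We run $t=\Theta(\log(1/\eta))$ trials with independent randomness (in parallel, since the stream is one‑pass) and return the \emph{maximum} of the $t$ estimates.

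First I would record the structural facts. Since every coordinate $i\notin\text{supp}(f)$ has updates summing to $0$, the final value of $B_j$ equals $\big(\sum_{i\in h^{-1}(j)\cap\text{supp}(f)}f_i\big)\bmod p$; hence every non‑zero bucket contains at least one support coordinate, and since the sets $h^{-1}(j)\cap\text{supp}(f)$ partition $\text{supp}(f)$, each trial outputs a value in $\{0,1,\dots,L_0\}$. Moreover a trial outputs exactly $L_0$ whenever it is \emph{good}: (i) the $L_0$ support coordinates are perfectly hashed by $h$, and (ii) $p\nmid f_i$ for every $i\in\text{supp}(f)$. Taking the maximum is the right aggregation precisely because each trial can only under‑estimate, so one good trial already pins the answer; note that only the promise on $L_0$ is used here, never a bound on the number of coordinates ever touched by the stream.

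Next I would bound the per‑trial failure probability using $L_0\le c$. For (i), pairwise independence gives $\pr{\text{a collision occurs}}\le \binom{L_0}{2}/c^2\le \binom{c}{2}/c^2<1/2$. For (ii), each $f_i$ with $i\in\text{supp}(f)$ is a non‑zero integer of magnitude at most $mM$, hence has at most $\log(mM)$ distinct prime divisors; by the prime number theorem $[P,P^3]$ contains $\Omega(c^2\log^2(mM))$ primes once $mM$ exceeds an absolute constant, so $\pr{p\mid f_i}=O(1/(c^2\log(mM)))$, and a union bound over the $\le c$ support coordinates bounds the failure of (ii) by $O(1/(c\log(mM)))$, which is below $1/8$ for $mM$ past a constant. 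Thus each trial is good with probability $p_0>1/4$; running $t=C'\log(1/\eta)$ trials for a large enough constant $C'$ gives $\pr{\text{no good trial}}\le (1-p_0)^t\le \eta$, and on the complement the returned maximum equals $L_0$, which is the claimed guarantee.

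For the space accounting: each counter stores a residue modulo $p\le P^3=\poly(c\log(mM))=\poly(c\log n)$, hence $O(\log c+\log\log n)$ bits, so the $c^2$ counters use $O(c^2(\log c+\log\log n))=O(c^2\log\log n)$ bits whenever $c$ is at most polylogarithmic in $n$ (which holds in every application of this lemma), and the $\Theta(\log(1/\eta))$ parallel copies multiply this by a constant when $\eta$ is fixed; a prime costs $O(\log\log n)$ bits (folded into the same bound), and the $\Theta(\log(1/\eta))$ pairwise independent maps $[n]\to[c^2]$ are charged separately, exactly as in the statement. I expect the only genuinely non‑routine point to be calibrating $|p|$: it must be $\poly(\log n)$ so each counter fits in $O(\log\log n)$ bits, yet must be drawn from an interval containing far more than $c\log(mM)$ primes so the union bound in (ii) survives — taking $p\in[P,P^3]$ with $P=\Theta(c\log(mM))$ is precisely what reconciles these two demands, and every other step is routine hashing plus the prime number theorem.
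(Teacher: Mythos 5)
Your proof is correct and takes essentially the same route as the paper, which only cites this lemma from \cite{kane2010optimal} and sketches exactly this algorithm (pairwise-independent hashing into $\Theta(c^2)$ buckets, counters maintained modulo a random prime, counting the non-zero buckets, and taking the maximum over $O(\log(1/\eta))$ trials); your choice of the prime from $[P,P^3]$ with $P=\Theta(c\log(mM))$ mirrors the paper's own parameterization and analysis in Lemma \ref{lem:recoverA}. Your explicit caveat that each counter costs $O(\log c+\log\log n)$ bits, so the stated $O(c^2\log\log n)$ bound requires $c\le \poly(\log n)$, is a fair and slightly more careful accounting than the paper's one-paragraph sketch, and is immaterial here since the lemma is only invoked with constant $c$ and constant $\eta$.
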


We now describe the whole algorithm \ttx{RoughL0Estimator} along with our modifications to it. The algorithm is very similar to the main algorithm of Figure \ref{fig:L0Est}. First, a random hash function $h: [n] \to [n]$ is chosen from a pairwise independent family. For each $0 \leq j \leq \log(n)$, a substream $\mathcal{S}_j$ is created which consists of the indices $i \in [n]$ with $\text{lsb}(h(i)) = j$. For any $t \in [m]$, let $\mathcal{S}^{t\to}_j$ denote the substream of $\mathcal{S}$ restricted to the updates $t,t+1,\dots,m$, and similarly let $L_0^{t \to}$ denote the $L_0$ of the stream suffix $t,t+1,\dots,m$. Let $L_0(\mathcal{S})$ denote the $L_0$ of the substream $\mathcal{S}$. 

We then initialize the algorithm \texttt{Rough$\alpha$StreamL0-Estimator}, which by Corollary \ref{cor:alphaEst} gives non-decreasing estimates $\tilde{L}_0^t \in [L_0^t, 8\alpha L_0]$ at all times $t$ such that $F_0 > 8\log(n)\ab/\log\log(n)$ with probability $99/100$. Let $\overline{L_0^t} = \max\{\tilde{L_0^t},\ab 8\log(n)/\log\log(n) \}$, and let $U_t \subset [\log(n)]$ denote the set of indices $i$ such that $i =  \log(\overline{L_0^t}) \pm 2\log(\alpha/\eps)$, for some constant $\eps$ later specified.

Then at time $t$, for each $\mathcal{S}_j$ with $j \in U_t$, we run an instantiation of Lemma \ref{lem8KaneOptimal} with $c= 132$ and $\eta = 1/16$ on $\mathcal{S}_j$, and all instantiations share the same $O(\log(1/\eta))$ hash functions $h_1,\dots,h_{\Theta(\log(1/\eta))}$. If $j \in U_{t}$ but $j \notin U_{t+1}$, then we throw away all data structures related to $\mathcal{S}_j$ at time $t+1$. Similarly, if $j$ enters $U_t$ at time $t$, we initialize a new instantiation of Lemma \ref{lem8KaneOptimal} for $\mathcal{S}_j$ at time $t$.

To obtain the final $L_0$ estimate for the entire stream, the largest value $j \in U_m$ with $j < 2\overline{L_0^m}$ such that $B_j$ declares $L_0(\mathcal{S}_j) > 8$ is found. Then the $L_0$ estimate is $\hat{L}_0 = 20000/99\cdot 2^j$, and if no such $j$ exists the estimate is $\hat{L}_0 = 50$. Note that the main difference between our algorithm and \ttx{RoughL0Estimator} is that \ttx{RoughL0Estimator} sets $U_t = [\log(n)]$ for all $t \in [m]$, so our proof of Lemma \ref{constL0Est} will follow along the lines of \cite{kane2010optimal}.

\begin{proof}[Proof of Lemma \ref{constL0Est} ]
	The space required to store the hash function $h$ is $O(\log(n))$ and each of the $O(\log(1/\eta)) = O(1)$ hash functions $h_i$ takes $\log(n)$ bits to store. The remaining space to store a single $B^j$ 
	is $O(\log\log(n))$ by Lemma \ref{lem8KaneOptimal}, and thus storing all $B^j$s for $j \in U_t$ at any time $t$ requires at most $O(|U_t| \log\log(n)) = O(\log(\alpha) \log\log(n))$ bits (since $\eps = O(1)$), giving the stated bound.
	
	We now argue correctness. First, for $F_0 \leq 8\log(n)\log\ab\log(n)$, we can run the algorithm of Lemma \ref{lem:smallF0} to produce the $L_0$ exactly using less space than stated above.  So we condition on the success of this algorithm, which occurs with probability $49/50$, and assume $F_0 > 8\log(n)\ab/\log\log(n)$. This gives $L_0 > 8\log(n)/(\alpha \log\log(n))$ by the $\alpha$-property, and it follows that $\overline{L_0^m} \leq 8\alpha L_0$.
	
	Now for any $t \in [m]$, $\ex{L_0(\mathcal{S}_j^{t \to})} = L_0^{t \to}/2^{j+1}$ if $j < \log(n)$, and $\ex{L_0(\mathcal{S}_j^{t \to})} = L_0^{t \to}/n$ if $j=\log(n)$. At the end of the algorithm we have all data structures stored for $B_j$'s with $j \in U_m$. Now let $j \in U_m$ be such that $j < \log(2\overline{L_0^m})$, and observe that $B_j$ will be initialized at time $t_j$ such that $\overline{L_0^{t_j}} > 2\overline{L_0^m}(\eps/\alpha)^2$, which clearly occurs before the algorithm terminates. If $j = \log(8\log(n)/\log\log(n)) \pm 2\log(\alpha/\eps)$, then $j \in U_0$ so $t_j = 0$, and otherwise $t_j$ is such that $L_0^{t_j} \leq \overline{L_0^{t_j}} \leq (\eps/\alpha)^2 2^j \leq (\eps/\alpha)^2 \overline(L_0^m) < 8\eps^2 /\alpha L_0$. So $L_0^{t_j}/L_0 = O(\eps^2)$. This means that when $B_j$ was initialized, the value $L_0^{t_j}$ was at most an $O(\eps^2)$ fraction of the final $L_0$, from which it follows that $L_0^{t_j \to} = (1 \pm \eps)L_0$ after rescaling $\eps$. Thus the expected output of $B_j$ (if it has not been deleted by termination) for $j < \log(2L_0)$ is $\ex{L_0(\mathcal{S}_j^{t_j \to} )} = (1 \pm \eps )L_0/2^{j+1}$.

	Now let $j^*$ be the largest $j$ satisfying $\ex{L_0(\mathcal{S}_j^{t_j\to})} \geq 1$. Then $j^* <  \log(2L_0) < \log(2\overline{L_0^m})$, and observe that  $1 \leq \ex{L_0(\mathcal{S}_{j^*}^{t_{j^*}\to})}  \leq 2(1+\eps)$ (since the expectations decrease geometrically with constant $(1 \pm \eps)/2$). Then for any $ \log(2\overline{L_0^m})> j > j^*$, by Markov's inequality we have $\pr{L_0(\mathcal{S}_j^{t_j \to}) > 8} < (1+\eps)1/(8\cdot 2^{j-j^*-1})$. By the union bound, the probability that any such $j \in (j^*,\log(2\overline{L_0^m}))$ has $L_0(\mathcal{S}_j^{t_j \to}) > 8$ is at most $\frac{(1+\eps)}{8}\sum_{ j = j^*+1}^{\log(2\overline{L_0^m})} 2^{-(j-j^*-1)}\ab \leq (1+\eps)/4$. 
	Now let $j^{**} < j^*$ be the largest $j$ such that $\ex{L_0(\mathcal{S}_j^{t_j\to})} \geq 50$. Since the expectations are geometrically decreasing by a factor of $2$ (up to a factor of $1 \pm \eps$), we have 
	$100(1+\eps) \geq \ex{L_0(\mathcal{S}_{j^{**}}^{t_{j^{**}}\to})} \geq 50$, and by the pairwise independence of $h$ we have $\text{Var}[L_0(\mathcal{S}_{j^{**}}^{t_{j^{**}}\to})] \leq \ex{L_0(\mathcal{S}_{j^{**}}^{t_{j^{**}}\to})}$, so by Chebyshev's inequality we have \[\pr{\big|L_0(\mathcal{S}_{j^{**}}^{t_{j^{**}}\to}) - \ex{L_0(\mathcal{S}_{j^{**}}^{t_{j^{**}}\to})} \big| < 3 \sqrt{\ex{L_0(\mathcal{S}_{j^{**}}^{t_{j^{**}}\to})}}} \] \[> 8/9\]
	Then assuming this holds and setting $\eps = 1/100$, we have \[L_0(\mathcal{S}_{j^{**}}^{t_{j^{**}}\to}) >  50  - 3\sqrt{50} > 28 \] \[ L_0(\mathcal{S}_{j^{**}}^{t_{j^{**}}\to})< 100(1+\eps) + 3\sqrt{100(1+\eps)} < 132\]
	What we have shown is that for every $\log(2\overline{L_0^m}) > j > j^*$, with probability at least $3/4(1-\eps/3)$ we will have $L_0(\mathcal{S}_{j}^{t_{j}\to}) \leq 8$. Since we only consider returning $2^j$ for $j \in U_m$ with $j<\log(2\overline{L_0^m})$, it follows that we will not return $\hat{L}_0 = 2^j$ for any $j > j^*$. In addition, we have shown that with probability $8/9$ we will have $ 28< L_0(\mathcal{S}_{j^{**}}^{t_{j^{**}}\to}) < 132$, and by our choice of $c = 132$ and $\eta = 1/16$, it follows that $B^{j^{**}}$ will output the exact value  $L_0(\mathcal{S}_{j^{**}}^{t_{j^{**}}\to}) > 8$ with probability at least $1 - (1/9 + 1/16) > 13/16$ by Lemma \ref{lem8KaneOptimal}. Hence, noting that $\eps = 1/100$, with probability $1- (3/16 + 1/4(1+\eps)) < 14/25$, we output $2^j$ for some $j^{**} \leq j \leq j^*$ for which $j \in U_m$. Observe that since $U_m$ contains all indices $i = \log(\overline{L_0^m}) \pm 2\log(\alpha/\eps)$, and along with the fact that $L_0< \overline{L_0^m} < 8 \alpha L_0$, it follows that all $j \in [j^{**}, j^*]$ will be in $U_m$ at termination for sufficiently large $\eps \in O(1)$. 
	
	Now since $(1 + 1/100)L_0/2 > 2^{j^*} > (1 - 1/100)L_0/4$, and $(1+1/100)L_0/100 > 2^{j^{**}} >(1-1/100) L_0/200$, it follows that $(99/100)L_0/200 < 2^j < 99L_0/200$, and thus $20000/99\cdot 2^j \in [L_0, 100L_0]$ as desired. If such a $j^{**}$ does not exist then $L_0 < 50$ and $50 \in [L_0,100L_0]$. 	
	Note that because of the $\alpha$ property, unless the stream is empty ($m=0$), then we must have $L_0 \geq 1$, and our our approximation is always within the correct range. Finally, if $F_0 \leq 8\log(n)\log\log(n)$ then with probability $49/50$ Lemma \ref{lem:smallF0} produces the $L_0$ exactly, and for larger $F_0$ we output the result of the algorithm just stated. This brings the overall sucsess probability to $14/25 - 49/50 > 13 / 25$. Running $O(\log(1/\delta)) = O(1)$ copies of the algorithm and returning the median, we can amplify the probability $13/25$ to $1-\delta$.		 
\end{proof}


\section{Support Sampling}
\label{sec:suppsamp}

\begin{figure*}
	\fbox{\parbox{\textwidth}{\texttt{$\alpha$-SupportSampler}: support sampling algorithm for $\alpha$ property streams.	\\
			\textbf{Initialization:}
			\begin{enumerate}[topsep=0pt,itemsep=-1ex,partopsep=1ex,parsep=1ex] 
				\item Set $s \leftarrow 205k$, and initialize linear sketch function $J:\mathbb{R}^n \to \mathbb{R}^q$ for $q = O(s)$  via Lemma \ref{lem:invertlin}.
				\item Select random $h \in \mathcal{H}_2([n],[n])$, set $I_j = \{i \in [n] \: | \: h(i) \leq 2^j\}$, and set $\eps = 1/48$.
			\end{enumerate}
			\textbf{Processing:}
			\begin{enumerate}[topsep=0pt,itemsep=-1ex,partopsep=1ex,parsep=1ex] 
				\item  Run \ttx{$\alpha$StreamRoughL0Est} of Corollary \ref{cor:alphaEst} with $\delta = 1/12$ to obtain non-decreasing $R^t \in [L_0^t,8 \alpha L_0]$.
				
				\item Let $B_t = \big\{j \in [\log(n)] \: | \: j = \log(ns/3R^t) \pm 2\log(\alpha/\eps) \text{ or } j \geq  \log\big(ns\log\log(n)/(24\log(n))\big) \big\}$. 
				\item Let $t_j\in[m]$ be the first time $t$ such that $j \in B_t$ (if one exists). Let $t_j'$ be the first time step $t' > t_j$ such that $j \notin B_{t'}$ (if one exists).
				\item For all $j \in B_t$, maintain linear sketch $x_j = J(f^{t_j:t_j'}|_{I_j})$. 
			\end{enumerate}
			\textbf{Recovery:}
			\begin{enumerate}[topsep=0pt,itemsep=-1ex,partopsep=1ex,parsep=1ex]
				\item For $j \in B_m$ at the end of the stream, attempt to invert $x_j$ into $f^{t_j:m}|_{I_j}$ via Lemma \ref{lem:invertlin}. Return all strictly positive coordinates of all successfully returned $f^{t_j:m}|_{I_j}$'s. 	
	\end{enumerate}}}\caption{Our support sampling algorithm for $\alpha$-property streams.}\label{fig:suppsamp}
\end{figure*}

The problem of support sampling asks, given a stream vector $f \in \R^n$ and a parameter $k \geq 1$, return a set $U \subset [n]$ of size at least $\min\{k,\|f\|_0 \}$ such that for every $i \in U$ we have $f_i \neq 0$. Support samplers are needed crucially as subroutines for many dynamic graph streaming algorithms, such as connectivity, bipartitness, minimum spanning trees, min-cut, cut sparsifiers, spanners, and spectral sparsifiers \cite{Ahn:2012}. They have also been applied to solve maximum matching \cite{konrad2015maximum}, as well as hyperedge connectivity \cite{guha2015vertex}. A more comprehensive study of their usefulness in dynamic graph applications can be found in \cite{kapralov2017optimal}. 

For strict turnstile streams, an $\Omega(k\log^2(n/k))$ lower bound is known \cite{kapralov2017optimal}, and for general turnstile streams there is an $O(k\log^2(n))$ algorithm \cite{Jowhari:2011}. In this section we demonstrate that for $L_0$ $\alpha$-property streams in the strict-turnstile case, more efficient support samplers exist. For the rest of the section, we write \textit{$\alpha$-property} to refer to the $L_0$ $\alpha$-property, and we use the notation defined at the beginning of Section \ref{sec:reviewunbounded}.
 
 First consider the following template for the unbounded deletion case (as in \cite{Jowhari:2011}). First, we subsample the set of items $[n]$ at $\log(n)$ levels, where at level $j$, the set $I_j \subseteq [n]$ is subsampled with expected size $|I_j| = 2^j$.  Let $f|_{I_j}$ be the vector $f$ restricted to the coordinates of $I_j$ (and $0$ elsewhere). Then for each $I_j$, the algorithm creates a small sketch $x_j$ of the vector $f|_{I_j}$. If $f|_{I_j}$ is sparse, we can use techniques from sparse recovery to recover $f|_{I_j}$ and report all the non-zero coordinates. We first state the following well known result which we utilize for this recovery.
  \begin{lemma}[\cite{Jowhari:2011}]
  	\label{lem:invertlin}
  Given $1 \leq s \leq n$, there is a linear sketch and a recovery algorithm which, given $f \in \R^n$, constructs a linear sketch $J(f):\R^n \to \R^{q}$ for $q = O(s)$ such that if $f$ is $s$-sparse then the recovery algorithm returns $f$ on input $J(f)$, otherwise it returns \ttx{DENSE} with high probability. The space required is $O(s\log(n))$ bits.
  \end{lemma}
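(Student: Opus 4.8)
The plan is to construct $J(f)$ from two linear pieces: an exact $s$-sparse recovery sketch together with a random polynomial fingerprint that certifies whether the vector returned by the recovery step really equals $f$. A purely hashing-based recovery (hash $[n]$ into $O(s)$ buckets, keep a bucket sum, a coordinate-weighted sum, and a random evaluation $\sum f_i z^i$ per bucket, then read off the ``pure'' buckets) only isolates a constant fraction of the non-zeros of an $s$-sparse $f$, so it would need $\Theta(\log s)$ independent levels and $O(s\log s)$ measurements; to hit $q = O(s)$ I would instead use the classical Prony / Reed--Solomon syndrome construction. Throughout, fix a prime $p = \Theta(n^c)$ for a sufficiently large constant $c$ (so that $p > n$ and $p$ exceeds every possible $|f_i| = \poly(n)$) and work over $\mathbb{F}_p$; every $\mathbb{F}_p$-entry of the sketch then costs $O(\log n)$ bits.

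For the recovery piece, identify coordinate $i \in [n]$ with the field element $i \in \mathbb{F}_p$ (these are distinct since $p > n$), and let the sketch store the $2s+1$ power sums $y_j = \sum_{i=1}^n f_i\, i^{\,j} \bmod p$ for $j = 0,1,\dots,2s$; this is a fixed linear map $\R^n \to \R^{2s+1}$. If $f$ is $s$-sparse with support $T$, $|T|\le s$, then the sequence $(y_j)_j$ obeys the linear recurrence whose characteristic polynomial is $\prod_{i\in T}(x-i)$, of degree $|T|\le s$, so running Berlekamp--Massey on $y_0,\dots,y_{2s}$ (we have more than $2|T|$ terms) recovers that polynomial; its roots give $T$, and solving the resulting Vandermonde system in the unknowns $f_i$, $i\in T$, recovers $f$ exactly. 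On an input that is not $s$-sparse, Berlekamp--Massey still outputs a well-defined shortest recurrence; if its degree exceeds $s$, or its roots are not distinct valid coordinates in $[n]$, or the Vandermonde solve is inconsistent, we immediately return \ttx{DENSE}, and otherwise we extract some $s$-sparse candidate vector $g$.

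It remains to tell the two cases apart, and this is where I expect the real work: no fixed family of $O(s)$ linear functionals can by itself certify that an arbitrary $f\in\R^n$ is $s$-sparse, so the test must be randomized yet stay linear. I would add one further measurement $\phi = \sum_{i=1}^n f_i z^{\,i} \bmod p$ for a uniformly random $z \in \mathbb{F}_p$ chosen when the sketch is set up. After computing the candidate $g$, check whether $\phi = \sum_i g_i z^{\,i}$: if $f$ is $s$-sparse then $g = f$ by the previous paragraph and the check passes, so we return $f$; if $f$ is not $s$-sparse then $g$ (being $s$-sparse) differs from $f$, so $\sum_i (f_i - g_i) z^{\,i}$ is a nonzero polynomial in $z$ of degree at most $n$, hence vanishes for at most $n$ values of $z$, and the check fails with probability at least $1 - n/p \ge 1 - n^{-(c-1)}$, whereupon we output \ttx{DENSE}. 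Thus the only way to answer incorrectly is the low-probability fingerprint collision, which union-bounds away. The sketch uses $2s+2 = O(s)$ measurements, each an $O(\log n)$-bit field element, plus $O(\log n)$ bits for $z$, for $O(s\log n)$ bits total; Berlekamp--Massey, root-finding, and the Vandermonde solve are done only at query time and do not enter the streaming space. The main obstacle is exactly this balancing act — making the \ttx{DENSE}-detection simultaneously linear in $f$, free of false negatives on every non-$s$-sparse input, and cheap — which the degree bound on the difference polynomial resolves.
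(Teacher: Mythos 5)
Your proposal is correct. Note that the paper does not prove this lemma at all — it imports it from \cite{Jowhari:2011} as a black box — and your construction (power-sum/syndrome measurements $\sum_i f_i i^j \bmod p$ for $j=0,\dots,2s$, decoded via Berlekamp--Massey plus a Vandermonde solve, with a single random evaluation $\sum_i f_i z^i \bmod p$ as a linear fingerprint to certify sparsity) is essentially the standard construction that citation refers to, and all the key points check out: minimality of the recurrence when all $f_i\neq 0$ on the support, no false \ttx{DENSE} on sparse inputs, and soundness of the fingerprint because a non-$s$-sparse $f$ must differ from any $s$-sparse candidate $g$ in a coordinate where $f_i\neq 0$, which cannot vanish mod $p$ since $p$ exceeds $\max_i|f_i|$. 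The only cosmetic detail worth adding is to take $p>2mM$ and read the recovered residues as signed representatives in $(-p/2,p/2)$ so that negative frequencies are returned correctly; with that, the $O(s)$ measurements and $O(s\log n)$ bits match the stated bounds.
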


 Next, observe that for $L_0$ $\alpha$-deletion streams, the value $F_0^t$ is at least $L_0^t$ and at most $\alpha L_0$ for every $t \in [m]$. Therefore, if we are given an estimate of $F_0^t$, we show it will suffice to only subsample at $O(\log(\alpha))$-levels at a time.
In order to estimate $F_0^t$ we utilize the estimator \texttt{$\alpha$StreamRoughL0Est} from Corollary \ref{cor:alphaEst} of Section \ref{sec:L0Est}.
For $t' \geq t$, let $f^{t:t'} \in \R^n$ be the frequency vector of the stream of updates $t$ to $t'$. We use the notation given in our full algorithm in Figure \ref{fig:suppsamp}.  Notice that since $R^t$ is non-decreasing, once $j$ is removed from $B_t$ at time $t_j'$ it will never enter again. So at the time of termination, we have $x_j = J(f^{t_j:m}|_{I_j})$ for all $j \in B_m$. 

\begin{theorem}\label{thm:suppsamp}
Given a strict turnstile stream $f$ with the $L_0$ $\alpha$-property and $k\geq 1$, the algorithm \texttt{$\alpha$-Support\ab Sampler} outputs a set $U \subset [n]$ such that $f_i \neq 0 $ for every $i \in U$, and such that with probability $1-\delta$ we have $|U| \geq \min\{k,\|f\|_0\}$ . The space required is $O(k\log(n) \log(\delta^{-1})\ab(\log(\alpha)+\log\log(n)))$ bits.
\end{theorem}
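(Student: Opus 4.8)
The plan is to prove the two halves of the statement separately: that every index we output lies in $\operatorname{supp}(f)$, and that with probability $1-\delta$ we output at least $\min\{k,\|f\|_0\}$ of them. For the first half I would argue that whenever the recovery step inverts a level $j\in B_m$, Lemma~\ref{lem:invertlin} guarantees the recovered vector equals $f^{t_j:m}|_{I_j}$ with high probability, and that if $i$ is a strictly positive coordinate of it then the net of the updates $t_j,\dots,m$ to $i$ is positive; since the stream is strict turnstile, $f_i^{t_j-1}\ge 0$, so $f_i^m=f_i^{t_j-1}+f_i^{t_j:m}>0$ and $i\in\operatorname{supp}(f)$. A union bound over the $O(\log\alpha)$ active levels and the $O(\log(1/\delta))$ parallel repetitions preserves this with high probability.

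For the size guarantee I would first condition on the success of \texttt{$\alpha$StreamRoughL0Est} (Corollary~\ref{cor:alphaEst}), which holds with probability $1-1/12$ and produces non-decreasing $R^t\in[L_0^t,8\alpha L_0]$ once $F_0^t$ exceeds the $\Theta(\log n/\log\log n)$ threshold; the complementary regime, where $L_0$ is itself $O(\log n/\log\log n)$, is absorbed by the high levels $j\ge\log(ns\log\log n/(24\log n))$, which are tracked from $t=0$ and capture the whole (tiny) support. Assuming $L_0$ exceeds a suitable constant multiple of $k$ (smaller $L_0$ being handled by a top-most level with $I_j=[n]$, tracked from $t=0$, which recovers all of $\operatorname{supp}(f)$, a set of size at most $\min\{k,s\}$), the heart of the argument is to pin down one \emph{good} level $j_0$: I would take $2^{j_0}$ to be the power of two nearest $nk_0/L_0$ for a suitable constant multiple $k_0=\Theta(k)$, and then verify (a) since $R^m\le 8\alpha L_0$, the width-$\Theta(\log\alpha)$ termination window $\log(ns/3R^m)\pm 2\log(\alpha/\eps)$ contains $j_0$, so $j_0\in B_m$; (b) $j_0$ enters $B_t$ at a time $t_{j_0}$ whose predecessor still has $R^{t_{j_0}-1}$ below the entry threshold $\Theta(\eps^2 L_0/\alpha^2)$, hence $L_0^{t_{j_0}-1}\le R^{t_{j_0}-1}\le L_0/4$; and (c) $j_0$ survives to time $m$, since $R^m\le 8\alpha L_0$ is far below its removal threshold $\Theta(\alpha^2 L_0/\eps^2)$.

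With $j_0$ in hand, the analysis rests on the elementary but crucial containment $\operatorname{supp}(f^{t_{j_0}:m})\subseteq\operatorname{supp}(f)\cup\operatorname{supp}(f^{t_{j_0}-1})$ (a coordinate with nonzero net change over $[t_{j_0},m]$ must be nonzero at $m$ or at $t_{j_0}-1$). This caps the number of nonzeros of $f^{t_{j_0}:m}|_{I_{j_0}}$ by $|(\operatorname{supp}(f)\cup\operatorname{supp}(f^{t_{j_0}-1}))\cap I_{j_0}|$, of expectation $\le(L_0+L_0^{t_{j_0}-1})2^{j_0}/n\le\tfrac54 L_0\cdot 2^{j_0}/n=\Theta(k_0)$, which by Chebyshev's inequality (the $I_j$ come from a pairwise independent $h$) is $\le s=205k$ with high probability, so Lemma~\ref{lem:invertlin} succeeds at $j_0$. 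Meanwhile, by strict turnstile every $i\in\operatorname{supp}(f)\cap I_{j_0}$ with $f_i^{t_{j_0}-1}=0$ satisfies $f_i^{t_{j_0}:m}=f_i^m>0$ and is output; the count of these is $\ge|\operatorname{supp}(f)\cap I_{j_0}|-|\operatorname{supp}(f^{t_{j_0}-1})\cap I_{j_0}|$, of expectation $\ge(L_0-L_0^{t_{j_0}-1})2^{j_0}/n=\Theta(k_0)$, which another Chebyshev bound makes $\ge k$ with high probability once $k_0$ is a large enough constant times $k$. A single run thus succeeds with constant probability bounded away from $0$; running $O(\log(1/\delta))$ independent copies and outputting the union of their returned sets drives the failure probability down to $\delta$. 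For space, at any time only the $O(\log(\alpha/\eps))=O(\log\alpha)$ levels of the current window (plus a handful of always-on high levels) are active, each holding a sketch $J(\cdot)\colon\mathbb{R}^n\to\mathbb{R}^{O(s)}$ of Lemma~\ref{lem:invertlin} at $O(s\log n)=O(k\log n)$ bits, plus timestamps and indices at $O(\log\log n)$ bits; with $O(\log n)$ for $h$ and \texttt{$\alpha$StreamRoughL0Est}, the $O(\log(1/\delta))$ repetitions, and the $O(\log\alpha)$ levels per window, this totals $O(k\log n\log(\delta^{-1})(\log\alpha+\log\log n))$.

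The step I expect to be the main obstacle is the level selection together with the simultaneous sparsity-and-coverage verification of (a)--(c): one must show the $\Theta(\log\alpha)$-wide window of tracked levels is tuned so that, whatever $L_0$ is and wherever the $\alpha$-slack places $R$, some level tracked at termination has its suffix sketch $f^{t_j:m}|_{I_j}$ both $s$-sparse (hence invertible) and still carrying $\ge\min\{k,\|f\|_0\}$ support coordinates of $f$. The containment $\operatorname{supp}(f^{t_j:m})\subseteq\operatorname{supp}(f)\cup\operatorname{supp}(f^{t_j-1})$ together with the guarantee $L_0^{t_j-1}=O(L_0)$ is exactly what keeps the potential $F_0=\alpha L_0$ blow-up from ruining sparsity, and the only remaining bookkeeping is the case split on whether $F_0$ is large enough for the rough $L_0$ estimator to engage.
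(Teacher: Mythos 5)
Your proposal is correct and takes essentially the same route as the paper's proof: condition on \texttt{$\alpha$StreamRoughL0Est}, pin down the single level near $\log(ns/(3L_0))$, argue it is initialized while the current $L_0^t$ (bounded by $R^t$) is still a negligible fraction of the final $L_0$ and survives to termination, use strict-turnstile positivity both to certify outputs and to lower bound the strictly positive suffix coordinates by roughly $L_0-L_0^{t_{j_0}}$, apply Chebyshev via the pairwise independence of $h$ for the $s$-sparsity and coverage bounds, absorb the small-$L_0$/small-$F_0$ regime into the always-stored top levels, and amplify with $O(\log(1/\delta))$ parallel copies whose outputs are unioned, with the same space accounting. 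The only cosmetic differences are that you state the containment $\operatorname{supp}(f^{t_{j_0}:m})\subseteq\operatorname{supp}(f)\cup\operatorname{supp}(f^{t_{j_0}-1})$ explicitly and phrase the entry condition through $R^{t_{j_0}-1}$, both of which the paper uses implicitly.
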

\begin{proof}
	First condition on the success of \ttx{$\alpha$StreamRough\ab L0Est}, which occurs with probability $11/12$.
	Set  $i^* = \min (\ab\lceil (\log(\frac{ns}{3L_0})) \rceil, \log(n))$. We first argue that $t_{i^*}$ exists. Now $x_{i^*}$ would be initialized as soon as $R^t \geq L_0 (\eps/\alpha)^2$, but $R^t \geq L_0^t$, so this holds before termination of the algorithm. Furthermore, for $x_{i^*}$ to have been deleted by the end of the algorithm we would need $R^t > (\alpha/\eps)^2L_0 $, but we know $R^t < 8\alpha L_0$, so this can never happen. Finally, if $L_0 \leq F_0 < 8 \log(n) /\log\log(n)$ and our \ttx{$\alpha$Stream\ab RoughL0\ab Est} fails, then note $i^* \geq \lceil \log(ns\ab\log\log(n)/(24\ab\log(n)))\rceil$, so we store $i^*$ for the entire algorithm.

	Now we have $L_0^{t_{i^*}} \leq  R^{t_{i^*}}  < (\eps/\alpha) L_0  $,
	and thus $L_0^{t_{i^*}}/L_0 < \eps$. Since $f$ is a turnstile stream, it follows that the number of strictly positive coordinates in $f^{t_{i^*}:m}$ is at least $L_0 - L_0^{t_{i^*}}$ and at most  $L_0$. Thus there are $(1 \pm \eps)L_0$ strictly positive coordinates in $f^{t_{i^*}:m}$. By same argument, we have $\|f^{t_{i^*}: m }\|_0 = (1 \pm \eps )L_0$.

	Let $X_i$ indicate the event that $f^{t_{i^*}:m}_i|_{I_{i^*}} \neq 0$, and  $X = \sum_{i} X_i$. Using the pairwise independence of $h$, the $X_i$'s with $f^{t_{i^*}:m}_i \neq 0$ are pairwise independent, so we obtain Var$(X) < \ex{X} = \|f^{t_{i^*}: m }\|_0 \ex{|I_{i^*}|}/n$. First assume $L_0 > s$. Then $ns/(3L_0) \leq \ex{|I_{i^*}|}  < 2ns/(3L_0),$ so  for $\eps < 1/48$, we have $\ex{X} \in [15s/48, 33s/48]$. Then $\sqrt{\ex{X}} < 1/8\ex{X}$, so by
	Chebyshev's inequality, $\pr{ | X - \ex{X} | > 1/4 \ex{X} } < 1/4$, and thus $\|f^{t_{i^*}:m}_i|_{I_{i^*}}\|_0 \leq 15/16s$ with probability $3/4$. In this case, $f^{t_{i^*}:m}_i|_{I_{i^*}}$ is $s$-sparse, so we recover the vector w.h.p.  by \ab Lemma \ref{lem:invertlin}. Now if $L_0 \leq  s$ then $F_0 < \alpha s$, so the index $i' = \log(n)$ will be stored for the entire algorithm. Thus $x_{i'}= J(f)$ and since $f$ is $s$ sparse we recover $f$ exactly w.h.p., and return all non-zero elements. So we can assume that $L_0 > s$.
	
	It suffices now to show that there are at least $k$ strictly positive coordinates in $f^{t_{i^*}:m}_i|_{I_{i^*}}$. Since this number is also $(1 \pm L_0)$, letting $X_i'$  indicate $f^{t_{i^*}:m}_i|_{I_{i^*}} > 0$ and using the same inequalities as in the last paragraph, it follows that there are at least $s/15 > k$ strictly positive coordinates with probability $3/4$. Since the stream is a strict-turnstile stream, every strictly positive coordinate of a suffix of the stream must be in the support of $f$, so we successfully return at least $k$ coordinates from the support with probability at least $1-(1/4+1/4+1/12+1/12) = 1/3$. Running $O(\log(\delta^{-1}))$ copies in parallel and setting $U$ to be the union of all coordinates returned, it follows that with probability $1-\delta$ at least $\min\{k,\|f\|_0 \}$ distinct coordinates will be returned.
	
	For memory, for each of the $O(\log(\delta^{-1}))$ copies we subsample at $O(\ab\log(\alpha) + \log\log(n))$ different levels, and each to a vector of size $O(k)$ (and each coordinate of each vector takes $\log(n)$ bits to store) which gives our desired bound. This dominates the additive $O(\log(n))$ bits needed to run \ttx{$\alpha$StreamRoughL0Est}.
\end{proof}

\section{Lower Bounds}
\label{sec:lowerbounds}
We now show matching or nearly matching lower bounds for all problems we have considered. 
Our lower bounds all follow via reductions from one-way randomized communication complexity. We consider both the public coin model, where Alice and Bob are given access to infinitely many shared random bits, as well as the private coin model, where they do not have shared randomness. 

We first state the communication complexity problems we will be reducing from.
The first such problem we use is the Augmented-Indexing problem (\textsc{Ind}), which is defined as follows. Alice is given a vector $y \in \{0,1\}^n$, and Bob is given a index $i^* \in [n]$, and the values $y_{i^*+1},\dots, y_n$. Alice then sends a message $M$ to Bob, from which Bob must output the bit $y_{i^*}$ correctly. A correct protocol for \textsc{Ind} is one for which Bob correctly outputs $y_{i^*}$ with probability at least $2/3$. The communication cost of a correct protocol is the maximum size of the message $M$ that the protocol specifies to deliver. This problem has a well known lower bound of $\Omega(n)$ (see \cite{miltersen1995data}, or \cite{kane2010exact}). 

\begin{lemma}[Miltersen et al. \cite{miltersen1995data}]
	\label{lem:augIndex}
	The one-way communication cost of any protocol for $\ab$Augmented Indexing (\textsc{Ind}) in the public coin model that succeeds with probability at least $2/3$ is $\Omega(n)$.
\end{lemma}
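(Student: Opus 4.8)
The plan is to establish the $\Omega(n)$ bound by a standard information-theoretic argument that combines the chain rule for entropy with Fano's inequality, which has the advantage of applying to public-coin protocols directly. First I would fix the hard input distribution: let $Y=(Y_1,\dots,Y_n)$ be uniform on $\{0,1\}^n$, drawn independently of the public random string $R$, and let $M=M(Y,R)$ be Alice's message (a deterministic function of $(Y,R)$ once the public coins are fixed). The correctness guarantee says that for every fixed $i^*\in[n]$, Bob's output $\mathrm{Bob}(M,R,Y_{i^*+1},\dots,Y_n)$ equals $Y_{i^*}$ with probability at least $2/3$ over the choice of $Y$ and $R$. Since the worst-case communication is at least $\mathbb{E}\lvert M\rvert \ge H(M\mid R)$, it suffices to show $I(Y;M\mid R)=\Omega(n)$.

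Next I would expand
\[
I(Y;M\mid R)=H(Y\mid R)-H(Y\mid M,R)=n-H(Y\mid M,R),
\]
and decompose $H(Y\mid M,R)$ by the chain rule, unrolled from the \emph{last} coordinate downward: $H(Y\mid M,R)=\sum_{i=1}^n H\!\left(Y_i\mid M,R,Y_{i+1},\dots,Y_n\right)$. The point of this ordering is that $(M,R,Y_{i+1},\dots,Y_n)$ is exactly the information on which Bob's predictor for $Y_i$ depends, and that predictor errs with probability at most $1/3$; hence Fano's inequality gives $H(Y_i\mid M,R,Y_{i+1},\dots,Y_n)\le H_b(1/3)=\log_2 3-\tfrac23<1$, where $H_b$ is the binary entropy function. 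Summing over $i$ yields $H(Y\mid M,R)\le n\,H_b(1/3)$, so $I(Y;M\mid R)\ge (1-H_b(1/3))\,n$, and therefore the communication cost is at least $(1-H_b(1/3))\,n=\Omega(n)$.

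There is no deep obstacle here; the care needed is purely bookkeeping, and the one place I would be careful is lining up the conditioning so that everything Bob sees (the message $M$, the public coins $R$, and the suffix $Y_{>i}$) sits on the conditioning side of a single entropy term — this is exactly why the chain rule must be unrolled from coordinate $n$ back to $1$ rather than in the natural order, and why keeping $R$ fixed throughout lets the bound hold against public-coin protocols without first invoking Newman's theorem. For completeness I would mention the alternative combinatorial route used in the original reference: restrict attention to a structured family of ``typically correct'' inputs, observe that if two distinct such inputs $y\neq y'$ collide in Alice's message then at their highest differing coordinate Bob is handed identical data yet must answer differently for $y$ and $y'$, and conclude by a counting argument that the number of distinct messages, hence $2^{\text{cost}}$, is $2^{\Omega(n)}$. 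I would present the information-theoretic version since it is shorter and cleaner.
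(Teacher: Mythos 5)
Your proof is correct. One thing to note about the comparison: the paper does not prove this lemma at all --- it is imported as a black box from Miltersen et al.\ \cite{miltersen1995data}, so there is no in-paper argument to match against. Your chain-rule-plus-Fano argument is the standard modern proof of the augmented indexing bound and is sound: fixing the uniform distribution on $Y$ independent of the public coins $R$, unrolling $H(Y\mid M,R)=\sum_i H(Y_i\mid M,R,Y_{i+1},\dots,Y_n)$ from the last coordinate so that each term is conditioned on exactly what Bob sees when queried at $i$, and applying Fano with error at most $1/3$ gives $I(Y;M\mid R)\ge (1-H_b(1/3))n$, which is $\Omega(n)$; conditioning on $R$ throughout is precisely what makes the argument work for public-coin protocols without Newman's theorem, and Fano still applies if Bob also uses private coins since $Y\to (M,R,Y_{>i})\to \hat Y_i$ remains a Markov chain. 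The original reference argues combinatorially (essentially the encoding/message-collision counting you sketch at the end), which buys nothing extra here beyond avoiding entropy machinery; your route is shorter and generalizes more readily (e.g.\ to distributional and low-error variants). The only bookkeeping nit: the inequality $\mathbb{E}\lvert M\rvert\ge H(M\mid R)$ is not literally valid for variable-length messages without a prefix-free convention; it is cleaner to say that a worst-case cost of $c$ bits means at most $2^{c+1}-1$ possible messages for each fixed $R=r$, hence $H(M\mid R)\le c+1\,$, which costs only an additive constant and leaves the $\Omega(n)$ conclusion intact.
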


We present the second communication complexity result which we will use for our reductions. We define the problem \textsc{equality} as follows.  Alice is given $y \in \{0,1\}^n$ and Bob is given $x \in \{0,1\}^n$, and are required to decide whether $x = y$. This problem has a well known $\Omega(\log(n))$-bit lower bound when shared randomness is not allowed (see e.g., \cite{alon1996space} where it is used).

\begin{lemma}
	The one way communication complexity in the private coin model with $2/3$ success probability of \textsc{Equality} is $\Omega(\log(n))$.
\end{lemma}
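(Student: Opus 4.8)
The plan is to prove this directly, via a packing argument in the space of Alice's message distributions; note that the usual minimax or reduction shortcuts are unavailable here, since public-coin \textsc{Equality} costs only $O(1)$ bits, so the $\Omega(\log n)$ gap is precisely the public-versus-private gap and must come from a genuine counting bound. First I would fix a one-way private-coin protocol $\Pi$ for \textsc{Equality} of communication cost $c$. Since Alice's message uses only her private randomness, her transcript on input $y\in\{0,1\}^n$ is a random variable $M_y$ whose distribution depends only on $y$, supported on a set $\mathcal{M}$ of at most $2^{c+1}$ strings; writing $q(x,m)\in[0,1]$ for the probability that Bob accepts (declares ``equal'') given input $x$ and message $m$, averaging over Bob's private coins, the protocol's acceptance probability on $(x,y)$ is $p(x,y)=\sum_{m\in\mathcal{M}}\Pr[M_y=m]\,q(x,m)$. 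Correctness with success probability $2/3$ means $p(y,y)\ge 2/3$ for every $y$ and $p(x,y)\le 1/3$ whenever $x\ne y$.

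The first substantive step is to convert correctness into pairwise separation of the distributions $\{M_y\}$. For distinct $y,y'\in\{0,1\}^n$,
\[
\tfrac13 \;\le\; p(y,y)-p(y',y)\;=\;\sum_{m\in\mathcal M}\bigl(\Pr[M_y=m]-\Pr[M_{y'}=m]\bigr)\,q(y,m)\;\le\; d_{\mathrm{TV}}(M_y,M_{y'}),
\]
where the last inequality uses $0\le q(y,m)\le 1$ and the definition of total variation distance. Hence the $2^n$ probability vectors $M_y$, which all lie in the simplex of distributions over the at most $2^{c+1}$ atoms of $\mathcal{M}$, are pairwise $\tfrac13$-separated in total variation.

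The second step is to bound how many such pairwise-far distributions can fit in so low-dimensional a simplex, by a volume/rounding count. Round each $M_y$ coordinatewise to the nearest multiple of $1/(100\,|\mathcal M|)$; this perturbs each pairwise total variation distance by at most $1/200$, so the rounded vectors are still pairwise separated by more than $0$, hence pairwise distinct, and there are therefore at least $2^n$ of them. But each rounded vector has at most $|\mathcal M|\le 2^{c+1}$ coordinates, each taking one of at most $100|\mathcal M|+2$ possible values, so the number of them is at most $(100|\mathcal M|+2)^{|\mathcal M|}=2^{O(c\,2^c)}$. Combining gives $n = O(c\,2^c)$, and a one-line case analysis (if $c\le \tfrac13\log_2 n$ then $c\,2^c\le n^{1/3}\log_2 n=o(n)$) forces $c=\Omega(\log n)$, proving the lemma. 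The only place the failure probability enters is the separation bound, which for error $\delta$ reads $d_{\mathrm{TV}}(M_y,M_{y'})\ge 1-2\delta$, so any constant $\delta<1/2$, in particular $\delta=1/3$, works.

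I expect the main obstacle — or at least the only step needing real care rather than bookkeeping — to be the second one: making the packing count rigorous, i.e. verifying that coordinatewise rounding at the chosen granularity cannot merge two $\tfrac13$-separated distributions, and tracking the crude constants so that the final inequality $n=O(c\,2^c)$ genuinely yields a $\Theta(\log n)$ bound. The setup and the total-variation inequality are routine.
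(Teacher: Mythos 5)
Your argument is correct, and it is worth noting that the paper itself does not prove this lemma at all: it simply cites the folklore bound (via \cite{alon1996space}), so there is no "paper proof" to match against. Your packing argument is the standard self-contained way to establish the private-coin \textsc{Equality} lower bound. Both steps check out: correctness of the protocol gives, for $y\neq y'$, that $d_{\mathrm{TV}}(M_y,M_{y'})\geq \sup_{0\leq q\leq 1}\sum_m(\Pr[M_y=m]-\Pr[M_{y'}=m])q(m)\geq p(y,y)-p(y,y')\geq 1/3$ (you wrote $p(y',y)$ where the displayed sum actually equals $p(y,y)-p(y,y')$, i.e.\ Bob's input held at $y$ while Alice's varies; this is only an index-order slip, since both quantities are at most $1/3$ by correctness, and the inequality chain is valid as intended). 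The rounding count is also sound: rounding each of the at most $2^{c+1}$ coordinates to granularity $1/(100|\mathcal{M}|)$ moves each distribution by at most $1/200$ in total variation, so $1/3$-separated distributions stay distinct, and the number of rounded vectors is $2^{O(c2^c)}$, forcing $2^n\leq 2^{O(c2^c)}$ and hence $c=\Omega(\log n)$; your observation that any constant error $\delta<1/2$ suffices is also right. What your route buys over the paper's citation is a short, fully elementary proof with explicit constants; what it gives up is tightness in the error dependence (the information-theoretic treatments in the literature give the sharper dependence on $\delta$), which is irrelevant for the paper's use of the lemma.
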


We begin with the hardness of the heavy hitters problem in the strict-turnstile setting. Our hardness result holds not just for $\alpha$-property streams, but even for the special case of strong $\alpha$-property streams (Definition \ref{def:strongalphaprop}). The result matches our upper bound for normal $\alpha$-property streams from Theorem \ref{thm:HHstrict} up to $\log\log(n)$ and $\log(\eps^{-1})$ terms.

\begin{theorem}
		\label{HH:Hardness}
	For $p \geq 1$ and $\eps \in (0,1)$, any one-pass $L_p$ heavy hitters algorithm for strong $L_1$ $\alpha$-property streams in the strict turnstile model which returns a set containing all $i \in [n]$ such that $|f_i | \geq \eps \|f\|_p$  and no $i \in [n]$ such that $|f_i| < (\eps/2)\|f\|_p$ with probability at least $2/3$ requires $\Omega(\ab\eps^{-p} \log(n\eps^p) \log(\alpha))$ bits.
\end{theorem}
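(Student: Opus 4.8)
The plan is to reduce from Augmented Indexing (\textsc{Ind}) on a string of length $N = \Theta(\eps^{-p}\log(n\eps^p)\log(\alpha))$, following the standard lower-bound template for heavy hitters (as in \cite{Jowhari:2011, ba2010lower}) but adapting it so that the hard instance is a \emph{strong} $\alpha$-property stream. First I would conceptually split Alice's string into $\Theta(\log(n\eps^p)\log \alpha)$ blocks, each block encoding $\Theta(\eps^{-p})$ bits; each block is associated with a "scale" determined by a pair (weight level $w_r \in \{2^r\}$, copy index within a $\log\alpha$-sized group). Alice, holding $y$, inserts into the stream, for each $1$-bit in block $(r,\cdot)$ at coordinate $i$, roughly $w_r$ copies of item $i$ (with appropriate distinct universe slices per block so blocks do not interfere). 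Bob, who knows $i^*$'s block $(r^*, \cdot)$ and all later blocks, first \emph{deletes} the contributions of all blocks at scales strictly above $r^*$ (he knows those bits), then inserts a large "background" of $\Theta(\eps^{-p})$ extra unit items at a fresh coordinate to control the norm $\|f\|_p$, so that coordinate $i^*$ — if $y_{i^*}=1$ — becomes an $\eps$-heavy hitter, while if $y_{i^*}=0$ it has weight $0 < (\eps/2)\|f\|_p$. Running the heavy hitters algorithm and testing membership of $i^*$ in the returned set then recovers $y_{i^*}$ with probability $2/3$, so the message (the algorithm's memory state) must have $\Omega(N)$ bits.

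The key steps, in order, are: (i) fix the universe partition into $\Theta(\log(n\eps^p)\log\alpha)$ disjoint slices of size $\Theta(\eps^{-p})$ each, plus one slice for the background, which forces $n = \Omega(\eps^{-p}\log(n\eps^p)\log\alpha)$ — this is where the $\log(n\eps^p)$ and $\log\alpha$ factors in the bound come from; (ii) define Alice's encoding so that block $(r,c)$ contributes weight $2^r$ per set bit, and verify that at the target scale $r^*$, after Bob's deletions, the total $L_p$ mass is $\Theta(\eps^{-p}\cdot 2^{r^* p} + \text{lower-scale mass})$, dominated (up to constants) by scale $r^*$ and the background, so that a weight-$2^{r^*}$ item is $\Theta(\eps)$-heavy; (iii) choose Bob's background insertion size to push the heaviness threshold to exactly between the "$y_{i^*}=1$" and "$y_{i^*}=0$" cases, absorbing the $(\eps, \eps/2)$ gap; (iv) verify the \emph{strong} $\alpha$-property: every coordinate that is ever updated must end with $|f_i| \geq (I_i+D_i)/\alpha$. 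Since Bob only deletes entire blocks at scales $> r^*$ (those items return to $0$, which is fine for the $\alpha$-property as long as we are careful — actually strong $\alpha$-property \emph{forbids} $f_i=0$ for updated $i$, so instead Bob should leave a single surviving unit on each such coordinate, i.e., delete $w_r - 1$ copies rather than all $w_r$), and for surviving coordinates $I_i + D_i \leq 2 w_{r^*} \leq 2\alpha \cdot 1 \leq 2\alpha|f_i|$ after rescaling $\alpha$ by a constant, the strong $\alpha$-property holds with the claimed $\alpha$; (v) invoke Lemma \ref{lem:augIndex} to conclude.

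The main obstacle I anticipate is step (iv) combined with step (ii): reconciling the strong $\alpha$-property — which demands $I_i + D_i \leq \alpha|f_i|$ \emph{coordinatewise} and in particular $f_i \neq 0$ for every touched $i$ — with the need to delete high-scale blocks. The naive deletion zeroes out coordinates, violating strong $\alpha$-property; leaving a residual unit on each such coordinate changes the norm $\|f\|_p$ by an additive term that must be shown negligible against the $\Theta(\eps^{-p} 2^{r^* p})$ mass at the target scale, which constrains how the scales $2^r$ may grow (we need the geometric decay of lower-scale contributions to dominate the count of residual units from higher scales — this forces the ratio between consecutive scales to be a large enough constant, and caps the number of scales at $O(\log(n\eps^p))$, not more). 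I would handle this by a careful accounting: with $\Theta(\log\alpha)$ copies at the \emph{same} weight $2^{r^*}$ grouped together, and weights increasing geometrically across $O(\log(n\eps^p))$ groups, the residual-unit mass from all higher groups is $O(\eps^{-p}\log(n\eps^p))$ in $L_p^p$, which is $o(\eps^{-p}2^{r^* p})$ provided $2^{r^*} = \omega((\log(n\eps^p))^{1/p})$; the lowest $O(\log\log)$ scales where this fails can simply be handled separately or absorbed into the hidden $\log\log(n)$ and $\log(1/\eps)$ slack already permitted in the theorem statement. The remaining calculations (exact constants for the $\eps$ vs.\ $\eps/2$ separation, the precise background size) are routine.
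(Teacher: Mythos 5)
Your block structure has the $\log(n\eps^p)$ and $\log\alpha$ factors playing the wrong roles, and the construction breaks at two points. First, the same-weight copies: in Augmented Indexing Bob can only delete the blocks that come \emph{after} $i^*$, so the $\Theta(\log\alpha)$ copies at the target weight $2^{r^*}$ with smaller copy index survive. They contribute up to $\Theta(\log\alpha \cdot \eps^{-p}\cdot 2^{r^*p})$ to $\|f\|_p^p$, so the target coordinate of weight $2^{r^*}$ is only $\Theta\bigl(\eps/(\log\alpha)^{1/p}\bigr)$-heavy; the algorithm is then under no obligation to return it, and Bob cannot decode $y_{i^*}$. The background insertion cannot repair this, since it only increases $\|f\|_p$. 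Second, the number of scales is incompatible with the strong $\alpha$-property: your own fix for the ``no zero frequency'' issue (delete $w_r-1$ of the $w_r$ copies, leaving a residual $1$) means every deleted coordinate has $I_i+D_i\approx 2w_r$ while $|f_i|=1$, so \emph{every} weight that might ever be deleted must be $O(\alpha)$ --- in step (iv) you bound this by $2w_{r^*}$, but the deleted blocks are precisely those at weights \emph{above} the target, so the constraint is $w_{r_{\max}}=O(\alpha)$. With constant-ratio geometric weights this caps the number of levels at $O(\log\alpha)$, not the $\Theta(\log(n\eps^p))$ levels your bit-counting needs; and if you let the weights grow to $\mathrm{poly}(n\eps^p)$, then when the target lies in a low level even the plain $L_1$ $\alpha$-property fails, because the surviving mass is a $\mathrm{poly}(n)$ factor smaller than the total update volume. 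A further (secondary) issue is that disjoint universe slices require $n=\Omega(\eps^{-p}\log(n\eps^p)\log\alpha)$, an assumption the theorem does not make.

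The paper's proof inverts your allocation and avoids all of this: it uses only $\Theta(\log\alpha)$ blocks, one per weight level, with per-item weights $\alpha D^i+1$ for a constant $D$, and extracts the $\log(n\eps^p)$ factor not from more scales but from the \emph{entropy per block}: each block encodes an arbitrary subset of the whole universe $[n]$ of size $\lfloor(2\eps)^{-p}\rfloor$, i.e.\ $\Theta(\eps^{-p}\log(n\eps^p))$ bits, so overlapping supports are allowed and no universe partition is needed. Bob deletes only the $\alpha D^i$ portion of the higher blocks, so every touched coordinate keeps frequency at least $1$ after at most $O(\alpha^2)$ updates (strict turnstile, strong $O(\alpha^2)$-property, and $\log(\alpha^2)=\Theta(\log\alpha)$ costs nothing); the geometric decay together with the $2^p$ slack from using size $\lfloor(2\eps)^{-p}\rfloor$ makes the target block's coordinates $\eps$-heavy and all others below $\eps/2$, and Lemma \ref{lem:augIndex} then gives $\Omega(\eps^{-p}\log(n\eps^p)\log\alpha)$. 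If you want to salvage your write-up, you essentially have to adopt this structure: one block per scale, $O(\log\alpha)$ scales with weights proportional to $\alpha$, and subset-entropy encoding within each scale.
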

	\begin{proof}
	Suppose there is such an algorithm which succeeds with probability at least $2/3$, and consider an instance of augmented indexing. Alice receives $y \in \{0,1\}^d$, and Bob gets $i^* \in [d]$ and $y_{j}$ for $j > i^*$. Set $D=6$, and let $X$ be the set of all subsets of $[n]$ with $\lfloor 1/(2\eps)^p\rfloor$ elements, and set $d = \log_6(\alpha/4) \lfloor\log(|X|)\rfloor$. Alice divides $y$ into $r= \log_6(\alpha/4)$ contiguous chunks $y^1,y^2,\dots, \ab y^r$ each containing $\lfloor\log(|X|)\rfloor$ bits. She uses $y^j$ as an index into $X$ to determine a subset $x_j \subset[n]$ with $|x_j| = \lfloor 1/(2\eps)^p\rfloor$. Thinking of $x_j$ as a binary vector in $\R^n$, Alice defines the vector $v \in \R^n$ by.
	\[	v = (\alpha D+1)x_1 + (\alpha D^2+1) x_2 + \dots + (\alpha D^{r}+1)x_{r}	\]
	She then creates a stream and inserts the necessary items so that the current frequency vector of the stream is $v$. She then sends the state of her heavy hitters algorithm to Bob, who wants to know $y_{i^*} \in y^{j}$ for some $j = j(i^*)$. Knowing $y_{i^*+1},\dots,y_{d}$ already, he can compute $u = \alpha D^{j+1}x_{j+1} +\alpha D^{j+2} x_{j+2} + \dots + \alpha D^{r}x_{r}$. Bob then runs the stream which subtracts off $u$ from the current stream, resulting in a final frequency vector of $f = v -u$. He then runs his heavy hitters algorithm to obtain a set $S \subset [n]$ of heavy hitters.
	
	We now argue that if correct, his algorithm must produce $S = x_j$. Note that some items in $[n]$ may belong to multiple sets $x_i$, and would then be inserted as part of multiple $x_i$'s, so we must deal with this possibility. 	
	For $p \geq 1$, the weight $\|f\|_p^p$ is maximized by having $x_1 = x_2 = \dots = x_j$, and thus 
	\[\|f\|_p^p \leq  1/(2\eps)^p (  \ab\sum_{i=1}^j \alpha D^i + 1)^p \] \[\leq \eps^{-p} ( \alpha D^{j+1}/10+1)^p \] \[ \leq\eps^{-p} \alpha^p  D^{jp}\]
	and for every $k \in  x_j$ we have $|f_k|^p \geq (\alpha D^{j} + 1)^p \geq \eps^p \|f\|_p^p$ as desired. Furthermore, $\|f\|_p^p \geq |x_j|\alpha^p D^{jp} = \lfloor1/(2\eps)^p \rfloor \alpha^pD^{jp}$, and the weight of any element $k' \in [n] \setminus x_j$ is at most $(\sum_{i=1}^{j-1}\alpha  D^i + 1)^p \leq (\alpha D^{j}/5 + (j-1))^p< D^{jp}/4^p$ for $\alpha \geq 3$, so $f_{k'}$ will not be a $\eps/2$ heavy hitter. Thus, if correct, Bob's heavy hitter algorithm will return $S = x_j$.		
	So Bob obtains $S = x_j$, and thus recovers $y^j$ which indexed $x_j$, and can compute the relevant bit $y_{i^*} \in y^j$ and return this value successfully. Hence Bob solves \textsc{ind} with probability at least $2/3$. 
	
	Now observe that at the end of the stream each coordinate has frequency at least $1$ and received fewer than $3\alpha^2 $ updates (assuming updates have magnitude $1$). Thus this stream on $[n]$ items has the strong $(3\alpha^2)$-property. 
	Additionally, the frequencies of the stream are always non-negative, so the stream is a strict turnstile stream. It follows by Lemma \ref{lem:augIndex} that any heavy hitters algorithm for strict turnstile strong $\alpha$-property streams requires $\Omega(d) = \Omega( \log(\sqrt{\alpha/3}) \log(|X|)) \ab = \Omega(\eps^{-p} \log(\alpha) \log(n\eps^{p} )) $ bits as needed. 
\end{proof}

Next, we demonstrate the hardness of estimating the $L_1$ norm in the $\alpha$-property setting. First, we show that the problem of $L_1$ estimation in the general turnstile model requires $\Omega(\log(n))$-bits even for $\alpha$ property streams with $\alpha = O(1)$. We also give a lower bound of $\Omega(1/\eps^2\log(\alpha))$ bits for general turnstile $L_1$ estimation for strong $\alpha$-property streams.

\begin{theorem}
\label{thm:l1estHard}
	For any $\alpha \geq 3/2$, any algorithm that produces an estimate $\tilde{L_1} \in (1 \pm 1/16)\|f\|_1$ of a general turnstile stream $f$ with the $L_1$ $\alpha$ property with probability $2/3$ requires $\Omega(\log(n))$ bits of space.
\end{theorem}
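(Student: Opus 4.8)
The plan is to reduce from the one-way private-coin \textsc{Equality} problem. Since a $(1\pm\tfrac1{16})$-estimate of $\|f\|_1$ can only distinguish values differing by a constant factor, and since the $\alpha$-property forbids the frequency vector from shrinking too much, I first pass the \textsc{Equality} inputs through an error-correcting code so that ``unequal'' forces a linear-in-$n$ Hamming distance. Fix a binary constant-weight code $\mathcal{C}\subseteq\{0,1\}^{n-1}$ whose codewords all have weight $w$, with $w$ a sufficiently small constant fraction of $n$ (e.g.\ $w=\lfloor n/100\rfloor$), minimum Hamming distance at least $\tfrac{3}{5}w$, and $|\mathcal{C}|=2^{\Omega(n)}$; such a code exists by a Gilbert--Varshamov-style counting argument, as the demanded relative distance is a constant factor below the Plotkin-type ceiling $2w/n$ for constant-weight codes. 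Let $N=\lfloor\log_2|\mathcal{C}|\rfloor=\Omega(n)$ and fix an injection $E\colon\{0,1\}^{N}\to\mathcal{C}$. Given an \textsc{Equality} instance $(x,y)\in\{0,1\}^N\times\{0,1\}^N$, the players first (with no communication) form $a=E(x)$ and $b=E(y)$; since $E$ is injective, $a=b$ iff $x=y$, so any protocol deciding this for coded inputs solves \textsc{Equality} on $\{0,1\}^N$, which by the cited Lemma costs $\Omega(\log N)=\Omega(\log n)$ bits of one-way private-coin communication.

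The reduction itself: the stream is over universe $[n]$, using a designated dummy item $\star$ and the coordinates $[n-1]$ for the code. Alice, holding $a\in\mathcal{C}$, inserts $K:=4w$ copies of $\star$ together with one copy of item $i$ for every $i$ with $a_i=1$, and then hands the memory state of a purported $L_1$-estimation algorithm to Bob. Bob, holding $b\in\mathcal{C}$, deletes one copy of item $i$ for every $i$ with $b_i=1$. If $a=b$ then every code coordinate ends at $0$ and $f_\star=K$, so $\|f\|_1=K=4w$; if $a\ne b$ then the coordinates on which $a,b$ disagree contribute exactly $d_H(a,b)\ge\tfrac{3}{5}w$ to the norm, so $\|f\|_1=K+d_H(a,b)\ge\tfrac{23}{5}w$. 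In both cases $\|I+D\|_1=K+|a|+|b|=4w+2w=6w$ while $\|f\|_1\ge 4w$, hence $\|I+D\|_1\le\tfrac32\|f\|_1$: the stream has the $L_1$ $\tfrac32$-property, and therefore the $L_1$ $\alpha$-property for every $\alpha\ge 3/2$, since the $\alpha$-property only relaxes as $\alpha$ grows. All frequencies and the stream length are $O(n)$, consistent with the assumption $\log(mM)=O(\log n)$.

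Finally, a $(1\pm\tfrac1{16})$-approximation $\tilde L_1$ separates the cases: when $a=b$ it is at most $\tfrac{17}{16}\cdot4w=\tfrac{17}{4}w$, while when $a\ne b$ it is at least $\tfrac{15}{16}\cdot\tfrac{23}{5}w=\tfrac{69}{16}w>\tfrac{17}{4}w$, so Bob, who knows the public parameter $w$, outputs ``equal'' iff $\tilde L_1\le\tfrac{137}{32}w$ and is correct with probability $\ge 2/3$. This simulation is a legitimate one-way private-coin protocol: Alice draws whatever coins the algorithm consumes while processing her prefix, Bob draws fresh coins for his suffix, the concatenation is a uniform random string so the algorithm's $2/3$ guarantee still applies, and the only message sent is the $s$-bit memory state. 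Hence $s=\Omega(\log n)$. The main obstacle is the bookkeeping that makes the $\alpha$-property hold with $\alpha$ as small as $3/2$ \emph{simultaneously} with a $(1\pm\tfrac1{16})$-detectable multiplicative gap in $\|f\|_1$; this is exactly why a genuine code is needed rather than raw strings (a single-coordinate disagreement would be invisible to a constant-factor norm estimate) and why the dummy mass $K$ must be a carefully tuned constant multiple of the code weight (too large drowns the signal, too small breaks the deletion bound).
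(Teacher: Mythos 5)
Your proposal is correct and follows essentially the same route as the paper: a reduction from private-coin \textsc{Equality} through a constant-weight code with linear minimum distance (the paper's family $\mathcal{G}$ of $n/8$-subsets of $[n/2]$ with pairwise intersection at most $n/16$ is exactly such a code), with extra padding mass added so that $\|I+D\|_1 \leq \tfrac{3}{2}\|f\|_1$ while the equal/unequal cases differ in $\|f\|_1$ by a constant factor detectable by a $(1\pm\tfrac{1}{16})$ estimate. The only cosmetic difference is that you concentrate the padding on a single dummy coordinate of frequency $4w$, whereas the paper spreads it as unit mass over the last $n/2$ coordinates; both yield the $3/2$-property and the same $\Omega(\log n)$ bound.
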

\begin{proof}
	Let $\mathcal{G}$ be a family of $t = 2^{\Omega(n/2)} = 2^{n'}$ subsets of $[n/2]$, each of size $n/8$ such that no two sets have more than $n/16$ elements in common. As noted in \cite{alon1996space}, the existence of $\mathcal{G}$ follows from standard results in coding theory, and can be derived via a simple counting argument. We now reduce from \textsc{equality}, where Alice has $y\in \{0,1\}^{n'}$ and Bob has $x \in \{0,1\}^{n'}$. Alice can use $y$ to index into $\mathcal{G}$ to obtain a subset $s_y \subset [n/2]$, and similarly Bob obtains $s_x \subset [n/2]$ via $x$. Let $y' ,x'\in \{0,1\}^n$ be the characteristic vectors of $s_y,s_x$ respectively, padded with $n/2$ $0$'s at the end.
	
	Now Alice creates a stream $f$ on $n$ elements by first inserting $y'$, and then inserting the vector $v$ where $v_i= 1$ for $i > n/2$ and $0$ otherwise. She then sends the state of her streaming algorithm to Bob, who deletes $x'$ from the stream. Now if $x = y$, then $x' = y'$ and $\|f\|_1 = \|y' + v - x'\|_1 = n/2$. On the other hand, if $x \neq y$ then each of $s_x,s_y$ have at least $n/16$ elements in one and not in the other. Thus $\|y' - x'\|_1 \geq n/8$, so $\|f\|_1 \geq 5n/8$. Thus a streaming algorithm that produces $\tilde{L_1} \in (1 \pm 1/16)\|f\|_1$ with probability $2/3$ can distinguish between the two cases, and therefore solve \textsc{equality}, giving an $\Omega(\log(n'))$ lower bound. Since $n' = \Omega(n)$, it follows that such an $L_1$ estimator requires $\Omega(\log(n))$ bits as stated.
	
	Finally, note that at most $3n/4$ unit increments were made to $f$, and $\|f\|_1 \geq n/2$, so $f$ indeed has the $\alpha =3/2$ property.   
\end{proof}

\begin{theorem}
	\label{thm:l1estHardTwo}
Any algorithm that produces an estimate $\tilde{L} \in (1 \pm \eps)\|f\|_1$ with probability $11/12$ of a general turnstile stream $f \in \R^n$ with the strong $L_1$ $\alpha$ property requires $\Omega(\frac{1}{\eps^2}\log(\eps^2\alpha))$ bits of space.
\end{theorem}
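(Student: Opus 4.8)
The plan is to prove the lower bound by a reduction from a composition of Augmented Indexing with Gap--Hamming instances, following the same template as the proof of Theorem~\ref{HH:Hardness} but replacing its ``recover an entire subset'' gadget (which a heavy--hitters oracle supports) by a ``decide one Gap--Hamming bit'' gadget (which a $(1\pm\eps)$ $L_1$--estimator supports, and which is what injects the $\eps^{-2}$ factor). Fix $b=\Theta(\eps^{-2})$ with a sufficiently small constant, and let $\mathsf{GHD}_b$ be balanced Gap--Hamming on $b$--bit strings with gap $\Theta(\sqrt b)$; this problem has one--way randomized communication complexity $\Omega(b)=\Omega(\eps^{-2})$, even against the natural product input distribution. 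Let $r=\Theta(\log(\eps^2\alpha))$ and consider $\mathsf{AIND}^{(r)}\!\circ\mathsf{GHD}_b$: Alice holds $r$ independent instances $a^1,\dots,a^r$, Bob holds an index $j^\ast\in[r]$, his side $x^{j^\ast}$ of instance $j^\ast$, and all instances $j>j^\ast$ in full, and Bob must output $\mathsf{GHD}_b(a^{j^\ast},x^{j^\ast})$. By the standard chain--rule/round--elimination argument for Augmented Indexing composed with a hard base problem (a message of length $o(rb)$ has $o(b)$ mutual information with a uniformly random one of the $r$ instances given the later ones, contradicting the one--way information--cost lower bound for $\mathsf{GHD}_b$), the one--way communication complexity of $\mathsf{AIND}^{(r)}\!\circ\mathsf{GHD}_b$ is $\Omega(rb)=\Omega(\eps^{-2}\log(\eps^2\alpha))$.

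Next I would convert any $s$--bit, $11/12$--success $L_1$--estimator for strong $\alpha$--property general turnstile streams into a one--way protocol of cost $s$ for $\mathsf{AIND}^{(r)}\!\circ\mathsf{GHD}_b$. Partition (enough of) $[n]$ into blocks $G_1,\dots,G_r$ of size $b$, and fix geometrically increasing scalings $c_1<\cdots<c_r$, with $c_{j+1}/c_j$ a large constant and $c_1$ itself chosen sufficiently large (this last choice is what determines the $\eps^2$ in the final count). Alice builds the stream with frequency vector $v=\sum_j c_j\,a^j$, reading $a^j$ as its indicator on $G_j$, and sends the estimator's state to Bob. Using the instances he knows, Bob deletes $c_j-\lambda_j$ copies of every coordinate in $a^j$ for $j>j^\ast$, leaving a positive residue $\lambda_j=\Theta(c_j/\alpha)$ there, and for the target block deletes $c_{j^\ast}$ copies of every coordinate in $x^{j^\ast}$ while re--inserting $\mu=\Theta(c_{j^\ast}/\alpha)$ copies. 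A coordinate--by--coordinate check, exactly as in the proof of Theorem~\ref{HH:Hardness}, shows every updated coordinate $i$ obeys $I_i+D_i\le\alpha|f_i|$, so $f$ has the strong $\alpha$--property, once $r=\Theta(\log(\eps^2\alpha))$ and the $c_j$'s are as above. With $|a^j|=|x^j|=b/2$, the block contribution $\|f_{G_{j^\ast}}\|_1=(c_{j^\ast}-\mu)H+\mu b/2$ is exactly affine in the Hamming distance $H=\|a^{j^\ast}-x^{j^\ast}\|_1$, while $\|f_{G_j}\|_1$ for $j\neq j^\ast$ is a value Bob can compute from what he holds. Bob runs the estimator, gets $\tilde L=(1\pm\eps)\|f\|_1$, subtracts the known non--target contributions, and thereby localizes $H$ to additive error $O(\eps\|f\|_1)$; since the two Gap--Hamming cases separate $H$ by $\Theta(\sqrt b)$ and hence separate $\|f_{G_{j^\ast}}\|_1$ by $\Theta(c_{j^\ast}\sqrt b)$, the parameters can be tuned so that $\eps\|f\|_1$ is a small fraction of that gap, and Bob reads off $\mathsf{GHD}_b(a^{j^\ast},x^{j^\ast})$ correctly whenever the estimator does, i.e.\ with probability $11/12$. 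Hence $s=\Omega(\eps^{-2}\log(\eps^2\alpha))$.

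The main obstacle, and the only point beyond the unbounded--deletion bound of \cite{kane2010exact}, is the simultaneous accounting in the last paragraph. Bob's cancellations would otherwise produce zero--frequency coordinates, which are fatal for the strong $\alpha$--property, so he must leave residues of size $\Theta(c_j/\alpha)$; the total residue mass, together with the mass of the lower uncancelled blocks and the unavoidable multiplicative slack $\eps\|f\|_1$, must all remain below the $\Theta(c_{j^\ast}\sqrt b)=\Theta(c_{j^\ast}\eps^{-1})$ Gap--Hamming gap, and this must hold for \emph{every} target $j^\ast$, the tightest being $j^\ast=1$ where the residue mass from all higher blocks is greatest. Balancing these inequalities is what forces the geometric ratios, fixes $c_1$, and caps the number of usable levels at $r=\Theta(\log(\eps^2\alpha))$. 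I would carry out the $j^\ast=1$ case explicitly, observe the remaining cases are strictly easier, and separately verify that rounding $\lambda_j,\mu$ up to integers does not overrun the residue budget (which is precisely why $c_1$ cannot be taken to be $\Theta(1)$).
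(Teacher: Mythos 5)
Your streaming reduction is essentially the one the paper uses: geometrically scaled blocks indexed in augmented-indexing fashion, a Gap-Hamming decision in the target block, exact cancellation of the blocks Bob knows, and the observation that after cancellation the surviving mass at the target scale is affine in the Hamming distance, so a $(1\pm\Theta(\eps))$ estimate of $\|f\|_1$ decides the gap. Your bookkeeping for the strong $\alpha$-property (residues $\Theta(c_j/\alpha)$ rather than cancelling to zero, balanced against the $\Theta(c_{j^*}\sqrt{b})$ gap, worst case $j^*=1$) is sound and in fact slightly cleaner than the paper's: the paper instead leaves residue exactly $1$ on each touched coordinate with top scale $\beta 2^t=O(\alpha^2)$, so its stream only has the strong $O(\alpha^2)$-property, and it absorbs the square inside the logarithm; it also does not assume balanced inputs, and instead has Alice transmit the block weights $v_1,\dots,v_t$ at an extra $o(kt)$ bits of communication, which your balanced-weight convention avoids.

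The genuine gap is your communication-complexity foundation. You reduce from the composed problem $\mathrm{AIND}^{(r)}\circ\mathrm{GHD}_b$ and assert its $\Omega(rb)$ one-way lower bound via a ``standard chain-rule/round-elimination argument.'' That step is not routine as stated: the chain rule reduces it to a one-way \emph{information-cost} lower bound for Gap-Hamming under a product distribution (conditioned on the later instances), and Gap-Hamming is a promise problem, so under the natural product distribution a constant fraction of instances violate the promise; moreover you need the bound for the \emph{balanced} variant you fixed. None of this is carried out, and it is exactly the machinery the paper avoids: in the paper's proof the hard problem is plain \textsc{Ind} on $kt$ bits (Lemma \ref{lem:augIndex}), and the Gap-Hamming structure enters only as a per-block gadget through the known reduction of Theorem \ref{thm:indextoHam} (Bob applies the \textsc{Ind}-to-\textsc{Gap-Ham} reduction inside block $b_{j^*}$ to produce the vector he subtracts), so deciding the gap on the target block directly recovers Bob's bit and no direct-sum or information-complexity argument is needed. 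To make your write-up complete you should either prove the composed lower bound (handling the promise and balancedness issues), or restructure as the paper does: keep your stream construction, but let each block encode $\Theta(\eps^{-2})$ raw \textsc{Ind} bits via Theorem \ref{thm:indextoHam}, and invoke Lemma \ref{lem:augIndex} on the $\Theta(\eps^{-2}\log(\eps^2\alpha))$-bit instance.
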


To prove Theorem \ref{thm:l1estHardTwo}, we first define the following communication complexity problem.

\begin{definition}
	In the \textsc{Gap-Ham} communication complexity problem, Alice is given $x \in \{0,1\}^n$ and Bob is given $y \in \{0,1\}^n$. Bob is promised that either $\|x - y\|_1 < n/2 - \sqrt{n}$ (NO instance) or that $\|x - y\|_1 > n/2+ \sqrt{n}$ (YES instance), and must decide which instance holds.
\end{definition}

Our proof of Theorem \ref{thm:l1estHardTwo} will use the following reduction. Our proof is similar to the lower bound proof for unbounded deletion streams in \cite{kane2010exact}.

\begin{theorem}[\cite{jayram2008one}, \cite{woodruff2007efficient} Section 4.3]
	\label{thm:indextoHam}
	There is a reduction from \textsc{Ind} to \textsc{Gap-Ham}  such that
	deciding \textsc{Gap-Ham}  with probability at least $11/12$
	implies a solution to \textsc{Ind}  with probability at least
	$2/3$. Furthermore, in this reduction the parameter $n$
	in \textsc{Ind} is within a constant factor of that for the
	reduced \textsc{Gap-Ham} instance.
\end{theorem}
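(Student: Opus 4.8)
The plan is to recover the standard public-coin reduction from $\textsc{Ind}$ to $\textsc{Gap-Ham}$ of \cite{jayram2008one, woodruff2007efficient}. A useful first observation is that the reduction will only exploit the plain indexing structure, not the suffix $y_{i^*+1},\dots,y_n$ that Bob holds in Augmented Indexing; this is harmless, since any protocol solving plain Indexing also solves Augmented Indexing (Bob just ignores the suffix), so it suffices to build an Indexing protocol from a $\textsc{Gap-Ham}$ protocol. I would also first pass to the inner-product form of $\textsc{Gap-Ham}$: writing $x,z\in\{0,1\}^N$ as $u=\mathbf{1}-2x,\ w=\mathbf{1}-2z\in\{-1,1\}^N$, one has $\|x-z\|_1=(N-\langle u,w\rangle)/2$, so the NO instance becomes $\langle u,w\rangle>2\sqrt N$ and the YES instance becomes $\langle u,w\rangle<-2\sqrt N$.

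The reduction itself: given an $\textsc{Ind}$ instance with Alice holding $a\in\{0,1\}^n$ and Bob holding $i^*$, map $a$ to $b\in\{-1,1\}^n$ by $b_i=2a_i-1$, use the shared randomness to draw $N=\Theta(n)$ i.i.d.\ uniform sign vectors $G^1,\dots,G^N\in\{-1,1\}^n$, let Alice set $u_k=\mathrm{sign}(\langle b,G^k\rangle)$ and transmit the message her $\textsc{Gap-Ham}$ protocol would send on $u$, and let Bob set $w_k=G^k_{i^*}$, complete the $\textsc{Gap-Ham}$ protocol on $(u,w)$, and report its NO/YES answer as the guess $a_{i^*}=1$ / $a_{i^*}=0$. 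To analyze this I would decompose $\langle b,G^k\rangle=b_{i^*}G^k_{i^*}+R_k$ with $R_k=\sum_{i\neq i^*}b_iG^k_i$ a uniform $\pm1$ sum of $n-1$ terms independent of $G^k_{i^*}$, and compute, by conditioning on $R_k$, that $\mathbb{E}[u_kw_k]=b_{i^*}\cdot\Theta(1/\sqrt n)$ with the sign of $b_{i^*}$. Since the $u_kw_k$ are i.i.d.\ over $k$ and bounded by $1$, $\langle u,w\rangle$ concentrates around $b_{i^*}\,c_0N/\sqrt n$; choosing $N=c_1 n$ with $c_1$ a large enough constant pushes this mean to $\pm t\sqrt N$ for any constant $t$ we like (with sign $b_{i^*}$), and a Hoeffding bound then shows $\langle u,w\rangle$ stays on the correct side of $\pm2\sqrt N$ except with probability $2e^{-(t-2)^2/2}$, which for a suitable constant $t$ is at most $1/12$. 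On that high-probability event the constructed pair is a legal $\textsc{Gap-Ham}$ instance whose answer encodes $a_{i^*}$, so a $\textsc{Gap-Ham}$ protocol correct with probability $\ge 11/12$ recovers $a_{i^*}$ with probability at least $(11/12)^2\ge 2/3$; the message length is unchanged and $N=c_1 n=\Theta(n)$, as required.

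The step I expect to be the crux is the anticoncentration estimate behind $\mathbb{E}[u_kw_k]=b_{i^*}\cdot\Theta(1/\sqrt n)$: after conditioning, this amounts to bounding $\tfrac12\mathbb{E}[\mathrm{sign}(R_k+1)-\mathrm{sign}(R_k-1)]$ from below by a positive constant times $1/\sqrt n$, i.e.\ to showing $\mathbb{P}[R_k=0]=\Theta(1/\sqrt n)$ when $n$ is odd and $\mathbb{P}[|R_k|=1]=\Theta(1/\sqrt n)$ when $n$ is even. This is elementary via Stirling's approximation for the central binomial coefficient, but one has to be careful about the parity of the support of $R_k$ to know which case applies, treat small $n$ trivially, and keep track of the sign so that both $a_{i^*}=0$ and $a_{i^*}=1$ land strictly on opposite sides of the gap — which is exactly why one encodes $a$ over all coordinates as $2a_i-1$ rather than summing only the $1$'s. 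The remaining work — fixing the constants $c_0,c_1,t$ so that $11/12$ cleanly implies $2/3$, and the arithmetic relating the Hamming gap $\sqrt N$ to the inner-product gap $2\sqrt N$ — is routine.
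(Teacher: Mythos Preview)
The paper does not prove this theorem; it is stated as a citation to \cite{jayram2008one} and \cite{woodruff2007efficient} and used as a black box in the proof of Theorem~\ref{thm:l1estHardTwo}. Your proposal faithfully reconstructs the standard public-coin reduction from those references: the sign-of-random-projection encoding, the decomposition $\langle b,G^k\rangle=b_{i^*}G^k_{i^*}+R_k$, the anticoncentration estimate $\mathbb{P}[R_k\in\{-1,0,1\}]=\Theta(1/\sqrt n)$ via Stirling, and Hoeffding for the concentration of $\langle u,w\rangle$ are exactly the ingredients, and your bookkeeping of the gap parameters and the success probability $(11/12)^2\ge 2/3$ is correct.

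One small point worth tightening in a written version: when $n$ is even, $b_{i^*}G^k_{i^*}+R_k$ can equal $0$, so you must fix a convention for $\mathrm{sign}(0)$ (or break ties with an extra shared random bit). You already flag the parity issue for $R_k$; just make explicit that with any fixed convention the computation $\tfrac12\mathbb{E}[\mathrm{sign}(R_k+1)-\mathrm{sign}(R_k-1)]$ still evaluates to $\Theta(1/\sqrt n)$ by the symmetry $\mathbb{P}[R_k=1]=\mathbb{P}[R_k=-1]$. This is cosmetic; the argument is sound.
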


We are now ready to prove Theorem \ref{thm:l1estHardTwo}.

\begin{proof}
	Set $k = \lfloor 1/\eps^2 \rfloor$, and let $t = \lfloor \log(\alpha \eps^2)\rfloor$. The reduction is from \textsc{Ind}. Alice receives $x \in \{0,1\}^{kt}$, and Bob obtains $i^* \in [kt]$ and $x_j$ for $j > i^*$. Alice conceptually breaks her string $x$ up into $t$ contiguous blocks $b_i$ of size $k$. Bob's index $i^*$ lies in some block $b_{j^*}$ for $j^* = j^*(i^*)$, and Bob knows all the bits of the blocks $b_i$ for $i > j^*$. Alice then applies the reduction of Theorem \ref{thm:indextoHam} on each block $b_i$ separately to obtain new vectors $y_i$ of length $ck$ for $i \in [t]$, where $c \geq 1$ is some small constant. Let $\beta = c^2 \eps^{-2} \alpha$. Alice then creates a stream $f$ on $ckt $ items by inserting the update $((i,j),\beta 2^i + 1)$ for all $(i,j)$ such that $(y_i)_j=1$. Here we are using $(i,j) \in [t] \times [ck]$ to index into $[ckt]$. Alice then computes the value $v_i = \|y_i\|_1$ for $i \in [t]$, and sends $\{v_1,\dots,v_t\}$ to Bob, along with the state of the streaming algorithm run on $f$.
	
	Upon receiving this, since Bob knows the bits $y_z$ for $z \geq i^*$, Bob can run the same reductions on the blocks $b_i$ as Alice did to obtain $y_i$ for $i > j^*$. He then can make the deletions $((i,j),- \beta 2^i)$ for all $(i,j)$ such that $i > j^*$ and $(y_i)_{j} = 1$, leaving these coordinate to be $f_{(i,j)} = 1$. Bob then performs the reduction from  \textsc{Ind} to \textsc{Gap-Ham} specifically on the block $b_{j^*}$ to obtain a vector $y(B)$ of length $ck$, such that deciding whether $\|y(B) - y_{j^*}\|_1 > ck/2 + \sqrt{ck}$ or $\|y(B) - y_{j^*}\|_1 < ck/2 - \sqrt{ck}$ with probability $11/12$ will allow Bob to solve the instance of \textsc{Ind} on block $b_{j^*}$ with index $i^*$ in $ b_{j^*}$. Then for each $i$ such that $y(B)_i = 1$, Bob makes the update $((j^*,ji),- \beta 2^{j^*} )$ to the stream $f$. He then runs an $L_1$ approximation algorithm to obtain $\tilde{L} = (1 \pm \eps)\|f\|_1$ with probability $11/12$. Let $A$ be the number of indices such that $y(B)_i > (y_{j^*})_i$. Let $B$ be the number of indices such that $y(B)_i < (y_{j^*})_i$. Let $C$ be the number of indices such that $y(B)_i = 1 = (y_{j^*})_i$, and let $D$ be the number of indices $(i,j)$ with $i > j^*$ such that $(y_i)_j=1$. Then we have
	\[\|f\|_1 =  \beta 2^{j^*} A + ( \beta2^{j^*} + 1)B + C + D+ \sum_{i < j^*} v_i ( \beta2^i + 1)  \]
	Let $Z = C + D + B$, and note that $Z < ckt < \beta$. Let $\eta = \sum_{i < j^*} v_i (\beta 2^i + 1)$. Bob can compute $\eta$ exactly knowing the values $\{v_1,\dots,v_t\}$.  Rearranging terms, we have
	$\|y(B) - y_{j^*}\|_1 = (\|f\|_1 - Z - \eta)/(\beta 2^{j^*}) =  (\|f\|_1 - \eta)/(\beta 2^{j^*}) \pm 1$. Recall that Bob must decide whether $\|y(B) - y_{j^*}\|_1 > ck/2 + \sqrt{ck}$ or $\|y(B) - y_{j^*}\|_1 < ck/2 - \sqrt{ck}$. Thus, in order to solve this instance of \textsc{Gap-Ham} it suffices to obtain an additive $\sqrt{ck}/8$ approximation of $\|f\|_1 /(\beta2^{j^*})$. Now note that  
	\[\|f\|_1 /(\beta2^{j^*}) \leq ckt(\beta2^{j^*})^{-1} + (\beta2^{j^*})^{-1} \sum_{i=1}^{j^*}\beta 2^{j} \cdot ck\]
	\[	\leq1 +  4ck 	\]
	where the $ckt < \beta$ term comes from the fact that every coordinate that is ever inserted will have magnitude at least $1$ at the end of the stream. Taking $\eps' = \sqrt{ck}/(40 ck) =  1/(40\sqrt{ck}) = O(1/\eps)$, it follows that a $(1 \pm \eps')$ approximation of $\|f\|_1$ gives a  $\sqrt{ck}/8$ additive approximation of $\|f\|_1 /(\beta2^{j^*})$ as required. Thus suppose there is an algorithm $A$ that can obtain such a $(1 \pm O(1/\eps^2))$ approximation with probability $11/12$. By the reduction of Theorem \ref{thm:indextoHam} and the hardness of \textsc{Ind} in Lemma \ref{lem:augIndex}, it follows that this protocol just described requires $\Omega(kt)$ bits of space. Since the only information communicated between Alice and Bob other than the state of the streaming algorithm was the set $\{v_1,\dots,v_t\}$, which can be sent with $O(t\log(k)) = o(kt)$ bits, it follows that $A$ must have used $\Omega(kt) = \Omega(\eps^{-2}\log(\alpha \eps^))$ bits of space, which is the desired lower bound.
	
	Now note that every coordinate $i$ that was updated in the stream had final magnitude $|f_i| \geq 1$. Furthermore, no item was inserted more than $\beta 2^t +1 < \alpha^2c^2 + 1$ times, thus the stream has the strong $O(\alpha^2)$ property. We have proven that any algorithm that gives a $(1\pm \eps)$ approximation of $\|f\|_1$ where $f$ is a strong $\alpha$-property stream with probability $11/12$ requires $\Omega(\eps^{-2} \log(\eps^2 \sqrt{\alpha})) \ab= \Omega(\eps^{-2} \log(\eps^2 \alpha))$ bits, which completes the proof.
\end{proof}

We now give a matching lower bound for $L_1$ estimation of strict-turnstile strong $\alpha$-property streams. This exactly matches our upper bound of Theorem \ref{thm:l1est}, which is for the more general $\alpha$-property setting.

\begin{theorem}
	\label{thm:L1esthardstrict}
	For $\eps \in (0,1/2)$ and $\alpha < n$, any algorithm which gives an $\eps$-relative error approximation of the $L_1$ of a strong $L_1$ $\alpha$ property stream in the strict turnstile setting with probability at least $2/3$ must use $\Omega(\log(\alpha) + \log(1/\eps) + \log\log(n))$ bits.
\end{theorem}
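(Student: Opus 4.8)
The lower bound has three additive terms, and I would establish each by a separate, essentially independent argument, since they come from genuinely different sources of hardness. The $\Omega(\log\log n)$ term is the cheapest: an $\eps$-relative error $L_1$ estimator of an insertion-only stream (which is a strong $1$-property stream, hence covered) must in particular distinguish a stream of length $1$ from a stream of length $m = \poly(n)$, and more generally distinguish stream lengths that differ by a $(1+\eps)$ factor; since there are $\Omega(\log m / \log(1+\eps))$ such scales and the output must pin down the scale, the algorithm's state takes $\Omega(\log\log m) = \Omega(\log\log n)$ values — or, even more simply, cite the standard fact that approximate counting requires $\Omega(\log\log n)$ bits. The $\Omega(\log(1/\eps))$ term I would get from a reduction where Alice encodes one of roughly $1/\eps$ possible frequency values into a single coordinate (inserting $k$ copies of item $1$ for $k \in \{1, 2, \dots, \Theta(1/\eps)\}$, padded so the strong $\alpha$-property holds), and a $(1\pm\eps)$-approximation of $\|f\|_1$ recovers $k$ exactly; this is an $\Omega(\log(1/\eps))$-bit one-way communication problem (indexing into $1/\eps$ values, or \textsc{Equality} on a domain of size $\Theta(1/\eps)$).

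The interesting term is $\Omega(\log\alpha)$, and here I would reduce from \textsc{Equality} on strings of length $\Theta(\log\alpha)$ (private-coin, giving the $\Omega(\log\log\alpha)$... no — we want $\Omega(\log\alpha)$, so we need \textsc{Equality} on a domain of size $\poly(\alpha)$, i.e. strings of length $\Theta(\log\alpha)$, which has $\Omega(\log\alpha)$ private-coin complexity). Following the template already used in the proof of Theorem \ref{thm:l1estHard} with the Nisan–Wigderson-style set family, let $\mathcal{G}$ be a family of $\alpha^{\Omega(1)}$ subsets of a ground set of size $\Theta(\log\alpha)$... but wait, that ground set is tiny and won't interact well with the strict-turnstile strong-$\alpha$ requirement. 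Instead I would do the following direct construction: pick a set system $\mathcal{G}$ of $2^{\Omega(\log\alpha)} = \alpha^{c}$ subsets of $[\Theta(\log\alpha)]$ each of size $\Theta(\log\alpha)/4$ with pairwise intersections at most $\Theta(\log\alpha)/8$. Alice uses her \textsc{Equality} input $y$ to select $s_y \in \mathcal{G}$, inserts each element of $s_y$ with multiplicity $\alpha$ into coordinates $1,\dots,\Theta(\log\alpha)$, and additionally inserts $1$ unit into each of a block of $\Theta(\log\alpha)$ "padding" coordinates so that every updated coordinate ends with frequency $\geq 1$ and received $\leq \Theta(\alpha)$ updates — giving the strong $O(\alpha)$-property. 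Bob, with input $x$, forms $s_x$ and deletes $\alpha$ units from each element of $s_x$ (only where frequencies remain nonnegative, which holds since all frequencies are $\geq \alpha$ before deletion where $s_x$ acts... this needs the supports to be handled carefully — if $i \in s_x \setminus s_y$ the frequency there is $0$, so Bob cannot delete; the standard fix is for Alice to pre-insert $\alpha$ units into *every* one of the $\Theta(\log\alpha)$ coordinates so Bob can always delete). Then if $x=y$, $\|f\|_1 = \Theta(\log\alpha)$ (from padding and leftover base), while if $x \neq y$ the symmetric difference of $s_x, s_y$ has size $\Omega(\log\alpha)$, contributing $\Omega(\alpha \log\alpha)$ to $\|f\|_1$, which is larger by a factor $\gg 2$; a $(1\pm\eps)$ (indeed $(1\pm 1/3)$) estimator distinguishes the cases, solving \textsc{Equality}, so the space is $\Omega(\log\alpha)$.

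Finally I would combine the three bounds. Since lower bounds from disjoint reductions do not automatically sum, I would either (i) simply observe that the maximum of the three is $\Omega(\log\alpha + \log(1/\eps) + \log\log n)$ up to a constant factor (which suffices as these are additive terms in a $\Theta(\cdot)$ sense and at most one dominates), or (ii) if an exact additive statement is wanted, build a single combined hard instance on disjoint blocks of coordinates — one block running the $\log\alpha$ \textsc{Equality} gadget, one running the $1/\eps$ frequency-encoding gadget, one forcing the $\log\log n$ counting lower bound via a long stream — and argue that a single $L_1$ estimator must effectively solve all three, so its space is the sum. The main obstacle I anticipate is the bookkeeping in the $\log\alpha$ reduction to ensure the stream stays in the strict turnstile model (all frequencies nonnegative at all times) *and* satisfies the strong $\alpha$-property *and* still has the gap in $\|f\|_1$ between the two cases after all the base padding is added; this requires choosing the padding multiplicity and block sizes so that the padding contributes $o(\alpha \log\alpha)$ (so it doesn't swamp the signal in the $x \neq y$ case) while still being $\geq 1$ per coordinate and keeping per-coordinate update counts $O(\alpha)$. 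This is the delicate but routine part; the communication-complexity core is standard.
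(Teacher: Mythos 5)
Your $\Omega(\log(1/\eps))$ and $\Omega(\log\log n)$ terms are fine and essentially match the paper, which gets both from the observation that the output must distinguish $\log(\mathbb{M})/\eps$ candidate values of a $(1\pm\eps)$-approximation (though your parenthetical fallback to \textsc{Equality} on a domain of size $\Theta(1/\eps)$ would only give $\Omega(\log\log(1/\eps))$; the direct ``recover $k$ exactly among $\Theta(1/\eps)$ values'' argument is the one that works). The genuine gap is in the $\Omega(\log\alpha)$ term, and it is twofold. First, the communication bound you invoke is wrong: the private-coin one-way complexity of \textsc{Equality} on strings of length $L$ is $\Theta(\log L)$, so \textsc{Equality} over a domain of size $\poly(\alpha)$ (strings of length $\Theta(\log\alpha)$, which is what indexing into your family $\mathcal{G}$ of $\alpha^{c}$ sets amounts to) yields only $\Omega(\log\log\alpha)$, not $\Omega(\log\alpha)$; you appear to have conflated domain size with input length (or randomized with deterministic complexity). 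Second, the gadget itself breaks in the strict turnstile model: once you adopt your own fix of pre-inserting $\alpha$ units into every coordinate so that Bob's deletions never drive a frequency negative, the final vector is nonnegative, so $\|f\|_1$ equals total insertions minus total deletions, which is exactly $\alpha\cdot\Theta(\log\alpha)$ in both the $x=y$ and $x\neq y$ cases because $|s_x|=|s_y|$ — the $L_1$ of a strict turnstile stream cannot see the symmetric difference of equal-sized sets, so there is no gap for the estimator to detect (the general turnstile proof of Theorem \ref{thm:l1estHard} gets its gap precisely from frequencies going negative on $s_x\setminus s_y$). The ``delicate but routine'' padding step you flagged is thus not bookkeeping; it is where the approach fails.

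The paper's proof avoids both issues by reducing from Augmented Indexing (\textsc{Ind}) rather than \textsc{Equality}: Alice has $t=\log_{10}(\alpha/4)$ bits and encodes bit $i$ as frequency $\alpha 10^{i}x_i+1$ on coordinate $i$; Bob, knowing the suffix, deletes the known terms $\alpha 10^i x_i$ for $i>j$ (all frequencies stay $\geq 1$, so the stream is strict turnstile, and each coordinate receives $O(\alpha^2)$ updates with final frequency $\geq 1$, giving the strong $O(\alpha^2)$-property), and then a constant-factor $L_1$ estimate distinguishes $x_j=1$ (mass $\geq \alpha 10^j$) from $x_j=0$ (mass $\leq \alpha 10^j/3$) because the geometric weights make the unknown prefix negligible. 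Since \textsc{Ind} has an $\Omega(t)$ lower bound even with public coins, this yields $\Omega(\log\alpha)$ directly. If you want to salvage an \textsc{Equality}-based route you would need input strings of length $\alpha^{\Omega(1)}$ (e.g., the Theorem \ref{thm:l1estHard} construction on a universe of size $\Theta(\alpha)$), but then you must still manufacture an $L_1$ gap without negative frequencies, which again pushes you toward an \textsc{Ind}-style asymmetric construction.
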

	\begin{proof}
	The reduction is from \textsc{ind}. Alice, given $x \in \{0,1\}^{t}$ where $t = \log_{10}(\alpha/4)$, constructs a stream $u \in \R^{t}$ such that $u_i = \alpha 10^ix_i+1$. She then sends the state of the stream $u$ to Bob who, given $j \in [n]$ and $x_{j+1},\dots,x_{t}$, subtracts off $v$ where $v_i = \alpha 2^ix_i$ for $i \geq j+1$, and $0$ otherwise. He then runs the $L_1$ estimation algorithm on $u - v$, and obtains the value $L$ such that $L  = (1\pm \eps)L_1$ with probability $2/3$. We argue that if $L  = (1\pm \eps)L_1$  (for $\eps <1/2)$ then $L > (1-\eps) (\alpha10^j )> \alpha10^j/2$ iff $x_j = 1$. If $x_j=1$ then $(u_j - v_j) = \alpha 10^j +1$, if $L > (1-\eps)L_1$ the result follows. If $x_j = 0$, then the total $L_1$ is at most $\alpha/4 + \alpha\sum_{i=1}^{j-1}10^i < \alpha \frac{10^i}{9} + \alpha/4 < \alpha 10^i/3$, so $L < (1 + \eps)L_1 < \alpha 10^j / 2$ as needed to solve \textsc{ind}. Note that each coordinate has frequency at least $1$ at the end of the stream, and no coordinate received more than $\alpha^2$ updates. Thus the stream has the strong $\alpha^2$-property.  By Lemma \ref{lem:augIndex}, it follows that any one pass algorithm for constant factor $L_1$ estimation of a strict turnstile strong $\alpha$-property stream requires $\Omega(\log(\sqrt{\alpha})) = \Omega(\log(\alpha))$ bits.
	
	Finally, note that in the restricted insertion only case (i.e. $\alpha = 1$), estimating the $L_1$ norm means estimating the value $m$ given only the promise that $m \leq \mathbb{M} =  \poly(n)$. There are $\log(\mathbb{M})/\eps$ powers of $(1+\eps)$ that could potentially be a $(1 \pm \eps)$ approximation of $m$, so to represent the solution requires requires $\log(\log(\mathbb{M})/\eps) = O(\log\log(n) + \log(1/\eps))$ bits of space, which gives the rest of the stated lower bound. 
\end{proof}

We now prove a lower bound on $L_0$ estimation. Our lower bound matches our upper bound of Theorem \ref{thm:ours} up to $\log\log(n)$ and $\log(1/\eps)$ multiplicative factors, and a $\log(n)$ additive term. To do so, we use the following Theorem of \cite{kane2010exact}, which uses a one way two party communication complexity lower bound.

\begin{theorem}[Theorem A.2. \cite{kane2010exact}] \label{Thm:L0hardnessKane}
	Any one pass algorithm that gives a $(1 \pm \eps)$ multiplicative approximation of the $L_0$ of a strict turnstile stream with probability at least $11/12$ requires $\Omega(\eps^{-2}\log(\eps^2n))$ bits of space.
\end{theorem}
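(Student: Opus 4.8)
The plan is to prove Theorem~\ref{Thm:L0hardnessKane} by a one-way communication reduction from Augmented Indexing, in the same spirit as the proof of Theorem~\ref{thm:l1estHardTwo}, but with the exponentially growing \emph{frequencies} used there replaced---since $L_0$ is insensitive to magnitudes---by exponentially growing numbers of distinct \emph{universe elements}. Fix $k = c/\eps^2$ for a sufficiently small constant $c>0$ and $r = \Theta(\log(\eps^2 n))$, so that an instance of \textsc{Ind} on a string of length $N = rk = \Theta(\eps^{-2}\log(\eps^2 n))$ is hard by Lemma~\ref{lem:augIndex}. Partition $[n]$ into disjoint blocks $P_1,\dots,P_r$ with $|P_i| = \Theta(2^i k)$ (so $\sum_i |P_i| = O(n)$), and split each $P_i$ into $k$ ``slots,'' each slot a bundle of $2^i$ distinct elements.

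Alice breaks her \textsc{Ind} string into $r$ blocks $b_1,\dots,b_r$ of $k$ bits, applies the \textsc{Ind}-to-\textsc{Gap-Ham} reduction of Theorem~\ref{thm:indextoHam} to each $b_i$ to obtain $y_i \in \{0,1\}^{k}$, and, for every $i$ and slot $\ell$, inserts each of the $2^i$ elements of slot $\ell$ of $P_i$ with frequency $1+(y_i)_\ell$. She sends Bob the streaming sketch together with the $r$ numbers $\|y_1\|_1,\dots,\|y_r\|_1$ (an extra $O(r\log k) = o(N)$ bits). Bob's \textsc{Ind} index lies in some block $b_{j^*}$; knowing $b_i$ for $i>j^*$ he deletes each element of $P_i$ exactly $1+(y_i)_\ell$ times, zeroing those blocks out, and in $P_{j^*}$ he forms his Gap-Hamming half $y(B)$ via Theorem~\ref{thm:indextoHam} and deletes each element of slot $\ell$ once for every $\ell$ with $y(B)_\ell=1$. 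The surviving frequency in slot $\ell$ of $P_{j^*}$ is then $(y_{j^*})_\ell$ when $y(B)_\ell=1$ and $1+(y_{j^*})_\ell$ otherwise, so $P_{j^*}$ contributes exactly $2^{j^*}(k-a)$ to $L_0$, where $a = |\{\ell : y(B)_\ell=1,\ (y_{j^*})_\ell=0\}|$; a short count gives $\|y(B)-y_{j^*}\|_1 = 2a + \|y_{j^*}\|_1 - \|y(B)\|_1$, all other terms known to Bob. The blocks $P_i$ with $i<j^*$ contribute exactly $\sum_{i<j^*}2^i k$ (every slot alive, frequency $1$ or $2$), again known to Bob. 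Hence the total $L_0$ is $\Theta(2^{j^*}k)$ and equals a Bob-computable quantity minus $2^{j^*}a$, so a $(1\pm\eps)$-multiplicative estimate pins down $a$ to additive $O(\eps k) = O(\sqrt c\,\sqrt k)$ and hence $\|y(B)-y_{j^*}\|_1$ to additive $O(\sqrt c\,\sqrt k)$, which for $c$ small enough beats the $\Theta(\sqrt k)$ Gap-Hamming promise gap. Bob therefore decides \textsc{Gap-Ham} on $b_{j^*}$, and by Theorem~\ref{thm:indextoHam} solves \textsc{Ind}, so any such $L_0$ estimator uses $\Omega(N) = \Omega(\eps^{-2}\log(\eps^2 n))$ bits.

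The step I expect to be the crux---and the one point where the $L_0$ argument genuinely differs from the $L_1$ argument of Theorem~\ref{thm:l1estHardTwo}---is keeping the stream \emph{strict}-turnstile: every one of Bob's deletions in $P_{j^*}$ must be legal even though Bob does not know which slots Alice actually turned on (that is precisely the bit he is trying to learn). The frequency-$1$-versus-$2$ encoding above is what buys this, since it leaves every slot of $P_{j^*}$ at frequency at least $1$ before Bob touches it, so his single deletions never create a negative coordinate, while the two \textsc{Gap-Ham} cases still differ in $L_0$ by the required amount. The rest is bookkeeping with the geometric block sizes---checking that $\sum_i|P_i|=O(n)$, that the lower blocks' contribution is \emph{exactly} (not merely approximately) known to Bob from the communicated norms, and that the constant $c$ is chosen so the $(1\pm\eps)$ error is below half the Gap-Hamming gap. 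Since this is Theorem~A.2 of \cite{kane2010exact}, one could of course simply invoke it; the sketch above is the route I would take to reprove it, and it is exactly this bound (with a suitably chosen effective universe size in place of $n$) that will drive the matching lower bound for $\alpha$-property $L_0$ estimation against Theorem~\ref{thm:ours}.
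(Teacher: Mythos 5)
The paper does not actually prove this statement—it imports it verbatim from \cite{kane2010exact}—so there is no in-paper proof to compare against; your sketch is a faithful reconstruction along the same lines as the cited original and as this paper's own Theorem \ref{thm:l1estHardTwo}, with the frequency-$1$-versus-$2$ encoding correctly securing strict-turnstile legality of Bob's deletions, and the identity $\|y(B)-y_{j^*}\|_1 = 2a + \|y_{j^*}\|_1 - \|y(B)\|_1$ together with the exactly-known lower-block contribution $\sum_{i<j^*}2^i k$ making the Gap-Hamming decision go through. The error accounting ($\eps L_0 = O(\sqrt{c})\,2^{j^*}\sqrt{k}$, below the $\Theta(\sqrt{k})$ promise gap for small $c$) and the universe budget ($\sum_i 2^i k = O(\eps^2 n\cdot \eps^{-2}) = O(n)$) both check out, so your proposal is correct.
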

Setting $n=O(\alpha)$ will give us:
\begin{theorem}
	\label{thm:l0esthard}
	For $\eps \in (0,1/2)$ and $\alpha < n/\log(n)$, any one pass algorithm that gives a $(1 \pm \eps)$ multiplicative approximation of the $L_0$ of a $L_0$ $\alpha$-property stream in the strict turnstile setting with probability at least $11/12$ must use $\Omega(\eps^{-2}\log(\eps^{2}\alpha) + \log\log(n))$ bits of space.
\end{theorem}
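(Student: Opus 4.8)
The plan is to establish the two summands of the lower bound by independent arguments and then combine them. The $\eps^{-2}\log(\eps^2\alpha)$ term comes from a black-box reduction to Theorem~\ref{Thm:L0hardnessKane}, with the universe size of the hard instance chosen to be $\Theta(\alpha)$; the key observation is that \emph{every nonempty strict turnstile stream over a universe of size at most $\alpha$ automatically has the $L_0$ $\alpha$-property}, so the hard instances of Theorem~\ref{Thm:L0hardnessKane} are legal inputs for us once their universe parameter is set correctly. The $\log\log(n)$ term comes from a trivial counting argument on the number of distinct outputs the algorithm must be capable of producing.

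For the first term, set $N=\lfloor\alpha\rfloor$; since $\alpha<n/\log(n)<n$ we have $N\le n$, so we may regard $[N]$ as the set of coordinates $\{1,\dots,N\}\subseteq[n]$ actually touched by the stream. Applying Theorem~\ref{Thm:L0hardnessKane} with its universe parameter equal to $N$ produces a distribution over strict turnstile streams $\sigma$ on $[N]$ for which any one-pass $(1\pm\eps)$-approximation of $L_0(\sigma)$ succeeding with probability $\ge 11/12$ requires $\Omega(\eps^{-2}\log(\eps^2N))$ bits. In the reduction underlying that theorem the output stream always has $L_0(\sigma)\ge 1$ — both the YES and NO instances of the underlying Gap-Hamming problem have Hamming weight $\Theta(N)$, just as in the construction of Theorem~\ref{thm:l1estHardTwo} — so $\sigma$ is never the empty stream. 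Hence $\|I+D\|_0 \le N \le \alpha \le \alpha\,L_0(\sigma)$, i.e.\ $\sigma$ satisfies the $L_0$ $\alpha$-property. (If one prefers not to rely on $L_0(\sigma)\ge 1$, take instead $N=\lfloor\alpha\rfloor-1$ and prepend one insertion $(N+1,1)$ to a fresh coordinate: this forces $L_0\ge 1$, keeps $\|I+D\|_0\le N+1\le\alpha$, and perturbs $L_0$ by at most $1$, which is negligible since the hard instance has $L_0=\Omega(\eps^{-2})$.) Therefore any algorithm solving $(1\pm\eps)$-approximate $L_0$ estimation for strict turnstile $L_0$ $\alpha$-property streams with probability $11/12$ in particular solves these hard instances, and so uses $\Omega(\eps^{-2}\log(\eps^2 N))=\Omega(\eps^{-2}\log(\eps^2\alpha))$ bits. (When $\eps^2\alpha=O(1)$ this quantity is $O(\eps^{-2})$ and is still implied by the theorem at universe size $\Theta(\eps^{-2})$ whenever $\alpha=\Omega(\eps^{-2})$; for yet smaller $\alpha$ only the $\log\log(n)$ term survives, matching the insertion-only regime.)

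For the $\log\log(n)$ term, consider the $n$ insertion-only streams $\tau_1,\dots,\tau_n$, where $\tau_k$ inserts coordinates $1,\dots,k$ once each, so $L_0(\tau_k)=k$ and $\tau_k$ trivially has the $\alpha$-property. An algorithm that always returns a constant-factor (in particular $(1\pm\eps)$) approximation of $L_0$ must, on inputs $\tau_1,\tau_c,\tau_{c^2},\dots\le\tau_n$ for a suitable constant $c$, return distinct values, so its output — a function of its final memory state — takes $\Omega(\log n)$ distinct values, forcing $\log_2(\Omega(\log n))=\Omega(\log\log n)$ bits of memory. Since a lower bound of $\Omega(a)$ from one input family together with a lower bound of $\Omega(b)$ from another gives $\Omega(a+b)$, combining the two arguments yields $\Omega(\eps^{-2}\log(\eps^2\alpha)+\log\log n)$, as claimed. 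The only delicate point is the first step: one must check that Theorem~\ref{Thm:L0hardnessKane}'s hard instance (i) may be taken on a universe of size exactly $\Theta(\alpha)$ — immediate, since that theorem's universe parameter is free — and (ii) genuinely forms $L_0$ $\alpha$-property streams, which follows from the universe-size bound together with $L_0\ge 1$. No further computation is needed beyond these two observations.
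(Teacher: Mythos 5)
Your proposal is correct and takes essentially the same route as the paper: instantiate the cited bound of Theorem~\ref{Thm:L0hardnessKane} on a universe of size $\Theta(\alpha)$, ensure $L_0 \geq 1$ so the stream has the $L_0$ $\alpha$-property (the paper does this exactly via your fallback of inserting one fresh dummy coordinate), and obtain the $\log\log(n)$ term from the same output-counting argument. The only glosses — relying on internals of the cited hard instance for $L_0\geq 1$ and for $L_0=\Omega(\eps^{-2})$ — are already repaired by your own fallback together with the one-line check that a $(1\pm\eps)$ estimate of $L_0+1$ gives a $(1\pm 2\eps)$ estimate of $L_0$ whenever $L_0\geq 1$ (and detects $L_0=0$), which is precisely the computation the paper carries out.
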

\begin{proof}
	We can first construct the same stream used in the communication complexity lower bound of Theorem \ref{Thm:L0hardnessKane} on $n = \alpha - 1$ elements, and then allow Bob to insert a final dummy element at the end of the stream with frequency $1$. The stream now has $\alpha$ elements, and the $L_0$ of the resulting stream, call it $R$, is exactly $1$ larger than the initial $L_0$ (which we will now refer to as $L_0$). Moreover, this stream has the $L_0$ $\alpha$ property since the final frequency vector is non-zero and there $\alpha$ items in the stream. If we then obtained an estimate $\tilde{R} = (1 \pm \eps)R =(1 \pm \eps)(L_0 + 1) $, then the original $L_0 = 0$ if $\tilde{R}< (1 + \eps)$. Otherwise $\tilde{R} - 1 = (1 \pm O(\eps)L_0$, and a constant factor rescaling of $\eps$ gives the desired approximation of the initial $L_0$. By Theorem  \ref{Thm:L0hardnessKane}, such an approximation requires $\Omega(\eps^{-2}\log(\eps^2 \alpha))$ bits of space, as needed. The $\Omega(\log\log(n))$ lower bound follows from the proof of Lemma \ref{thm:L1esthardstrict}, replacing the upper bound $\mathbb{M} \geq m$ with $n$.
\end{proof}

Next, we give lower bounds for $L_1$ and support sampling. Our lower bound for $L_1$ samplers holds in the more restricted strong $\alpha$-property setting, and for such streams we show that even those which return an index from a distribution with variation distance at most $1/6$ from the $L_1$ distribution $|f_i|/\|f\|_1$ requires $\Omega(\log(n)\log(\alpha))$ bits. In this setting, taking $\eps=o(1)$, this bound matches our upper bound from Theorem \ref{thm:l1samp} up to $\log\log(n)$ terms. For $\alpha = o(n)$, our support sampling lower bound matches our upper bound in Theorem \ref{thm:suppsamp}. 

\begin{theorem} \label{thm:l1samphard}
	Any one pass $L_1$ sampler of a strong $L_1$ $\alpha$-property stream $f$ in the strict turnstile model with an output distribution that has variation distance at most $1/6$ from the $L_1$ distribution $|f_i|/\|f\|_1$ and succeeds with probability $2/3$ requires $\Omega(\log(n)\log(\alpha))$ bits of space.
\end{theorem}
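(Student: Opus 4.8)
The plan is to reduce from Augmented Indexing ($\textsc{Ind}$) using a layered construction in the spirit of the heavy hitters lower bound (Theorem~\ref{HH:Hardness}), but tuned so that a sampler's almost-certain behaviour on a single dominant coordinate already leaks a full $\log n$-bit block of Alice's input. Fix a sufficiently large constant $D$ and set $r=\Theta(\log_D\alpha)$, so that $D^r=\Theta(\alpha)$. I would partition $[n]$ into $r$ disjoint chunks $S_1,\dots,S_r$ of size $\lfloor n/r\rfloor$, and instantiate $\textsc{Ind}$ on strings of length $d=r\lfloor\log(n/r)\rfloor=\Theta(\log(\alpha)\log(n))$, split into $r$ contiguous blocks of $\lfloor\log(n/r)\rfloor$ bits, with the $i$-th block used as an index to select a coordinate $c_i\in S_i$. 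Because the $S_i$ are disjoint the $c_i$ are distinct, so no weight cancels or adds across layers.

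Alice, holding $y\in\{0,1\}^d$, recovers $c_1,\dots,c_r$ and produces a stream whose frequency vector is $v=\sum_{i=1}^r(\alpha D^i+1)e_{c_i}$, where $e_c$ is the $c$-th standard basis vector, then sends the state of the $L_1$ sampler to Bob. Bob has $i^*\in[d]$, which lies in some block $j=j(i^*)$, together with all of $y_{i^*+1},\dots,y_d$; in particular he can reconstruct $c_{j+1},\dots,c_r$. He extends the stream by deleting $u=\sum_{i=j+1}^r\alpha D^i e_{c_i}$, leaving the final vector $f=v-u$. Every coordinate receives only insertions plus, possibly, deletions of strictly smaller total magnitude, so all frequencies are non-negative throughout and the stream is strict turnstile; moreover each coordinate that is touched has final frequency at least $1$ and total update magnitude at most $2\alpha D^r+1=O(\alpha^2)$, so $f$ has the strong $O(\alpha^2)$-property.

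The heart of the argument is that the sampler reveals $c_j$. By construction $f_{c_j}=\alpha D^j+1$ while $\|f\|_1=\sum_{i\le j}(\alpha D^i+1)+(r-j)\le\frac{D}{D-1}\alpha D^j+r$, and since $\alpha D^j\ge\alpha D\gg r$ (as $\alpha=\Theta(D^r)$) this gives $f_{c_j}/\|f\|_1\ge 1-\eps_0$ for a constant $\eps_0$ that shrinks as $D$ grows. Hence under the true $L_1$ distribution $c_j$ is the output with probability $1-\eps_0$, so a sampler whose output (conditioned on not failing) is within variation distance $1/6$ of that distribution returns $c_j$ with probability at least $5/6-\eps_0$; accounting for the $1/3$ failure probability, Bob observes $c_j$ with probability at least $\tfrac{2}{3}(\tfrac{5}{6}-\eps_0)>\tfrac12+\gamma$ for a constant $\gamma>0$ once $D$ is large enough. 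From $c_j$ Bob reads off block $j$ of $y$, hence $y_{i^*}$, solving $\textsc{Ind}$ with probability bounded away from $1/2$ by a constant. By the (robust form of the) lower bound in Lemma~\ref{lem:augIndex}, any such protocol communicates $\Omega(d)=\Omega(\log(\alpha)\log(n))$ bits, so the sampler uses that much space; and since the strong $O(\alpha^2)$-property is the strong $\alpha'$-property for $\alpha'=\Theta(\alpha^2)$ with $\log\alpha'=\Theta(\log\alpha)$, the bound holds as stated for strong $\alpha$-property streams.

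The main obstacle will be the probability accounting at the end: the sampler is guaranteed only $2/3$ success and, even then, only up to variation distance $1/6$, so the reduced $\textsc{Ind}$ protocol succeeds with probability only slightly above $1/2$. I would resolve this by pushing the dominant coordinate's relative mass $1-\eps_0$ close to $1$ (a large constant $D$) and invoking the standard $\Omega(n)$ lower bound for Augmented Indexing, which holds for any success probability exceeding $1/2$ by a constant. A secondary point requiring care is checking that the geometric layer weights $\alpha D^i+1$ make the stream simultaneously strict turnstile, strong $O(\alpha^2)$-property, and keep $c_j$ an $\Omega(1)$-heavy (indeed $(1-\eps_0)$-heavy) hitter no matter which block $j$ Bob queries.
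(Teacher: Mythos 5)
Your proposal is correct and is essentially the paper's own argument: both reduce from Augmented Indexing via the layered geometric-weight construction of the heavy hitters lower bound (with each $\log n$-bit block selecting a single coordinate given weight $\alpha D^j+1$), observe that after Bob's deletions the block-$j$ coordinate carries all but a small constant fraction of the $L_1$ mass so the sampler reveals it despite the $1/6$ variation distance, and conclude the $\Omega(\log(n)\log(\alpha))$ bound while noting the stream is strict turnstile with the strong $O(\alpha^2)$-property. The only cosmetic differences are that the paper keeps $D=6$ and amplifies by running $O(1)$ parallel samplers and taking the plurality (so the cited $2/3$-success Augmented Indexing bound suffices), whereas you take $D$ large and invoke the robust advantage-over-$1/2$ form of that bound; either accounting is fine.
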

\begin{proof}
	Consider the same strict turnstile strong $O(\alpha^2)$-property stream constructed by Alice and Bob in Theorem \ref{HH:Hardness}, with $\eps = 1/2$. Then $X = [n]$ is the set of all subsets of $[n]$ with $1$ item. If $i^* \in [n]$ is Bob's index, then let $j = j(i^*)$ be the block $y^j$ such that $y_{i^*} \in y^j$. The block $y^j$ has $\log(|X|) = \log(n)$ bits, and Bob's goal is to determine the set $x_j \subset [n]$ of exactly one item which is indexed by $y^j$.  Then for the one sole item $k \in x_{j}$ will be a $1/2$-heavy hitter, and no other item will be a $1/4$-heavy hitter, so Bob can run $O(1)$ parallel $L_1$ samplers and find the item $k'$ that is returned the most number times by his samplers. If his sampler functions as specified, having at most $1/6$ variational distance from the $L_1$ distribution, then $k' = k$ with large constant probability and Bob can recovery the bit representation $x_j$ of $k$, from which he recovers the bit $y_{i^*} \in y_j$ as needed. Since Alice's string had length $\Omega(\log(\alpha)\log(|X|)) = \Omega(\log(\alpha)\log(n))$, we obtain the stated lower bound.
\end{proof}

\begin{theorem}\label{thm:suppsamphard}
	Any one pass support sampler that outputs an arbitrary $i \in [n]$ such that $f_i \neq 0$, of an $L_0$ $\alpha$-property stream with failure probability at most $1/3$, requires $\Omega(\log(n/\alpha)\log(\alpha))$ bits of space.
\end{theorem}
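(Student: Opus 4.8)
The plan is to reduce from Augmented Indexing (\textsc{Ind}), in the same spirit as the other lower bounds of this section, using the $L_0$ $\alpha$-property to simultaneously bound the universe and the number of distinct items touched. Set $R=\lfloor\log\alpha\rfloor$ and $s=\Theta(\log(n/\alpha))$, so that an instance of \textsc{Ind} on $d=R\cdot s=\Theta(\log(n/\alpha)\log\alpha)$ bits has one-way public-coin communication cost $\Omega(d)$ by Lemma \ref{lem:augIndex}. Alice, holding $z\in\{0,1\}^d$, splits $z$ into $R$ contiguous blocks $z^1,\dots,z^R$ of $s$ bits each, partitions $[n]$ into disjoint ``levels'' $V_1,\dots,V_R$ with $|V_\ell|=\Theta(n/\alpha)$ (possible since $R\cdot|V_\ell|\le n$ as $\log\alpha\le\alpha$), and uses block $z^\ell$ to designate a single element $a_\ell\in V_\ell$. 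Processing the blocks in order $\ell=1,\dots,R$, she builds a strict-turnstile stream whose frequency vector stays coordinate-wise non-negative, which touches at most $R=\lfloor\log\alpha\rfloor$ distinct coordinates, and in which the level-$\ell$ updates are inserted with geometrically increasing multiplicity. She then sends the state of the support-sampling algorithm to Bob.

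Bob holds $i^*\in[d]$, which lies in some block $j^*=j^*(i^*)$, together with $z_{i^*+1},\dots,z_d$; in particular he knows $z^{j^*+1},\dots,z^R$ and hence $a_{j^*+1},\dots,a_R$, along with the low-order bits of $a_{j^*}$. He appends deletions cancelling the contributions of all levels $\ell>j^*$, so that the surviving support is non-empty and controlled so that querying the support sampler returns a coordinate encoding $a_{j^*}$, and therefore the bit $z_{i^*}$; this gives a correct protocol for \textsc{Ind}, forcing the sampler to use $\Omega(d)=\Omega(\log(n/\alpha)\log\alpha)$ bits. The $\alpha$-property is checked only at query time: the final $L_0$ is at least $1$ while the number of distinct coordinates ever touched is $F_0\le\log\alpha\le\alpha$, so $F_0\le\alpha L_0$, and the stream is strict-turnstile by construction; a $k$-support sampler yields a $1$-support sampler, so the bound covers all $k\ge1$. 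This matches the upper bound of Theorem \ref{thm:suppsamp} up to the usual $\log\log n$ and lower-order terms in the range $\alpha=o(n)$.

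The main obstacle is the structural point in Bob's step: Augmented Indexing equips Bob only with a one-sided \emph{suffix} of $z$, so after his deletions the ``prefix'' levels $1,\dots,j^*-1$ — inserted by Alice but not identifiable by Bob — would naively remain in the support, and since a support sampler may return an \emph{arbitrary} nonzero coordinate, an adversarial sampler could hand Bob a prefix coordinate carrying no information about $z^{j^*}$. This is exactly the difficulty handled in \cite{kapralov2017optimal} (and implicit in \cite{Jowhari:2011}) for the unbounded-deletion case, and the plan is to import that solution: build the stream in a nested/telescoping fashion across the $R$ levels, so that each level's updates cancel the previous level and the geometrically increasing multiplicities let Bob ``rewind'' to expose precisely level $j^*$ as the unique surviving coordinate. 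The remaining work is bookkeeping: verifying that this nested construction stays strict-turnstile and meets the $L_0$ $\alpha$-property (which is why the per-level payload is only $\Theta(\log(n/\alpha))$, not $\Theta(\log n)$, bits), and that the choices of $R$ and $s$ fit inside universe $[n]$ for the stated range of $\alpha$.
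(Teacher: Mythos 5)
Your high-level framing (reduction from \textsc{Ind} with $\Theta(\log\alpha)$ levels carrying $\Theta(\log(n/\alpha))$ bits each, with the $\alpha$-property checked only at query time) matches the paper, but the step you yourself flag as the main obstacle is exactly where the argument has a genuine gap, and the ``nested/telescoping'' fix you propose does not close it. With a single designated coordinate $a_\ell$ per level, there is no way out: if Alice's level-$\ell$ updates cancel level $\ell-1$, then $a_{j^*}$ has net frequency $0$ at the end of her stream, Bob cannot restore it without knowing $a_{j^*}$ (which is what he is trying to learn), so no support sampler can ever return it; if instead the cancellation is only partial, the unknown prefix coordinates $a_1,\dots,a_{j^*-1}$ survive alongside $a_{j^*}$, and since the theorem only assumes the sampler returns an \emph{arbitrary} nonzero coordinate, an adversarial sampler can always return a prefix coordinate. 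Even if you force uniformity, with one coordinate per level the target is hit with probability only about $1/j^*=\Omega(1/\log\alpha)$, and amplifying by running $\Theta(\log\alpha)$ parallel copies of the sketch degrades the communication lower bound to $\Omega(\log(n/\alpha))$, losing the $\log\alpha$ factor. Citing \cite{kapralov2017optimal} does not substitute for this missing step: that lower bound is proved via a different (universal-relation) argument for unbounded-deletion streams and cannot simply be ``imported'' while respecting the $L_0$ $\alpha$-property.

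The paper resolves both issues with two ideas absent from your plan. First, instead of one coordinate per level, Alice inserts $2^i$ \emph{distinct} items (each with frequency $1$) into the universe-block selected by the $i$-th $\log(n/\alpha)$-bit chunk; after Bob deletes the suffix levels he knows, the target block contains $2^{j}$ surviving coordinates while all unknown prefix levels together contribute fewer than $2^{j}$, so the target block holds a strict majority of the surviving support, and $F_0\le\sum_i 2^i=O(\alpha)$ with final $L_0\ge 2^{j}$ gives the $L_0$ $\alpha$-property. Second, to defeat the arbitrariness of the sampler, Alice and Bob use public randomness to apply a uniformly random relabeling $\pi$ of $[n]$ and insert/delete in a random order (independent of the algorithm's randomness); conditioned on the sketch's randomness every labeling and ordering of the surviving items is equally likely, so the arbitrary returned index is uniform over the support after applying $\pi^{-1}$. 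A uniform sample then lands in the target block with probability greater than $1/2$, so $O(1)$ parallel copies and a majority vote identify the block, recover the $\log(n/\alpha)$-bit chunk, and solve \textsc{Ind}, yielding the full $\Omega(\log(n/\alpha)\log\alpha)$ bound via Lemma \ref{lem:augIndex}. To repair your write-up you would need to replace the telescoping idea with (or reinvent) both the majority-mass level design and the random-relabeling reduction from arbitrary to uniform sampling.
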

\begin{proof}
	The reduction is again from \textsc{Ind}. Alice receives $y \in \{0,1\}^{d}$, for $d= \lfloor \log(n/\alpha) \log(\alpha/4)\rfloor$, and breaks it into blocks $y^1,\dots,y^{\log(\alpha/4)}$, each of size $\lfloor \log(n\ab/\alpha)\rfloor$. She then initializes a stream vector $f \in \R^n$, and breaks $f$ into $\lfloor 4n/\alpha\rfloor $ blocks of size $\alpha/4$, say $B_1,\dots,\ab B_{\lfloor 4n/\alpha\rfloor}$. She uses $y_i$ as an index into a block $B_{j}$ for $j = j(i)$, and then inserts $2^i$ distinct items into block $B_j$, each exactly once, and sends the state of her algorithm over to Bob. Bob wants to determine $y_{i^*}$ for his fixed $i^* \in [n]$. 
	Let $k$ be such that $y_{i^*} \in y^j$ and $B_{k}$ be the block indexed by $y^j$. He knows $y^{j+1}, \dots, y^{\log(\alpha/4)}$, and can delete the corresponding items from $f$ that were inserted by Alice. At the end, the block $B_{k}$ has $2^{j}$ items in it, and the total number of distinct items in the stream is less than $2^{j+1}$. Moreover no other block has more than $2^{j-1}$ items. 
	
	Now suppose Bob had access to an algorithm that would produce a uniformly random non-zero item at the end of the stream, and that would report FAIL with probability at most $1/3$. He could then run $O(1)$ such algorithms, and pick the block $B_{k'}$ such that more than $4/10$ of the returned indices are in $B_{k'}$. If his algorithms are correct, we then must have $B_{k'} = B_{k}$ with large constant probability, from which he can recover $y^j$ and his bit $y_{j^*}$, thus solving \textsc{Ind} and giving the $\Omega(d) = \Omega(\log(n/\alpha) \log(\alpha))$ lower bound by Lemma \ref{lem:augIndex}. 
	
	We now show how such a random index from the support of $f$ can be obtained using only a support sampler. Alice and Bob can used public randomness to agree on a uniformly random permutation $\pi:[n] \to [n]$, which gives a random relabeling of the items in the stream. Then, Alice creates the same stream as before, but using the relabeling and inserting the items in a randomized order into the stream, using separate randomness from that used by the streaming algorithm. In other words, instead of inserting $i \in [n]$ if $i$ was inserted before, Alice inserts $\pi(i)$ at a random position in the stream. Bob the receives the state of the streaming algorithm, and then similarly deletes the items he would have before, but under the relabeling $\pi$ and in a randomized order instead. 
	
	Let $i_1,\dots,i_r \in [n]$ be the items inserted by Alice that were not deleted by Bob, ordered by the order in which they were inserted into the stream. If Bob were then to run a support sampler on this stream, he would obtain an arbitrary $i = g(i_1,\dots,i_r) \in \{i_1,\dots,i_r\}$, where $g$ is a (possibly randomized) function of the ordering  of the sequence $i_1, \dots, i_r$. The randomness used by the streaming algorithm is separate from the randomness which generated the relabeling $\pi$ and the randomness which determined the ordering of the items inserted and deleted from the stream. Thus, even conditioned on the randomness of the streaming algorithm, any ordering and labeling of the surviving items $i_1,\dots,i_r$ is equally likely. In particular, $i$ is equally likely to be any of  $i_1,\dots,i_r$. It follows that $\pi^{-1}(i)$ is a uniformly random element of the support of $f$, which is precisely what Bob needed to solve \textsc{Ind}, completing the proof.
\end{proof}

Finally, we show that estimating inner products even for \textit{strong} $\alpha$-property streams requires $\Omega(\eps^{-1}\log(\alpha))$ bits of space. Setting $\alpha = n$, we obtain an $\Omega(\eps^{-1}\log(n))$ lower bound for unbounded deletion streams, which our upper bound beats for small $\alpha$.

 \begin{theorem}\label{thm:InnerProdHard}
Any one pass algorithm that runs on two strong $L_1$ $\alpha$-property streams $f,g$ in the strict turnstile setting and computes a value $\texttt{IP}(f,g)$ such that $\texttt{IP}(f,g) = \langle f,g \rangle + \eps \|f\|_1\|g\|_1$ with probability $2/3$ requires $\Omega(\eps^{-1}\log(\alpha))$ bits of space.
\end{theorem}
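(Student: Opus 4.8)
The plan is to reduce from Augmented Indexing (\textsc{Ind}, Lemma~\ref{lem:augIndex}), using the geometric block-stacking idea of the proofs of Theorems~\ref{HH:Hardness} and~\ref{thm:l1estHardTwo}. Fix $D=2$, put $\alpha_0=\lfloor\sqrt\alpha\rfloor$, $r=\lfloor\log_2(\alpha_0/4)\rfloor=\Theta(\log\alpha)$, $\ell=\lfloor 1/(8\eps)\rfloor=\Theta(1/\eps)$, and $d=r\ell=\Theta(\eps^{-1}\log\alpha)$; assume $d\le n$ (the regime in which the claimed bound is meaningful). Identify the bit positions $[d]$ with coordinates of $[n]$ and group them into $r$ consecutive blocks $R_i=\{(i-1)\ell+1,\dots,i\ell\}$. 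Alice receives $y\in\{0,1\}^d$, writes $y=y^1\cdots y^r$ with $y^i\in\{0,1\}^\ell$, and builds the $f$-stream by inserting, for each $i\in[r]$ and each $c\in[\ell]$ with $y^i_c=1$, exactly $\alpha_0 2^i+1$ copies of coordinate $(i-1)\ell+c$. She sends the memory state of the (public-coin) inner-product algorithm to Bob.

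Bob holds $i^*$, which lies in block $j=\lceil i^*/\ell\rceil$, together with $y_z$ for all $z>i^*$ --- in particular all of $y^{j+1},\dots,y^r$. For each $i>j$ and each present coordinate of $R_i$ he issues $\alpha_0 2^i$ deletions, dropping its frequency from $\alpha_0 2^i+1$ to $1$ (he cannot bring it to $0$ without violating the strong $\alpha$-property --- this unit residue is the crux). Let $f'$ be the resulting $f$-stream vector. Bob then feeds the single-update $g$-stream $g=e_{i^*}$ and queries $\texttt{IP}(f',g)$. Because the blocks are disjoint, $\langle f',g\rangle=f'_{i^*}=(\alpha_0 2^j+1)\,y_{i^*}$ exactly, while
\[
\|f'\|_1\ \le\ \ell\Big(\alpha_0 2^j+\sum_{i<j}\alpha_0 2^i+(r-j)\Big)\ \le\ \ell\big(2\alpha_0 2^j+r\big),
\]
so the additive error is at most $\eps\|f'\|_1\|g\|_1\le\tfrac18(2\alpha_0 2^j+r)=\tfrac{\alpha_0 2^j}{4}+\tfrac r8<\tfrac{\alpha_0 2^j}{2}$, the last step using $r\le\log_2\alpha<2\alpha_0 2^j$ for $j\ge 1$. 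Hence $\texttt{IP}(f',g)>\alpha_0 2^j/2$ iff $y_{i^*}=1$, so Bob solves \textsc{Ind} on $d$ bits with probability $\ge2/3$.

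It remains to verify the stream hypotheses. All frequencies stay nonnegative during Alice's insertions, and Bob only deletes from coordinates at value $\alpha_0 2^i+1$ down to $1$, so the $f$-stream is strict turnstile and $g$ is insertion-only. For the strong $\alpha$-property of $f'$: untouched coordinates give $I+D=0=\alpha|f|$; a present coordinate of $R_i$ with $i\le j$ receives $\alpha_0 2^i+1$ insertions and no deletions, final value $\alpha_0 2^i+1$, so $I+D=|f|$; a present coordinate of $R_i$ with $i>j$ receives $\alpha_0 2^i+1$ insertions and $\alpha_0 2^i$ deletions, so $I+D=2\alpha_0 2^i+1\le 2\alpha_0 2^r+1\le\alpha_0^2/2+1\le\alpha$ (for $\alpha\ge 2$) while $|f|=1$; in every case $I+D\le\alpha|f|$, and $g=e_{i^*}$ trivially satisfies it. By Lemma~\ref{lem:augIndex} the message --- exactly the algorithm's working memory --- has size $\Omega(d)=\Omega(\eps^{-1}\log\alpha)$; taking $\alpha=\poly(n)$ also recovers the $\Omega(\eps^{-1}\log n)$ bound for unbounded-deletion streams.

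The one real obstacle is the strong $\alpha$-property itself: it prevents Bob from erasing the ``future'' blocks, leaving a unit residue on as many as $r\ell$ coordinates. With the plain weights $2^i$ this residue (of $L_1$ mass $\Theta(\log\alpha)$) would swamp the signal $2^j$ for the bottom $O(\log\log\alpha)$ blocks and break the reduction for those $i^*$; multiplying every weight by $\alpha_0\approx\sqrt\alpha$ boosts the signal-to-residue ratio to $\Omega(\alpha_0 2^j/r)=\omega(1)$ for every $j\ge1$, at a cost of only a factor $2$ in the exponent of the weight universe and hence just a constant factor in the final $\log\alpha$.
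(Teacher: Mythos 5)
Your proposal is correct and takes essentially the same route as the paper: a reduction from Augmented Indexing with $\Theta(\log\alpha)$ geometrically weighted blocks of $\Theta(1/\eps)$ coordinates each, Bob deleting the known heavy parts down to a unit residue, querying with $g=e_{i^*}$, and absorbing the residue/lower-block mass into the $\eps\|f\|_1\|g\|_1$ error. The only differences are cosmetic — you encode the bit by presence/absence rather than by a coefficient of $\alpha$ versus $2\alpha$, and you fold the $\sqrt{\alpha}$ rescaling into the construction up front, whereas the paper builds a strong $O(\alpha^2)$-property stream and rescales at the end.
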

\begin{proof}
	The reduction is from \texttt{IND}. Alice has $y \in \{0,1\}^d$ where $y$ is broken up into $\log_{10}(\alpha)/4$ blocks of size $1/(8\eps)$, where the indices of the $i$-th block are called $B_i$. Bob wants to learn $y_{i^*}$ and is given $y_j$ for $j \geq i^*$, and let $j^*$ be such that $i^* \in B_{j^*}$. Alice creates a stream $f$ on $d$ items, and if $i \in B_j$, then Alice inserts the items to make $f_i = b_i 10^j + 1$, where $b_i = \alpha$ if $y_i=0$, and $b_i = 2\alpha$ otherwise. She creates a second stream $g = \vec{0} \in \R^d$, and sends the state of her streaming algorithm over to Bob. For every $y_i \in B_{j} = B_{j(i)}$ that Bob knows, he subtracts off $b_i 10^j$, leaving $f_i = 1$. He then sets $g_{i^*} = 1$, and obtains \texttt{IP}$(f,g)$ via his estimator. We argue that with probability $2/3$, \texttt{IP}$(f,g) \geq 3\alpha 10^{j^*}/2$ iff $y_{i^*} = 1$. Note that the error is always at most
	\[\eps \|f\|_1\|g\|_1 = \eps \|f\|_1 \]
	\[\leq \eps \big(d+ (8\eps)^{-1}( 2 \alpha 10^{j^*}) +   \sum_{j < j^*} \sum_{i \in B_j} (2 \alpha 10^j ) \big) \]
	\[	 \eps \big(1/(8\eps)\log(\alpha)/4+ (4\eps)^{-1}\alpha 10^{j^*} + \eps^{-1} \alpha 10^{j^*}/32 \big)	\]
	\[	< \alpha 10^{j*}/3 \]
	Now since $\langle f,g \rangle =(y_{i^*} + 1) \alpha 10^{j^*} + 1$, if $y_{i^*} = 0$ then if the inner product algorithm succeeds with probability $2/3$ we must have $\texttt{IP}(f,g) \leq \alpha 10^j + 1 + \alpha10^j/3 < 3\alpha 10^{j^*}/2$, and similarly if $y_{i^*} = 1$ we have $\texttt{IP}(f,g) > 2\alpha 10^j + 1 - \alpha10^j/3  > 3\alpha 10^j /2$ as needed. So Bob can recover $y_{i^*}$ with probability $2/3$ and solve \texttt{IND}, giving an $\Omega(d )$ lower bound via Lemma \ref{lem:augIndex}. Since each item in $f$ received at most $5(\alpha^2)$ updates and had final frequency $1$, this stream has the strong $5\alpha^2$-property, and $g$ was insertion only. Thus obtaining such an estimate of the inner product between strong $\alpha$-property streams requires $\Omega(\eps^{-1}\log(\sqrt{\alpha})) = \Omega(\eps^{-1}\log(\alpha)$ bits, as stated.
\end{proof}

\section{Conclusion}
\label{sec:conclusion}
We have shown that for bounded deletion streams, many important $L_0$ and $L_1$ streaming problems can be solved more efficiently. For $L_1$, the fact the $f_i$'s are approximately preserved under sampling $\poly(\alpha)$ updates paves the way for our results, whereas for $L_0$ we utilizes the fact that $O(\log(\alpha))$-levels of sub-sampling are sufficient for several algorithms. 
Interestingly, it is unclear whether improved algorithms for $\alpha$-property streams exist for $L_2$ problems, such as $L_2$ heavy hitters or $L_2$ estimation. The difficulty stems from the fact that $\|f\|_2$ is not preserved in any form under sampling, and thus $L_2$ guarantees seem to require different techniques than those used in this paper. 

However, we note that by utilizing the heavy hitters algorithm of \cite{braverman2016bptree}, one can solve the general turnstile $L_2$ heavy hitters problem for $\alpha$-property streams in $O(\allowbreak \alpha^2\allowbreak \log(n)\log(\alpha))$ space. A proof is sketched in Appendix \ref{app:L2sketch}. Clearly a polynomial dependence on $\alpha$ is not desirable; however, for the applications in which $\alpha$ is a constant this still represents a significant improvement over the $\Omega(\log^2(n))$ lower bound for turnstile streams. We leave it as an open question whether the optimal dependence on $\alpha$ can be made logarithmic. 

Additionally, it is possible that problems in dynamic geometric data streams (see \cite{indyk2004algorithms}) would benefit by bounding the number of deletions on the streams in question. We note that several of these geometric problems directly reduce to problems in the data stream model studied here, for which our results directly apply.

Finally, it would be interesting to see if analogous models with lower bounds on the ``signal size'' of the input would result in improved algorithms for other classes of sketching algorithms. In particular, determining the appropriate analogue of the $\alpha$-property for linear algebra problems, such as row-rank approximation and regression, could be a fruitful direction for further research.

\bibliography{cluster}

\appendix

\section{Sketch of $L_2$ heavy hitters algorithm}
\label{app:L2sketch}

We first note that the BPTree algorithm for \cite{braverman2016bptree} solves the $L_2$ heavy hitters problem in space $O(\eps^{-2}\log(n)\ab\log(1\ab/\eps))$ with probability $2/3$. Recall that the $L_2$ version of the problem asks to return a set $S \subset [n]$ with all $i$ such that $|f_i| \geq \eps \|f\|_2$ and no $j$ such that $|f_j| < (\eps/2)\|f\|_2$. Also recall that the $\alpha$ property states that $\|I+D\|_2 \leq \alpha \|f\|_2$, where $I$ is the vector of the stream restricted to the positive updates and $D$ is the entry-wise absolute value of the vector of the stream restricted to the negative updates. It follows that if $i \in [n]$ is an $\eps$ heavy hitter then $\|I+D\|_2 \leq \alpha \|f\|_2 \leq \alpha / \eps |f_i| \leq \alpha / \eps |I_i + D_i|$. So in the insertion-only stream $I+D$ where every update is positive, the item $i$ must be an $\eps/\alpha$ heavy hitter.

Using this observation, we can solve the problem as follows. First run BPTree with $\eps' = \eps/\alpha$ to obtain a set $S \subset n$ with all $i$ such that $|I_i + D_i| \geq (\eps /\alpha )\|I+D\|_2$  and no $j$ such that $|I_j + D_j| < (\eps /2\alpha) \|I+D\|_2$. It follows that $|S| = O(\alpha^2/\eps^2)$. So in parallel we run an instance of Countsketch on the original stream $f$ with $O(1/\eps^2)$ rows and $O(\log(\alpha/\eps))$ columns. For a fixed $i$, this gives an estimate of $|f_i|$ with additive error $(\eps/4)\|f\|_2$ with probability $1-O(\eps/\alpha)$. At the end, we then query the Countsketch for every $i \in S$, and return only those which have estimated weight $(3\eps/4)\|f\|_2$. By union bounding over all $|S|$ queries, with constant probability the Countsketch will correctly identify all $i \in |S|$ that are $\eps$ heavy hitters, and will throw out all items in $S$ with weight less than $(\eps/2) \|f\|_2$. Since all $\eps$ heavy hitters are in $S$, this solves the problem. The space to run BPTree is $O(\alpha^2/\eps^2 \log(n) \log(\alpha/\eps)$, which dominates the cost of the Countsketch.

\end{document}